\newcommand{\mbf}[1]{\mbox{\boldmath $#1$}}
\newcommand{\ba}{{\mbf \beta}}
\global\let\AddToReset=\@addtoreset}
\newtheorem{cor}{Corollary}[section]
\newtheorem{lem}{Lemma}[section]
\newtheorem{rem}{Remark}[section]
\newtheorem{thm}{Theorem}[section]
\newtheorem{Def}{Definition}[section]
\newcommand{\cC}{{\cal C}}
\newcommand{\cG}{{\cal G}}
\newcommand{\cL}{{\cal L}}
\newcommand{\cM}{{\cal M}}
\newcommand{\cN}{{\cal N}}
\newcommand{\cU}{{\cal U}}
\newcommand{\Np}{{\mathbb{N}}^{+}}
\newcommand{\Nn}{{\mathbb{N}}^{-}}
\def\ba{\begin{array}}
	\def\bc{\begin{center}}
		\def\bd{\begin{description}}
			\def\be{\begin{enumerate}}
				\def\ea{\end{array}}
			\def\ec{\end{center}}
		\def\ed{\end{description}}
	\def\edt{\end{document}}
\def\ee{\end{enumerate}}
\def\ben{\begin{equation}}
\def\benn{\begin{equation*}}
\def\een{\end{equation}}
\def\eenn{\end{equation*}}
\def\benr{\begin{eqnarray}}
\def\eenr{\end{eqnarray}}
\def\benrr{\begin{eqnarray*}}
\def\eenrr{\end{eqnarray*}}
\def\al{\alpha}
\def\edt{\end{document}}
\def\ep{\epsilon}
\def\g{\gamma}
\def\h{\hat}
\def\ka{\kappa}
\def\e{\mathrm e}
\def\iny{\infty}
\def\ka{\kappa}
\def\la{\lambda}
\def\noi{\noindent}
\def\nn{\nonumber}
\def\om{\omega}
\def\si{\sigma}
\def\Si{\Sigma}
\def\vep{\varepsilon}
\def\vs{\vskip}
\def\R{{\mathbb R}}
\def\Z{{\mathbb Z}}
\def\z{\zeta}
\DeclareMathOperator*{\argmin}{arg\,min}
\DeclareMathOperator*{\argmax}{arg\,max}
\normalfont\fontsize{11}{15}\bfseries}{\thesection}{1em}{}
\normalfont\fontsize{11}{15}\bfseries}{\thesubsection}{1em}{}
\tikzstyle{block} = [rectangle, draw, fill=blue!20,
\tikzstyle{line} = [draw, -latex']
\tikzstyle{linena} = [draw]
\tikzstyle{cloud} = [draw, ellipse,fill=red!20, minimum height=2em]
\pgfplotsset{compat=1.11}
\tikzset{
	mynode/.style={fill,circle,inner sep=1pt,outer sep=0pt}
}
\begin{document}

\bc
{\Large {\bf Inference on the change point in high dimensional time series models via plug in least squares}}\\[.5cm]
Abhishek Kaul$^a$\footnote{Email: abhishek.kaul@wsu.edu.}, Stergios B. Fotopoulos$^b$,\\ Venkata K. Jandhyala$^a$, and Abolfazl Safikhani$^c$\\[0.25cm]

$^a$Department of Mathematics and Statistics,\\ 
Washington State University, Pullman, WA 99164, USA.\\[0.25cm]

$^b$Department of Finance and Management Science,\\ 
Washington State University, Pullman, WA 99164, USA.\\[0.25cm]

$^c$Department of Statistics,\\University of Florida, Gainsville, FL 32611-8545, USA.

\ec
\vs .1in
{\renewcommand{\baselinestretch}{1}
	\begin{abstract}
We study a plug in least squares estimator for the change point parameter where change is in the mean of a high dimensional random vector under subgaussian or subexponential distributions. We obtain sufficient conditions under which this estimator possesses sufficient adaptivity against plug in estimates of mean parameters in order to yield an optimal rate of convergence $O_p(\xi^{-2})$ in the integer scale. This rate is preserved while allowing high dimensionality as well as a potentially diminishing jump size $\xi,$ provided $s\log (p\vee T)=o(\surd(Tl_T))$ or $s\log^{3/2}(p\vee T)=o(\surd(Tl_T))$ in the subgaussian and subexponential cases, respectively. Here $s,p,T$ and $l_T$ represent a sparsity parameter, model dimension, sampling period and the separation of the change point from its parametric boundary, respectively. Moreover, since the rate of convergence is free of $s,p$ and logarithmic terms of $T,$ it allows the existence of limiting distributions under high dimensional asymptotics. These distributions are then derived as the {\it argmax} of a two sided negative drift Brownian motion or a two sided negative drift random walk under vanishing and non-vanishing jump size regimes, respectively, thereby allowing inference on the change point parameter. Feasible algorithms for implementation of the proposed methodology are provided. Theoretical results are supported with monte-carlo simulations.
\end{abstract} }
\noi {\it Keywords: change point, inference, high dimensions, limiting distribution.}

\section{Introduction}\label{sec:intro}

In many applications of current scientific interest assuming stationarity of the mean of a time series over an extended sampling period may be unrealistic and may lead to flawed inference. Dynamic time series characterized via mean changes across unknown time points form a simplistic yet useful tool to model non-stationarity of large streams of data. In this article we consider the model,
\benr\label{model:rvmcp}
x_t=\begin{cases}\theta_1^0+\vep_t & t=1,...,\tau^0\\
	\theta_2^0+\vep_t & t=\tau^0+1,...,T.\end{cases}
\eenr
The observed variables here are $x_t=(x_{t1},x_{t2},...,x_{tp})^T\in\R^{p},$ $t=1,...,T.$ The variables $\vep_t=(\vep_{t1},...,\vep_{tp})^T\in\R^p$ are unobserved zero mean random variables, which are allowed to be subgaussian or subexponential. The unknown parameters are the mean vectors $\theta_1^0,\theta_2^0\in\R^p,$ and the change point parameter $\tau^0\in \{0,1,2,...,T\},$ with the latter being of main interest. The model dimension $p$ is allowed to be fixed or diverging potentially much faster than the sampling period $T.$ The boundary points $\tau^0=0,T$ characterize the `no change' case, or a static model where no realizations from the corresponding distribution are observed. These boundary points are considered to present additional theoretical insights in the estimation of $\tau^0$ later in the manuscript, however these shall not be pursued from an inference perspective. Our objective throughout the article is that of inference on $\tau^0$ when it exists, i.e., construction of asymptotically valid confidence intervals for the change point parameter when it is not at the boundary of its parametric space. We mention here that several solutions for the boundary problem (detection) of testing the null hypothesis $H_0:\,\tau^0=T,$ in the high dimensional setting are available in the literature, see, e.g.,  \cite{jirak2015uniform}, \cite{wang2019inference}, \cite{enikeeva2013high}, \cite{cho2016change} and \cite{steland2018inference} amongst others.

To proceed further, define the jump vector and the jump size that are fundamentally related to the properties of any change point estimator. Let,
\benr\label{def:jumpsize}
\eta^0=(\theta_1^0-\theta_2^0),\quad{\rm and}\quad \xi=\|\eta^0\|_2.\footnotemark
\eenr
\footnotetext{These quantities depend on the sampling period $T,$ however this dependence is notationally suppressed for clarity of exposition.}
The problem of change point estimation in the high dimensional setting has received significant attention in the recent past and several different estimators have been proposed. A large proportion of this literature provides near optimal localization error bounds of the form $|\h\tau-\tau^0|\le O(\xi^{-2}a_T),$ where $a_T\to\iny,$ with probability (w.p.) $\to 1.$ For e.g.  in the case of a single change point, the results of \cite{harchaoui2010multiple} yield $a_T=\log T,$ with a least squares estimator together with a total variation penalty, \cite{wang2018high} provide $a_T=\log\log(T)$ with a projected cusum estimator, and those of \cite{cho2015multiple} yields $a_T\ge \log^{2}(T)$ in the case of a single change point. While near optimal rates of the approximation are informative from an estimation perspective, however, from an inference perspective one requires a change point estimator to obey an optimal rate of convergence of $O_p(\xi^{-2})$ in order to allow the existence of limiting distributions and in turn allow inference on $\tau^0.$ The literature on this inference perspective is very sparse. In a setting where $p$ is increasing with $T,$  \cite{bhattacharjee2018change} and \cite{bhattacharjee2019change} develop limiting distribution results while assuming  $\xi^{-1}\surd(p/T)\to 0.$ These results yield non-degenerate limiting distributions provided $p<<T.$ However, due to the assumption made on rate of the jump size, these do not extend to the high dimensional case where $p$ may be diverging faster than $T.$ In this case the assumption on the jump size made in these articles necessitates $\xi\to\iny$ in the high dimensional setting and consequently allows only a degenerate limiting distribution to remain valid. The article \cite{bai2010common} provides a limiting distribution result while assuming a further stronger assumption of $\xi^{-1}\surd{p}\to 0.$ More generally, in the high dimensional setting the question of an optimal rate of estimation and that of inference on the change point parameter in the non-degenerate case where the jump size is not assumed to be diverging remains unaddressed. The viability of the question itself comes from the recent work of \cite{liu2019minimax} who show that assuming sparsity of the jump vector, much weaker signals in the jump size are detectable. Specifically, they show that the region of detectability of the change point satisfies a rate of $\xi^{-1}\surd\big\{s\log (p\vee T)/T\big\}\le c,$ in a minimax sense, upto other logarithmic terms in $s$ and $T,$ and under restrictions on the sparsity parameter $s.$ We refer to their article for the precise rate which involves a tripe iterated log expression. A corresponding result in the univariate setting has been provided in \cite{wang2020univariate}.

A more precise description of the purpose of this article requires additional notation. For any $T\ge 2,$ $p\ge 3,$\footnote{We assume $p\ge 3$ throughout the article so that $\log p\ge 1.$ This is not a necessary condition and is only assumed to ease notational complexity of the results and proofs.} $\tau\in\{0,...,T\},$ and $\theta_1,\theta_2\in\R^p,$ consider,
\benr\label{def:Q}
Q(\tau,\theta_1,\theta_2)=\frac{1}{T}\sum_{t=1}^{\tau} \|x_t-\theta_1\|_2^2 +\frac{1}{T}\sum_{t=\tau+1}^{T} \|x_t-\theta_2\|_2^2.\footnotemark
\eenr
\footnotetext{The sum $\sum_{t=1}^{\tau}$ is defined to be zero when $\tau=0,$ and similar for the other sum on the boundary $\tau=T.$}
Assume for the time being, the availability of some estimates $\h\theta_1$ and $\h\theta_2$ of the mean parameters of the model (\ref{model:rvmcp}) and consider the following plug in estimator utilizing these nuisance estimates,
\benr\label{est:optimalcp}
\tilde\tau:=\tilde\tau(\h\theta_1,\h\theta_2)=\argmin_{0\le \tau\le T} Q(\tau,\h\theta_1,\h\theta_2).
\eenr

The overarching objective of this article is to study the inference properties of the estimator $\tilde\tau$ in the assumed setting allowing high dimensionality and weak requirements on the jump size that allow non-degenerate limiting distributions, for e.g. to allowing a potentially diminishing jump size. Establishing existence of limiting distributions requires first suitable estimation properties to hold, which forms the first main contribution of this article.

In particular, we shall show that $\tilde\tau$ yields an optimal rate of convergence $O_p(\xi^{-2}),$ under a subgaussian or subexponential setting and any nearly arbitrary spatial dependence structure. New arguments are developed to obtain this result, including a novel application of the Kolmogorov's inequality on partial sums (see, Theorem \ref{thm:kolmogorov}). Moreover, we obtain sufficient conditions on the nuisance estimates $\h\theta_1$ and $\h\theta_2$ required to achieve the optimal rate $O_p(\xi^{-2}),$ or a near optimal rate $O_p\big(s\log (p\vee T)\big).$ These sufficient conditions on nuisance estimation are stated as an inter-relationship between the $\ell_2$ error of nuisance estimates and the jump size (Condition C.1 and Condition C.2). Formulating sufficient conditions on nuisance parameters with respect to the jump size provides some surprising insights. For e.g, they allow us to show that the estimation of a change point parameter in itself does not require many assumptions that are typically thought of as necessary conditions in the literature, including a rate condition on the separation of $\tau^0$ from the parametric boundary, a rate condition on the jump size, and a rate of divergence of the model dimension. Instead, these assumptions arise from the nuisance estimation aspect of the overall process of change point estimation. This is best observed for the case where the nuisance parameters are known. Here these sufficient conditions shall be trivially satisfied with $\h\theta_1=\theta_1^0,$ and $\h\theta_2=\theta_2^0.$ In this case, $\tilde\tau$ yields an optimal rate $O_p(\xi^{-2}),$ where $\xi>0$ may be converging arbitrarily fast towards zero and the model dimension may be diverging arbitrarily fast with respect to $T,$ and even when the change point does not actually exist ($\tau^0=0\,{\rm or}\,T$)\footnote{The boundary case of $\tau^0=0,T$ and $\xi>0$ can be simultaneously assumed since we allow $\theta_1^0,$ $\theta_2^0$ and $\tau^0$ to be free parameters. Effectively, $\tau^0=0,$ and $\xi>0,$ assumes there is some mean vector $\theta_2^0$ different from $\theta_1^0,$ however we do not observe any realizations from $\theta_1^0+\vep,$ and symmetrically for $\tau^0=T.$}. This case of known nuisance parameters is clearly infeasible in practice and is only meant to illustrate the above subtlety. The main requirement to obtain an $O_p(\xi^{-2})$ rate for $\tilde\tau$ in the usual case of unknown nuisance parameters shall effectively take the form
\benr\label{eq:40}
\|\h\theta_1-\theta_1^0\|_2\vee\|\h\theta_2-\theta_2^0\|_2\le c_u\si\Big\{\frac{s\log (p\vee T)}{Tl_T}\Big\}^{\frac{1}{2}},
\eenr
under the following weak condition on the rates of model parameters,
\benr
\Big(\frac{c_u\si}{\xi}\Big)\Big\{\frac{s\log (p\vee T)}{\surd(Tl_T)}\Big\}\le c_{u1},
\quad{\rm or}\quad \Big(\frac{c_u\si}{\xi}\Big)\Big\{\frac{s\log^{3/2} (p\vee T)}{\surd (Tl_T)}\Big\}\le c_{u1}\label{eq:41}
\eenr
for the subgaussian and subexponential cases, respectively, and for a suitably chosen small enough constant $c_{u1}>0.$ Here $s$ is a sparsity parameter defined later in (\ref{def:setS}),  $\si$ is a variance proxy parameter (Condition A) and $l_T$ is a sequence separating $\tau^0$ from the parametric boundary (Condition D).

Despite irregular $p$-dimensional nuisance estimates in the construction of $\tilde\tau,$ it shall yield an optimal $O_p(\xi^{-2})$ rate of convergence. This indicates that under the assumed mild conditions largely described above, the estimator $\tilde\tau$ statistically behaves as if the nuisance parameters are known. This property of an estimator is typically referred to as adaptation as described in \cite{bickel1982adaptive}, but is observed here in a high dimensional sense. An indirect but informative comparison is with recent results on inference on regression coefficients in high dimensional regression models. For estimation of a component of the regression vector, it is known that the least squares estimator itself is not sufficiently adaptive against nuisance parameter estimates (estimates of remaining regression vector components) to allow for an optimal rate of convergence. Instead, certain corrections to the least squares loss or its first order moment equations, such as debiasing (\cite{van2014asymptotically}) or orthogonalization (\cite{belloni2011inference},   \cite{chernozhukov2015valid}, \cite{belloni2017confidence} and \cite{ning2017general}) induce sufficient adaptivity against nuisance estimates and thereby allow optimal estimation of the target regression parameter. The results of this article show that in the context of change point estimation, the plugin least squares estimator (\ref{est:optimalcp}) itself possesses the required adaptivity against potentially high dimensional nuisance estimates, in order to allow for $O_p(\xi^{-2})$ estimation of the change point $\tau^0$ provided the nuisance parameters are estimated with sufficient precision.

It may be observed that taking advantage of sparsity yields conditions (\ref{eq:41}) that are weaker than those assumed in \cite{bai2010common}, \cite{bhattacharjee2018change} and \cite{bhattacharjee2019change}. This is best seen by noting that conditions (\ref{eq:41}) allow a diminishing jump despite high dimensionality, under the restrictions $s\log (p\vee T)=o\big(\surd(Tl_T)\big),$ and $s\log^{3/2} (p\vee T)=o\big(\surd(Tl_T)\big),$ under the subgaussian and subexponential cases, respectively. We also mention here that while the estimators studied in \cite{bai2010common} and \cite{bhattacharjee2019change} are also based on a squared loss, however, they consider a grid search least squares where estimation of nuisance parameters $\theta_1^0$ and $\theta_2^0$ is carried out internally in the change point estimation mechanism. This is in contrast to the plug in least squares estimator (\ref{est:optimalcp}) where the nuisance estimation has been separated from the change point estimation. This separation is crucial since it allows nuisance estimates to be computed separately and be made suitable for high dimensional approximations of the mean vectors via regularization, whereas a grid search least squares by construction disallows this capability.

Another observation here is that these conditions impose a stronger requirement on the jump size in the subexponential case in comparison to the subgaussian case. While we do not prove that these are necessary assumptions for an $O_p(\xi^{-2})$ rate of $\tilde\tau,$ however the tail probability bounds (e.g. Bernstein's inequality) that force the conditions (\ref{eq:41}) are known to be sharp bounds. It is thus reasonable to speculate that this relationship between the jump size, dimensionality and the underlying distribution is inherent to achieving an optimal rate of convergence $O_p(\xi^{-2})$ of the change point estimator and not an artifact of our argument. Further circumstantial evidence towards this also comes from the following additional result. We show that a near optimal rate $O_p\big(\xi^{-2}s\log (p\vee T)\big)$ of $\tilde\tau$ can be obtained under a weaker condition than (\ref{eq:40}) on the nuisance estimates, which shall in turn requires the weaker restriction,
\benr\label{eq:43}
\Big(\frac{c_u\si}{\xi}\Big)\Big\{\frac{s\log (p\vee T)}{Tl_T}\Big\}^{\frac{1}{2}}\le c_{u1},
\eenr
for both the subgaussian and subexponential settings.

Notably, the distinction in the required conditions (\ref{eq:41}) for the two classes of distributions is no longer present when only a near optimal rate is of interest. An intuitive explanation for this behavior is as follows. One among a few quantities that controls the rate of $\tilde\tau$ is the tail behavior of a stochastic term of the form $\big\|\sum_{t=(\tau^0+1)}^{(\tau^0+k)}\vep_t\big\|_{\iny}$ uniformly over $k\le k'.$ Note that when $k'\ge \log (p\vee T)$ is diverging with $T,$ then for a sufficiently large $T,$ this tail behavior is of the same order under both the subgaussian and subexponential cases. When only a near optimal rate is of interest, it is sufficient to examine this case with a diverging $k'.$ However, this is no longer true in the case where $k'$ is finite. In this case, the heavier tail of the subexponential distribution is realized in the corresponding tail bound of the underlying stochastic term, and in turn on the assumption required to retain an optimal $O_p(\xi^{-2})$ rate of convergence.

The second main contribution of this article is about inference on the change point parameter. Note that in the case where $\xi\to \iny,$ the rate $O_p(\xi^{-2})$ directly yields a degenerate limiting distribution. It is thus sufficient to restrict this analysis to $\xi=O(1).$ We show that the optimal rate of $\tilde\tau,$ together with peripheral results allows for the existence of limiting distributions of $\tilde\tau$ in both vanishing and non-vanishing jump size regimes, the forms of which are then derived. More precisely, under the vanishing jump regime $\xi\to 0,$ we obtain,
\benr\label{eq:limiting.d.vanishing}
\xi^2\si^{-2}_{\iny}(\tilde\tau-\tau^0)\Rightarrow \argmax_{\z\in\R}\big(2W(\z)-|\z|\big),
\eenr
where $\si^2_{\iny}=\lim_{T\to\iny}(\eta^{0T}\Sigma\eta^0\big)/\xi^2,$ $\Sigma=E\vep_t\vep_t^T,$ and $W(\cdot)$ is a two-sided Brownian motion on $\R.$ It may be observed that the form of the limiting distribution obtained here is the same as that obtained in a one dimensional setting, (\cite{bai1994}). The distribution of $\argmax_{\z\in\R}\big(2W(\z)-|\z|\big)$ is well studied in the literature and its cdf and thus its quantiles are readily available, (\cite{yao1987approximating}).

The limiting distribution in the non-vanishing case of $\xi\to\xi_{\iny}>0$ necessitates a further parametric assumption (Condition A$'$) on the form of the underlying distribution, the reason for which is discussed in Section \ref{sec:inference} later in the manuscript. The literature on this case even in the classical fixed $p$ setting is quite sparse. Some relevant articles in this direction are of \cite{jandhyala1999capturing} and \cite{fotopoulos2010exact}. However, the results of these articles do not allow an extension to the case where the dimension $p$ is function of $T.$ When $p$ is allowed to move with $T,$ but $p<<T,$ the only articles we are aware of who consider the non-vanishing case are of \cite{bhattacharjee2018change} and \cite{bhattacharjee2019change}. However it may be observed that our result to follow is quite different in comparison to theirs and is additionally valid in the high dimensional setting. To describe this distribution, let $\cL$ represent the parametric form of the distribution of the random variable $\big(2\eta^{0T}\vep_t-\xi^2\big)$ and define the following negative drift two sided random walk initializing at the origin,
\benr\label{eq:neg.drift.rw}
\cC_{\iny}(\z)=
\begin{cases}\sum_{t=1}^{\z} z_t, & \z\in \Np=\{1,2,3,...\} \\ 	
	0,				  &	\z=0 \\
	\sum_{t=1}^{-\z}z_t^*,		  &	\z\in \Nn=\{-1,-2,-3,...\},
\end{cases}
\eenr
where $z_t,z_t^*$ are independent copies of a $\cL(-\xi_{\iny}^2,4\xi_{\iny}^2\si^2_{\iny})$ distribution, which are also independent over all $t.$ The notation in the arguments of $\cL(\cdotp,\cdotp)$ is representative of the mean and variance of this distribution, where the limits $\xi_{\iny}$ and $\si_{\iny}^2$ are as described earlier. Then, we obtain the following result,
\benr\label{eq:limiting.d.non.vanishing}
(\tilde\tau-\tau^0)\Rightarrow \argmax_{\z\in \Z}\cC_{\iny}(\z),
\eenr
where $\Z$ represents the collection of integers. Quantiles of this distribution can be approximated numerically thereby enabling the construction of asymptotically valid confidence intervals. We emphasize  that asymptotics here are in a high dimensional sense, where the sampling period $T\to \iny$ and the dimension $p$ may be fixed or be allowed to diverge, potentially at an exponential rate of $T.$

Clearly all of the above discussion on estimation and inference on $\tau^0$ relies critically on the nuisance estimates  $\h\theta_1$ and $\h\theta_2$ that satisfy suitable conditions, that have not yet been explicitly defined. We postpone this discussion to Section \ref{sec:algorithm} where the construction of these nuisance estimates is discussed, along with validity of the assumed sufficient conditions. Section \ref{sec:mainresults} and Section \ref{sec:inference} study the proposed plug in least squares estimator $\tilde\tau$  and provide a rigorous description of the estimation and inference results discussed above. Section \ref{sec:numerical} provides monte-carlo simulations numerically supporting the theoretical results developed in this article. We conclude this section with a short note on the notation used throughout the article.

%%%%%%%%%%%%%%%%%%%%%%%%%%%%%%%%%%%%%%%%%%%%%%%%%%%%%%%%%%%%%%%%%%%%%%%%%%%%%%%%%%%%%%%%%%%%%%%%%%%%%%%%%%%%%%%%%%%%%%%%%%%%%%%%%%%%%%%%%%%%%%%%%%%%%%%%%%%%%%%%%%%%%%%%%%%
\vspace{1.5mm}
\noi{\it Notation}: Throughout the paper, $\R$ represents the real line. For any vector $\delta\in\R^p,$ $\|\delta\|_1,$ $\|\delta\|_2,$ $\|\delta\|_{\iny}$ represent the usual 1-norm, Euclidean norm, and sup-norm respectively. For any set of indices $U\subseteq\{1,2,...,p\},$ let $\delta_U=(\delta_j)_{j\in U}$ represent the subvector of $\delta$ containing the components corresponding to the indices in $U.$ Let $|U|$ and $U^c$ represent the cardinality and complement of $U.$ We denote by $a\wedge b=\min\{a,b\},$ and $a\vee b=\max\{a,b\},$ for any $a,b\in\R.$ We use a generic notation $c_u>0$ to represent universal constants that do not depend on $T$ or any other model parameter. All limits are with respect to the sample size $T\to\iny.$ The notation $\Rightarrow$ represents convergence in distribution.
%%%%%%%%%%%%%%%%%%%%%%%%%%%%%%%%%%%%%%%%%%%%%%%%%%%%%%%%%%%%%%%%%%%%%%%%%%%%%%%%%%%%%%%%%%%%%%%%%%%%%%%%%%%%%%%%%%%%%%%%%%%%%%%%%%%%%%%%%%%%%%%%%%%%%%%%%%%%%%%%%%%%%%%%%%%

\section{Assumptions and estimation properties}\label{sec:mainresults}

In this section we state sufficient conditions and theoretical results regarding estimation properties of the plug in least squares estimator $\tilde\tau$ of (\ref{est:optimalcp}).

%%%%%%%%%%%%%%%%%%%%%%%%%%%%%%%%%%%%%%%%%%%%%%%%%%%%%%%%%%%%%%%%%%%%%%%%%%%%%%%%%%%%%%%%%%%%%%%%%%%%%%%%%%%%%%%%%%%%%%%%%%%%%%%%%%%%%%%%%%%%%%%%%%%%%%%%%%%%%%%%%%%%%%%%%%%
\vspace{1.5mm}
{\it {{\noi{\bf Condition A (on underlying distributions):}} We assume that the underlying distribution in model (\ref{model:rvmcp}) obeys one of the following two conditions.\\~
		{\rm (I)} {\bf (subgaussian):} The vectors $\vep_t=(\vep_{t1},...,\vep_{tp})^T,$ $t=1,..,T,$ are independent and identically distributed subgaussian random vectors with variance proxy $\si^2<\iny$ (see, Definition \ref{def:subg} and \ref{def:submult}).\\~
		{\rm (II)} {\bf (subexponential):}The vectors $\vep_t=(\vep_{t1},...,\vep_{tp})^T,$ $t=1,..,T,$ are independent and identically distributed subexponential random vectors with variance proxy $\si^2<\iny$ (see, Definition \ref{def:sube} and \ref{def:submult})}}
%%%%%%%%%%%%%%%%%%%%%%%%%%%%%%%%%%%%%%%%%%%%%%%%%%%%%%%%%%%%%%%%%%%%%%%%%%%%%%%%%%%%%%%%%%%%%%%%%%%%%%%%%%%%%%%%%%%%%%%%%%%%%%%%%%%%%%%%%%%%%%%%%%%%%%%%%%%%%%%%%%%%%%%%%%%

\vspace{1.5mm}

Subgaussian and subexponential are well known classes of distributions with the latter being heavier tailed than the former. Distributions included in class (I) are: Gaussian distribution, any bounded distribution, asymmetric mixtures of Gaussian distributions etc. Examples of distributions included in class (II) are: Laplace distribution, mean centered Exponential distribution, mean centered Chi-square, centered mixtures of these distributions, amongst several others. The monograph \cite{vershynin2019high} provides a detailed study on the behavior of these classes of distributions. This assumption is significantly weaker than assuming a Gaussian distribution such as that in \cite{wang2018high}, but it requires lighter tail behavior in comparison to \cite{bhattacharjee2019change} who assume a finite fourth moment of the underlying distribution. However, as discussed in Section \ref{sec:intro}, the inference results of \cite{bhattacharjee2018change} and \cite{bhattacharjee2019change} do not extend to the $p>>T$ setting as considered here. Moreover, our results indicate that achieving an optimal rate $O_p(\xi^{-2})$ in the high dimensional setting leads to the rate required of jump signal indeed being influenced by the tail behavior of the underlying distribution (see, (\ref{eq:41})). It is thus expected that an assumption of heavier tails will lead to further stringent requirements on this rate, although we do not pursue this further in this article.

%%%%%%%%%%%%%%%%%%%%%%%%%%%%%%%%%%%%%%%%%%%%%%%%%%%%%%%%%%%%%%%%%%%%%%%%%%%%%%%%%%%%%%%%%%%%%%%%%%%%%%%%%%%%%%%%%%%%%%%%%%%%%%%%%%%%%%%%%%%%%%%%%%%%%%%%%%%%%%%%%%%%%%%%%%%
\vspace{1.5mm}
{\it {{\noi{\bf Condition B (on the covariance structure):}} The covariance $\Sigma:=E\vep_t\vep_t^T$ has bounded eigenvalues, i.e., $0<\ka^2\le\rm{min eigen}(\Si) <\rm{max eigen}(\Si)\le\phi^2<\iny.$}}
%%%%%%%%%%%%%%%%%%%%%%%%%%%%%%%%%%%%%%%%%%%%%%%%%%%%%%%%%%%%%%%%%%%%%%%%%%%%%%%%%%%%%%%%%%%%%%

\vspace{1.5mm}
Condition B assumes a positive definite spatial dependence structure. The assumption of eigenvalues to be bounded by constants is necessary only from an inference perspective. The effect of allowing the maximum eigenvalue of $\Si$ to diverge with $T$ (via $p$) will be the inability of $\tilde\tau$ to yield an optimal rate of convergence $O_p(\xi^{-2}),$ and in turn disallowing the existence of limiting distributions. A more intuitive reasoning for the necessity of this assumption is that allowing $\phi^2$ to diverge may allow the asymptotic variance  $\si^2_{\iny}$ of (\ref{eq:limiting.d.vanishing}) and (\ref{eq:limiting.d.non.vanishing}) to be infinite, causing the corresponding limiting processes to be degenerate, and in turn causing the {\it argmax} of these processes to be undefined. Similarly, allowing $\ka^2$ to converge to zero may lead $\si^2_{\iny}$ to be zero leading to degeneracy of the limiting processes and thus disallowing existence of the limiting distributions.

If the objective is only that of estimation, then these assumptions can be relaxed. In this case, $\ka^2$ may be allowed to converge to zero (or identically zero), i.e., potentially rank deficient. The upper bound $\phi^2$ may be allowed to diverge with $T.$ The bounds for the localization error of $\tilde\tau$ and thereby its rate of convergence provided later in this section are obtained upto universal constants. Consequently the effect of this relaxation will be directly observable in these bounds. Specifically, the rank deficient case will have no impact of the rate of convergence, whereas, the case where $\phi^2$ is allowed to diverge will lead to a deceleration of the rate of convergence of $\tilde\tau.$

We also mention here that Condition A is inherently related to Condition B in that the former by construction imposes the restriction ${\rm maxeigen}\Si=O(\si^2),$ where $\si^2$ is the variance proxy parameter. This can be observed as follows: for all $\delta\in\R^p,$ $\|\delta\|_2=1$ from Definition \ref{def:submult}, we have, $\delta^T\vep_t\sim {\rm subG}(\si^2)$ (or ${\rm subE}(\si^2)$) and thus $\delta^T\Si\delta=E(\delta^T\vep_t)^2\le c_u\si^2.$\footnote{This follows since if $x\sim {\rm subG} (\si^2)$ (or ${\rm subE(\si^2)}$ then $E|x|^k\le 3k\si^kk^{k/2}$ (or $4\si^kk^k$)), $k\ge 1,$ see, e.g. \cite{vershynin2019high}.} Consequently, $\rm{max eigen}(\Si):=\sup_{\|\delta\|_2=1}\delta^T\Si\delta\le c_u\si^2.$ Thus, if the maximum eigenvalue of $\Si$ is allowed to diverge with $T,$ then the variance proxy parameter $\si^2$ must necessarily be allowed to diverge at the same rate. Consequently, without loss of generality, one may assume that $\phi^2$ and $\si^2$ are of the same order, i.e., $\phi^2\asymp \si^2.$ The reason we mention this is because the effect of a diverging $\phi^2$ will be observed  via $\si^2$ in the bounds to be presented later in this section.
%%%%%%%%%%%%%%%%%%%%%%%%%%%%%%%%%%%%%%%%%%%%%%%%%%%%%%%%%%%%%%%%%%%%%%%%%%%%%%%%%%%%%%%%%%%%%%%%%%%%%%%%%%%%%%%%%%%%%%%%%%%%%%%%%%%%%%%%%%%%%%%%%%%%%%%%%%%%%%%%%%%%%%%%%%%%

%%%%%%%%%%%%%%%%%%%%%%%%%%%%%%%%%%%%%%%%%%%%%%%%%%%%%%%%%%%%%%%%%%%%%%%%%%%%%%%%%%%%%%%%%%%%%%%%%%%%%%%%%%%%%%%%%%%%%%%%%%%%%%%%%%%%%%%%%%%%%%%%%%%%%%%%%%%%%%%%%%%%%%%%%%%%
Next consider the following sets of non-zero indices of $\theta_1^0$ and $\theta_2^0,$
\benr\label{def:setS}
S_{1}=\{k;\,\theta^0_{1k}\ne 0\},\quad {\rm and}\quad S_{2}=\{k;\,\theta^0_{2k}\ne 0\},
\eenr
and let $S_1^c$ and $S_2^c$ be the complement sets. Define the maximum cardinality $|S_{1}|\vee |S_{2}|=s\ge 1.$ The parameter $s$ measures sparsity in the model (\ref{model:rvmcp}). To allow the viability of this assumption one may center the observed data with column-wise means, i.e., consider $x_t$ of model (\ref{model:rvmcp}) where instead of the means $\theta_1^0,\theta_2^0$ the jump $\eta^0$ is $s$-sparse, i.e., there is a mean change in at most $s$ components. Upon centering $x_t$ with column-wise empirical means, $x_t^*=x_t-\bar x,$ $t=1,...,T,$ with $\bar x=\sum_{t=1}^T x_t\big/T,$ the sparsity of $\eta^0$ is transferred onto the new mean vectors $\theta_1^*=Ex_t^*,$ $t\le \tau^0,$ and $\theta_2^*=Ex_t^*,$ $t>\tau^0.$ Heuristically, this centering operation is same as that carried out in linear regression models to get rid of the intercept parameter, which is implicitly assumed in the high dimensional linear regression literature. The sparsity assumption is typically made on the jump vector $\eta^0,$ as done in \cite{wang2018high} and \cite{enikeeva2013high}. In contrast we make this assumption directly on the mean vectors $\theta_1^0$ and $\theta_2^0$ and in Appendix \ref{app:equivalence} we show  that this assumption holds without loss of generality with respect to a sparsity assumption on jump vector $\eta^0,$ in context of the problem under consideration.
%%%%%%%%%%%%%%%%%%%%%%%%%%%%%%%%%%%%%%%%%%%%%%%%%%%%%%%%%%%%%%%%%%%%%%%%%%%%%%%%%%%%%%%%%%%%%%%%%%%%%%%%%%%%%%%%%%%%%%%%%%%%%%%%%%%%%%%%%%%%%%%%%%%%%%%%%%%%%%%%%%%%%%%%%%%

The remainder of this section is divided into two subsections. These subsections present near optimal and optimal rates of convergence of $\tilde\tau$ and the sufficient conditions on the nuisance estimates $\h\theta_1$ and $\h\theta_2$ required for the same. As noted in Section \ref{sec:intro}, near optimal rates in themselves do not allow inference. However, these results are still relevant since they shall provide new insight into the distinctions between the sufficient conditions required to achieve optimality over near optimality. Moreover, these shall also serve as a stepping stone in the construction of a feasible methodology to obtain an optimal estimator considered in Section \ref{sec:algorithm}.

\subsection{Near optimal $O_p(\xi^{-2}s\log(p\vee T))$ estimation of $\tau^0$}

We begin with the following condition on the nuisance estimates $\h\theta_1$ and $\h\theta_2.$

%%%%%%%%%%%%%%%%%%%%%%%%%%%%%%%%%%%%%%%%%%%%%%%%%%%%%%%%%%%%%%%%%%%%%%%%%%%%%%%%%%%%%%%%%%%%%%%%%%%%%%%%%%%%%%%%%%%%%%%%%%%%%%%%%%%%%%%%%%%%%%%%%%%%%%%%%%%%%%%%%%%%%%%%%%%
\vspace{1.5mm}
{\it {{\noi{\bf Condition C.1 (on nuisance estimates $\h\theta_1,\h\theta_2$ for near optimality of $\tilde\tau$):}}  Let $\pi_T\to 0$ be a positive sequence and assume that the following two properties hold with probability at least $1-\pi_T.$\\~
		{\rm (I)} The nuisance estimates $\h\theta_1$ and $\h\theta_2$ satisfy $\|(\h\theta_1)_{S_1^c}\|_1\le 3\|(\h\theta_1-\theta_1^0)_{S_1}\|,$ and $\|(\h\theta_2)_{S_2^c}\|_1\le 3\|(\h\theta_2-\theta_2^0)_{S_2}\|,$ where $S_1,$ and $S_2$ are as defined in (\ref{def:setS}).\\~
		{\rm (II)} Assume these nuisance estimates satisfy the following bound in the $\ell_2$ error,
		\benr
		\|\h\theta_1-\theta_1^0\|_2\vee\|\h\theta_2-\theta_2^0\|_2\le c_{u1}\xi,\nn
		\eenr
		where $c_{u1}>0$ is an appropriately chosen small enough constant.}}
%%%%%%%%%%%%%%%%%%%%%%%%%%%%%%%%%%%%%%%%%%%%%%%%%%%%%%%%%%%%%%%%%%%%%%%%%%%%%%%%%%%%%%%%%%%%%%%%%%%%%%%%%%%%%%%%%%%%%%%%%%%%%%%%%%%%%%%%%%%%%%%%%%%%%%%%%%%%%%%%%%%%%%%%%%%

\vspace{1.5mm}
Condition C.1 is an exceptionally weak condition on the quality of nuisance estimates. All it requires is the $\ell_2$ error in the estimation of the mean parameters to be of order of the jump size and may potentially be weaker than assuming ordinary consistency, i.e., an $o_p(1)$ approximation. To see this, consider the case where jump size $\xi$ is bounded below by a constant, then these nuisance estimates are allowed to be inconsistent. Perhaps surprisingly, these nuisance estimates shall still be sufficient for near optimal estimation $O_p\big(s\log(p\vee T)\big)$ of the change point parameter. We can now state our first result which bounds the localization error of $\tilde\tau,$ thereby also yielding a near optimal rate of convergence in both subgaussian and subexponential settings.

%%%%%%%%%%%%%%%%%%%%%%%%%%%%%%%%%%%%%%%%%%%%%%%%%%%%%%%%%%%%%%%%%%%%%%%%%%%%%%%%%%%%%%%%%%%%%%%%%%%%%%%%%%%%%%%%%%%%%%%%%%%%%%%%%%%%%%%%%%%%%%%%%%%%%%%%%%%%%%%%%%%%%%%%%%%
\begin{thm}\label{thm:nearoptimalcp} Suppose the model (\ref{model:rvmcp}) and assume $\tau^0\wedge(T-\tau^0)\ge 0$ and that $\xi>0.$ Additionally assume Condition A(I) (subgaussian), B and C.1 holds. Then for any $T\ge 2,$ and $c_u>2,$ we have,
	\benr
	(i)\,\,\big|\tilde\tau-\tau^0\big|\le 72c_u\si^2\xi^{-2}s\log (p\vee T),\nn
	\eenr
	with probability at least $1-2\exp\{-c_{u1}\log (p\vee T)\}-\pi_T,$ with $c_{u1}=(c_u-2).$ In other words $\si^{-2}\xi^2(\tilde\tau-\tau^0)=O\big(s\log (p\vee T)\big),$ with probability $1-o(1).$ Alternatively, under Condition A(II) (subexponential setting), B and C.1, assuming $T\ge \log (p\vee T),$ and $c_u>8,$ we have,
	\benr
	(ii)\,\,\big|\tilde\tau-\tau^0\big|\le \max\Big\{72c_u\si^2\xi^{-2}s\log (p\vee T),\,\,\log(p\vee T)\Big\}\nn
	\eenr
	with probability at least $1-\exp\big\{-c_{u2}\log (p\vee T)\big\}-\pi_T,$ with  $c_{u2}=(\surd(c_u/2)-2)>0.$ In other words, when $\xi=O(\surd s),$ we have $\si^{-2}\xi^2(\tilde\tau-\tau^0)=O\big(s\log (p\vee T)\big),$ else, we have  $\si^{-2}(\tilde\tau-\tau^0)=O\big(\log (p\vee T)\big),$ both with probability $1-o(1).$
\end{thm}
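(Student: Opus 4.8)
The plan is to control the localization error by analyzing the difference $Q(\tilde\tau,\h\theta_1,\h\theta_2) - Q(\tau^0,\h\theta_1,\h\theta_2) \le 0$, which holds by definition of $\tilde\tau$ as the minimizer. I would begin by expanding $Q(\tau,\h\theta_1,\h\theta_2)$ around the true values, writing $\h\theta_j = \theta_j^0 + (\h\theta_j-\theta_j^0)$ and substituting $x_t = \theta_{i(t)}^0 + \vep_t$ where $i(t)$ denotes the true regime. This produces a decomposition of $Q(\tau,\h\theta_1,\h\theta_2) - Q(\tau^0,\h\theta_1,\h\theta_2)$ into (a) a deterministic ``signal'' term that is quadratic in the jump vector $\eta^0$ and grows like $\xi^2|\tau-\tau^0|/T$ on the event $\{\tau \ne \tau^0\}$; (b) cross terms between the noise $\vep_t$ and the jump $\eta^0$, of the form $\eta^{0T}\sum_{t \in (\tau^0,\tau]}\vep_t / T$ (or the symmetric version for $\tau < \tau^0$); and (c) cross terms between the noise and the nuisance errors $(\h\theta_j-\theta_j^0)$, controlled using Condition C.1(I) (the cone condition) and Condition C.1(II). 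Without loss of generality I would treat the case $\tilde\tau > \tau^0$; the other case is symmetric.

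Next, on the event $\{\,|\tilde\tau-\tau^0| = k\,\}$ for a generic $k$, the basic inequality $Q(\tilde\tau,\cdot) \le Q(\tau^0,\cdot)$ rearranges to give a lower bound on the signal term by an upper bound on the stochastic terms: roughly $c\,\xi^2 k / T \le |\eta^{0T}\sum_{t=\tau^0+1}^{\tau^0+k}\vep_t|/T + (\text{nuisance cross terms})$. The signal lower bound needs care at the boundary — but the hypothesis $\tau^0\wedge(T-\tau^0)\ge 0$ together with the fact that $|\tau-\tau^0|\le T$ means the quadratic term retains its $\xi^2 k/T$ order up to constants. To handle the supremum over $k$ I would peel: for $k$ in a dyadic block $[2^{j}, 2^{j+1})$, use a maximal inequality (Kolmogorov's inequality, Theorem \ref{thm:kolmogorov}, for the partial sums of the i.i.d.\ mean-zero variables $\eta^{0T}\vep_t$, whose variance is $\eta^{0T}\Sigma\eta^0 \le \phi^2\xi^2$, or more directly a subgaussian/subexponential tail bound on the maximum of partial sums) to show $\max_{k\le 2^{j+1}}|\eta^{0T}\sum_{t}\vep_t|$ is $O(\xi\sqrt{2^{j}\log(p\vee T)})$ with high probability, and likewise bound the nuisance cross term using Condition C.1, Cauchy--Schwarz on the sparse support, and a sup-norm bound $\|\sum_{t=\tau^0+1}^{\tau^0+k}\vep_t\|_\infty = O(\sqrt{k\log(p\vee T)})$ (subgaussian) or the analogous Bernstein-type bound (subexponential). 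Comparing the signal $\xi^2 2^{j}/T$ to these bounds shows that $2^{j}$ cannot exceed a constant times $\si^2\xi^{-2}s\log(p\vee T)$ (subgaussian) resp.\ the stated maximum (subexponential), which after summing the failure probabilities over the $O(\log T)$ dyadic blocks yields the claimed probability $1 - 2\exp\{-c_{u1}\log(p\vee T)\} - \pi_T$.

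The main obstacle I anticipate is the interplay between the nuisance cross term and the signal term near the regime where $\xi$ is small: the nuisance error $\|\h\theta_j-\theta_j^0\|_2 \le c_{u1}\xi$ contributes a cross term of order $c_{u1}\xi\cdot\|\sum_{t}\vep_t\|_\infty\cdot(\text{sparsity factor})/T$ which, once the cone condition is used to pass from $\ell_1$ to $\ell_2$ on a set of size $O(s)$, scales like $c_{u1}\xi\sqrt{k s\log(p\vee T)}/T$; one must check that the constant $c_{u1}$ can be chosen small enough that this is absorbed into a fraction of the signal $\xi^2 k/T$ for all relevant $k$, and that this choice is compatible with the constant appearing in the final bound $72c_u$. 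A second delicate point is the subexponential case: there the partial-sum maximal bound degrades when $k < \log(p\vee T)$ (the ``linear'' part of the Bernstein tail dominates rather than the ``subgaussian'' part), which is precisely why the conclusion in part (ii) carries the extra $\log(p\vee T)$ term and requires $T \ge \log(p\vee T)$; I would isolate the two ranges $k \le \log(p\vee T)$ and $k > \log(p\vee T)$ and argue separately, using the heavier-tail Bernstein inequality in the first range, which is what forces the $c_{u2} = (\sqrt{c_u/2}-2)$ constant rather than $c_u - 2$.
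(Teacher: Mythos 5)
Your proposal is correct in substance and reaches the stated rates, but it handles the localization differently from the paper. Both arguments start from the same basic inequality $Q(\tilde\tau,\h\theta_1,\h\theta_2)\le Q(\tau^0,\h\theta_1,\h\theta_2)$ and the same signal--noise decomposition (the paper's (\ref{eq:7}) gives the signal as $\frac{k}{T}\{\|\h\eta\|_2^2+2(\h\theta_2-\theta_2^0)^T\h\eta\}\ge \frac{k\xi^2}{2T}$, with no boundary subtlety, so your worry there is moot), and both ultimately rest on the same sup-norm deviation bounds for $\big\|\sum_{t=\tau^0+1}^{\tau^0+k}\vep_t\big\|_{\iny}$ (the paper's Lemma \ref{lem:nearoptimalcross}). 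The difference is the device used to take the supremum over $k$: you peel dyadically with $k$-adapted noise bounds of order $\surd\{k\log(p\vee T)\}$ and conclude in one pass, whereas the paper proves a single uniform lower bound over an annulus $\cG(u_T,v_T)$ using the worst-case $\surd\{Tu_T\log(p\vee T)\}$ bound and then iterates, shrinking $u_T$ recursively so that the exponents converge geometrically to the final rate (Lemma \ref{lem:lower.b.near.optimal} plus the infinite recursion in the proof of Theorem \ref{thm:nearoptimalcp}). A second cosmetic difference: the paper's near-optimal lemma does not split $\h\eta$ into $\eta^0+(\h\eta-\eta^0)$; it bounds $|\sum\vep_t^T\h\eta|\le\|\sum\vep_t\|_{\iny}\|\h\eta\|_1$ directly with $\|\h\eta\|_1\le\frac{3}{2}\surd{s}\,\xi$ from the cone condition, which is why the $s$ enters; your split gives the same rate since the nuisance piece carries the $\surd{s}$ anyway (the $\eta^0$--noise split is reserved in the paper for the optimal-rate argument, where Kolmogorov's inequality removes the log factor). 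Your diagnosis of the subexponential case is also the right one: the Bernstein bound only behaves subgaussianly when the number of summands exceeds $\log(p\vee T)$, which in the paper manifests as the hypothesis $Tv_T\ge\log(p\vee T)$ in Lemma \ref{lem:nearoptimalcross} that stalls the recursion at the $\log(p\vee T)$ floor and produces the constant $\surd(c_u/2)-2$; your plan of treating $k\le\log(p\vee T)$ and $k>\log(p\vee T)$ separately achieves the same thing. The only loose end in your write-up is bookkeeping of universal constants (the $72c_u$ and the exact failure probabilities after summing over dyadic blocks), which you correctly flag but do not carry out.
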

%%%%%%%%%%%%%%%%%%%%%%%%%%%%%%%%%%%%%%%%%%%%%%%%%%%%%%%%%%%%%%%%%%%%%%%%%%%%%%%%%%%%%%%%%%%%%%%%%%%%%%%%%%%%%%%%%%%%%%%%%%%%%%%%%%%%%%%%%%%%%%%%%%%%%%%%%%%%%%%%%%%%%%%%%%%

Although Theorem \ref{thm:nearoptimalcp} only provides a near optimal rate of convergence and not the optimal rate, it does so under a very mild condition on the relationship between the nuisance estimates and the jump size (Condition C.1).

\begin{rem}\label{rem:near.optimal.rem1} {\rm It may be observed from Theorem \ref{thm:nearoptimalcp} that when $\xi=O(\surd{s})$ and $T\ge \log(p\vee T),$ then under both subgaussian and subexponential cases we have the same rate of convergence of $\tilde\tau,$ i.e. $(\tilde\tau-\tau^0)=O_p\big(\xi^{-2}s\log(p\vee T)\big),$ under the same assumption (Condition C.1) on the nuisance estimates. This illustrates that when only a near optimal rate of convergence is of interest the heavier tail of a subexponential distribution does not influence $\tilde\tau$ in its rate of convergence, or the quality of nuisance estimates required to achieve the same. This shall no longer be true when instead an optimal rate is of interest. Remark \ref{rem:near.optimal} in Section \ref{sec:algorithm} provides further insight in this direction.}
\end{rem}

Another observable consequence of Theorem \ref{thm:nearoptimalcp} is that when $\big[\{s\log (p\vee T)\}^{1/2}\big/\xi\big]\to 0,$ then under the subgaussian case we have perfect identification of the change point parameter in probability, i.e., $pr(\tilde\tau=\tau^0)\to 1.$ However the same cannot be obtained from Theorem \ref{thm:nearoptimalcp} in the subexponential case. This is because under more general conditions than Remark \ref{rem:near.optimal.rem1}, the localization bound under a subexponential distribution is either less precise than its subgaussian counterpart, or alternatively, requires a more rigid condition to match the estimation precision obtained under subgaussianity. This is illustrated in the following result where a slightly weaker bound allows perfect identifiability for the subexponential case.

\begin{thm}\label{thm:subE.nearoptimal.special} Suppose the model (\ref{model:rvmcp}) and assume $\tau^0\wedge(T-\tau^0)\ge 0,$ $\xi>0.$ Additionally assume Condition A(II) (subexponential), B and C.1 hold. Then for $T\ge 2$ and any $c_u>2,$ we have,
	\benr\label{eq:8}
	\big|\tilde\tau-\tau^0\big|\le \big(12c_u\si\big)^2\xi^{-2}s\log^2 (p\vee T)
	\eenr
	with probability at least $1-2\exp\{-c_{u1}\log (p\vee T)\}-\pi_T,$ with $c_{u1}=(c_u-2).$ Consequently, if additionally the jump size is large enough to satisfy $\xi\ge c_{u2} s^{1/2}\log (p\vee T),$ for some $c_{u2}>0.$ Then, we have, (i) $(\tilde\tau-\tau^0)=O(1),$ with probability $1-o(1).$ Furthermore, if the jump size diverges any faster,  i.e., $\big\{s^{1/2}\log (p\vee T)\big/\xi\big\}\to 0,$ then, $pr(\tilde\tau=\tau^0)\to 1.$
\end{thm}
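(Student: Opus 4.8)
The plan is to reuse the localization mechanism behind Theorem \ref{thm:nearoptimalcp}(ii), replacing only the stochastic bound applied to partial sums of the noise. By the symmetry of $Q$ in (\ref{def:Q}) it suffices to work on the event $\{\tilde\tau>\tau^0\}$ (the case $\tilde\tau<\tau^0$ is identical after relabelling, and $\tilde\tau=\tau^0$ is trivial); the boundary cases $\tau^0\in\{0,T\}$, where one of the two sums in (\ref{def:Q}) is empty, are handled the same way. Write $k=\tilde\tau-\tau^0\ge 1$, $w_j=\h\theta_j-\theta_j^0$ for $j=1,2$, and $S_k=\sum_{t=\tau^0+1}^{\tau^0+k}\vep_t$. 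On $t\in(\tau^0,\tilde\tau]$ the model mean is $\theta_2^0$; since $\tilde\tau$ minimizes $Q(\cdot,\h\theta_1,\h\theta_2)$ in (\ref{est:optimalcp}) we have $Q(\tau^0,\h\theta_1,\h\theta_2)-Q(\tilde\tau,\h\theta_1,\h\theta_2)\ge 0$, and substituting $x_t=\theta_2^0+\vep_t$ and expanding the squared norms gives the deterministic inequality
\[
0\ \le\ 2\big\langle S_k,\ w_1-w_2+\eta^0\big\rangle\ +\ k\big(\|w_2\|_2^2-\|w_1\|_2^2-2w_1^T\eta^0-\xi^2\big).
\]
By Condition C.1(II) one has $\|w_j\|_2\le c_{u1}\xi$, whence $\|w_2\|_2^2-\|w_1\|_2^2-2w_1^T\eta^0-\xi^2\le(c_{u1}^2+2c_{u1}-1)\xi^2\le-\xi^2/2$ once $c_{u1}$ is small enough, yielding $k\xi^2/2\le 2|\langle S_k,\,w_1-w_2+\eta^0\rangle|$.

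Next I would bound the inner product by $\|S_k\|_\iny\,\|w_1-w_2+\eta^0\|_1$ and estimate the $\ell_1$ norm using sparsity: $\theta_1^0,\theta_2^0$ are supported on sets of size $\le s$ (see (\ref{def:setS})), so $\eta^0$ has at most $2s$ nonzeros and $\|\eta^0\|_1\le\surd(2s)\,\xi$, while the cone inequality of Condition C.1(I) together with $\|w_j\|_2\le c_{u1}\xi$ gives $\|w_j\|_1\le 4\surd s\,\|w_j\|_2\le 4c_{u1}\surd s\,\xi$. Hence $\|w_1-w_2+\eta^0\|_1\le c\,\surd s\,\xi$ for an explicit constant $c$, and the key bound reduces to $k\xi\le C\,\surd s\,\|S_k\|_\iny$ for an absolute $C$.

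The crux is the stochastic control of $\|S_k\|_\iny$, and this is where the argument departs from Theorem \ref{thm:nearoptimalcp}(ii). Rather than the sharp two–term Bernstein bound $\|S_k\|_\iny\lesssim\si\big(\surd(k\log(p\vee T))+\log(p\vee T)\big)$, whose second summand is exactly what forces the $\max$ with $\log(p\vee T)$ there, I would use the single cruder bound $\|S_k\|_\iny\le 12c_u\si\,\surd k\,\log(p\vee T)$, valid \emph{simultaneously for all} $1\le k\le T$. This is legitimate because for $k\ge 1$ and $\log(p\vee T)\ge 1$ one has $\surd k\,\log(p\vee T)\ge\surd(k\log(p\vee T))$ and $\surd k\,\log(p\vee T)\ge\log(p\vee T)$, so this threshold lies in the subgaussian regime of the subexponential Bernstein exponent; combining Bernstein's inequality for the coordinates $\vep_{tj}$ (subexponential by Condition A(II)), a union bound over $j\le p$, and the Kolmogorov-type maximal inequality of Theorem \ref{thm:kolmogorov} over $k$, the exceedance probability is bounded by $2\exp\{-(c_u-2)\log(p\vee T)\}$ — a subgaussian-type rate, which explains why the constant $c_{u1}=c_u-2$ here coincides with that of Theorem \ref{thm:nearoptimalcp}(i). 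On this event $k\xi\le 12Cc_u\si\,\surd s\,\surd k\,\log(p\vee T)$, so $\surd k\,\xi\le 12Cc_u\si\,\surd s\,\log(p\vee T)$, and after choosing $c_{u1}$ suitably small and tracking constants this becomes $|\tilde\tau-\tau^0|\le(12c_u\si)^2\xi^{-2}s\log^2(p\vee T)$, the total failure probability being $2\exp\{-(c_u-2)\log(p\vee T)\}+\pi_T$ with $\pi_T$ coming from Condition C.1. Note that if $\xi$ exceeds the displayed threshold the same inequality forces $\surd k<1$, i.e. $\tilde\tau=\tau^0$, so the stated bound remains correct for arbitrarily large $\xi$.

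The two consequences follow at once. If $\xi\ge c_{u2}\,s^{1/2}\log(p\vee T)$ with $c_{u2}:=12c_u\si$, then $|\tilde\tau-\tau^0|\le(12c_u\si/c_{u2})^2=O(1)$ on the above event. If in addition $s^{1/2}\log(p\vee T)/\xi\to 0$, then $(12c_u\si)^2\xi^{-2}s\log^2(p\vee T)=(12c_u\si)^2\{s^{1/2}\log(p\vee T)/\xi\}^2\to 0$, so the nonnegative integer $|\tilde\tau-\tau^0|$ equals $0$ with probability tending to $1$, i.e. ${\rm pr}(\tilde\tau=\tau^0)\to1$. I expect the only genuinely delicate step to be the uniform-in-$k$ partial-sum bound with a subgaussian-type probability in the subexponential case — namely recognizing that the coarser threshold $\surd k\,\log(p\vee T)$ buys back exponential concentration; the reduction to the deterministic inequality, the $\ell_1$ estimate, and the bookkeeping of constants are routine, as is the treatment of the boundary values $\tau^0\in\{0,T\}$.
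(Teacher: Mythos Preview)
Your argument is correct and in fact more direct than the paper's. The paper does not work with the single inequality $k\xi^2/2\le 2\|S_k\|_\iny\|w_1-w_2+\eta^0\|_1$ at the random $k=\tilde\tau-\tau^0$; instead it proves a variant of Lemma~\ref{lem:lower.b.near.optimal} using the cruder deviation bound of Lemma~\ref{lem:nearoptimalcross.subE.special}, obtaining $\inf_{\tau\in\cG(u_T,v_T)}\cU(\tau,\h\theta_1,\h\theta_2)\ge\frac{\xi^2}{2}\big[v_T-\frac{12c_u\si}{\xi}\{u_Ts/T\}^{1/2}\log(p\vee T)\big]$, and then runs the same \emph{recursive tightening} as in Theorem~\ref{thm:nearoptimalcp}, shrinking $u_T$ geometrically until the fixed point $(12c_u\si)^2\xi^{-2}s\log^2(p\vee T)$ is reached. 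Your single-step route works because the pointwise Bernstein bound in the proof of Lemma~\ref{lem:nearoptimalcross.subE.special} already scales like $\surd k$ (not just $\surd(Tu_T)$), so substituting back and solving the quadratic in $\surd k$ gives the final rate immediately; the recursion is then unnecessary, and the paper presumably retains it only for structural parallelism with Theorem~\ref{thm:nearoptimalcp}.

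One small correction: the uniform-in-$k$ control of $\|S_k\|_\iny$ does \emph{not} come from Theorem~\ref{thm:kolmogorov}. Kolmogorov's inequality yields only a polynomial tail $\sum_k D_k^2/d^2$, which cannot produce $2\exp\{-(c_u-2)\log(p\vee T)\}$. What you actually need---and what Lemma~\ref{lem:nearoptimalcross.subE.special} does---is Bernstein pointwise for each fixed $(j,k)$ followed by a plain union bound over $j\in\{1,\dots,p\}$ and $k\in\{1,\dots,T\}$; the two factors of $p\vee T$ are what turn $c_u$ into $c_u-2$ in the exponent. With that amendment your proof goes through with the stated constants.
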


Theorem \ref{thm:subE.nearoptimal.special} is valid when $T\ge 2$ whereas the subexponential case of Theorem \ref{thm:nearoptimalcp} requires $T\ge \log(p\vee T).$ The reason as to why this distinction arises shall have significant consequences on optimal estimation and inference in the context of distinctions between assumptions for subgaussian and subexponential distributions. This is pursued in the following subsection. Simply stated, when the underlying stochastic term comprises of only a finite number of random variables, the heavier tail of the subexponential distribution is realized in the tail bound of this underlying stochastic term, else the behavior is similar to that in the subgaussian case. For intuition purposes, note that when an optimal rate of convergence $|\tilde\tau-\tau^0|\le c\xi^{-2},$ is of interest and $\xi\ge c_u,$ then there are only a finite number of indices between $\tilde\tau$ and $\tau^0.$

\subsection{Optimal $O_p(\xi^{-2})$ estimation of $\tau^0$}\label{subsec:optimal}

This subsection illustrates that $\tilde\tau$ can achieve an optimal rate of convergence, $O_p(\xi^{-2})$ while allowing potentially diminishing jump sizes. The only price one needs to pay to get this advantage is to ensure that the nuisance estimates $\h\theta_1,$ and $\h\theta_2$ are of a higher quality as compared that in the previous subsection. To describe this behavior we begin with a stronger version of Condition C.1 on the nuisance estimates.

%%%%%%%%%%%%%%%%%%%%%%%%%%%%%%%%%%%%%%%%%%%%%%%%%%%%%%%%%%%%%%%%%%%%%%%%%%%%%%%%%%%%%%%%%%%%%%%%%%%%%%%%%%%%%%%%%%%%%%%%%%%%%%%%%%%%%%%%%%%%%%%%%%%%%%%%%%%%%%%%%%%%%%%%%%%
\vspace{1.5mm}
{\it {{\noi{\bf Condition C.2 (assumption on nuisance estimates for optimality of $\tilde\tau$):}}  Let $\pi_T\to 0$ be a positive sequence and assume that either one of the two pairs of properties {\rm (I,II)} or {\rm (I,III)} holds with probability at least $1-\pi_T.$\\~
		{\rm (I)} The nuisance estimates $\h\theta_1$ and $\h\theta_2$ satisfy $\|(\h\theta_1)_{S_1^c}\|_1\le 3\|(\h\theta_1-\theta_1^0)_{S_1}\|,$ and $\|(\h\theta_2)_{S_2^c}\|_1\le 3\|(\h\theta_2-\theta_2^0)_{S_2}\|,$ where $S_1,$ and $S_2$ are as defined in (\ref{def:setS}). \\~
		{\rm (II)} {\bf For the subgaussian case:}  Assume that there exists a sequence $r_T>0,$ such that these nuisance estimates satisfy,
		\benr
		\|\h\theta_1-\theta_1^0\|_2\vee\|\h\theta_2-\theta_2^0\|_2\le r_T\le \frac{c_{u1}\xi}{\{s\log (p\vee T)\}^{1/2}},\nn
		\eenr
		for an appropriately chosen small enough constant $c_{u1}>0.$\\~
		{\rm (III)} {\bf For the subexponential case:} Assume that there exists a sequence $r_T>0,$ such that these nuisance estimates satisfy,
		\benr
		\|\h\theta_1-\theta_1^0\|_2\vee\|\h\theta_2-\theta_2^0\|_2\le r_T\le \frac{c_{u1}\xi}{s^{1/2}\log (p\vee T)},\nn
		\eenr
		for an appropriately chosen small enough constant $c_{u1}>0.$}}
%%%%%%%%%%%%%%%%%%%%%%%%%%%%%%%%%%%%%%%%%%%%%%%%%%%%%%%%%%%%%%%%%%%%%%%%%%%%%%%%%%%%%%%%%%%%%%%%%%%%%%%%%%%%%%%%%%%%%%%%%%%%%%%%%%%%%%%%%%%%%%%%%%%%%%%%%%%%%%%%%%%%%%%%%%%

\vspace{1.5mm}
The only distinction between Condition C.2 and Condition C.1 of the previous subsection is that we have assumed a tighter bound on the nuisance estimates. This tightening has consequences on both the rate of convergence of $\tilde\tau,$ and the assumptions on $s,p$ required for the feasibility of this assumption. These aspects shall be discussed in detail after the following first main result providing an optimal rate of convergence of $\tilde\tau.$

\begin{thm}\label{thm:cpoptimal} Suppose the model (\ref{model:rvmcp}) and assume  $\tau^0\wedge(1-\tau^0)\ge 0,$ $\xi>0,$ and $T\ge 2.$ Additionally assume either one of the following two sets of conditions.\\~
	(a) Suppose Condition A(I) (subgaussian), B and C.2 (I,II) hold.\\~
	(b) Suppose Condition A(II) (subexponential), B and C.2 (I, III) hold.\\~
	Then, for any $0<a<1,$ choosing $c_{a}\ge \surd{(1/a)},$ we have,
	\benr
	|\tilde\tau-\tau^0|\le 36c_a^2\si^2\xi^{-2}\nn
	\eenr
	with probability at least $1-a-2\exp\{-\log(p\vee T)\}-\pi_T.$ Equivalently, we have, $\si^{-2}\xi^{2}(\tilde\tau-\tau^0)=O_p(1).$
\end{thm}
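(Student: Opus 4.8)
The plan is to argue by a peeling/localization scheme on the deviation of the empirical criterion $Q(\tau,\h\theta_1,\h\theta_2)$ from its value at $\tau^0$. First I would write the basic optimality inequality: since $\tilde\tau$ minimizes $Q(\cdot,\h\theta_1,\h\theta_2)$, we have $Q(\tilde\tau,\h\theta_1,\h\theta_2)\le Q(\tau^0,\h\theta_1,\h\theta_2)$, hence $Q(\tilde\tau,\h\theta_1,\h\theta_2)-Q(\tau^0,\h\theta_1,\h\theta_2)\le 0$. Decomposing $\h\theta_j=\theta_j^0+(\h\theta_j-\theta_j^0)$ and using the model (\ref{model:rvmcp}), on the event $\{\tilde\tau>\tau^0\}$ the difference telescopes over the block of indices $t\in\{\tau^0+1,\dots,\tilde\tau\}$, producing a deterministic "signal" term that grows like $\xi^2 k/T$ where $k=|\tilde\tau-\tau^0|$ (coming from $\|\theta_1^0-\theta_2^0\|_2^2$ on that block), against three stochastic/nuisance "noise" terms: (i) a mean-zero cross term $\frac{2}{T}\langle\eta^0,\sum_{t=\tau^0+1}^{\tilde\tau}\vep_t\rangle$, (ii) a cross term involving $\h\theta_j-\theta_j^0$ and $\sum\vep_t$, and (iii) a cross term $\frac{2}{T}\langle\eta^0,\sum(\h\theta_j-\theta_j^0)\rangle$-type contribution that is purely deterministic in the nuisance error. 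The case $\tilde\tau<\tau^0$ is symmetric. So the core inequality will read, schematically, $c\,\xi^2 k/T\le (\text{stochastic term}_k)+(\text{nuisance term}_k)$, and the goal is to show this forces $k\le 36 c_a^2\si^2\xi^{-2}$ with the stated probability.

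Next I would control the stochastic term uniformly over $k$. Here is where Condition A enters: for the subgaussian case each $\langle\eta^0,\vep_t\rangle$ is $\mathrm{subG}(\xi^2\si^2)$ by Definition \ref{def:submult} (and the paragraph after Condition B), so $\sum_{t=\tau^0+1}^{\tau^0+k}\langle\eta^0,\vep_t\rangle$ is $\mathrm{subG}(k\xi^2\si^2)$; in the subexponential case one uses a Bernstein-type bound giving a sub-Gaussian regime for moderate $k$. The key device flagged in the introduction is the Kolmogorov maximal inequality (Theorem \ref{thm:kolmogorov}) applied to the partial sums $\{\sum_{t=\tau^0+1}^{\tau^0+m}\langle\eta^0,\vep_t\rangle\}_{m\le k}$, which lets me pass from a pointwise tail bound to a uniform-over-$k$ bound without the usual $\log T$ price — this is precisely what buys the optimal rather than near-optimal rate. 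Running the peeling argument over dyadic blocks $k\in[2^j,2^{j+1})$ and matching $\sqrt{k}\,\xi\si$ against $\xi^2 k/T$, the stochastic term alone yields $k=O_p(\xi^{-2}\si^2)$. For the nuisance terms, term (iii) is bounded by $\frac{2}{T}\xi\cdot k\cdot\|\h\theta_j-\theta_j^0\|_2$ after applying the cone condition C.2(I) and a restricted-eigenvalue/compatibility-type bound to pass from $\ell_1$ on $S^c$ to $\ell_2$ on the support (the factor $\sqrt{s}$ appears here), and term (ii) similarly; the bounds in C.2(II)/(III) on $r_T$ relative to $\xi/\{s\log(p\vee T)\}^{1/2}$ (resp. $\xi/\{s^{1/2}\log(p\vee T)\}$) are calibrated exactly so that these nuisance contributions are dominated by the signal term $c\xi^2 k/T$, i.e. they can be absorbed into the left-hand side with a constant loss.

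The step I expect to be the main obstacle is the uniform control of the stochastic term combined with the nuisance cross term (ii), i.e. getting the peeling to close with constants that are genuinely free of $s$, $p$, and $\log T$ — this is the whole point of the theorem and the reason the paper emphasizes a "novel application of Kolmogorov's inequality." Naively one would union-bound over all $k\le T$ and lose a $\log T$; the delicate part is to organize the argument (dyadic peeling plus the maximal inequality, and handling the $\ell_\infty$-type term $\|\sum_{t=\tau^0+1}^{\tau^0+k}\vep_t\|_\infty$ that couples to the nuisance error) so that the only surviving randomness is an $O_p(1)$ factor, captured by the tunable constant $a$ and the choice $c_a\ge\sqrt{1/a}$. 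A secondary subtlety is the subexponential case: for $k$ small (which is the relevant regime once $\xi$ is bounded below) Bernstein's inequality is still in its sub-Gaussian regime, so the same rate $36 c_a^2\si^2\xi^{-2}$ goes through under C.2(III) — one must check the crossover threshold is never triggered, which is where the slightly stronger bound on $r_T$ in (III) versus (II) is spent. Once the signal term dominates all three noise terms, rearranging $c\xi^2 k/T\le C\sqrt{k}\,\xi\si/T$ gives $\sqrt{k}\le C'\si/\xi$, i.e. $|\tilde\tau-\tau^0|\le 36 c_a^2\si^2\xi^{-2}$, and collecting the failure probabilities of the maximal inequality ($2\exp\{-\log(p\vee T)\}$), the peeling ($a$), and the nuisance event ($\pi_T$) yields the stated confidence $1-a-2\exp\{-\log(p\vee T)\}-\pi_T$.
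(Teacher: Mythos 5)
Your proposal is correct and matches the paper's argument in all essential respects: the same decomposition of $Q(\tau,\h\theta_1,\h\theta_2)-Q(\tau^0,\h\theta_1,\h\theta_2)$ into a signal term (absorbed via the inequality $\|\h\eta\|_2^2+2(\h\theta_2-\theta_2^0)^T\h\eta\ge\xi^2/2$), a Kolmogorov-controlled term $\sum_t\vep_t^T\eta^0$ that is free of logarithmic factors, and an $\ell_{\iny}\times\ell_1$ H\"older-bounded term $\sum_t\vep_t^T(\h\eta-\eta^0)$ whose $\surd\{\log(p\vee T)\}$ (resp.\ $\log(p\vee T)$) cost is exactly cancelled by the calibration of Condition C.2(II) (resp.\ C.2(III)), with the failure probabilities $a$, $2\exp\{-\log(p\vee T)\}$ and $\pi_T$ attached to these three pieces respectively. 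The only organizational difference is that you handle uniformity over $k=|\tau-\tau^0|$ by dyadic peeling, whereas the paper proves a single uniform lower bound over an annulus $\cG(u_T,v_T)$ (Lemma \ref{lem:lower.b.optimal}) and then iterates it, repeatedly feeding the previous localization radius back in as the new outer radius $u_T$ so that the exponents converge geometrically to the $\xi^{-2}$ rate; both devices avoid the $\log T$ union-bound loss and yield the same order of constant.
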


Theorem \ref{thm:cpoptimal} provides the optimal rate of convergence of $\tilde\tau.$ A first look on the sufficient conditions required for this result may lead one to suspect that Theorem \ref{thm:cpoptimal} provides an optimal bound without any rate conditions on the model parameters $s,p,\xi,l_T.$ This is indeed true only in a very special case but false in general, as discussed in the following.

Consider the case where mean parameters $\theta_1^0$ and $\theta_2^0$ are known. Here, setting $\h\theta_1=\theta_1^0$ and $\h\theta_2=\theta_2^0$ allows Condition C.2 to be trivially satisfied irrespective of the rate of divergence of $s,p.$ Consequently, even if $\tau^0$ is at a boundary ($\tau^0=0\,{\rm or}\, T$) and the dimensions $s,p$ are diverging arbitrarily fast,
$\tilde\tau$ will still estimate $\tau^0$ at an optimal rate. The only assumption required for this case is $\xi>0,$ i.e. $\theta_1^0\ne\theta_2^0.$ This case is clearly infeasible in practice and is only discussed to provide the following perhaps surprising insight. The estimation of a change point in itself does not require many assumptions that are usually thought of as necessary in the literature, including separation from boundary, minimum jump size and restrictions on dimensionality. These assumptions instead arise solely from the nuisance estimation aspect of the overall process.

In the more realistic setup where $\theta_1^0$ and $\theta_2^0$ are unknown, the key in Theorem \ref{thm:cpoptimal} is Condition C.2. Effectively, the use of Condition C.2 has passed the burden of assumptions on model parameters to the nuisance estimates $\h\theta_1$ and $\h\theta_2.$ To discuss this further we require the following boundary condition on $\tau^0.$

%%%%%%%%%%%%%%%%%%%%%%%%%%%%%%%%%%%%%%%%%%%%%%%%%%%%%%%%%%%%%%%%%%%%%%%%%%%%%%%%%%%%%%%%%%%%%%%%%%%%%%%%%%%%%%%%%%%%%%%%%%%%%%%%%%%%%%%%%%%%%%%%%%%%%%%%%%%%%%%%%%%%%%%%%%
\vspace{1.5mm}
{\it {{\noi{\bf Condition D (on separation of $\tau^0$ from its parametric boundary):}} Assume the existence of a change point $\tau^0$ for the model (\ref{model:rvmcp}), i.e., it satisfies $\tau^0\wedge(T-\tau^0)\ge Tl_T\ge 1,$ for some positive sequence $l_T\to 0.$ }}
%%%%%%%%%%%%%%%%%%%%%%%%%%%%%%%%%%%%%%%%%%%%%%%%%%%%%%%%%%%%%%%%%%%%%%%%%%%%%%%%%%%%%%%%%%%%%%%%%%%%%%%%%%%%%%%%%%%%%%%%%%%%%%%%%%%%%%%%%%%%%%%%%%%%%%%%%%%%%%%%%%%%%%%%%%%

\vspace{1.5mm}
Clearly, all this condition requires is at least one realization from both of the two distributions characterizing model (\ref{model:rvmcp}) and is usually implicit in the literature. Under Condition D, one can obtain regularized mean estimates $\h\theta_1,$ and $\h\theta_2,$ that satisfy at best the bound (see, Section \ref{sec:algorithm}),
\benr\label{eq:15}
\|\h\theta_1-\theta_1^0\|_2\vee\|\h\theta_2-\theta_2^0\|_2\le r_T=c_u\si\Big(\frac{s\log (p\vee T)}{Tl_T}\Big)^{\frac{1}{2}},
\eenr
with probability $1-o(1).$ Now comparing (\ref{eq:15}) with Condition C.2 (II) and C.2 (III) for the subgaussian and subexponential cases, respectively, yields the following requirements that must be satisfied for Condition C.2 to be feasible and in turn Theorem \ref{thm:cpoptimal} to remain valid,
\benr
&&\Big(\frac{c_u\si}{\xi}\Big)\Big\{\frac{s\log (p\vee T)}{\surd(Tl_T)}\Big\}\le c_{u1},\quad{\rm for\,\,the\,\,subgaussian\,\, setting,}\label{eq:16}\\
&&\Big(\frac{c_u\si}{\xi}\Big)\Big\{\frac{s\log^{3/2} (p\vee T)}{\surd (Tl_T)}\Big\}\le c_{u1},\quad{\rm for\,\,the\,\,subexponential\,\, setting,}\label{eq:17}
\eenr
for a suitably chosen small enough constant $c_{u1}>0.$ Relations (\ref{eq:16}) and (\ref{eq:17}) describes interplay between model parameters $\xi,l_T,s,p,T$ \big(which are all sequences in $T$\big) and the underlying class of distribution, that are then sufficient for the estimator $\tilde\tau$ to achieve the optimal rate of convergence $O_p(\xi^{-2}).$

As a direct consequence of Theorem \ref{thm:cpoptimal} one may observe that when $\xi\to \iny,$ then the estimator $\tilde\tau$ perfectly identifies the change point parameter $\tau^0,$ in probability, i.e.,  the limiting distribution of $\tilde\tau$ in this case is degenerate. While this perfect identifiability under a diverging jump size is also provided by the grid search least squares estimator as studied in \cite{bai2010common} and \cite{bhattacharjee2019change}, however the assumption made here on the diverging jump size is weaker given high dimensionality. This can be observed in the rate at which $\xi$ is required to diverge, for e.g. for the same result to hold true in \cite{bhattacharjee2019change} one requires $\xi\to \iny$ at a fast enough rate so that additionally $\xi^{-1}\surd{p/T}\to 0$ is satisfied. The assumption required in \cite{bai2010common} to achieve the same perfect identifiability in the high dimensional setting is more stringent than that of \cite{bhattacharjee2019change}.

From an estimation perspective, the optimal rate $O_p(\xi^{-2})$ of $\tilde\tau$ may not seem a significant improvement in comparison to near optimal rates available in the literature for estimators in the high dimensional setting, for e.g. the projected cusum estimator of \cite{wang2018high} with a presented rate of $O_p(\xi^{-2}\log \log T),$ thus the improvement offered by $\tilde\tau$ being only of order $\log\log T.$ However, this slight improvement is critical from an inference perspective, it is only the availability of an $O_p(\xi^{-2})$ rate that allows the existence of a limiting distribution.

The above discussion also highlights that in order for $\tilde\tau$ to have a non-degenerate limiting distribution in the high dimensional setting, conditions (\ref{eq:16}) and (\ref{eq:17}) must allow $\xi\le c_u,$ despite high dimensionality and while preserving the optimal rate of convergence $O_p(\xi^{-2})$ presented in Theorem \ref{thm:cpoptimal}. This feasibility is summarized in the following corollary.

\begin{cor}\label{cor:cpoptimal.diminishing.jump} Suppose the model (\ref{model:rvmcp}) and assume one of the following two sets of conditions.\\~
	(a) Condition A(I) (subgaussian), B and C.2 (I,II) hold with $r_T=c_u\si\big\{s\log (p\vee T)\big/{Tl_T}\big\}^{1/2}.$ Additionally assume $s\log (p\vee T)=o\big(\surd(Tl_T)\big),$ and $\xi$ (potentially diminishing) satisfies (\ref{eq:16}). \\~
	(b) Condition A(II) (subexponential), B and C.2 (I,III) hold with $r_T=c_u\si\big\{s\log (p\vee T)\big/{Tl_T}\big\}^{1/2}.$ Additionally assume $s\log^{3/2} (p\vee T)=o\big(\surd(Tl_T)\big),$ and $\xi$ (potentially diminishing) satisfies (\ref{eq:17}).\\~
	Then, we have, $\si^{-2}\xi^{2}(\tilde\tau-\tau^0)=O_p(1).$
\end{cor}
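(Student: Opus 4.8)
The plan is to obtain this corollary directly from Theorem \ref{thm:cpoptimal} by verifying that its hypotheses are met under the stated assumptions; there is no new probabilistic content. Theorem \ref{thm:cpoptimal} requires only model (\ref{model:rvmcp}), the (trivially true, by assumption) range and size conditions $\tau^0\wedge(T-\tau^0)\ge 0$, $\xi>0$, $T\ge 2$, together with Condition B and Condition C.2 in the form (I,II) for the subgaussian case and (I,III) for the subexponential case. All of these are postulated in the corollary, with one caveat: Condition C.2 (II) additionally demands that the tolerance sequence satisfy $r_T\le c_{u1}\xi/\{s\log(p\vee T)\}^{1/2}$, and Condition C.2 (III) demands $r_T\le c_{u1}\xi/\big(s^{1/2}\log(p\vee T)\big)$, whereas the corollary only asserts the $\ell_2$ bound with the explicit choice $r_T=c_u\si\{s\log (p\vee T)/(Tl_T)\}^{1/2}$. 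So the single substantive step is the algebraic check that (\ref{eq:16}), respectively (\ref{eq:17}), is exactly the inequality needed to upgrade this hypothesized $\ell_2$ bound to a full verification of Condition C.2.

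Concretely, in case (a) I would multiply the target inequality $r_T\le c_{u1}\xi/\{s\log(p\vee T)\}^{1/2}$ through by $\{s\log(p\vee T)\}^{1/2}$ and substitute $r_T=c_u\si\{s\log (p\vee T)/(Tl_T)\}^{1/2}$; this collapses to $c_u\si\, s\log(p\vee T)/\surd(Tl_T)\le c_{u1}\xi$, which is precisely (\ref{eq:16}). In case (b), multiplying the target inequality $r_T\le c_{u1}\xi/\big(s^{1/2}\log(p\vee T)\big)$ through by $s^{1/2}\log(p\vee T)$ and substituting the same $r_T$ yields $c_u\si\, s\log^{3/2}(p\vee T)/\surd(Tl_T)\le c_{u1}\xi$, which is precisely (\ref{eq:17}). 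Note also that $r_T>0$ together with these bounds forces $\xi>0$, so that hypothesis of Theorem \ref{thm:cpoptimal} is automatic. Hence in either case the hypotheses of Theorem \ref{thm:cpoptimal} hold, and I may invoke it: for every $0<a<1$, choosing $c_a\ge\surd(1/a)$, one has $|\tilde\tau-\tau^0|\le 36c_a^2\si^2\xi^{-2}$ with probability at least $1-a-2\exp\{-\log(p\vee T)\}-\pi_T$. Since $T\to\iny$ gives $2\exp\{-\log(p\vee T)\}=2/(p\vee T)\to 0$, and $\pi_T\to 0$ by Condition C.2, taking $\limsup_T$ of $\,\P\big(\si^{-2}\xi^{2}|\tilde\tau-\tau^0|>36/a\big)\le a+2/(p\vee T)+\pi_T\,$ shows the $\limsup$ is at most $a$; as $a\in(0,1)$ is arbitrary, this is exactly $\si^{-2}\xi^{2}(\tilde\tau-\tau^0)=O_p(1)$.

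Finally, I would remark on the role of the extra assumption $s\log (p\vee T)=o\big(\surd(Tl_T)\big)$, respectively $s\log^{3/2} (p\vee T)=o\big(\surd(Tl_T)\big)$: it is not used in the implication above, since Theorem \ref{thm:cpoptimal} carries no such requirement, but it guarantees the hypotheses are non-vacuous in the regime the corollary is meant to highlight, namely that one may take $\xi$ to be a genuinely diminishing sequence and still satisfy (\ref{eq:16}) (resp. (\ref{eq:17})), because those conditions bound $s\log (p\vee T)/\surd(Tl_T)$ (resp. $s\log^{3/2}(p\vee T)/\surd(Tl_T)$) by a constant multiple of $\xi/\si$. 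I do not anticipate any real obstacle in this argument; the only delicate point is bookkeeping of which universal constant ($c_u$ appearing inside $r_T$ versus $c_{u1}$ appearing in Condition C.2) is being matched against the other, and confirming that the two probability error terms both vanish as $T\to\iny$ — both entirely routine.
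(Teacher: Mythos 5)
Your proposal is correct and is essentially the argument the paper intends: the corollary is just Theorem \ref{thm:cpoptimal} once one checks that the explicit $r_T=c_u\si\{s\log(p\vee T)/(Tl_T)\}^{1/2}$ satisfies the upper bound in Condition C.2 (II) (resp.\ (III)) precisely when (\ref{eq:16}) (resp.\ (\ref{eq:17})) holds, which is the same algebra the paper carries out in the discussion around (\ref{eq:15})--(\ref{eq:17}). Your remark that the $o(\surd(Tl_T))$ conditions serve only to make the diminishing-$\xi$ regime non-vacuous, rather than being used in the deduction, is also consistent with the paper's reading.
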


We conclude this section with another perspective on the discussion in this subsection. Recall that the construction of $\tilde\tau$ utilizes $p$-dimensional nuisance estimates whose rate of convergence involve the dimensional parameters $s,p$ (see, \ref{eq:15}). However the rate of convergence of the change point estimator itself is $O_p(\xi^{-2}),$ which is free of dimensionality parameters $s,p,$ the sampling period $T$ and is valid despite high dimensionality and a potentially diminishing jump size. This alludes towards the estimator $\tilde\tau$ behaving as if the nuisance parameter vectors utilized in its construction are known. This property of an estimator is typically referred to as adaptation as described in \cite{bickel1982adaptive}, but is observed here in a high dimensional sense and in the context of change point estimation. In the fixed $p$ setting, this property of a change point estimator behaving as if the nuisance parameters are known has also been studied in \cite{hinkley1972time}. There are also more recent precedent's to similar behavior but in the context of inference on regression coefficients in high dimensional linear regression models. For estimation of a component of the regression vector, where certain corrections to the least squares loss or its first order moment equations, such as debiasing (\cite{van2014asymptotically}) or orthogonalization (\cite{belloni2011inference},   \cite{chernozhukov2015valid}, \cite{belloni2017confidence} and \cite{ning2017general}) induce sufficient adaptivity against nuisance estimates and thereby allow optimal estimation of the target regression parameter. The results of this subsection show that in the context of change point estimation, the plugin least squares estimator (\ref{est:optimalcp}) itself possesses adaptivity against potentially high dimensional nuisance estimates, in order to allow for $O_p(\xi^{-2})$ estimation of the change point $\tau^0,$ provided the nuisance parameters are estimated with sufficient precision. This adaptation shall become further visible in the following section where limiting distributions of $\tilde\tau$ are established.

\section{Limiting distributions of $\tilde\tau$ in vanishing and non-vanishing jump size regimes}\label{sec:inference}

This section investigates the asymptotic distributional properties of $\tilde\tau.$ Critically, here asymptotics are in a high dimensional sense where $s,p$ are allowed to be fixed or diverge with $T,$ with $p$ diverging potentially exponentially with $T.$ As noted before, the case of $\xi\to\iny$ yields a degenerate limiting distribution of $\tilde\tau.$ Thus, in what follows we restrict our analysis to $\xi\le c_u,$ where the limiting distribution of $\tilde\tau$ is non-trivial. This case is further subdivided into two distinct regimes described in the following condition.

%%%%%%%%%%%%%%%%%%%%%%%%%%%%%%%%%%%%%%%%%%%%%%%%%%%%%%%%%%%%%%%%%%%%%%%%%%%%%%%%%%%%%%%%%%%%%%%%%%%%%%%%%%%%%%%%%%%%%%%%%%%%%%%%%%%%%%%%%%%%%%%%%%%%%%%%%%%%%%%%%%%%%%%%%%%
\vspace{1.5mm}
{\it {{\noi{\bf Condition E (on the jump size for stability of limiting distributions):}} Assume that the jump size is bounded above, i.e, $0<\xi\le c_u.$ Let $\Si$ and $\eta^0$ be as defined in Condition B and (\ref{def:jumpsize}), respectively, and additionally assume that either one of the following two conditions hold. \\~
		(i) (vanishing jump)  Let $\xi\to 0$ and $\xi^{-2}\big(\eta^{0T}\Si\eta^0\big)\to \si^2_{\iny},$ for some $0<\si_{\iny}^2<0.$\\~	
		(ii) (non-vanishing jump) Let $\xi\to \xi_{\iny},$ and $\big(\eta^{0T}\Si\eta^0\big)\to \xi_{\iny}^2\si^2_{\iny},$  for some $0<\xi_{\iny},\,\si_{\iny}^2<\iny.$}}
%%%%%%%%%%%%%%%%%%%%%%%%%%%%%%%%%%%%%%%%%%%%%%%%%%%%%%%%%%%%%%%%%%%%%%%%%%%%%%%%%%%%%%%%%%%%%%%%%%%%%%%%%%%%%%%%%%%%%%%%%%%%%%%%%%%%%%%%%%%%%%%%%%%%%%%%%%%%%%%%%%%%%%%%%%%

\vspace{1.5mm}
The existence of the deterministic limit assumed in Condition E(i) and E(ii) is a mild assumption since Condition B already guarantees that the sequences under consideration are bounded above and below, i.e., we have, $0<\ka^2\xi^2\le \eta^{0T}\Si\eta^0\le \phi^2\xi^2<\iny.$ This limit measures the variance of the underlying limiting process which then characterizes the distribution of $\tilde\tau,$ thus the need for an assumption of its existence.

The vanishing and non-vanishing jump size regimes described in Condition E play a fundamental role in the distributional behavior of a change point estimator. The reason for this inherent characteristic can be directly observed by noting that the stochastic term that controls the change point estimator $\tilde\tau$ has a distribution of the form $\sum_{t=1}^{\z\xi^{-2}}u_t^T\eta^0,$ where $u_t\sim i.i.d(0,\Si),$ and constant $\z>0.$ The regime $\xi\to 0,$ enables the $\z\xi^{-2}\to \iny$ and thus upon suitable normalization allows the functional central limit theorem to kick in, and yield a Brownian motion as the resulting process over $\z.$ This neat property has been exploited in the classical literature under fixed dimension $(p)$ to obtain distributional results under this vanishing jump size regime, see, e.g. \cite{bai1994}. Unfortunately, when $\xi\not\to 0,$ the stochastic term described earlier is no longer an infinite sum, and it is clear that the Brownian motion approximation is no longer feasible. Infact it is also observable that any distributional result under this non-vanishing case will necessitate a further parametric assumption on the underlying distribution, since in this case the stochastic term under consideration is a finite sum.

The first result below considers the vanishing case $\xi\to 0.$ It obtains the limiting distribution of $\tilde\tau$ as the distribution of the {\it argmax} of a symmetric two sided Brownian motion with a negative drift, under suitable conditions on the quality of the nuisance estimates used in the construction of $\tilde\tau.$

\begin{thm}[Limiting distribution under vanishing jump regime]\label{thm:wc.vanishing} Suppose Condition A, B, D and E(i) hold and assume that the sequence $l_T$ of Condition D satisfies $T\l_T\to\iny.$ Let the mean parameters $\theta_1^0$ and $\theta_2^0$ be known and let $\tilde\tau^*=\tilde\tau(\theta_1^0,\theta_2^0).$ Then, we have,
	\benr\label{eq:wc.vanishing}
	\xi^{2}(\tilde\tau^*-\tau^0)\Rightarrow \argmax_{\z\in\R}\big\{2\si_{\iny}W(\z)-|\z|\},
	\eenr
	where $W(\z)$ is a two sided Brownian motion\footnote{A two-sided Brownian motion $W(\z)$ is defined as $W(0) = 0,$ $W(\z) = W_1(\z),$ $\z > 0$ and $W(\z) = W_2(-\z),$ $\z < 0,$ where $W_1(\z)$ and $W_2(\z)$ are two independent Brownian motions defined on the non-negative half real line}. Alternatively, when $\theta_1^0$ and $\theta_2^0$ are unknown, suppose $\tilde\tau=\tilde\tau(\h\theta_1,\h\theta_2),$ where the estimates $\h\theta_1$ and $\h\theta_2$ satisfy   Condition C.2. Additionally assume that the sequence $r_T$ of Condition C.2 satisfies,
	\benr\label{eq:30}
	r_T=\begin{cases}\frac{o(1)\xi}{\{s\log(p\vee T)\}^{1/2}}, & {\rm under\,\,subgaussian\,\,case},\\
		\frac{o(1)\xi}{s^{1/2}\log(p\vee T)}, & {\rm under\,\,subexponential\,\,case}.\end{cases}
	\eenr
	Then, the convergence in distribution (\ref{eq:wc.vanishing}) also holds when $\tilde\tau^*$ is replaced with $\tilde\tau.$
\end{thm}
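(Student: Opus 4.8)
The plan is to establish the convergence in distribution via the standard argmax-continuous-mapping route, in two stages: first for the known-nuisance estimator $\tilde\tau^*$, then a perturbation argument reducing $\tilde\tau$ to $\tilde\tau^*$ under the strengthened rate \eqref{eq:30}. For the first stage, I would recenter and rescale. Observe that minimizing $Q(\tau,\theta_1^0,\theta_2^0)$ over $\tau$ is equivalent to minimizing the process $V_T(\z) := T\big[Q(\tau^0+\lfloor \z\xi^{-2}\rfloor,\theta_1^0,\theta_2^0)-Q(\tau^0,\theta_1^0,\theta_2^0)\big]$ over the rescaled index $\z=\xi^2(\tau-\tau^0)$. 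A direct algebraic expansion of the squared-error loss shows that on one side $\z>0$ this difference telescopes into $\lfloor \z\xi^{-2}\rfloor\,\xi^2 + 2\sum_{t=\tau^0+1}^{\tau^0+\lfloor\z\xi^{-2}\rfloor}\eta^{0T}\vep_t$ (with $\eta^0=\theta_1^0-\theta_2^0$), and symmetrically on $\z<0$ with the sum running to the left of $\tau^0$. The deterministic part contributes the drift $|\z|$ after rescaling, while the stochastic part, being a sum of roughly $\z\xi^{-2}\to\infty$ i.i.d. mean-zero terms each with variance $\eta^{0T}\Si\eta^0$, converges by Donsker's theorem (the functional CLT for partial sums of i.i.d. vectors; here one needs a Lindeberg-type check, available since $\eta^{0T}\vep_t$ is subgaussian/subexponential with variance proxy controlled by $\xi^2\si^2$) to $2\si_{\infty}W(\z)$ using $\xi^{-2}(\eta^{0T}\Si\eta^0)\to\si_\infty^2$ from Condition E(i). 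Thus $V_T(\cdot)\Rightarrow 2\si_\infty W(\cdot)-|\cdot|$ in the Skorokhod topology on compacts.

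To pass from process convergence to argmax convergence I would invoke the argmax continuous mapping theorem (e.g. Kim–Pollard / van der Vaart–Wellner), whose hypotheses are: (a) finite-dimensional or uniform-on-compacts convergence of $V_T$, just established; (b) that the limit process a.s. attains a unique minimum — true for $2\si_\infty W(\z)-|\z|$ by standard properties of Brownian motion with drift; and (c) a tightness/non-escape condition ensuring the argmin of $V_T$ is $O_p(1)$, i.e. the minimizers do not drift to infinity. Condition (c) is exactly what Theorem \ref{thm:cpoptimal} (applied with $\h\theta_1=\theta_1^0$, $\h\theta_2=\theta_2^0$, so Condition C.2 holds trivially) supplies: $\xi^2(\tilde\tau^*-\tau^0)=O_p(1)$. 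One also checks the boundary is asymptotically irrelevant: since $Tl_T\to\infty$, the rescaled parametric range $[-\xi^2\tau^0,\xi^2(T-\tau^0)]$ eventually contains any fixed compact interval, so the argmin over $\z\in\R$ is well-defined in the limit. Combining (a)–(c) yields \eqref{eq:wc.vanishing}.

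For the second stage with unknown nuisance, I would write $Q(\tau,\h\theta_1,\h\theta_2)=Q(\tau,\theta_1^0,\theta_2^0)+R_T(\tau)$ and show the remainder process, rescaled to the $\z$-coordinate, converges to zero uniformly on compacts, so that $\tilde\tau$ and $\tilde\tau^*$ have the same limiting distribution. Expanding the squares, $R_T(\tau)$ is a sum of cross terms of the form $\frac1T\sum_{t\le\tau}\big(2\vep_t^T(\theta_1^0-\h\theta_1)+\|\h\theta_1-\theta_1^0\|_2^2\big)$ plus its $\theta_2$-counterpart; taking the difference $R_T(\tau^0+\lfloor\z\xi^{-2}\rfloor)-R_T(\tau^0)$ the bias-squared terms contribute at most $\lfloor\z\xi^{-2}\rfloor\,r_T^2 = O(\z\,\xi^{-2}r_T^2)$, and by \eqref{eq:30} $r_T^2\xi^{-2}=o(1)\cdot s\log(p\vee T)/(s\log(p\vee T))\to 0$ in the subgaussian case (analogously with $s^{1/2}\log$ in the subexponential case), so this vanishes uniformly over $\z$ in a fixed compact. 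The linear-in-$\vep$ cross terms, of order $\big\|\sum_{t=\tau^0+1}^{\tau^0+\lfloor\z\xi^{-2}\rfloor}\vep_t\big\|_\infty\cdot\|\h\theta_1-\theta_1^0\|_1$, are controlled by a Kolmogorov-type maximal inequality (Theorem \ref{thm:kolmogorov}) together with the cone condition Condition C.2(I) which bounds $\|\h\theta_1-\theta_1^0\|_1\le c\sqrt{s}\,r_T$; the $\ell_\infty$ norm of the partial sum over a window of length $\z\xi^{-2}$ is $O_p(\sqrt{\z\xi^{-2}\log(p\vee T)})$ (subgaussian) or $O_p(\sqrt{\z\xi^{-2}}\log(p\vee T))$ (subexponential), and multiplying by $\sqrt s\,r_T$ and the rate \eqref{eq:30} again gives $o_p(1)$. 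Hence $R_T\to 0$ uniformly on compacts, and by Slutsky combined with the argmax theorem the convergence \eqref{eq:wc.vanishing} holds with $\tilde\tau$ in place of $\tilde\tau^*$.

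The main obstacle I anticipate is the uniform control of the remainder process $R_T$ over the \emph{growing} window $\z\xi^{-2}$ in the rescaled coordinate: one must verify that the maximal-inequality bound on $\big\|\sum\vep_t\big\|_\infty$ over windows of length up to $M\xi^{-2}$ (for fixed compact radius $M$) is simultaneously valid, which is where the sharpness of Theorem \ref{thm:kolmogorov} and the precise form of \eqref{eq:30} — in particular the distinction between the subgaussian factor $\sqrt{\log(p\vee T)}$ and the subexponential factor $\log(p\vee T)$ — are essential; getting these exponents to cancel cleanly against the nuisance rate $r_T$ is the crux. A secondary technical point is confirming the Lindeberg condition for the functional CLT, which requires that the single-summand contribution $\xi^{-2}\eta^{0T}\Si\eta^0\to\si_\infty^2$ stays bounded while individual terms are negligible — immediate from Condition B and the subgaussian/subexponential moment bounds, but worth stating carefully.
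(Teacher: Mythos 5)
Your proposal follows essentially the same route as the paper's proof: the argmax (continuous mapping) theorem with the tightness requirement supplied by Theorem \ref{thm:cpoptimal} (Condition C.2 being trivially satisfied in the oracle case), the functional CLT for the rescaled criterion when $\theta_1^0,\theta_2^0$ are known, and, for the unknown case, a uniform-on-compacts $o_p(1)$ comparison of the plug-in and oracle criterion processes (the paper's Lemma \ref{lem:Capprox}) that splits the remainder into a stochastic cross term controlled by the $\ell_\iny$ deviation bounds together with the cone condition $\|\h\eta-\eta^0\|_1\le c\surd{s}\,r_T$, and a deterministic term killed by (\ref{eq:30}). The one bookkeeping point to tighten is that the deterministic remainder over a window of length $\z\xi^{-2}$ is not only the bias-squared contribution $O(\xi^{-2}r_T^2)$ but also the cross term $(\tau-\tau^0)\cdot 2(\h\eta-\eta^0)^T\eta^0=O(\xi^{-1}r_T),$ which is the dominant piece yet still $o(1)$ under (\ref{eq:30}).
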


Following are observations regarding the sufficient conditions required for this result and comparisons with those in Theorem \ref{thm:cpoptimal} which provides the optimal rate of convergence. As before, the burden of rate assumptions on $s,p$ and $\xi,$ have been passed onto Condition C.2 and additionally here the requirement (\ref{eq:30}), which in turn requires an inter-relationship between $s,p,T,\xi,l_T$ to be satisfied similar to as discussed before in (\ref{eq:16}) and (\ref{eq:17}). In this case however, condition (\ref{eq:30}) forces a marginally stronger requirement, specifically, comparing the desired rate of $r_T$ in condition (\ref{eq:30}) to the best attainable rate (\ref{eq:15}) of mean estimation under high dimensionality yields,
\benr\label{eq:31}
\frac{s\log (p\vee T)}{\xi\surd(Tl_T)}= o(1),\,\,\,{\rm or}\,\,\, \frac{s\log^{3/2} (p\vee T)}{\xi\surd (Tl_T)}= o(1),
\eenr
for the subgaussian and subexponential cases, respectively. Comparing (\ref{eq:31}) to the requirements (\ref{eq:16}) and (\ref{eq:17}) we observe that the additional assumption made here is only to tighten the rate restriction to $o(1)$ from $O(1).$ This illustrates the price paid in order to obtain the limiting distribution in comparison to only optimal rate of estimation. This tighter restriction is also in coherence with classical results in the fixed $p$ setting, where the condition reduces to only a relationship between $\xi,\,T$ and $l_T.$ Additionally, since here we are restricted by the regime $\xi\to 0$ under consideration, consequently these sufficient conditions must be further restricted as,
\benr\label{eq:39}
\frac{s\log (p\vee T)}{\surd(Tl_T)}= o(1),\,\,\,{\rm or}\,\,\, \frac{s\log^{3/2} (p\vee T)}{\surd (Tl_T)}= o(1),
\eenr
for the subgaussian and subexponential cases, respectively.

Another slightly stronger assumption made here in comparison to Theorem \ref{thm:cpoptimal} is on sequence $l_T.$ While the result of Theorem \ref{thm:cpoptimal} is valid without the actual existence of the change point, i.e. $\tau^0\wedge (T-\tau^0)\ge 0,$ the limiting distribution of Theorem \ref{thm:wc.vanishing} assumes that the change point exists and is separated from the boundaries of its parametric space, i.e., $\tau^0\wedge(T-\tau^0)\ge Tl_T\to \iny.$ This additional assumption is required in order to allow both ends of the two sided random walk to stabilize to the given Brownian motion process.

It can be observed that a change of variable to $\z=\si_{\iny}^2\z',$ yields that $\argmax_{\z\in\R}\big\{2\si_{\iny}W(\z)-|\z|\}=^d\si_{\iny}^2\argmax_{\z'\in\R}\big\{2W(\z')-|\z'|\},$ which in turn yields the relation (\ref{eq:limiting.d.vanishing}) provided in Section \ref{sec:intro}. This distribution is well studied in the literature and its cdf was first provided by \cite{yao1987approximating}, which enables computation of quantiles and in turn an asymptotically valid confidence interval for $\tau^0.$

We now proceed to the non-vanishing regime of Condition E(ii). The literature on distributional properties of $\tilde\tau$ in this case is quite sparse. Even under the classical fixed $p$ setting, a comprehensive understanding on the same remains unfulfilled. In this context, the articles \cite{jandhyala1999capturing} and \cite{fotopoulos2010exact} provide generalized results on the distribution of the maximum likelihood change point estimators. These results provide key connections of the desired limiting distribution to a two sided random walk. However, the results require mean parameters to be known and constant, i.e. where the sequence $\xi$ is assumed to be constant over the sampling period $T,$ consequently also requiring dimension $p$ to fixed. To the best of our knowledge, the only results in the literature that discuss this non-vanishing regime in a diverging $p$ setting are those of \cite{bhattacharjee2018change} and \cite{bhattacharjee2019change}. However, these result are also limited to $p<<T,$ i.e, $p$ is diverging slower than $T.$ The second main result of this section provides this limiting distribution for the estimator $\tilde\tau,$ valid under both fixed $p$ and high dimensional asymptotic. Moreover, it does not require underlying mean parameters to be known apriori.

Recall from the earlier discussion on Condition E that under this non vanishing regime, the stochastic term controlling the distributional properties of the change point estimator is a finite sum (in $t$), and with finite variance for each random variable in this sum. This disallows the use of central limit theorems and thereby makes Gaussian approximations of the limiting process infeasible (without exact normality assumptions on the data generating process). It is due to this reason that the analysis of this regime requires further parametric assumptions on the underlying distribution which are stated in the following.

%%%%%%%%%%%%%%%%%%%%%%%%%%%%%%%%%%%%%%%%%%%%%%%%%%%%%%%%%%%%%%%%%%%%%%%%%%%%%%%%%%%%%%%%%%%%%%%%%%%%%%%%%%%%%%%%%%%%%%%%%%%%%%%%%%%%%%%%%%%%%%%%%%%%%%%%%%%%%%%%%%%%%%%%%%%
\vspace{1.5mm}
{\it {{\noi{\bf Condition A$'$ (additional distributional assumptions):}} Suppose Condition A holds and additionally assume for any constants $c_1,c_2\in\R,$ the r.v.'s $c_2\vep^T_t\eta^0+c_1\sim^{i.i.d} \cL\big(c_1,c_2^2\eta^{0T}\Si\eta^0\big),$ $t=1,...,T,$ for some  distribution $\cL,$ which is continuous and supported in $\R.$ }}
%%%%%%%%%%%%%%%%%%%%%%%%%%%%%%%%%%%%%%%%%%%%%%%%%%%%%%%%%%%%%%%%%%%%%%%%%%%%%%%%%%%%%%%%%%%%%%%%%%%%%%%%%%%%%%%%%%%%%%%%%%%%%%%%%%%%%%%%%%%%%%%%%%%%%%%%%%%%%%%%%%%%%%%%%%%

\vspace{1.5mm}
The arguments in the notation $\cL(\mu,\si^2)$ are used to represent the mean and variance of the distribution $\cL,$ i.e,  $E\cL(\mu,\si^2)=\mu,$ and ${\rm var}\big(\cL(\mu,\si^2)\big)=\si^2.$ Note that the mean and variance parameters of the distribution $\cL$ are notated only to present the limiting distribution result to follow, this notation does not necessarily imply that $\cL$ is characterized by only its mean and variance.

The additional assumptions made in Condition A$'$ over those in Condition A are that of assuming an explicit form $\cL$ of the underlying distribution and assuming this distribution to be continuous. The  requirement of this distribution being supported in $\R$ is also implicitly assumed in Condition A. To proceed further we require the following stochastic process that shall serve to characterize the desired limiting distribution of the change point estimator in the current non-vanishing regime. Let $\Np=\{1,2,....\}$ and $\Nn=\{-1,-2,....\},$ and define the following negative drift two-sided random walk initializing at the origin,
\benr\label{def:cCz}
\cC_{\iny}(\z)=
\begin{cases}\sum_{t=1}^{\z} z_t, & \z\in \Np \\ 	
	0,				  &	\z=0 \\
	\sum_{t=1}^{-\z}z_t^*,		  &	\z\in \Nn,
\end{cases}
\eenr
Here $z_t,z_t^*$ are independent copies of a $\cL(-\xi_{\iny}^2,4\xi_{\iny}^2\si^2_{\iny})$ distribution, which are also independent over all $t.$ The parameters $\xi_{\iny}$ and $\si_{\iny}^2$ are defined in Condition E(ii). In the case of unit variances and spatial uncorrelated-ness of the data generating process, where $\Si=E\vep_t\vep_t^T=I_{p\times p},$ we have $\si^2_{\iny}=1$ and consequently $z_t,z_t^*\sim^{i.i.d}\cL(-\xi^2_{\iny},4\xi^2_{\iny}).$ Under these notations we can now state the second main result of this section.

\begin{thm}[Limiting distribution under non-vanishing jump regime]\label{thm:wc.non.vanishing} Suppose Condition A$'$, B, D and E(ii) hold and assume that $l_T$ of Condition D satisfies $T\l_T\to\iny.$ Let the mean parameters $\theta_1^0$ and $\theta_2^0$ be known and let $\tilde\tau^*=\tilde\tau(\theta_1^0,\theta_2^0).$ Then
	\benr\label{eq:wc.non.vanishing}
	(\tilde\tau^*-\tau^0)\Rightarrow \argmax_{\z\in \Z}\cC_{\iny}(\z),
	\eenr
	where $\cC_{\iny}(\z)$ is as defined in (\ref{def:cCz}). Alternatively, when $\theta_1^0$ and $\theta_2^0$ are unknown, suppose $\tilde\tau=\tilde\tau(\h\theta_1,\h\theta_2),$ where estimates $\h\theta_1$ and $\h\theta_2$ satisfy Condition C.2. Additionally assume the sequence $r_T$ of Condition C.2 satisfies (\ref{eq:30}). Then, the convergence in distribution (\ref{eq:wc.non.vanishing}) also holds when $\tilde\tau^*$ is replaced with $\tilde\tau.$
\end{thm}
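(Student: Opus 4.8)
The plan is to follow the same two-stage architecture used for the vanishing-jump case (Theorem~\ref{thm:wc.vanishing}), but replacing the functional central limit theorem step with a direct finite-dimensional convergence argument. First I would treat the oracle estimator $\tilde\tau^*=\tilde\tau(\theta_1^0,\theta_2^0)$. The key algebraic step is to rewrite the profiled objective in terms of the centered process: for an integer $\z$, subtracting $Q(\tau^0,\theta_1^0,\theta_2^0)$ and using model (\ref{model:rvmcp}) yields an expression of the form $T\big(Q(\tau^0+\z,\theta_1^0,\theta_2^0)-Q(\tau^0,\theta_1^0,\theta_2^0)\big)=\sum_{t}\big(2\eta^{0T}\vep_t-\xi^2\big)$ summed over the $|\z|$ indices lying between $\tau^0$ and $\tau^0+\z$ (with sign depending on the direction), so that minimizing $Q$ over $\tau$ is equivalent to maximizing a two-sided random walk $\mathcal{C}_T(\z)$ whose increments for $\z>0$ are copies of $\big(2\eta^{0T}\vep_{\tau^0+t}-\xi^2\big)$ and for $\z<0$ are copies of $\big(-2\eta^{0T}\vep_{\tau^0+1-t}+\xi^2\big)$ (the sign flip from completing the square on the left segment). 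Under Condition A$'$ these increments are exactly i.i.d.\ draws from $\cL(-\xi^2,4\xi^2\eta^{0T}\Sigma\eta^0/\xi^2)$; applying Condition E(ii), $\xi\to\xi_\iny$ and $\eta^{0T}\Sigma\eta^0\to\xi_\iny^2\si_\iny^2$, so each fixed increment converges in distribution to $\cL(-\xi_\iny^2,4\xi_\iny^2\si_\iny^2)$. Since $\cL$ is continuous and $T l_T\to\iny$ guarantees arbitrarily many increments are available on both sides of $\tau^0$, the finite-dimensional distributions of $\mathcal{C}_T(\cdot)$ converge to those of $\mathcal{C}_\iny(\cdot)$ of (\ref{def:cCz}).

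Next I would upgrade this finite-dimensional convergence to convergence of the argmax. This is where tightness / a uniform-negligibility argument is needed: the negative drift $-\xi_\iny^2<0$ forces $\mathcal{C}_\iny(\z)\to-\iny$ as $|\z|\to\iny$ a.s., so the global argmax over $\Z$ is a.s.\ attained on a finite (random) set; continuity of $\cL$ makes it a.s.\ unique. The rate result Theorem~\ref{thm:cpoptimal} (applicable since Condition~C.2 with $r_T=0$ holds trivially for the oracle) gives $\xi^2(\tilde\tau^*-\tau^0)=O_p(1)$, hence $|\tilde\tau^*-\tau^0|=O_p(1)$ because $\xi\to\xi_\iny>0$; this tail control lets me restrict attention to $|\z|\le M$ for large constant $M$, on which the argmax is a continuous (almost surely, by uniqueness) functional of the random-walk path in the product topology. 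A standard argmax continuous-mapping argument (cf.\ the van der Vaart--Wellner / Kim--Pollard type result) then delivers $(\tilde\tau^*-\tau^0)\Rightarrow\argmax_{\z\in\Z}\mathcal{C}_\iny(\z)$, completing the oracle case.

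For the plug-in case with unknown $\theta_1^0,\theta_2^0$, the plan is a perturbation argument: write $\hat\theta_j=\theta_j^0+\Delta_j$ and expand $Q(\tau^0+\z,\hat\theta_1,\hat\theta_2)-Q(\tau^0,\hat\theta_1,\hat\theta_2)$, separating the oracle random-walk term $\mathcal{C}_T(\z)$ from cross terms involving $\Delta_1,\Delta_2$. The cross terms have the schematic form $\z\,\Delta_j^T(\theta_1^0-\theta_2^0)$ plus stochastic terms like $\Delta_j^T\sum_{t}\vep_t$ over $O(\z)$ indices; using Condition~C.2 together with the strengthened rate (\ref{eq:30}) (which is exactly (\ref{eq:31}) after comparison with (\ref{eq:15})), each such term is $o_p(1)$ uniformly over $|\z|\le M$ — the deterministic cross term is bounded by $M\|\Delta_j\|_2\xi = M\cdot o(1)\cdot\xi/(s\log(p\vee T))^{1/2}\cdot\xi\to0$ under the sparsity/cone structure, and the stochastic cross term is handled by a maximal inequality over $O(M)$ partial sums, dominated by $\|\Delta_j\|_2$ times $\|\sum_t\vep_t\|_\infty$-type bounds already established in the proof of Theorem~\ref{thm:cpoptimal}. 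Thus the plug-in profiled process differs from the oracle one by $o_p(1)$ uniformly on compacts, and since Theorem~\ref{thm:cpoptimal} again supplies $|\tilde\tau-\tau^0|=O_p(1)$, Slutsky plus the argmax continuous-mapping argument transfers the limit.

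The main obstacle I anticipate is the argmax continuous-mapping step in the lattice/non-vanishing setting: unlike the Brownian case, there is no scaling that converts this to a clean known functional, so I must verify (i) a.s.\ uniqueness of the argmax of $\mathcal{C}_\iny$ — handled by continuity of $\cL$ and independence, since ties have probability zero — and (ii) that the argmax, as a map on two-sided random-walk paths with strictly negative drift, is continuous at $\mathcal{C}_\iny$ in the relevant topology, which requires the drift to push the path below its running maximum eventually with probability one; this last point is exactly what $T l_T\to\iny$ secures on the data side (enough increments on both arms) and what the negative drift secures on the limit side. A secondary technical point is being careful with the sign convention on the left arm ($\z<0$) when completing the square, to confirm the increments there are genuinely $\cL(-\xi_\iny^2,\cdot)$ and not $\cL(+\xi_\iny^2,\cdot)$, which hinges on the correct bookkeeping of which segment the index $t$ is reassigned to under the perturbed partition.
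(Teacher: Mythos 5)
Your proposal is correct and follows essentially the same route as the paper: finite-dimensional convergence of the profiled process $\cC(\tau^0+\z,\theta_1^0,\theta_2^0)=\sum(2\vep_t^T\eta^0-\xi^2)$ to $\cC_\iny(\z)$ via Conditions A$'$ and E(ii), tightness of $\tilde\tau^*-\tau^0$ from Theorem \ref{thm:cpoptimal}, a.s.\ uniqueness of the argmax from the negative drift and continuity of $\cL$, and an application of the Argmax Theorem; your perturbation decomposition for the plug-in case (deterministic cross term plus a stochastic term controlled by $\|\h\eta-\eta^0\|_1$ times a sup-norm partial-sum bound under (\ref{eq:30})) is exactly the paper's Lemma \ref{lem:Capprox} with its $R_1,R_2$ split. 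No substantive differences.
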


The map $\argmax_{\z\in \Z}\cC_{\iny}(\z)$ is a.s. unique and possesses a distribution supported on $\Z.$ This has been shown in the proof of Theorem \ref{thm:wc.non.vanishing}, although it is also quite intuitive upon observing that the two sided random walk $\cC_{\iny}(\z)$ is negative drift with  continuously distributed increments, which in turn implies that $\max_{\z\in \Z}\cC_{\iny}(\z)$ is supported on $[0,\iny),$ where it is continuously distributed on $(0,\iny)$ and has an additional probability mass at the singleton zero.

The sole distinction between the assumptions of Theorem \ref{thm:wc.vanishing} and Theorem \ref{thm:wc.non.vanishing} is the change of regime from a vanishing jump size (Condition E(i)) to the non-vanishing jump regime (Condition E(ii)), respectively. Consequently, the observations made in the discussion after Theorem \ref{thm:wc.vanishing} on the inter-relationship between the quality of nuisance estimates, the dimensional parameters $s,p$ and the jump size $\xi,$ retain their validity under this non-vanishing regime as well. In particular, these inter-related requirements instead can be replaced with the rate restrictions (\ref{eq:31}) and in turn (\ref{eq:39}), while maintaining the validity of Theorem \ref{thm:wc.non.vanishing}. Since the analytical form of the distribution $\argmax_{\z\in\Z}\cC_{\iny}(\z)$ is unavailable, one may resort to obtaining quantiles of this distribution via a monte-carlo simulation, i.e., simulating the two sided random walk process and in turn obtaining realizations from the distribution under consideration.

\section{Construction of a feasible $O_p(\xi^{-2})$ estimator of $\tau^0$}\label{sec:algorithm}

The results of Section \ref{sec:mainresults} and Section \ref{sec:inference} allow $ \tilde\tau$ to provide an $O_p(\xi^{-2})$ approximation of $\tau^0,$ and provide limiting distributions to perform inference on the unknown change point. However, these results rely on the apriori availability of nuisance estimates $\h\theta_1,$ and $\h\theta_2,$ satisfying Condition C.2. In this section we develop an algorithmic estimator to obtain these nuisance estimates that are theoretically guaranteed to satisfy Condition C.2, which in turn shall yield a feasible  $O_p(\xi^{-2})$ estimate of the change point parameter.

To proceed further we require more notation. For any $\tau\in\{1,...,(T-1)\},$ let
\benr\label{def:empmeans}
\bar x_{(0:\tau]}=\frac{1}{\tau}\sum_{t=1}^{\tau} x_t,\quad{\rm and}\quad \bar x_{(\tau:T]}=\frac{1}{(T- \tau)}\sum_{t=\tau+1}^{T} x_t,
\eenr
be the piece-wise sample means. Consider the soft-thresholding operator, $k_{\la}(x)={\rm sign}(x)(|x|-\la)_{+},$ $\la>0,$ $x\in\R^p,$ where ${\rm sign}(\cdotp),$ $|\cdotp|,$ and $(\cdotp)_{+}$\footnote{For $x\in \R,$ $(x)_{+}=x,$ if $x\ge 0,$ and $x=0$ if $x<0.$} are applied component-wise. Then for any $\la_1,\la_2>0,$ define $\ell_1$ regularized mean estimates,
\benr\label{est:softthresh}
\tilde\theta_1(\tau)=k_{\la_1}\big(\bar x_{(0:\tau]}\big),\quad{\rm and}\quad \tilde\theta_2(\tau)=k_{\la_2}\big(\bar x_{(\tau:T]}\big),
\eenr
It is well known in the literature (\cite{donoho1995noising}, \cite{donoho1995wavelet}) that the soft-thresholding operation in (\ref{est:softthresh}) is equivalent to the following $\ell_1$ regularization.
\benr\label{est:softL1construction}
\tilde\theta_1(\tau)&=&\argmin_{\theta\in\R^p}\big\|\bar x_{(0:\tau]}-\theta\big\|^2_2+\la_1\|\theta\|_1,
\eenr
and similar for $\tilde\theta_2(\tau).$

In order to develop a feasible estimator for $\tau^0,$ recall the following two aspects from Section \ref{sec:mainresults}. (a) The missing links required to implement the estimator of Section \ref{sec:mainresults} are the nuisance (mean) estimates. (b) These mean estimates require either Condition C.1 (milder) to obtain a near optimal estimate or Condition C.2 (stronger) to obtain an optimal estimate of $\tau^0.$ We shall fulfill requirement (a) using soft thresholded means (\ref{est:softthresh}), and furthermore utilize the distinctions between Condition C.1 and Condition C.2 to build an algorithmic estimator that improves a nearly arbitrarily chosen $\check\tau,$ first to a near optimal estimate $\h\tau$ in a first iteration, and then to an optimal estimate $\tilde\tau$ in a second iteration. We remind the reader here that the specific choice of soft-thresholding as a regularization mechanism on the empirical means is superficial, the eventual objective is only to obtain mean estimates that are well behaved in the high dimensional setting in the $\ell_2$ norm (see, (\ref{eq:15})). Alternatively, one may consider using any suitable choice of the regularization mechanism that may also be problem specific, e.g. group $\ell_1$ regularization which assumes a partially known sparsity structure.

The stepwise approach of the estimator to be considered is as follows. Condition C.1 is weak enough that it is satisfied by the estimates $\check\theta_1=\tilde\theta(\check\tau)$ and $\check\theta_2=\tilde\theta_2(\check\tau)$ of (\ref{est:softthresh}), computed with any nearly arbitrarily chosen $\check\tau\in\{1,...,(T-1)\}$ that is marginally away from its boundaries. (see, Condition F below). Thus, Theorem \ref{thm:nearoptimalcp} and Theorem \ref{thm:subE.nearoptimal.special} now guarantee the update $\h\tau=\tilde\tau \big(\check\theta_1,\check\theta_2\big)$ of (\ref{est:optimalcp}) computed using these mean estimates $\check\theta_1,\check\theta_2,$ shall be a near optimal estimate of $\tau^0.$ With the availability of this near optimal estimate $\h\tau,$ it can be shown that the updates $\h\theta_1=\tilde\theta_1(\h\tau),$ and $\h\theta_2=\tilde\theta_2(\h\tau),$ satisfy Condition C.2. This allows us to perform another update $\breve\tau=\tilde\tau(\h\theta_1,\h\theta_2),$ and Theorem \ref{thm:cpoptimal} now guarantees optimality of $\breve\tau.$ Thus, in performing these updates (two each of the change point and the mean) we have taken a $\check\tau$ from a nearly arbitrary neighborhood of $\tau^0,$ and deposited it in an optimal neighborhood of $\tau^0,$ with an intermediate $\h\tau$ that lies in a near optimal neighborhood. This process is stated as Algorithm 1 below and is presented visually in Figure \ref{fig:schematic}.

%%%%%%%%%%%%%%%%%%%%%%%%%%%%%%%%%%%%%%%%%%%%%%%%%%%%%%%%%%%%%%%%%%%%%%%%%%%%%%%%%%%%%%%%%%%%%%%%%%%%%%%%%%%%%%%%%%%%%%%%%%%%%%%%%%%%%%%%%%%%%%%%%%%%%%%%%%%%%%%%%%%%%%%%%%%
\begin{figure}[]
	\centering{
		\resizebox{1\textwidth}{!}{
			\begin{tikzpicture}[node distance = 3.75cm,auto]
			% Place nodes
			\node [cloud] (ctau) {$\check\tau$};
			\node [block,right of=ctau] (ctheta) {$\check\theta_1=\tilde\theta_1(\check\tau)$\\ $\check\theta_2=\tilde\theta_2(\check\tau)$};
			\node [cloud, right of=ctheta] (htau) {$\h\tau=\tilde\tau(\check\theta_1, \check\theta_2)$};
			\node [block,right of=htau] (htheta) {$\h\theta_1=\tilde\theta_1(\h\tau),$ $\h\theta_2=\tilde\theta_2(\h\tau)$};
			\node [cloud,right of=htheta] (btau) {$\breve\tau=\tilde\tau(\h\theta_1, \h\theta_2)$};
			\node [block, below of=ctau, node distance=2.5cm] (arb) {\footnotesize{Condition F satisfied (nearly arbitrary choice)}};
			\node [block, below of=ctheta, node distance=2.5cm] (c1) {\footnotesize{Condition C.1 satisfied}};
			\node [block, below of=htau, node distance=2.5cm] (nopt) {\footnotesize{Near\\ optimal}};
			\node [block, below of=btau, node distance=2.5cm] (opt) {\footnotesize{Optimal}};
			\node [block, below of=htheta, node distance=2.5cm] (c2) {\footnotesize{Condition C.2 satisfied}};
			%\node [block, below right of=arb] (ts) {\footnotesize{To be shown}};
			% Draw edges
			\path [line] (ctau) -- (ctheta);
			\path [line] (ctheta) -- (htau);
			\path [line] (htau) -- (htheta);
			\path [line] (htheta) -- (btau);
			\path [linena,dashed] (ctau) -- (arb);
			\path [linena,dashed] (htau) -- (nopt);
			\path [linena,dashed] (btau) -- (opt);
			\path [linena,dashed] (ctheta) -- (c1);
			\path [linena,dashed] (htheta) -- (c2);
			\path[line](arb)  --  (c1);
			\path[line](c1)  -- (nopt);
			\path[line](nopt)  --  (c2);
			\path[line](c2)  --  (opt);
			\end{tikzpicture}}
		\caption{\footnotesize{A schematic of the underlying working mechanism of Algorithm 1.}}
		\label{fig:schematic}}
\end{figure}
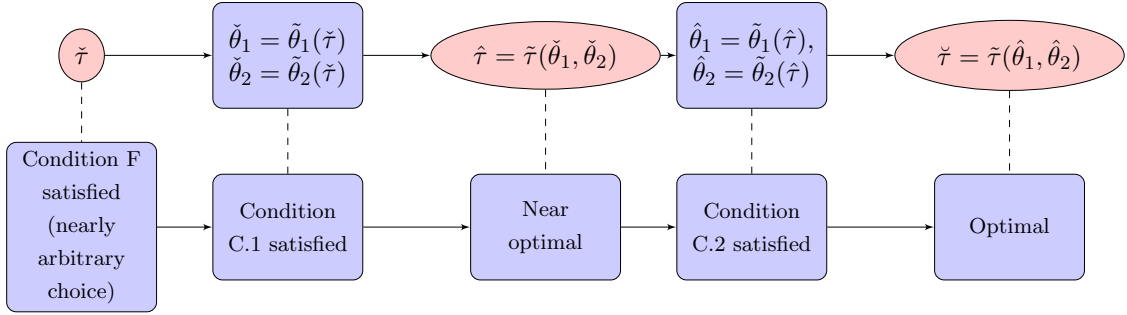
%%%%%%%%%%%%%%%%%%%%%%%%%%%%%%%%%%%%%%%%%%%%%%%%%%%%%%%%%%%%%%%%%%%%%%%%%%%%%%%%%%%%%%%%%%%%%%%%%%%%%%%%%%%%%%%%%%%%%%%%%%%%%%%%%%%%%%%%%%%%%%%%%%%%%%%%%%%%%%%%%%%%%%%%%%%

%%%%%%%%%%%%%%%%%%%%%%%%%%%%%%%%%%%%%%%%%%%%%%%%%%%%%%%%%%%%%%%%%%%%%%%%%%%%%%%%%%%%%%%%%%%%%%%%%%%%%%%%%%%%%%%%%%%%%%%%%%%%%%%%%%%%%%%%%%%%%%%%%%%%%%%%%%%%%%%%%%%%%%%%%%%
\begin{figure}[]
	\noi\rule{\textwidth}{0.5pt}
	
	\vspace{-2mm}
	\flushleft {\bf Algorithm 1:} Optimal estimation of $\tau^0$
	
	\vspace{-1.25mm}
	\noi\rule{\textwidth}{0.5pt}
	
	\vspace{-1.25mm}
	\flushleft{\bf (Initialize):} Choose any $\check\tau\in\{1,...,(T-1)\}$ satisfying Condition F.
	
	\vspace{-1.25mm}
	\flushleft{\bf Step 1:} Obtain estimates $\check\theta_1=\tilde\theta_1(\check\tau),$ and $\check\theta_2=\tilde\theta_2(\check\tau),$ and update change point estimate as,
	
	\vspace{-3.25mm}
	\benr\label{eq:44}
	\h\tau=\argmin_{\tau\in \{1,...,(T-1)\}}Q(\tau,\check\theta_1,\check\theta_2)
	\eenr
	
	\vspace{-3.25mm}
	\flushleft{\bf Step 2:} Update mean estimates to $\h\theta_1=\tilde\theta_1(\h\tau),$ and $\h\theta_2=\tilde\theta_2(\h\tau)$ and perform another update of the change point estimate as,
	
	\vspace{-3.25mm}
	\benr
	\breve\tau=\argmin_{\tau\in \{1,...,(T-1)\}}Q(\tau,\h\theta_1,\h\theta_2)\nn
	\eenr
	
	\vspace{-3.25mm}
	\flushleft{\bf (Output):} $\breve\tau$
	
	\vspace{-1.25mm}
	\noi\rule{\textwidth}{0.5pt}
\end{figure}
%%%%%%%%%%%%%%%%%%%%%%%%%%%%%%%%%%%%%%%%%%%%%%%%%%%%%%%%%%%%%%%%%%%%%%%%%%%%%%%%%%%%%%%%%%%%%%%%%%%%%%%%%%%%%%%%%%%%%%%%%%%%%%%%%%%%%%%%%%%%%%%%%%%%%%%%%%%%%%%%%%%%%%%%%%%

To complete the description of Algorithm 1, we provide the mild sufficient condition required from the initializing choice $\check\tau.$

%%%%%%%%%%%%%%%%%%%%%%%%%%%%%%%%%%%%%%%%%%%%%%%%%%%%%%%%%%%%%%%%%%%%%%%%%%%%%%%%%%%%%%%%%%%%%%%%%%%%%%%%%%%%%%%%%%%%%%%%%%%%%%%%%%%%%%%%%%%%%%%%%%%%%%%%%%%%%%%%%%%%%%%%%%%
\vspace{1.5mm}
{\it {{\noi{\bf Condition F (initializing assumption):}} Let $\psi=\|\eta^0\|_{\iny}$ and assume that the initializer $\check\tau$ of Algorithm 1 satisfies the following relations.
		\benr
		(i)\,\,\check\tau\wedge(T-\check\tau)\ge c_uTl_T,\quad{\rm and}\quad (ii) |\check\tau-\tau^0|\le \frac{c_{u1}Tl_T}{\big(\surd{(2s)\psi\big/\xi}\big)}.\nn
		\eenr		
		Here $l_T$ is as defined in Condition D, $c_{u}>0$ is any constant and $c_{u1}>0$ is an appropriately chosen small enough constant.}}
%%%%%%%%%%%%%%%%%%%%%%%%%%%%%%%%%%%%%%%%%%%%%%%%%%%%%%%%%%%%%%%%%%%%%%%%%%%%%%%%%%%%%%%%%%%%%%%%%%%%%%%%%%%%%%%%%%%%%%%%%%%%%%%%%%%%%%%%%%%%%%%%%%%%%%%%%%%%%%%%%%%%%%%%%%%

The first requirement of Condition F is clearly innocuous, all it requires is a marginal separation of the chosen $\check\tau$ from the boundaries of the parametric space of the change point. It is satisfied with $\check\tau=\lfloor Tk\rfloor,$ with any $k\in [c_{u1},c_{u2}]\subset(0,1).$

The second requirement is discussed in the following, first from a theoretical and then followed by a practical perspective. For simplicity consider the case when $l_T\ge c_u<1,$ i.e., the true change point $\tau^0/T$ in the fractional scale is in some bounded subset of $(0,1),$ and that $\big(\surd{(2s)\psi\big/\xi}\big)=O(1),$ i.e., the entries of the change vector $\eta^0$ are roughly evenly spread across its non-zero components and not with uneven diverging spikes, this is also satisfied if one assumes $\psi\le c_u.$ Then, requirement (ii) of Condition F is satisfied for all $\check\tau$ in an $o(T)$ neighborhood of $\tau^0,$ i.e., any $\check\tau$ satisfying $|\check\tau-\tau^0|=o(T).$

We shall show that despite choosing any starting value in this $o(T)$ neighborhood, Step 1 of Algorithm 1 shall then move it into a near optimal neighborhood. Following which, the next iteration of Step 2 will then move it to an optimal neighborhood of $\tau^0,$ i.e., $o(T)$-nbd.$\longrightarrow^{\rm Step 1}$ near optimal-nbd., $O_p(\xi^{-2}s\log p)$ $\longrightarrow^{\rm Step 2}$ optimal-nbd., $O_p(\xi^{-2}).$ Note here the sequential improvement in the rate of convergence from initializing to Step 2. Moreover, the improvement to optimality in exactly two iterations. Another important consequence of these results is that it shows the redundancy of any further iterations, in the sense that since an optimal rate has been obtained at Step 2, performing further iterations will not yield any statistical improvement in the estimation of $\tau^0.$ Upon viewing the above discussion from a rate perspective provides the theoretical argument in support of the mildness of Condition F.

From a practical perspective, choosing a theoretically valid initializer $\check\tau$ in an $o(T)$ neighborhood of $\tau^0$ is quite straightforward, for e.g. one may choose any slowly diverging sequence  (say $\log T$) and choose $\log T$ equally separated values in $\{1,...,T\}$ forming a coarse grid of possible initializer values. Upon choosing the best fitting value $\check\tau$ for Algorithm 1 from this coarse initializer grid (minimizing squared loss) and assuming that the best fitting value is closest to $\tau^0,$ amongst the chosen grid points (this follows fairly naturally and can also be verified analytically by arguments similar to those of Theorem \ref{thm:nearoptimalcp}). Then by the pigeonhole principle this choice of $\check\tau$ must be in an $T/\log T=o(T)$ neighborhood of $\tau^0.$ Thereby this $\check\tau$ shall form a theoretically valid initializer. A similar preliminary coarse grid search has also been heuristically utilized in \cite{roy2017change} in a different model setting.

However, based on extensive numerical experiments we observe that this preliminary coarse grid search is numerically redundant. It is observed that any arbitrarily chosen $\check\tau$ separated from the boundaries of the parametric space yields statistically indistinguishable updates of Algorithm 1 when $T$ is large. The reader may numerically confirm these observations using the software associated with this article. An illustration of this behavior is provided in Figure \ref{fig:insensitivity1} with a single data set realization. In Section \ref{sec:numerical} we present results with the initializer fixed at $\check\tau=\lfloor T/2\rfloor,$ irrespective of the location of the true change point $\tau^0.$  Note  that in the absence of any information on $\tau^0,$ this choice of  $\check\tau=\lfloor T/2\rfloor$ forms the worst or farthest initializer in a mean distance sense. All other values of $\check\tau$ shall only serve to make estimation easier. Despite this worst possible choice, numerical results remain indistinguishable compared to those obtained when $\check\tau$ is chosen with a preliminary coarse grid search. A version of this condition has also been provided in \cite{kaul2019efficient} in the context of near optimal estimation of a change point in linear models together with evidence in its support.

\begin{figure}[]
	\centering
	\resizebox{\textwidth}{!}{
		\begin{minipage}[b]{0.45\textwidth}
			\includegraphics[width=\textwidth]{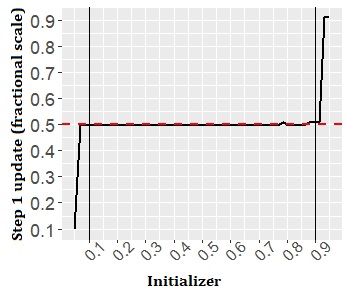}
		\end{minipage}
		\hspace{1in}
		\begin{minipage}[b]{0.45\textwidth}
			\includegraphics[width=\textwidth]{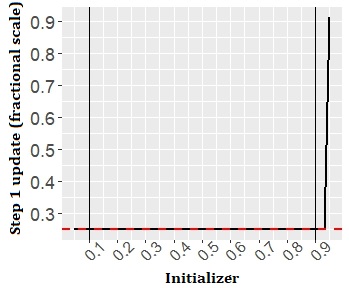}
	\end{minipage}}
	% note that files may not be rotated
	\caption{\footnotesize{Illustration of robustness of Algorithm 1 to the initializer $\check\tau.$ x-axis: initializer $\check\tau,$ y-axis: estimated change point $\h\tau$ of Step 1 of Algorithm 1. This illustration is based on a single dataset $x,$ with $\tau^0=\lfloor T/2\rfloor$ (Left panel: indicated by red line) and $\tau^0=\lfloor T/4 \rfloor$ (Right panel:indicated by red line). Additional parameters: $T=225,$ $p=100,$ $\theta_1^0=(1_{1\times 5}, 0_{1\times p-5})^T,$ $\theta_2^0=(0_{1\times 5},1_{1\times 5}, 0_{1\times p-10})^T$ and $\vep_t\sim^{i.i.d.}\cN(0,\Sigma),$ with $\Sigma_{ij}=\rho^{|i-j|}.$}}
	\label{fig:insensitivity1}
\end{figure}

In the following we provide a precise description of the statistical performance of Algorithm 1, starting with a result that obtains the near optimal rate of convergence of $\h\tau$ of Step 1 of Algorithm 1.

\begin{thm}\label{thm:al1.near.optimal} Suppose the model (\ref{model:rvmcp}) and assume the following,
	\benr\label{eq:23}
	\frac{c_{u}\si}{\xi}\Big\{\frac{s\log (p\vee T)}{Tl_T}\Big\}^{\frac{1}{2}}\le c_{u1},
	\eenr	
	for an appropriately chosen small enough constant $c_{u1}>0.$ Additionally assume $T\ge 2,$ the regularizers $\la_1$ and $\la_2$ for Step 1 of Algorithm 1 are chosen as in (\ref{eq:la.step1.choice}), and assume either one of the following two sets of conditions.\\~
	(a) Condition A(I) (subgaussian), B and D hold.\\~
	(b) Condition A(II) (subexponential), B and D hold and $c_uTl_T \ge  \log (p\vee T).$\\~
	Then, $\h\tau=\tilde\tau(\check\theta_1,\check\theta_2)$ of Step 1 of Algorithm 1 satisfies the following.
	\benr\label{eq:24}
	|\h\tau-\tau^0|\le \begin{cases}c_{u}\si^2\xi^{-2}s\log(p\vee T) & {\rm under\,\, Conditions\, (a)}\\ c_{u}\si^2\xi^{-2}s\log^2(p\vee T) &  {\rm under\,\, Conditions\, (b)}\end{cases}
	\eenr	
	with probability at least $1-o(1).$
\end{thm}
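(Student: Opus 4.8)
The plan is to show that the soft-thresholded nuisance estimates $\check\theta_1=\tilde\theta_1(\check\tau)$ and $\check\theta_2=\tilde\theta_2(\check\tau)$ of (\ref{est:softthresh}), evaluated at \emph{any} initializer $\check\tau$ obeying Condition F, already satisfy Condition C.1, and then to invoke Theorem \ref{thm:nearoptimalcp}(i) (under the subgaussian conditions (a)) or Theorem \ref{thm:subE.nearoptimal.special} (under the subexponential conditions (b)) with $\h\theta_i=\check\theta_i$, $\tilde\tau=\h\tau$, to read off the localization bound (\ref{eq:24}). Case (b) of (\ref{eq:24}) carries a $\log^2(p\vee T)$ factor precisely because it is Theorem \ref{thm:subE.nearoptimal.special} rather than the subexponential part of Theorem \ref{thm:nearoptimalcp} that is applied there.

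First I would record a bias--variance decomposition of the piecewise empirical means (\ref{def:empmeans}); assume without loss of generality $\check\tau\ge\tau^0$ (the case $\check\tau<\tau^0$ being symmetric after swapping the two segments). Then $\bar x_{(\check\tau:T]}-\theta_2^0=(T-\check\tau)^{-1}\sum_{t=\check\tau+1}^{T}\vep_t$ is centered, while $\bar x_{(0:\check\tau]}-\theta_1^0=-\{(\check\tau-\tau^0)/\check\tau\}\eta^0+\check\tau^{-1}\sum_{t=1}^{\check\tau}\vep_t$ carries a deterministic bias of $\ell_2$-size $\{(\check\tau-\tau^0)/\check\tau\}\xi$ and sup-norm size $\{(\check\tau-\tau^0)/\check\tau\}\psi$, supported on $S_1\cup S_2$ with $|S_1\cup S_2|\le 2s$. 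Using Condition F(i) to bound each segment length below by $c_uTl_T$, Condition F(ii) to bound $\check\tau-\tau^0$, and the elementary geometric inequality $\surd(2s)\,\psi\ge\xi$, both pieces of the bias become $\le c_{u1}'\xi/\surd s$ in sup-norm and $\le c_{u1}'\xi$ in $\ell_2$-norm, with $c_{u1}'$ arbitrarily small once $c_{u1}$ is small. For the stochastic terms I would condition on the event $\cE$ on which $\big\|n^{-1}\sum_{t=1}^{n}\vep_t\big\|_{\iny}\le c_u\si\{\log(p\vee T)/n\}^{1/2}$ simultaneously for $n=\check\tau$ and $n=T-\check\tau$; this follows by a union bound over the $p$ coordinates together with the subgaussian tail bound under (a), or Bernstein's inequality under (b), where the extra hypothesis $c_uTl_T\ge\log(p\vee T)$ forces $n\ge\log(p\vee T)$ and hence keeps Bernstein in its subgaussian regime, so no extra $\log$ factor appears. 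By Condition F(i) these averages are then $\le c_u\si\{\log(p\vee T)/(c_uTl_T)\}^{1/2}$, and $\P(\cE^c)=:\pi_T\to0$. On $\cE$, the prescribed regularizers (\ref{eq:la.step1.choice}) — of order $c_u\si\{\log(p\vee T)/(Tl_T)\}^{1/2}$ plus the already-bounded bias contribution — dominate $2\|\bar x_{(0:\check\tau]}-\theta_1^0\|_{\iny}$ and $2\|\bar x_{(\check\tau:T]}-\theta_2^0\|_{\iny}$.

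The remainder is the standard soft-thresholding argument. Since (\ref{est:softthresh})--(\ref{est:softL1construction}) is an orthogonal-design lasso, on $\cE$ the basic inequality yields the cone inequality $\|(\check\theta_i)_{S_i^c}\|_1\le 3\|(\check\theta_i-\theta_i^0)_{S_i}\|_1$, which is part (I) of Condition C.1, together with the oracle bound $\|\check\theta_i-\theta_i^0\|_2\le c_u\surd{s}\,\la_i$. Substituting $\la_i$ and invoking (\ref{eq:23}) for its stochastic part and the bias bound of the previous paragraph for its deterministic part gives $\|\check\theta_1-\theta_1^0\|_2\vee\|\check\theta_2-\theta_2^0\|_2\le c_{u1}\xi$, which is part (II) of Condition C.1. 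Hence Condition C.1 holds with failure probability $\pi_T$, and applying Theorem \ref{thm:nearoptimalcp}(i) under (a), respectively Theorem \ref{thm:subE.nearoptimal.special} under (b), with $\h\theta_i=\check\theta_i$ yields (\ref{eq:24}) on the intersection of $\cE$ with the event furnished by those theorems, of probability $1-\pi_T-2\exp\{-c_{u1}\log(p\vee T)\}=1-o(1)$.

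The step I expect to be the main obstacle is the simultaneous calibration in the last two paragraphs: $\la_i$ must be large enough to dominate the combined bias-plus-noise sup-norm error (so that both the cone condition and the oracle bound are available) yet small enough that $\surd{s}\,\la_i\le c_{u1}\xi$. This is exactly what dictates the quantitative shape of Condition F(ii) — in particular the factor $\surd(2s)\,\psi/\xi$ in its denominator, reconciled with $\surd(2s)\,\psi\ge\xi$ — and it is also the only place where cases (a) and (b) genuinely differ, namely through whether Bernstein's inequality operates in its subgaussian regime, which the assumption $c_uTl_T\ge\log(p\vee T)$ secures.
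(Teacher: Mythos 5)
Your proposal is correct and follows essentially the same route as the paper: the paper's proof also verifies Condition C.1 for $\check\theta_1,\check\theta_2$ via a bias--plus--noise decomposition of the contaminated piecewise means (Theorem \ref{thm:unifmean} and Lemma \ref{lem:step1mean}, using Condition F(i), F(ii), the sup-norm noise bound of Lemma \ref{lem:1.to.tau.bound}, and the choice of $\la$ in (\ref{eq:la.step1.choice})), and then invokes Theorem \ref{thm:nearoptimalcp}(i) in case (a) and Theorem \ref{thm:subE.nearoptimal.special} in case (b). Your identification of the calibration of $\la_i$ against Condition F(ii) via $\surd(2s)\,\psi\ge\xi$, and of $c_uTl_T\ge\log(p\vee T)$ as the hypothesis keeping Bernstein's inequality in its subgaussian regime, matches the paper's argument exactly.
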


The result of Theorem \ref{thm:al1.near.optimal} shows that $\h\tau$ of Step 1 of Algorithm 1 will satisfy near optimal bounds despite the algorithm initializing with a nearly arbitrary $\check\tau.$
The conclusion of this theorem is effectively same as that of Theorem \ref{thm:nearoptimalcp} and Theorem \ref{thm:subE.nearoptimal.special}, with the distinction being that here the $\h\tau$ is a implementable estimate in comparison to Theorem \ref{thm:nearoptimalcp}, where the availability of nuisance estimates satisfying Condition C.1 was assumed. Following are two important remarks regarding Algorithm 1 and Theorem \ref{thm:al1.near.optimal}.

\begin{rem}\label{rem:near.optimal} {\rm This remark is a continuation of Remark \ref{rem:near.optimal.rem1}.  Under conditions (b) (subexponential case) of Theorem \ref{thm:al1.near.optimal}, using the result of Theorem \ref{thm:nearoptimalcp} it can also be shown that
		\benr\label{eq:24b}
		|\h\tau-\tau^0|\le  c_u\si^{2}\max\big\{\xi^{-2}s\log(p\vee T),\,\log(p\vee T)\big\}
		\eenr	
		with probability $1-o(1).$ The bound presented in (\ref{eq:24}) is chosen since it is required for the results to follow. The reason we bring this up is because in the case where $\xi=O(\surd{s}),$ it may be observed that (\ref{eq:24b}) reduces to $|\h\tau-\tau^0|\le  c_u\xi^{-2}s\log (p\vee T),$ which is the same bound as that in the subgaussian case, i.e., in this case where only a near optimal rate of convergence is of interest, the heavier tail of the subexponential distribution does not impact the change point estimation neither through rate assumptions (\ref{eq:23}) nor through the rate of the change point estimator itself.}
\end{rem}

\begin{rem}[Boundary case of $\tau^0=0\,\,{\rm or}\,\,T$]\label{rem:boundary} {\rm It may be observed that Theorem \ref{thm:al1.near.optimal} assumes existence of a change point (Condition D), which is not required in Theorem \ref{thm:nearoptimalcp}. As discussed in Section \ref{sec:mainresults}, this distinction is not due to change point estimation itself but instead because one requires at least one realization from both underlying distributions before and after $\tau^0$ to obtain any estimate of both nuisance parameters $\theta_1^0$ and $\theta_2^0.$ If these mean parameters are known apriori one may also estimate the boundary points with the squared loss itself. Nevertheless, a $0$-norm regularization approach can be utilized to relax this assumption and include one boundary point, $\tau^0=T$\footnote{It is clear that when $\theta_1^0,$ $\theta_2^0$ are unknown both boundary values $\tau^0=0,T$ are not simultaneously identifiable since no realizations from one of the distributions are observed.}. This can be achieved by replacing Step 1 of Algorithm 1 with a regularized version,
		\benr
		\h\tau^*=\argmin_{\tau\in \{1,...,T\}}\big\{Q(\tau,\check\theta_1,\check\theta_2)+\g{\bf 1}[\tau\ne T]\big\}, \qquad \g>0.\nn
		\eenr
		Here $\g$ is a tuning parameter. One may observe that $\h\tau^*$ can equivalently be written as,
		\benr\label{eq:45}
		\h\tau^*=\begin{cases} 	T			  &	 {\rm If}\,\, \{Q(T,\check\theta_1,\check\theta_2)-Q(\h\tau,\check\theta_1,\check\theta_2)\}<\g, \\
			\h\tau & {\rm else},
		\end{cases} \nn
		\eenr	
		where $\h\tau$ is as in (\ref{eq:44}). This representation is more common in the change point literature, see, e.g. \cite{fryzlewicz2014wild} and \cite{wang2018high}, where it is typically utilized to extend a single change point methodology to a multiple change point setting via variants of binary segmentation. Selection consistency \big($pr(\h\tau^*=T)\to 1$ when $\tau^0=T$\big) yielded by this regularization can be additionally verified via conventional arguments. This is quite intuitive since when $\tau^0=T,$ the mean parameter on both sides of any arbitrary cutoff $\check\tau$ is $\theta_1^0.$ Thus if $\g$ is chosen as an upper bound on residual noise, the boundary squared loss will be at most $\g$ larger than that obtained with any value in $\{1,...,(T-1)\},$ in probability. A rigorous proof is omitted since it is largely a  reproduction of existing arguments from the literature.}
\end{rem}

The construction of Algorithm 1 is modular in the sense that for it to yield an estimate $\breve\tau$ that is optimal in its rate of convergence, it does not require the estimator of Step 1 to be specifically the one that has currently been chosen. Instead, all that is required from Step 1 is that it provides some estimate $\h\tau$ that satisfies the bound (\ref{eq:24}) of Theorem \ref{thm:al1.near.optimal} with probability $1-o(1).$ Consequently, one may instead modify Algorithm 1 to use any other near optimal estimator in Step 1. This is described below as Algorithm 2.

%%%%%%%%%%%%%%%%%%%%%%%%%%%%%%%%%%%%%%%%%%%%%%%%%%%%%%%%%%%%%%%%%%%%%%%%%%%%%%%%%%%%%%%%%%%%%%%%%%%%%%%%%%%%%%%%%%%%%%%%%%%%%%%%%%%%%%%%%%%%%%%%%%%%%%%%%%%%%%%%%%%%%%%%%%%
\begin{figure}[H]
	\noi\rule{\textwidth}{0.5pt}
	
	\vspace{-2mm}
	\flushleft {\bf Algorithm 2:} Optimal estimation of $\tau^0:$
	
	\vspace{-1.25mm}
	\noi\rule{\textwidth}{0.5pt}
	
	\vspace{-1.25mm}
	\flushleft{\bf Step 1:} Implement any estimator $\h\tau$ from the literature that satisfies the near optimal bounds (\ref{eq:24}) with probability $1-o(1).$
	
	\vspace{-1.25mm}
	\flushleft{\bf Step 2:} Compute mean estimates $\h\theta_1=\tilde\theta_1(\h\tau),$ and $\h\theta_2=\tilde\theta_2(\h\tau)$ and perform the update,
	
	\vspace{-3.25mm}
	\benr
	\breve\tau=\argmin_{\tau\in \{1,...,(T-1)\}}Q(\tau,\h\theta_1,\h\theta_2)\nn
	\eenr
	
	\vspace{-3.25mm}
	\flushleft{\bf (Output):} $\breve\tau$
	
	\vspace{-1.25mm}
	\noi\rule{\textwidth}{0.5pt}
\end{figure}
%%%%%%%%%%%%%%%%%%%%%%%%%%%%%%%%%%%%%%%%%%%%%%%%%%%%%%%%%%%%%%%%%%%%%%%%%%%%%%%%%%%%%%%%%%%%%%%%%%%%%%%%%%%%%%%%%%%%%%%%%%%%%%%%%%%%%%%%%%%%%%%%%%%%%%%%%%%%%%%%%%%%%%%%%%%

An example of an estimator that can be used in Step 1 of Algorithm 2 is of \cite{wang2018high}, which obeys a tighter bound than that of Theorem \ref{thm:al1.near.optimal} under similar rate conditions on model parameters, and consequently also satisfies (\ref{eq:24}). However, this estimator would be limited to the Gaussian setting. To the best of our knowledge, there is no estimator that is currently available in the literature that would serve as a replacement for Step 1 of Algorithm 1 while allowing high dimensionality and under the assumed conditions of Theorem \ref{thm:al1.near.optimal}. The following result provides the optimal rate of convergence for the estimate $\breve\tau$ obtained from either Algorithm 1 or Algorithm 2 and shows that the limiting distributions of Section \ref{sec:inference} remain valid for these feasible estimators.

\begin{thm}\label{thm:alg1.optimal} Suppose model (\ref{model:rvmcp}) and assume $(\psi/{\xi}\big)\le c_u\surd\{\log(p\vee T)\}.$ Additionally assume the regularizers $\la_1$ and $\la_2$ for Step 1 of Algorithm 1 are chosen as in (\ref{eq:la.step1.choice}) and those for Step 2 of Algorithm 1 or Algorithm 2 are chosen as in (\ref{eq:la.step2.choice}) and assume either one of the following two sets of conditions.\\~
	(a) Condition A(I) (subgaussian), B, D, and the relation (\ref{eq:16}) hold, and $c_uTl_T\ge s\log (p\vee T).$\\~
	(b) Condition A(II) (subexponential), B, D and the relation (\ref{eq:17}) hold, and $c_uTl_T\ge s\log^2 (p\vee T).$  \\~
	Then the estimate $\breve\tau$ of Algorithm 1 or Algorithm 2 satisfies,  $\si^{-2}\xi^{2}(\breve\tau-\tau^0)=O_p(1).$ Additionally suppose Condition A$'$, E and (\ref{eq:31}) hold, then $\breve\tau$ of Algorithm 1 or Algorithm 2 obeys the limiting distributions of Theorem \ref{thm:wc.vanishing} and Theorem \ref{thm:wc.non.vanishing}, in the vanishing and non-vanishing regimes, respectively.
\end{thm}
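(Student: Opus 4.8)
\textbf{Proof plan for Theorem \ref{thm:alg1.optimal}.}

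The plan is to verify that, under the stated conditions, the nuisance estimates $\h\theta_1=\tilde\theta_1(\h\tau)$ and $\h\theta_2=\tilde\theta_2(\h\tau)$ feeding into the final update satisfy Condition C.2, at which point Theorem \ref{thm:cpoptimal} delivers the optimal rate and Theorems \ref{thm:wc.vanishing}--\ref{thm:wc.non.vanishing} deliver the limiting distributions. First I would invoke Theorem \ref{thm:al1.near.optimal} (whose hypotheses are implied by the current assumptions, since \eqref{eq:16} and \eqref{eq:17} are strictly stronger than \eqref{eq:23}, and $c_uTl_T\ge s\log(p\vee T)$ resp. $s\log^2(p\vee T)$ subsumes $c_uTl_T\ge\log(p\vee T)$) to conclude that the Step 1 estimate $\h\tau$ of Algorithm 1 satisfies $|\h\tau-\tau^0|\le c_u\si^2\xi^{-2}s\log(p\vee T)$ in the subgaussian case and $|\h\tau-\tau^0|\le c_u\si^2\xi^{-2}s\log^2(p\vee T)$ in the subexponential case, with probability $1-o(1)$. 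For Algorithm 2 the same bound holds by the defining property of its Step 1. Note the assumption $\psi/\xi\le c_u\surd\{\log(p\vee T)\}$ converts these bounds into $|\h\tau-\tau^0|\le c_uTl_T\cdot o(1)$ under the restrictions on $Tl_T$, so in particular $\h\tau$ itself is separated from the parametric boundary for large $T$.

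Next I would control the $\ell_2$ error of the soft-thresholded means $\tilde\theta_1(\h\tau),\tilde\theta_2(\h\tau)$ evaluated at this data-dependent $\h\tau$. The standard route is: on the event that $\h\tau$ lies within the near-optimal neighborhood, the empirical means $\bar x_{(0:\h\tau]}$ and $\bar x_{(\h\tau:T]}$ decompose into (i) the true mean plus an average of $\ep_t$'s over a window of length $\gtrsim Tl_T$, contributing a term of order $\si\surd\{s\log(p\vee T)/(Tl_T)\}$ after thresholding, plus (ii) a contamination term arising from the $\le|\h\tau-\tau^0|$ misclassified observations, whose contribution to the $\ell_\infty$ error is of order $(|\h\tau-\tau^0|/(Tl_T))\,\psi$. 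Choosing $\la_1,\la_2$ as in \eqref{eq:la.step2.choice} (at the order $\si\surd\{\log(p\vee T)/(Tl_T)\}$ inflated appropriately), a routine thresholding lemma under the restricted eigenvalue of $\Si$ (Condition B) gives $\|\h\theta_j-\theta_j^0\|_2\le c_u\si\surd\{s\log(p\vee T)/(Tl_T)\}+c_u\surd s\,\psi\,|\h\tau-\tau^0|/(Tl_T)$ for $j=1,2$, along with the cone/compatibility inequality of Condition C.2(I). Substituting the near-optimal bound on $|\h\tau-\tau^0|$ and using $\psi/\xi\le c_u\surd\{\log(p\vee T)\}$, the second term becomes $O\big(\si\, s\log^{a}(p\vee T)/(Tl_T)\cdot\xi^{-1}\cdot(\text{something})\big)$, which under \eqref{eq:16} (subgaussian, $a=3/2$) resp. \eqref{eq:17} (subexponential, $a=2$) is dominated by the first term; hence $r_T=c_u\si\surd\{s\log(p\vee T)/(Tl_T)\}$ works, matching \eqref{eq:15}, and \eqref{eq:16}/\eqref{eq:17} is exactly what makes $r_T$ satisfy Condition C.2(II)/(III).

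With Condition C.2 verified for $\h\theta_1,\h\theta_2$, the rate conclusion $\si^{-2}\xi^2(\breve\tau-\tau^0)=O_p(1)$ is immediate from Theorem \ref{thm:cpoptimal} applied to $\breve\tau=\tilde\tau(\h\theta_1,\h\theta_2)$. For the distributional claim, I would additionally impose \eqref{eq:31}, which upgrades the $O(1)$ in \eqref{eq:16}/\eqref{eq:17} to $o(1)$ and hence makes $r_T$ of the form required in \eqref{eq:30}; then Theorem \ref{thm:wc.vanishing} (under E(i)) and Theorem \ref{thm:wc.non.vanishing} (under E(ii), with Condition A$'$) apply verbatim with $\tilde\tau$ replaced by $\breve\tau$. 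The main obstacle is the contamination-term bookkeeping in the second paragraph: one must carry the data-dependence of $\h\tau$ through the thresholding argument carefully (e.g. via a union bound over dyadic windows, or by conditioning on the high-probability event from Theorem \ref{thm:al1.near.optimal} and noting the mean estimates depend on $\h\tau$ only through which observations enter each average), and verify that the misclassification-induced bias is genuinely of smaller order than $r_T$ under precisely \eqref{eq:16}/\eqref{eq:17} and not merely under a stronger condition --- this is where the gap between the near-optimal input rate and the optimal output rate is consumed, and where the exponent $3/2$ versus $2$ distinction between the subgaussian and subexponential cases must be tracked exactly.
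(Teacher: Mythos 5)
Your proposal is correct and follows essentially the same route as the paper: Theorem \ref{thm:al1.near.optimal} for the near-optimal Step 1 localization, a uniform-over-$\tau$ bound on the soft-thresholded means (the paper's Theorem \ref{thm:unifmean}) that splits the error into a noise-average term of order $\si\surd\{s\log(p\vee T)/(Tl_T)\}$ and a contamination term of order $\surd{s}\,\psi\,u_T/l_T$, with the latter absorbed via $\psi/\xi\le c_u\surd\{\log(p\vee T)\}$ and exactly (\ref{eq:16})/(\ref{eq:17}), so that Condition C.2 holds and Theorems \ref{thm:cpoptimal}, \ref{thm:wc.vanishing} and \ref{thm:wc.non.vanishing} apply. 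The only cosmetic difference is that the paper resolves the data-dependence of $\h\tau$ by proving the mean-estimation bound uniformly over the whole near-optimal neighborhood $\cG(u_T,0)$ rather than by conditioning or a dyadic union bound, which is one of the alternatives you already identify.
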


The result of Theorem \ref{thm:alg1.optimal} concludes the task that was put forth in the problem setup of Section \ref{sec:intro}, i.e., to obtain feasible estimators that achieve an optimal rate of convergence and possess well defined limiting distributions which in turn allows inference on the change point parameter $\tau^0,$ despite high dimensionality of the underlying mean structure. Finally, we mention here the computational simplicity of Algorithm 1. It may be noted that computationally all that is required is two computations of sample means and two one dimensional discrete minimizations. The only notable computational cost arises from data based tuning process of choosing the regularizers $\la_1,\la_2$ for soft-thresholding. Effectively, this makes Algorithm 1 highly scalable and implementable on large scale data.

\section{Numerical results}\label{sec:numerical}

This section evaluates the numerical performance of the estimation and inference results developed in the preceding sections. The two main objectives of this section are to evaluate the estimation performance of the proposed Algorithm 1 ({\bf AL1}) and benchmark its performance with the estimator ({\bf WS}) of \cite{wang2018high}. While illustrating this objective we shall also compute the first step estimator ({\bf Step 1}) of Algorithm 1, which although is not optimal but still yields a near optimal rate of convergence. The second objective is to evaluate the empirical inference performance of Algorithm 1 when utilized in conjunction with the result of Theorem \ref{thm:alg1.optimal}. An auxiliary simulation examining uniformity of the proposed methodology over the mean parametric space is provided in Appendix \ref{app:numerical} of the supplementary materials. In all simulations to follow, no underlying parameter is assumed to be known.

We consider two simulation designs in the following. Simulation A considers the subgaussian setting with an underlying Gaussian distribution and Simulation B considers the subexponential setting with an underlying Laplace (double exponential) distribution. In all cases considered, the mean vectors are set to be $\theta_1^0=(\theta_{1\times s},0_{p-s})^T_{p\times 1}$ and $\theta_2^0=(0_{1\times s},\theta_{1\times s},0_{p-2s})^T_{p\times 1},$ and $s=5.$ Here $\theta_{1\times s}=\{1,...,0.25\},$ with $s=5$ equally spaced entries, this yields a jump size $\xi=2.14.$ The covariance matrix $\Si$ is chosen to be a toeplitz type matrix defined as $\Si_{ij}=\rho^{|i-j|},$ $i,j=1,...,p$ and $\rho=0.5.$ We consider all combinations of the sampling period $T\in\{200,275,350,425\},$ model dimension $p\in\{50,250,500,750\}$ and the change point $\tau^0\in\big\{\lfloor 0.2\cdotp T\rfloor,\lfloor 0.4\cdotp T\rfloor,\lfloor 0.6\cdotp T\rfloor,\lfloor 0.8\cdotp T\rfloor\big\}.$ The remaining specifications of Simulation A and Simulation B are as follows. For Simulation A, the unobserved noise variables $\vep_t$ are generated as independent Gaussian r.v.'s, more precisely we set $\vep_t\sim^{i.i.d} \cN(0,\Si),$ $t=1,...,T.$ For Simulation B the unobserved noise variables $\vep_t$ are generated as $\vep_t=\Si^{\frac{1}{2}}\vep_t^*,$ $t=1,...,T,$ where $\vep_t^*=(\vep_{t1}^*,....,\vep_{tp}^{*}),$ and each component $\vep_{tj}^*\sim^{i.i.d} {\rm Laplace}(0,1),$ $j=1,...,p,$ with zero mean and unit variance. This yields i.i.d random variables $\vep_t,$ $t=1,..,T$ which are subexponential random vectors with a covariance $\Si$ amongst components. Both Simulation A and Simulation B are further subdivided into two cases A(i), A(ii) and B(i), B(ii), the first of each simulation evaluating estimation performance and the second computing inference performance.

For the inference related designs of Simulation A(ii) and B(ii), we construct confidence intervals using both the limiting distributions of Theorem \ref{thm:wc.vanishing} and Theorem \ref{thm:wc.non.vanishing}. Note that by design $\xi$ is fixed throughout, hence the former limiting distribution is mis-specified for the considered cases. The significance level is set to $\al=0.05.$ Confidence intervals are constructed as $\big[(\breve\tau-ME),\, (\breve\tau+ME)\big],$ where $\breve\tau$ is the output of Algorithm 1 and the margin of error ($ME$) is computed as  $ME=q_{\alpha}^v\si^2_{\iny}/\xi^2$ or $ME=q_{\al}^{nv}$ based on the results of Theorem \ref{thm:wc.vanishing} and Theorem \ref{thm:wc.non.vanishing}, respectively. Here $q_{\al}^v$ represents the $\big(1-\alpha/2\big)^{th}$ quantile of the argmax of two sided negative drift Brownian motion of Theorem \ref{thm:wc.vanishing}. This critical value is evaluated as $c_{\alpha}=11.03$ by using its distribution function provided in \cite{yao1987approximating}. The $\big(1-\alpha/2\big)^{th}$ quantile $q_{\al}^{nv}$ of the argmax of the two sided negative drift random walk is computed as its monte carlo approximation by simulating $3000$ realizations of this distribution. Recall that Theorem \ref{thm:wc.non.vanishing} necessitates a parametric assumption on the distribution of the projection of $\vep_t$ (Condition A$'$). As per the assumed data generating process of Simulation A, the distribution $\cL$ here is assumed to be Gaussian for this design. For Simulation B(ii) we assume $\cL$ to also be Laplace distributed, this is clearly a mis-specification since Laplace distribution is not invariant under linear combinations. However this was empirically observed to be the closest parametric form amongst other common subexponential distributions. For implementation of the confidence interval, we utilize plugin estimates of $\si^2_{\iny}$ and $\xi^2,$ pertinent computational details of which are provided in Appendix \ref{app:numerical} of the supplementary materials.

Choice of tuning parameters: The regularizers $\la_1,$ $\la_2$ used to obtain soft thresholded mean estimates in Step 1 and Step 2 are tuned via a BIC type criteria. Specifically we set $\la_1=\la_2=\la,$ and evaluate $\tilde\theta_1(\la),$ and $\tilde\theta_2(\la)$ over an equally spaced grid of twenty five values in the interval $(0,0.5).$ Upon letting $\h S=\{j;\,\,\h\theta_{1j}\ne 0\}\cup \{j;\,\,\h\theta_{2j}\ne 0\}$ we evaluate the criteria,
\benr\label{eq:bic}
BIC(\la,\tau)= \sum_{t=1}^{\tau}\|x_t-\tilde\theta_1(\la)\|_2^2+\sum_{t= \tau+1}^{T}\|x_t-\tilde\theta_2(\la)\|_2^2+ |\h S|\log T.
\eenr
For Step 1 of Algorithm 1 we set $\la$ as the minimizer of $BIC(\la,\check\tau),$ and for Step 2 of Algorithm 1 we choose $\la$ as the minimizer of $BIC(\la,\h\tau).$ In context of the benchmarking estimator of \cite{wang2018high}, due to the absence of a recommended tuning mechanism, we follow a similar approach as above to also tune their estimator. Their estimator is implemented using the author provided r-package {\it InspectChangepoint} \cite{wang2016inspectchangepoint} on a grid of twenty five values in order to obtain a sequence of estimated change points. Each estimated change point is then used to construct corresponding soft-thresholded mean estimates, which are tuned via the BIC criteria as above. Finally, the squared loss criteria is applied to choose the tuned estimate from amongst the pairs of estimated change points and corresponding estimated mean parameters.

To report our results we present the following metrics. For the estimation results of Simulation A(i) and B(i) we report bias ($|E(\h\tau-\tau^0)|$), root mean squared error (RMSE, $E^{1/2}(\h\tau-\tau^0)^2$), and time (average over replications of running time in seconds)\footnote{CPU: Intel Xeon E5-2609 v3 @ 1.9GHz}, computed based on $100$ monte carlo replications. The reported computation time for Algorithm 1 includes all tuning undertaken for its computation, i.e., as it would be implemented in practice. For the benchmark estimator of \cite{wang2018high}, the reported computation time is that of repeating their estimation process over the chosen tuning grid of twenty five values and does not include the time taken to thereafter complete the tuning process as described above. For the inference results of Simulation A(ii) and B(ii), we report coverage (relative frequency of the number of times $\tau^0$ lies in the confidence interval) and the average margin of error (average over replications of the margin of error of each confidence interval) computed based on $500$ monte carlo replications.

Partial results of estimation simulations A(i), B(i), and inference simulations of A(ii) and B(ii) are provided in Table \ref{tab:est.A(i).t02}, Table \ref{tab:est.B(i).t02}, and Table \ref{tab:inf.A(ii).t02} and Table \ref{tab:inf.B(ii).t02}, respectively. Results of all remaining cases of these simulations are
provided in Table \ref{tab:est.A(i).t04} - Table \ref{tab:inf.B(ii).t08} in Appendix \ref{app:numerical} of the supplementary materials. These results provide strong numerical support to our theoretical results regarding estimation and limiting distribution behavior of the proposed Algorithm 1.

%
%
% and is in keeping with the limiting distribution result of Theorem \ref{thm:limitingdist}. Furthermore, the standard error estimates appear to be stable accross increasing values of $p.$
%

%%%%%%GAUSSIAN TABLES%%%%%%%%%%%%%%%%%%%%%%%%%%%%%%%%%%%%%%%%%%%%%%%%%%%%%%
\begin{table}[]
	\caption{\footnotesize{Simulation A(i): estimation performance of Step 1 ($\h\tau$), AL1 $(\breve\tau)$ and WS methods under Gaussian setting with $\tau^0=\lfloor0.2\cdotp T\rfloor.$ Bias ($|E(\h\tau-\tau^0)|$), and RMSE ($E^{1/2}(\h\tau-\tau^0)^2$) and time (in seconds), approximated with $100$ monte carlo replications.}}
	\resizebox{1\textwidth}{!}{
		\begin{tabular}{ccccccccccc}
			\hline
			\multicolumn{2}{c}{$\tau^0=\lfloor0.2\cdotp T\rfloor$} & \multicolumn{3}{c}{Step 1}                    & \multicolumn{3}{c}{AL1}                       & \multicolumn{3}{c}{WS}                        \\ \hline
			$T$                        & $p$                       & \textbf{bias} & \textbf{RMSE} & \textbf{time} & \textbf{bias} & \textbf{RMSE} & \textbf{time} & \textbf{bias} & \textbf{RMSE} & \textbf{time} \\ \hline
			200                        & 50                        & 2.410         & 5.823         & 0.060         & 0.320         & 1.876         & 0.111         & 0.110         & 2.666         & 0.117         \\
			200                        & 250                       & 1.930         & 6.089         & 0.145         & 0.380         & 2.107         & 0.261         & 1.050         & 4.088         & 1.504         \\
			200                        & 500                       & 2.970         & 11.937        & 0.197         & 2.240         & 11.475        & 0.370         & 1.970         & 5.835         & 7.414         \\
			200                        & 750                       & 0.000         & 4.228         & 0.254         & 0.180         & 2.478         & 0.460         & 1.860         & 5.552         & 22.699        \\ \hline
			275                        & 50                        & 2.400         & 5.020         & 0.093         & 0.570         & 3.442         & 0.168         & 0.070         & 2.610         & 0.139         \\
			275                        & 250                       & 1.520         & 3.592         & 0.253         & 0.300         & 2.392         & 0.450         & 0.700         & 5.923         & 1.961         \\
			275                        & 500                       & 1.600         & 4.035         & 0.321         & 0.420         & 1.811         & 0.596         & 2.710         & 8.945         & 7.862         \\
			275                        & 750                       & 0.780         & 4.474         & 0.400         & 0.140         & 2.510         & 0.743         & 1.890         & 4.951         & 23.096        \\ \hline
			350                        & 50                        & 1.850         & 4.836         & 0.098         & 0.190         & 2.347         & 0.179         & 0.290         & 2.076         & 0.162         \\
			350                        & 250                       & 1.180         & 3.552         & 0.268         & 0.110         & 1.700         & 0.480         & 0.350         & 3.294         & 2.025         \\
			350                        & 500                       & 1.700         & 3.680         & 0.409         & 0.350         & 1.947         & 0.743         & 0.920         & 3.990         & 8.699         \\
			350                        & 750                       & 1.470         & 4.339         & 0.494         & 0.100         & 1.703         & 0.911         & 0.810         & 3.838         & 24.261        \\ \hline
			425                        & 50                        & 2.220         & 6.263         & 0.131         & 0.130         & 1.873         & 0.242         & 0.470         & 2.193         & 0.176         \\
			425                        & 250                       & 1.330         & 4.021         & 0.304         & 0.060         & 2.112         & 0.581         & 0.290         & 2.629         & 2.234         \\
			425                        & 500                       & 1.860         & 3.912         & 0.500         & 0.100         & 1.828         & 0.970         & 0.330         & 2.247         & 9.471         \\
			425                        & 750                       & 1.390         & 3.345         & 0.621         & 0.170         & 1.997         & 1.225         & 1.470         & 5.489         & 26.021        \\ \hline
	\end{tabular}}
	\label{tab:est.A(i).t02}
\end{table}

%%%%%%LAPLACE TABLES%%%%%%%%%%%%%%%%%%%%%%%%%%%%%%%%%%%%%%%%%%%%%%%%%%%%%%

\begin{table}[]
	\caption{\footnotesize{Simulation B(i): estimation performance of Step 1 ($\h\tau$), AL1 $(\breve\tau)$ and WS methods under Laplace setting with $\tau^0=\lfloor 0.2\cdotp T\rfloor.$ Bias ($|E(\h\tau-\tau^0)|$), and RMSE ($E^{1/2}(\h\tau-\tau^0)^2$) and time (in seconds), approximated with $100$ monte carlo replications.}}
	\resizebox{1\textwidth}{!}{	\begin{tabular}{cclllllllll}
			\hline
			\multicolumn{2}{c}{$\tau^0=\lfloor0.2\cdotp T\rfloor$} & \multicolumn{3}{c}{Step 1}                                                                                & \multicolumn{3}{c}{AL1}                                                                                   & \multicolumn{3}{c}{WS}                                                                                    \\ \hline
			$T$                        & $p$                       & \multicolumn{1}{c}{\textbf{bias}} & \multicolumn{1}{c}{\textbf{RMSE}} & \multicolumn{1}{c}{\textbf{time}} & \multicolumn{1}{c}{\textbf{bias}} & \multicolumn{1}{c}{\textbf{RMSE}} & \multicolumn{1}{c}{\textbf{time}} & \multicolumn{1}{c}{\textbf{bias}} & \multicolumn{1}{c}{\textbf{RMSE}} & \multicolumn{1}{c}{\textbf{time}} \\ \hline
			200                        & 50                        & 3.910                             & 8.679                             & 0.063                             & 0.640                             & 2.458                             & 0.111                             & 0.240                             & 3.803                             & 0.113                             \\
			200                        & 250                       & 1.300                             & 6.822                             & 0.128                             & 0.700                             & 6.293                             & 0.244                             & 0.800                             & 4.909                             & 1.384                             \\
			200                        & 500                       & 1.740                             & 7.647                             & 0.220                             & 1.430                             & 6.932                             & 0.378                             & 2.330                             & 6.049                             & 7.100                             \\
			200                        & 750                       & 1.390                             & 4.021                             & 0.234                             & 0.560                             & 2.412                             & 0.439                             & 3.110                             & 8.190                             & 21.581                            \\ \hline
			275                        & 50                        & 2.480                             & 6.010                             & 0.086                             & 0.060                             & 1.625                             & 0.161                             & 0.200                             & 1.908                             & 0.139                             \\
			275                        & 250                       & 1.420                             & 4.416                             & 0.240                             & 0.020                             & 1.860                             & 0.427                             & 0.030                             & 2.524                             & 1.900                             \\
			275                        & 500                       & 1.260                             & 4.334                             & 0.293                             & 0.350                             & 2.161                             & 0.541                             & 0.640                             & 3.682                             & 7.627                             \\
			275                        & 750                       & 0.780                             & 3.914                             & 0.396                             & 0.430                             & 3.260                             & 0.728                             & 2.000                             & 5.860                             & 22.935                            \\ \hline
			350                        & 50                        & 2.790                             & 5.925                             & 0.107                             & 0.260                             & 1.871                             & 0.174                             & 0.110                             & 2.907                             & 0.138                             \\
			350                        & 250                       & 1.680                             & 4.637                             & 0.240                             & 0.000                             & 1.789                             & 0.444                             & 0.340                             & 2.550                             & 1.887                             \\
			350                        & 500                       & 1.650                             & 3.869                             & 0.371                             & 0.550                             & 2.105                             & 0.679                             & 1.690                             & 5.074                             & 8.247                             \\
			350                        & 750                       & 2.260                             & 6.482                             & 0.483                             & 0.590                             & 2.373                             & 0.892                             & 1.360                             & 4.630                             & 24.222                            \\ \hline
			425                        & 50                        & 1.770                             & 4.410                             & 0.114                             & 0.100                             & 1.783                             & 0.223                             & 0.300                             & 2.864                             & 0.160                             \\
			425                        & 250                       & 3.300                             & 7.695                             & 0.297                             & 0.700                             & 2.328                             & 0.564                             & 1.210                             & 4.189                             & 2.195                             \\
			425                        & 500                       & 1.340                             & 3.904                             & 0.487                             & 0.160                             & 2.078                             & 0.918                             & 0.860                             & 4.678                             & 9.103                             \\
			425                        & 750                       & 1.090                             & 3.291                             & 0.644                             & 0.010                             & 1.841                             & 1.247                             & 0.590                             & 3.848                             & 25.988                            \\ \hline
	\end{tabular}}
	\label{tab:est.B(i).t02}
\end{table}

%%%%%%%%%%%%%%%INFERENCE GAUSS%%%%%%%%%%%%%%%%%%%%%%%%%%%%%%%%%%%%%%%%%%%%%%%%%

\begin{table}[]
	\caption{\footnotesize{Simulation A(ii): inference using AL1 $(\breve\tau)$ with $\tau^0=\lfloor 0.2\cdotp T\rfloor,$ at significance level $\al=0.05.$ Here, {\bf V:} confidence intervals constructed using Theorem \ref{thm:wc.vanishing} under vanishing regime, {\bf NV:} confidence intervals constructed using Theorem \ref{thm:wc.non.vanishing} under non-vanishing regime (Gaussian parametric assumption). Computation based on 500 monte carlo replications.}}
	\resizebox{1\textwidth}{!}{	\begin{tabular}{ccccccc}
			\hline
			\multirow{2}{*}{} & \multicolumn{6}{c}{\textbf{Coverage (average margin of error)}}                               \\ \cline{2-7}
			& V             & NV            & V             & NV            & V             & NV            \\ \hline
			$p$               & \multicolumn{2}{c}{$n=275$}   & \multicolumn{2}{c}{$n=350$}   & \multicolumn{2}{c}{$n=425$}   \\ \hline
			50                & 0.922 (3.861) & 0.946 (3.808) & 0.936 (3.958) & 0.948 (3.879) & 0.946 (4.004) & 0.960 (3.939) \\
			250               & 0.918 (3.453) & 0.922 (3.384) & 0.914 (3.567) & 0.930 (3.473) & 0.944 (3.734) & 0.952 (3.635) \\
			500               & 0.902 (3.310) & 0.920 (3.252) & 0.912 (3.506) & 0.922 (3.443) & 0.906 (3.517) & 0.926 (3.450) \\
			750               & 0.882 (3.208) & 0.898 (3.107) & 0.928 (3.437) & 0.938 (3.347) & 0.920 (3.533) & 0.934 (3.467) \\ \hline
	\end{tabular}}
	\label{tab:inf.A(ii).t02}
\end{table}

%%%%%%%%%%%%%%INFERENCE LAPLACE%%%%%%%%%%%%%%%%%%%%%%%%%%%%%%%%%%%%%%%%%%%%%%%%%

\begin{table}[]
	\caption{\footnotesize{Simulation B(ii): inference using AL1 $(\breve\tau)$ with $\tau^0=\lfloor 0.2\cdotp T\rfloor,$ at significance level $\al=0.05.$ Here, {\bf V:} confidence intervals constructed using Theorem \ref{thm:wc.vanishing} under vanishing regime, {\bf NV:} confidence intervals constructed using Theorem \ref{thm:wc.non.vanishing} under non-vanishing regime (Laplace parametric assumption). Computation based on 500 monte carlo replications.}}
	\resizebox{1\textwidth}{!}{	\begin{tabular}{cllllll}
			\hline
			\multirow{2}{*}{} & \multicolumn{6}{c}{\textbf{Coverage (average margin of error)}}                                                                                  \\ \cline{2-7}
			& \multicolumn{1}{c}{V} & \multicolumn{1}{c}{NV} & \multicolumn{1}{c}{V} & \multicolumn{1}{c}{NV} & \multicolumn{1}{c}{V} & \multicolumn{1}{c}{NV} \\ \hline
			$p$               & \multicolumn{2}{c}{$n=275$}                    & \multicolumn{2}{c}{$n=350$}                    & \multicolumn{2}{c}{$n=425$}                    \\ \hline
			50                & 0.926 (3.785)         & 0.940 (3.763)          & 0.924 (3.910)         & 0.936 (3.877)          & 0.918 (3.981)         & 0.938 (3.951)          \\
			250               & 0.906 (3.424)         & 0.920 (3.386)          & 0.934 (3.595)         & 0.946 (3.554)          & 0.918 (3.627)         & 0.926 (3.570)          \\
			500               & 0.912 (3.286)         & 0.926 (3.263)          & 0.910 (3.482)         & 0.916 (3.446)          & 0.920 (3.605)         & 0.942 (3.560)          \\
			750               & 0.896 (3.157)         & 0.922 (3.117)          & 0.898 (3.327)         & 0.926 (3.296)          & 0.924 (3.497)         & 0.940 (3.443)          \\ \hline
	\end{tabular}}
	\label{tab:inf.B(ii).t02}
\end{table}

We begin with a discussion on the estimation results of Simulation A(i) and B(i) from Table \ref{tab:est.A(i).t02}, and Table \ref{tab:est.B(i).t02} in the Gaussian and Laplace settings, respectively. The proposed Algorithm 1 is observed to perform uniformly better over all considered model dimension sizes in comparison to the benchmark method WS when the sampling period is large $T\in\{350,425\}.$ In the case where $T\in\{200,275\},$ neither method is observed to be uniformly superior. In all results, the Step 1 estimator is observed to be worst performer, this is not particularly surprising since the near optimal rate of convergence $O_p(\xi^{-2}s\log p)$ of the Step 1 estimator derived in Theorem \ref{thm:al1.near.optimal} is indeed slower than that of WS: $O_p(\xi^{-2}\log \log T)$ and the optimal rate of Algorithm 1: $O_p(\xi^{-2}).$ We note here that the Laplace setting of Simulation B is a mis-specification for the method WS since that method is developed under a Gaussian setting.

Moving on to the inference results of Simulation A(ii) and B(ii) from Table \ref{tab:inf.A(ii).t02} and Table \ref{tab:inf.B(ii).t02}. The proposed Algorithm 1 and the inference methodology provides good control on the nominal significance level with an expected deterioration observed with larger values of $p$ and values of $\tau^0$ closer to the boundary of the parametric space (also see, results of Table \ref{tab:inf.A(ii).t04}-\ref{tab:inf.B(ii).t08}), but importantly the coverage is observed to catchup to the nominal level as $T$ increases. Some observations from these results are following. The confidence intervals constructed using the non-vanishing regime result appear to provide nearly uniformly more precise coverage in comparison to those constructed using the vanishing regime result. While the reader may recall that the latter setting of a vanishing jump size regime is mis-specified under the considered designs, however, this should not be the cause for the above observation since under this mis-specification one would expect conservative coverage as opposed to the observed deficient coverage.  Following are two speculative reasons that could be the root of this observation. There may be finite sample biases in the estimated jump size $\h\xi$ and estimated asymptotic variance $\h\si^2_{\iny}$ which are inherent to regularized estimators. This reason however is not likely since this would also have impacted the non-vanishing regime confidence intervals equally, but is not observed to be the case.
The most probable reason is due to the non-vanishing result itself and the manner in which its quantiles are computed. Specifically, since this result is based on a parametric distributional assumption and moreover its quantiles are evaluated as a monte-carlo approximation from realizations of the limiting distribution generated via the estimated jump size and asymptotic variance, it is consequently more adaptive to the specific data set realization under consideration in a finite sample sense. A final unusual observation is that despite the non-vanishing regime providing a higher coverage, the average margin of error is smaller than that of the vanishing regime. The margin of error being lower and coverage being higher is clearly not possible uniformly over all replications since both utilize the same estimate $\breve\tau$ of Algorithm 1. Instead, upon a careful examination of individual intervals it was observed that the reason here is again that the quantiles of the non-vanishing regime are more adaptive to the specific data set realization under consideration.

\appendix

%\appendixone

\section{Proofs}

\subsection{Proofs of Section \ref{sec:mainresults}}

To present the arguments of this section we require some additional notation. In all to follow define $\h\eta=\h\theta_1-\h\theta_2.$
Also, for any non-negative sequences $0\le v_T\le u_T\le 1$ we define the following collection. Let
\benr\label{def:setG}
\cG(u_T,v_T)=\Big\{\tau\in\{1,2,...,T\};\,\,Tv_T\le |\tau-\tau^0|\le Tu_T \Big\}
\eenr
Finally, for any vectors $\theta_1,\theta_2\in\R^p$ and any $\tau\in\{0,...,T\},$ define,
\benr\label{def:cU}
\cU(\tau,\theta_1,\theta_2)&=&Q(\tau,\theta_1,\theta_2)-Q(\tau^0,\theta_1,\theta_2),\nn\\
&=&\begin{cases}\frac{1}{T}\sum_{t=(\tau^0+1)}^{\tau}\big\{\|x_t-\theta_1\|_2^2-\|x_t-\theta_2\|_2^2\big\}, & \tau=(\tau^0+1),...,T\\
	0,											                             & \tau=\tau^0\\
	-\frac{1}{T}\sum_{t=(\tau+1)}^{\tau^0}\big\{\|x_t-\theta_1\|_2^2-\|x_t-\theta_2\|_2^2\big\},		 &	\tau=0,...,(\tau^0-1)
\end{cases}
\eenr
where $Q(\cdotp,\cdotp,\cdotp)$ is the least squares loss in (\ref{def:Q}) defined for any $T\ge 2.$ The proof of Theorem \ref{thm:nearoptimalcp} shall rely on the following preliminary lemma that provides a uniform lower bound on the expression $\cU(\tau,\h\theta_1,\h\theta_2),$ over the collection $\cG(u_T,v_T).$

%%%%%%%%%%%%%%%%%%%%%%%%%%%%%%%%%%%%%%%%%%%%%%%%%%%%%%%%%%%%%%%%%%%%%%%%%%%%%%%%%%%%%%%%%%%%%%%%%%%%%%%%%%%%%%%%%%%%%%%%%%%%%%%%%%%%%%%%%%%%%%%%%%%%%%%%%%%%%%%%%%%%%%%%%%%%
\begin{lem}\label{lem:lower.b.near.optimal} Suppose the model (\ref{model:rvmcp}) and assume  $\tau^0\wedge(1-\tau^0)\ge 0$ and that $\xi>0.$ Additionally assume Condition A(I) (subgaussian setting), B and C.1 hold and let $0\le v_T\le u_T\le 1,$ be any non-negative sequences. Then for $T\ge 2,$ and any $c_{u}>2,$ we have,
	\benr\label{eq:3}
	\inf_{\tau\in\cG(u_T,v_T)}\cU(\tau,\h\theta_1,\h\theta_2)\ge \frac{\xi^2}{2}\Big[v_T-\frac{6\surd(2c_u)\si}{\xi}\Big\{\frac{u_Ts\log(p\vee T)}{T}\Big\}^{\frac{1}{2}}\Big]
	\eenr
	with probability at least $1-2\exp\{-c_{u1}\log(p\vee T)\}-\pi_T,$ for $c_{u1}=(c_u-2)>0.$ Alternatively, suppose Condition A(II) (subexponential setting), B and C.1 hold. Additionally assume that  $T\ge 2\vee\log (p\vee T),$ and that the sequence $v_T$ satisfies $Tv_T\ge \log(p\vee T).$ Then, for any $c_u>8,$ the same bound (\ref{eq:3}) holds with probability at least $1-\exp\big\{-c_{u2}\log (p\vee T)\big\},$ for $c_{u2}=(\surd(c_u/2)-2)>0.$
\end{lem}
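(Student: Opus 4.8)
The plan is to prove (\ref{eq:3}) for each fixed $\tau\in\cG(u_T,v_T)$ on a single event of the stated probability and then take the infimum (if $\cG(u_T,v_T)=\emptyset$ the claim is vacuous). Write $w_1=\h\theta_1-\theta_1^0$, $w_2=\h\theta_2-\theta_2^0$, so that $\h\eta=\eta^0+w_1-w_2$, and consider $\tau=\tau^0+k$ with $1\le k\le Tu_T$; the case $\tau<\tau^0$ is symmetric, with the relevant partial sums running backward from $\tau^0$. Since $x_t=\theta_2^0+\vep_t$ for $t\in\{\tau^0+1,\dots,\tau^0+k\}$, expanding the squared norms in (\ref{def:cU}) and using $\|a\|_2^2-\|b\|_2^2=(a-b)^T(a+b)$ gives the exact identity
\[
T\,\cU(\tau^0+k,\h\theta_1,\h\theta_2)=k\big(\|\h\theta_1-\theta_2^0\|_2^2-\|\h\theta_2-\theta_2^0\|_2^2\big)-2\,\h\eta^T\!\!\sum_{t=\tau^0+1}^{\tau^0+k}\!\!\vep_t .
\]
The bracketed ``signal'' is controlled deterministically: Condition C.1(II) gives $\|w_1\|_2\vee\|w_2\|_2\le c_{u1}\xi$, whence $\|\h\theta_1-\theta_2^0\|_2=\|\eta^0+w_1\|_2\ge(1-c_{u1})\xi$ and $\|\h\theta_2-\theta_2^0\|_2=\|w_2\|_2\le c_{u1}\xi$, so the signal is at least $k\xi^2(1-2c_{u1})\ge\tfrac12 k\xi^2$ once $c_{u1}\le 1/4$.

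The core of the argument is a uniform bound on the stochastic term $\big\|\sum_{t=\tau^0+1}^{\tau^0+k}\vep_t\big\|_\infty$ over $1\le k\le \lfloor Tu_T\rfloor$. For each coordinate $j$, $\{\sum_{t=\tau^0+1}^{\tau^0+k}\vep_{tj}\}_{k}$ is a partial-sum process of i.i.d.\ mean-zero $\mathrm{subG}(\si^2)$ scalars, so the maximal inequality of Theorem \ref{thm:kolmogorov} gives $\P\big(\max_{k\le Tu_T}|\sum_t\vep_{tj}|\ge x\big)\le 2\exp\{-x^2/(2\si^2Tu_T)\}$. Taking $x=\si\{2c_uTu_T\log(p\vee T)\}^{1/2}$ and a union bound over the $p$ coordinates and the two directions around $\tau^0$ (using $p\ge 3$) yields
\[
\max_{1\le k\le Tu_T}\Big\|\sum_{t=\tau^0+1}^{\tau^0+k}\vep_t\Big\|_\infty\le \si\{2c_uTu_T\log(p\vee T)\}^{1/2}
\]
with probability at least $1-2\exp\{-(c_u-2)\log(p\vee T)\}$. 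Next, the cone restriction C.1(I) together with $\mathrm{supp}(\theta_i^0)\subseteq S_i$ gives $\|(w_i)_{S_i^c}\|_1\le 3\|w_i\|_2$, hence $\|w_i\|_1\le(\sqrt s+3)\|w_i\|_2\le 4\sqrt s\,c_{u1}\xi$, while $\|\eta^0\|_1\le\sqrt{2s}\,\xi$ since $\eta^0$ is supported on at most $2s$ indices; therefore $\|\h\eta\|_1\le\sqrt s\,\xi(\sqrt2+8c_{u1})\le\tfrac32\sqrt s\,\xi$ for $c_{u1}$ small. On the intersection of the maximal-inequality event with the event of Condition C.1 (probability at least $1-2\exp\{-(c_u-2)\log(p\vee T)\}-\pi_T$), combining the signal lower bound with $|2\h\eta^T S_k|\le 2\|\h\eta\|_1\|S_k\|_\infty$ and $k\ge Tv_T$ gives $T\,\cU(\tau,\h\theta_1,\h\theta_2)\ge\tfrac12 Tv_T\xi^2-3\sqrt{2c_u}\,\si\xi\{s\,Tu_T\log(p\vee T)\}^{1/2}$ uniformly over $\tau\in\cG(u_T,v_T)$; dividing by $T$ yields (\ref{eq:3}).

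For the subexponential case the coordinates $\vep_{tj}$ are $\mathrm{subE}(\si^2)$, so Theorem \ref{thm:kolmogorov} supplies a Bernstein-type maximal bound $\P\big(\max_{k\le Tu_T}|\sum_t\vep_{tj}|\ge x\big)\le 2\exp\{-\tfrac12\min(x^2/(\si^2Tu_T),\ x/\si)\}$. With the same $x=\si\{2c_uTu_T\log(p\vee T)\}^{1/2}$, the hypothesis $Tu_T\ge Tv_T\ge\log(p\vee T)$ forces $x/\si\ge\sqrt{2c_u}\log(p\vee T)$, hence $\min(x^2/(\si^2Tu_T),x/\si)\ge\sqrt{2c_u}\log(p\vee T)$, so the same uniform bound on the stochastic term holds, now with probability at least $1-\exp\{-(\sqrt{c_u/2}-2)\log(p\vee T)\}$; the requirement $c_u>8$ is exactly what keeps this exponent positive, and the rest of the argument is unchanged. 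The main obstacle is the bookkeeping that makes the signal bound, the sparsity/cone control of $\|\h\eta\|_1$, and the calibration of $x$ conspire to produce precisely the constant $6\sqrt{2c_u}$ appearing in (\ref{eq:3}); the secondary, subexponential-specific subtlety is controlling the heavier (linear) branch of Bernstein's tail uniformly in $k$, which is what necessitates the extra hypothesis $Tv_T\ge\log(p\vee T)$.
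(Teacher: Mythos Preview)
Your overall strategy is correct and closely parallels the paper's: the algebraic identity you write for $T\,\cU(\tau^0+k,\h\theta_1,\h\theta_2)$ is equivalent to the paper's decomposition $\frac{(\tau-\tau^0)}{T}\{\|\h\eta\|_2^2+2(\h\theta_2-\theta_2^0)^T\h\eta\}-\frac{2}{T}\sum\vep_t^T\h\eta$, your signal lower bound $(1-2c_{u1})\xi^2\ge\xi^2/2$ is the same inequality the paper records as $\|\h\eta\|_2^2+2(\h\theta_2-\theta_2^0)^T\h\eta\ge\xi^2/2$, and the $\ell_1$ control $\|\h\eta\|_1\le\tfrac32\sqrt s\,\xi$ matches the paper's (\ref{eq:5}).

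There is, however, a genuine gap in the stochastic step. You invoke Theorem~\ref{thm:kolmogorov} for the bounds
\[
\P\Big(\max_{k\le Tu_T}\Big|\sum_t\vep_{tj}\Big|\ge x\Big)\le 2\exp\{-x^2/(2\si^2Tu_T)\}
\quad\text{and}\quad
2\exp\Big\{-\tfrac12\min\big(x^2/(\si^2Tu_T),\,x/\si\big)\Big\},
\]
but Theorem~\ref{thm:kolmogorov} in this paper is the classical variance-based Kolmogorov inequality $\P(\max_k|S_k|>x)\le x^{-2}\sum_kD_k^2$; it delivers only a polynomial tail and cannot yield either exponential bound. The paper does \emph{not} use a maximal inequality here at all. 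Instead, Lemma~\ref{lem:nearoptimalcross} applies the pointwise subgaussian tail bound (resp.\ Bernstein's inequality, Lemma~\ref{lem:bernstein}) for each fixed pair $(\tau,j)$ and then takes a union bound over $j\in\{1,\dots,p\}$ \emph{and} $\tau\in\{1,\dots,T\}$; the extra factor $T$ is absorbed via $pT\le(p\vee T)^2$, which is precisely what produces the exponent $c_u-2$ (resp.\ $\sqrt{c_u/2}-2$). Your claimed maximal bounds are in fact true---they follow from Doob's submartingale inequality applied to $e^{\la S_k}$---but that is a separate result you would need to state and prove; as written, the citation is to the wrong theorem and the argument is incomplete.

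A smaller point: Condition~C.1(I) is used in the paper as an $\ell_1$--$\ell_1$ cone bound, $\|(\h\theta_i)_{S_i^c}\|_1\le 3\|(\h\theta_i-\theta_i^0)_{S_i}\|_1$, which gives $\|w_i\|_1\le 4\sqrt s\,\|w_i\|_2$ directly. Your intermediate step $\|(w_i)_{S_i^c}\|_1\le 3\|w_i\|_2$ (and hence $\|w_i\|_1\le(\sqrt s+3)\|w_i\|_2$) misreads the right-hand norm; the final bound $\|w_i\|_1\le 4\sqrt s\,c_{u1}\xi$ happens to survive because $(\sqrt s+3)\le 4\sqrt s$ for $s\ge 1$, but the derivation should be corrected.
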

%%%%%%%%%%%%%%%%%%%%%%%%%%%%%%%%%%%%%%%%%%%%%%%%%%%%%%%%%%%%%%%%%%%%%%%%%%%%%%%%%%%%%%%%%%%%%%%%%%%%%%%%%%%%%%%%%%%%%%%%%%%%%%%%%%%%%%%%%%%%%%%%%%%%%%%%%%%%%%%%%%%%%%%%%%%%

%%%%%%%%%%%%%%%%%%%%%%%%%%%%%%%%%%%%%%%%%%%%%%%%%%%%%%%%%%%%%%%%%%%%%%%%%%%%%%%%%%%%%%%%%%%%%%%%%%%%%%%%%%%%%%%%%%%%%%%%%%%%%%%%%%%%%%%%%%%%%%%%%%%%%%%%%%%%%%%%%%%%%%%%%%%%
\begin{proof}[Proof of Theorem \ref{lem:lower.b.near.optimal}] We begin this proof with a few observations that shall be required to obtain the desired bound (\ref{eq:3}). Using Condition C.1 we have the following relations,
	\benr\label{eq:4}
	\|\h\eta-\eta^0\|_2&\le& \|\h\theta_1-\theta_1^0\|_2+\|\h\theta_2-\theta_2^0\|_2\le 2c_{u1}\xi\quad{\rm and\,\, similarly},\nn\\
	\|\h\eta-\eta^0\|_1&\le& 4\surd{s}\|\h\theta_1-\theta_1^0\|_2+4\surd{s}\|\h\theta_2-\theta_2^0\|_2\le 8c_{u1}\surd{s}\xi
	\eenr
	with probability at least $1-\pi_T.$ Here the third inequality follows since the assumption $\|(\h\theta_1)_{S_1^c}\|_1\le 3\|(\h\theta_1-\theta_1^0)_{S_1}\|_1$ of Condition C.1 in turn implies that $\|\h\theta_1-\theta_1^0\|_1\le 4\surd{s}\|\h\theta_1-\theta_1^0\|_2.$ Next, consider,
	\benr\label{eq:5}
	\|\h\eta\|_2&\le& \|\h\eta-\eta^0\|_2+\|\eta^0\|_2\le \xi\{1+2c_{u1}\}\le \frac{3}{2}\xi,\quad{\rm and\,\, similarly},\nn\\
	\|\h\eta\|_1&\le&\|\h\eta-\eta^0\|_1+\|\eta^0\|_1\le 8c_{u1}\surd{s}\xi+\surd{s}\xi\nn\\
	&\le& \surd{s}\xi\{1+8c_{u1}\}\le \frac{3}{2}\surd{s}\xi,
	\eenr
	which holds with probability at least $1-\pi_T.$ Here the second inequality for the $\ell_2$ bound follows from (\ref{eq:4}) and the third follows from Condition C.1 where $c_{u1}>0$ is chosen to be small enough. The $\ell_1$ bound follows analogously. Lastly, consider the following expression,
	\benr\label{eq:10}
	\|\h\eta\|_2^2+2(\h\theta_2-\theta_2^0)^T\h\eta &=&\|\eta^0+(\h\eta-\eta^0)\|_2^2+2(\h\theta_2-\theta_2^0)^T\h\eta\nn\\
	&\ge& \|\eta^0\|_2^2+2(\h\eta-\eta^0)^T\eta^{0}+2(\h\theta_2-\theta_2^0)^T\h\eta\nn\\
	&\ge& \|\eta^0\|_2^2-2\|\h\eta-\eta^0\|_2\|\eta^{0}\|_2-2\|\h\theta_2-\theta_2^0\|_2\|\h\eta\|_2\nn\\
	&\ge&\xi^2(1-4c_{u1}-3c_{u1})\ge \frac{\xi^2}{2},
	\eenr
	which again holds with probability at least $1-\pi_T.$ Here the first inequality is simply an algebraic manipulation. The second follows from the Cauchy-Schwartz inequality and the third follows from Condition C.1, (\ref{eq:4}) and (\ref{eq:5}). The final inequality again follows since the constant $c_{u1}>0$ in Condition C.1 is chosen to be small enough.
	
	We now proceed to the main proof of the bound (\ref{eq:3}). Consider any $\tau\in\cG(u_T,v_T),$ and without loss of generality assume $\tau\ge \tau^0.$ The mirroring case of $\tau\le\tau^0$ can be proved using symmetrical arguments. Consider,
	\benr\label{eq:7}
	\cU(\tau,\h\theta_1,\h\theta_2)&=&Q(\tau,\h\theta_1,\h\theta_2)-Q(\tau^0,\h\theta_1,\h\theta_2)\nn\\
	&=&\frac{1}{T}\sum_{t=1}^{\tau}\|x_t-\h\theta_1\|^2+\frac{1}{T}\sum_{t=\tau+1}^{T}\|x_t-\h\theta_2\|^2\nn\\
	&&-\frac{1}{T}\sum_{t=1}^{\tau^0}\|x_t-\h\theta_1\|^2+\frac{1}{T}\sum_{t=\tau^0+1}^{T}\|x_t-\h\theta_2\|^2\nn\\
	&=&\frac{1}{T}\sum_{t=\tau^0+1}^{\tau}\|x_t-\h\theta_1\|^2-\frac{1}{T}\sum_{t=\tau^0+1}^{\tau}\|x_t-\h\theta_2\|^2\nn\\
	&=&\frac{(\tau-\tau^0)}{T}\|\h\eta\|_2^2 -\frac{2}{T}\sum_{t=\tau^0+1}^{\tau}\vep_t^T\h\eta+\frac{2(\tau-\tau^0)}{T}(\h\theta_2-\theta_2^0)^T\h\eta\nn\\
	&\ge &\frac{v_T\xi^2}{2}- \frac{2}{T}\Big\|\sum_{t=\tau^0+1}^{\tau}\vep_t\Big\|_{\iny}\|\h\eta\|_1\nn\\
	&\ge&\frac{v_T\xi^2}{2}- \frac{3\surd{s}\xi}{T}\Big\|\sum_{t=\tau^0+1}^{\tau}\vep_t\Big\|_{\iny},	\eenr
	with probability at least $1-\pi_T.$ Where the final inequality follows by using (\ref{eq:5}). The uniform bound (\ref{eq:3}) now follows by substituting the uniform bound in  Lemma \ref{lem:nearoptimalcross} for term $\big\|\sum_{t=\tau^0+1}^{\tau}\vep_t\big\|_{\iny}$ in (\ref{eq:7}), for the subgaussian and subexponential cases, under their respective assumptions.
\end{proof}
%%%%%%%%%%%%%%%%%%%%%%%%%%%%%%%%%%%%%%%%%%%%%%%%%%%%%%%%%%%%%%%%%%%%%%%%%%%%%%%%%%%%%%%%%%%%%%%%%%%%%%%%%%%%%%%%%%%%%%%%%%%%%%%%%%%%%%%%%%%%%%%%%%%%%%%%%%%%%%%%%%%%%%%%%%%%

\bc$\rule{4in}{0.1mm}$\ec

%%%%%%%%%%%%%%%%%%%%%%%%%%%%%%%%%%%%%%%%%%%%%%%%%%%%%%%%%%%%%%%%%%%%%%%%%%%%%%%%%%%%%%%%%%%%%%%%%%%%%%%%%%%%%%%%%%%%%%%%%%%%%%%%%%%%%%%%%%%%%%%%%%%%%%%%%%%%%%%%%%%%%%%%%%%%
\begin{proof}[Proof of Theorem \ref{thm:nearoptimalcp}]
	The proof of this result relies on a recursive argument on Lemma \ref{lem:lower.b.near.optimal}, where the desired rate of convergence is obtained by a series of recursions, with this rate being sharpened at each step. We begin by considering any $v_T>0$ and applying Lemma \ref{lem:lower.b.near.optimal} on the set $\cG(1,v_T)$ to obtain,
	\benrr
	\inf_{\tau\in\cG(1,v_T)}\cU(\tau,\h\theta_1,\h\theta_2)\ge \frac{\xi^2}{2}\Big[v_T-\frac{6\si\surd{2c_u}}{\xi}\Big\{\frac{s\log (p\vee T)}{T}\Big\}^{\frac{1}{2}}\Big].
	\eenrr
	with probability at least $1-2\exp\{-c_{u1}\log (p\vee T)\}-\pi_T,$ where $c_{u1}=(c_u-2)$ in the subgaussian case and $c_{u1}=(\surd(c_u/2)-2)$ in the subexponential setting. Now
	upon choosing any,
	\benr
	v_T>v_T^*=\frac{6\si\surd{2c_u}}{\xi}\Big\{\frac{s\log (p\vee T)}{T}\Big\}^{\frac{1}{2}},\nn
	\eenr
	we obtain $\inf_{\tau\in\cG(1,v_T)}\cU(\tau,\h\theta_1,\h\theta_2)>0,$ thus implying that $\tilde\tau\notin\cG(1,v_T^*),$ i.e., $|\tilde\tau-\tau^0|\le  Tv_T^*,$ with probability at least $1-2\exp\{-c_{u1}\log (p\vee T)\}-\pi_T.$\footnote{Since by construction of $\tilde\tau,$ we have $\cU(\tilde\tau,\h\theta_1,\h\theta_2)\le 0.$} Now reset $u_T=v_T^*$ and reapply Lemma \ref{lem:lower.b.near.optimal} for any $v_T>0$ to obtain,
	\benrr
	\inf_{\tau\in\cG(u_T,v_T)}\cU(\tau,\h\theta_1,\h\theta_2)\ge \frac{\xi^2}{2}\Big[v_T-\Big(\frac{6\si\surd{2c_u}}{\xi}\Big)^{1+\frac{1}{2}}\Big\{\frac{s\log (p\vee T)}{T}\Big\}^{\frac{1}{2}+\frac{1}{2}}\Big],\nn
	\eenrr
	with probability at least $1-2\exp\{-c_{u1}\log (p\vee T)\}-\pi_T.$. Again choosing any,
	\benr
	v_T>v_T^*= \Big(\frac{6\si\surd{2c_u}}{\xi}\Big)^{1+\frac{1}{2}}\Big\{\frac{s\log (p\vee T)}{T}\Big\}^{\frac{1}{2}+\frac{1}{2}},\nn
	\eenr
	we obtain $ \inf_{\tau\in\cG(u_T,v_T)}\cU(\tau,\h\theta_1,\h\theta_2)>0,$ thus yielding $\h\tau\notin\cG(u_T,v_T^*),$ i.e.,
	\benr
	|\tilde\tau-\tau^0|\le T\Big(\frac{6\si\surd{2c_u}}{\xi}\Big)^{a_2}\Big\{\frac{s\log (p\vee T)}{T}\Big\}^{b_2},
	\eenr
	with probability at least $1-2\exp\{-c_{u1}\log (p\vee T)\}-\pi_T.$ Where,
	\benr
	a_2=1+\frac{1}{2}=\sum_{j=0}^2\frac{1}{2^j},\quad{\rm and}\quad b_2=\frac{1}{2}+\frac{1}{4}=\sum_{j=1}^2\frac{1}{2^j}.\nn
	\eenr
	Note that the rate of convergence of $\tilde\tau$ has been sharpened at the second recursion in comparison to the first. Continuing these recursions by resetting $u_T$ to the bound of the previous recursion, and applying Lemma \ref{lem:lower.b.near.optimal}, we obtain for the $m^{th}$ recursion,
	\benr
	|\tilde\tau-\tau^0|\le T\Big(\frac{6\si\surd{2c_u}}{\xi}\Big)^{a_m}\Big\{\frac{s\log (p\vee T)}{T}\Big\}^{b_m},
	\eenr
	with probability at least $1-2\exp\{-c_{u1}\log (p\vee T)\}-\pi_T.$ Repeating these recursions an infinite number of times and noting that $a_\iny=\sum_{j=0}^{\iny}(1/2^j)=2,$ and $b_{\iny}=\sum_{j=1}^{\iny}(1/2^j)=1$ we obtain,
	\benr
	|\tilde\tau-\tau^0|\le T\Big(\frac{6\si\surd{2c_u}}{\xi}\Big)^2\frac{s\log p}{T}\nn
	\eenr
	with probability at least $1-2\exp\{-c_{u1}\log (p\vee T)\}-\pi_T.$ Finally, note that despite the recursions in the above argument, the probability of the bound after every recursion is maintained to be at least $1-2\exp\{-c_{u1}\log (p\vee T)\}-\pi_T.$ This follows since the probability statement of Lemma \ref{lem:lower.b.near.optimal} arises from stochastic upper bounds of Lemma \ref{lem:nearoptimalcross} applied recursively with a tighter bound at each recursion. This yields a sequence of events such that the event at each recursion is a proper subset of the event at the previous recursion. This completes the proof of Part (i) of this theorem.
	
	The distinction between the bounds of Part (ii) (subexponential case) and Part (i) (subgaussian case) arises due to the following reason. Recall that for the bound of Lemma \ref{lem:lower.b.near.optimal} to be valid in the subexponential case, we require $Tv_T\ge \log (p\vee T).$ This is in turn due to the same requirement in the corresponding subexponential part of Lemma \ref{lem:nearoptimalcross}. Thus the recursions in the above argument can only be performed so long as the rate is slower than $\log (p\vee T)$ for the subexponential case, thereby yielding the statement of Part (ii) of this theorem.
\end{proof}
%%%%%%%%%%%%%%%%%%%%%%%%%%%%%%%%%%%%%%%%%%%%%%%%%%%%%%%%%%%%%%%%%%%%%%%%%%%%%%%%%%%%%%%%%%%%%%%%%%%%%%%%%%%%%%%%%%%%%%%%%%%%%%%%%%%%%%%%%%%%%%%%%%%%%%%%%%%%%%%%%%%%%%%%%%%%

\bc$\rule{4in}{0.1mm}$\ec

%%%%%%%%%%%%%%%%%%%%%%%%%%%%%%%%%%%%%%%%%%%%%%%%%%%%%%%%%%%%%%%%%%%%%%%%%%%%%%%%%%%%%%%%%%%%%%%%%%%%%%%%%%%%%%%%%%%%%%%%%%%%%%%%%%%%%%%%%%%%%%%%%%%%%%%%%%%%%%%%%%%%%%%%%%%%
\begin{proof}[Proof of Theorem \ref{thm:subE.nearoptimal.special}]
	We begin with a version of Lemma \ref{lem:lower.b.near.optimal} that is valid in this subexponential setting with any non-negative sequences $0\le v_T\le u_T\le 1$ (as opposed to $\log (p\vee T)\le Tv_T\le u_T$ assumed in Lemma \ref{lem:lower.b.near.optimal}). Proceeding with identical as in Lemma \ref{lem:lower.b.near.optimal} and using the deviation bound of Lemma \ref{lem:nearoptimalcross.subE.special} (instead of Lemma \ref{lem:nearoptimalcross}) in (\ref{eq:7}), we obtain,
	\benr
	\inf_{\tau\in\cG(u_T,v_T)}\cU(\tau,\h\theta_1,\h\theta_2)\ge \frac{\xi^2}{2}\Big[v_T-\frac{12c_u\si}{\xi}\Big\{\frac{u_Ts}{T}\Big\}^{\frac{1}{2}}\log(p\vee T)\Big],\nn
	\eenr
	with probability at least $1-2\exp\{-(c_u-2)\log(p\vee T)\}-\pi_T.$ Without loss of generality assume $\tau\ge \tau^0.$ The mirroring case of $\tau\le\tau^0$ can be proved using symmetrical arguments. Now following the recursive tightening proof of Theorem \ref{thm:nearoptimalcp} we have for the $m^{th}$ recursion that, $\inf_{\tau}\in\cG(u_T,v_T^*)>0,$ i.e., $|\tilde\tau-\tau^0|\le Tv_T^*,$ with probability at least $1-2\exp\{-(c_u-2)\log(p\vee T)\}-\pi_T,$ when,
	\benr
	v_T^*=\Big(\frac{12\si c_u\log(p\vee T)}{\xi}\Big)^{a_m}\Big(\frac{s}{T}\Big)^{b_m}.\nn
	\eenr
	Where,
	\benr
	a_m=\sum_{j=0}^m\frac{1}{2^j},\quad{\rm and}\quad b_m=\sum_{j=1}^m\frac{1}{2^j}.\nn
	\eenr
	Continuing these recursions an infinite number of times we obtain,
	\benr
	|\tilde\tau-\tau^0|\le T \Big(\frac{12c_u\si\log(p\vee T)}{\xi}\Big)^2\Big(\frac{s}{T}\Big) \le  12^2\frac{c_u^2\si^2}{\xi^2}s\log^2(p\vee T),\nn
	\eenr
	with probability at least $1-2\exp\{-(c_u-2)\log(p\vee T)\}-\pi_T.$ This completes the proof of the bound (\ref{eq:8}) of this theorem. The remaining assertions are a direct application of the bound (\ref{eq:8}).
\end{proof}
%%%%%%%%%%%%%%%%%%%%%%%%%%%%%%%%%%%%%%%%%%%%%%%%%%%%%%%%%%%%%%%%%%%%%%%%%%%%%%%%%%%%%%%%%%%%%%%%%%%%%%%%%%%%%%%%%%%%%%%%%%%%%%%%%%%%%%%%%%%%%%%%%%%%%%%%%%%%%%%%%%%%%%%%%%%%

\bc$\rule{4in}{0.1mm}$\ec

%%%%%%%%%%%%%%%%%%%%%%%%%%%%%%%%%%%%%%%%%%%%%%%%%%%%%%%%%%%%%%%%%%%%%%%%%%%%%%%%%%%%%%%%%%%%%%%%%%%%%%%%%%%%%%%%%%%%%%%%%%%%%%%%%%%%%%%%%%%%%%%%%%%%%%%%%%%%%%%%%%%%%%%%%%%%
\begin{lem}\label{lem:lower.b.optimal} Suppose the model (\ref{model:rvmcp}) and assume  $\tau^0\wedge(1-\tau^0)\ge 0,$  $\xi>0,$ and let $0\le v_T\le u_T\le 1,$ be any non-negative sequences. Additionally assume either one of the following two sets of conditions.\\~
	(a) Suppose Condition A(I) (subgaussian), B and C.2 (I,II) holds, and let $T\ge 2.$\\~
	(b) Suppose Condition A(II) (subexponential), B and C.2 (I, III) holds and let $T\ge 2.$ \\~
	Then, for any $0<a<1,$ choosing $c_{a}\ge \surd{(1/a)},$ we have the following uniform lower bound.
	\benr
	\inf_{\tau\in\cG(u_T,v_T)}\cU(\tau,\h\theta_1,\h\theta_2)\ge\frac{\xi^2}{2}\Big[v_T- 6c_{a}\frac{(\si\vee\phi)}{\xi}\Big(\frac{u_T}{T}\Big)^{\frac{1}{2}}\Big]\nn
	\eenr
	with probability at least $1-a-2\exp\{-\log(p\vee T)\}-\pi_T.$
\end{lem}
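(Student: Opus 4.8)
The plan is to run the argument of the proof of Lemma~\ref{lem:lower.b.near.optimal}, except that the crude $\ell_1$--$\ell_\infty$ Hölder bound on the entire cross term is replaced by a decomposition that isolates an ``oracle'' piece whose running maximum is controlled by a second-moment (Kolmogorov) argument rather than by a union-bounded tail estimate. First I would note that Condition~C.2 implies Condition~C.1 (since $s\ge1$ and $\log(p\vee T)\ge1$ force $r_T\le c_{u1}\xi$), so the deterministic consequences $(\ref{eq:4})$, $(\ref{eq:5})$ and $(\ref{eq:10})$ remain valid on an event of probability at least $1-\pi_T$; in particular $\|\h\eta-\eta^0\|_1\le 4\surd s\,r_T$ and $\|\h\eta\|_2^2+2(\h\theta_2-\theta_2^0)^T\h\eta\ge\xi^2/2$. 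Fixing $\tau\in\cG(u_T,v_T)$ with $\tau\ge\tau^0$ (the case $\tau<\tau^0$ being symmetric), the identity $(\ref{eq:7})$ together with $(\ref{eq:10})$ gives $\cU(\tau,\h\theta_1,\h\theta_2)\ge \tfrac12 v_T\xi^2-\tfrac2T\big|\sum_{t=\tau^0+1}^{\tau}\vep_t^T\h\eta\big|$ on that event.

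The crux is to bound $\sup_{\tau\in\cG(u_T,v_T)}\big|\sum_{t=\tau^0+1}^{\tau}\vep_t^T\h\eta\big|$ by $\tfrac32 c_a(\si\vee\phi)\xi\,\surd(Tu_T)$, for which I split $\vep_t^T\h\eta=\vep_t^T\eta^0+\vep_t^T(\h\eta-\eta^0)$. For the first piece, $\{\vep_t^T\eta^0\}_t$ are i.i.d.\ mean-zero scalars with variance $\eta^{0T}\Si\eta^0\le\phi^2\xi^2$ by Condition~B, so Kolmogorov's inequality (Theorem~\ref{thm:kolmogorov}) applied to the one-sided partial sums over at most $Tu_T$ summands yields $\max_{k\le Tu_T}\big|\sum_{t=\tau^0+1}^{\tau^0+k}\vep_t^T\eta^0\big|\le c_a\phi\xi\,\surd(Tu_T)$ with probability at least $1-a$ for $c_a\ge\surd(1/a)$; crucially no $\log(p\vee T)$ factor enters, since this is a second-moment bound and not a tail bound. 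For the second piece I retain the Hölder estimate $\big|\sum_{t=\tau^0+1}^{\tau}\vep_t^T(\h\eta-\eta^0)\big|\le\big\|\sum_{t=\tau^0+1}^{\tau}\vep_t\big\|_\infty\|\h\eta-\eta^0\|_1$, use the cone bound $\|\h\eta-\eta^0\|_1\le 4\surd s\,r_T$, and invoke the uniform partial-sum $\ell_\infty$ deviation bound already established --- Lemma~\ref{lem:nearoptimalcross} in the subgaussian case (giving $\max_{k\le Tu_T}\|\sum_t\vep_t\|_\infty\le c\si\,\surd(Tu_T\log(p\vee T))$) and Lemma~\ref{lem:nearoptimalcross.subE.special} in the subexponential case (giving $\le c\si\,\surd(Tu_T)\log(p\vee T)$) --- each holding with probability at least $1-2\exp\{-\log(p\vee T)\}$. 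Inserting the Condition~C.2(II) bound $r_T\le c_{u1}\xi/\surd(s\log(p\vee T))$, respectively the C.2(III) bound $r_T\le c_{u1}\xi/(\surd s\,\log(p\vee T))$, cancels the $\surd(s\log(p\vee T))$, respectively the $\surd s\,\log(p\vee T)$, blow-up and leaves a bound of the shape $c'c_{u1}\si\xi\,\surd(Tu_T)$; with $c_{u1}$ chosen small enough (depending only on the universal constant $c'$), this is at most $\tfrac12 c_a\si\xi\,\surd(Tu_T)$.

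Adding the two pieces gives the claimed sup bound, hence $\cU(\tau,\h\theta_1,\h\theta_2)\ge \tfrac12 v_T\xi^2-3c_a(\si\vee\phi)\xi\,(u_T/T)^{1/2}=\tfrac{\xi^2}{2}\big[v_T-\tfrac{6c_a(\si\vee\phi)}{\xi}(u_T/T)^{1/2}\big]$, uniformly over $\tau\in\cG(u_T,v_T)$; a union bound over the three sub-events --- $\pi_T$ for the nuisance consequences, $a$ for the Kolmogorov step, and $2\exp\{-\log(p\vee T)\}$ for the $\ell_\infty$ deviation --- produces the stated probability $1-a-2\exp\{-\log(p\vee T)\}-\pi_T$. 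The main obstacle I anticipate is purely in the constant bookkeeping: verifying that the $\ell_\infty$ partial-sum bounds of Lemmas~\ref{lem:nearoptimalcross} and~\ref{lem:nearoptimalcross.subE.special} are available uniformly over the two-sided window $\cG(u_T,v_T)$ for every $T\ge2$ (in particular the subexponential bound must hold without the restriction $Tv_T\ge\log(p\vee T)$, in the spirit of Theorem~\ref{thm:subE.nearoptimal.special}), and that the small constant $c_{u1}$ supplied by Condition~C.2 can be chosen independently of $a$.
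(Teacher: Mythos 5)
Your proposal is correct and follows essentially the same route as the paper's proof: the key decomposition $\vep_t^T\h\eta=\vep_t^T\eta^0+\vep_t^T(\h\eta-\eta^0)$, the Kolmogorov/second-moment bound (Lemma \ref{lem:optimalcross}) on the oracle piece, and the H\"older-plus-cone bound on the error piece with Condition C.2 cancelling the logarithmic factors are exactly the steps in the paper. The obstacle you flag at the end is already resolved by Lemma \ref{lem:nearoptimalcross.subE.special}, which is stated for any $T\ge 2$ without the $Tv_T\ge\log(p\vee T)$ restriction and is the lemma the paper invokes in the subexponential case.
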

%%%%%%%%%%%%%%%%%%%%%%%%%%%%%%%%%%%%%%%%%%%%%%%%%%%%%%%%%%%%%%%%%%%%%%%%%%%%%%%%%%%%%%%%%%%%%%%%%%%%%%%%%%%%%%%%%%%%%%%%%%%%%%%%%%%%%%%%%%%%%%%%%%%%%%%%%%%%%%%%%%%%%%%%%%%%

%%%%%%%%%%%%%%%%%%%%%%%%%%%%%%%%%%%%%%%%%%%%%%%%%%%%%%%%%%%%%%%%%%%%%%%%%%%%%%%%%%%%%%%%%%%%%%%%%%%%%%%%%%%%%%%%%%%%%%%%%%%%%%%%%%%%%%%%%%%%%%%%%%%%%%%%%%%%%%%%%%%%%%%%%%%%
\begin{proof}[Proof of Lemma \ref{lem:lower.b.optimal}] The structure of this proof is largely similar to that of Lemma \ref{lem:lower.b.near.optimal}, except that it requires a more careful analysis of a residual stochastic term in order to allow for the comparatively sharper bound, however with a slightly weaker probability statement.
	
	First, using Condition C.2, we have,
	\benr\label{eq:11}
	\|\h\eta-\eta^0\|_1&\le& 4\surd{s}\|\h\theta_1-\theta_1^0\|_2+4\surd{s}\|\h\theta_2-\theta_2^0\|_2\nn\\
	&\le& \begin{cases}8\surd{s}\frac{c_{u1}\xi}{\{s\log (p\vee T)\}^{1/2}}, & {\rm under\,\, Condition\, C.2\, (II),}\\ 8\surd{s}\frac{c_{u1}\xi}{s^{1/2}\log (p\vee T)}, & {\rm under\,\, Condition\, C.2\, (III)\,}, \end{cases}
	\eenr
	with probability at least $1-\pi_T.$ Also, using Condition C.2 we also have that,
	\benr\label{eq:12}
	\|\h\eta\|_2^2+2(\h\theta_2-\theta_2^0)^T\h\eta\ge \frac{\xi^2}{2},
	\eenr
	that holds with the same probability. This inequality follows by identical arguments as those in (\ref{eq:10}), which also hold here since Condition C.1 assumed in Lemma \ref{lem:lower.b.near.optimal} is weaker than Condition C.2 assumed here. Now consider any $\tau\in\cG(u_T,v_T),$ and without loss of generality assume $\tau\ge \tau^0.$  Following the arguments used to obtain (\ref{eq:7}) we have,
	\benr\label{eq:9}
	\cU(\tau,\h\theta_1,\h\theta_2)&=&Q(\tau,\h\theta_1,\h\theta_2)-Q(\tau^0,\h\theta_1,\h\theta_2)\nn\\
	&=&\frac{(\tau-\tau^0)}{T}\Big\{\|\h\eta\|_2^2 +2(\h\theta_2-\theta_2^0)^T\h\eta\Big\}-\frac{2}{T}\sum_{t=\tau^0+1}^{\tau}\vep_t^T\h\eta\nn\\
	&\ge &\frac{v_T\xi^2}{2}-\frac{2}{T}\sum_{t=\tau^0+1}^{\tau}\vep_t^T\h\eta\nn\\
	&=&\frac{v_T\xi^2}{2}- \frac{2}{T}\sum_{t=\tau^0+1}^{\tau}\vep_t^T\eta^0-\frac{2}{T}\sum_{t=\tau^0+1}^{\tau}\vep_t^T(\h\eta-\eta^0),
	\eenr
	with probability at least $1-\pi_T.$ The inequality follows from (\ref{eq:12}), and the final equality is obtained by simply an algebraic manipulation, however it is the key step that shall yield the desired bound of this lemma. We now consider each of the residual stochastic terms in the expression (\ref{eq:9}). First, applying Lemma \ref{lem:optimalcross} for any $0<a<1,$ with $c_{a}\ge \surd(1/a),$ we have,
	\benr\label{eq:13}
	\sup_{\substack{\tau\in\cG(u_T,v_T);\\ \tau\ge\tau^0}}\frac{2}{T}\Big|\sum_{t=\tau^0+1}^{\tau}\vep_t^T\eta^0\Big|\le 2c_{a}\phi\xi\Big(\frac{u_T}{T}\Big)^{\frac{1}{2}}
	\eenr
	with probability at least $1-a.$ The second stochastic term in (\ref{eq:9}) can be bounded as follows,
	\benr\label{eq:14}
	\frac{2}{T}\sum_{t=\tau^0+1}^{\tau}\vep_t^T(\h\eta-\eta^0)&\le&\frac{2}{T}\Big\|\sum_{t=\tau^0+1}^{\tau}\vep_t\Big\|_{\iny}\big\|\h\eta-\eta^0\big\|_1\nn\\
	&\le&32c_uc_{u1}\si\xi\Big(\frac{u_T}{T}\Big)^{\frac{1}{2}}\le  \si\xi\Big(\frac{u_T}{T}\Big)^{\frac{1}{2}},
	\eenr
	with probability at least $1-2\exp\{-\log(p\vee T)\}.$ The second inequality follows using the deviation bounds in Lemma \ref{lem:nearoptimalcross} and Lemma \ref{lem:nearoptimalcross.subE.special} for the subgaussian and subexponential cases, respectively, and the use of corresponding $\ell_1$ error bound of (\ref{eq:11}). The third inequality and the probability statement follows by first choosing $c_u=3$ \big(see, Lemma \ref{lem:nearoptimalcross} and Lemma \ref{lem:nearoptimalcross.subE.special}\big), and then recalling that by assumption, $c_{u1}$ of Condition C.2 is chosen to be small enough \big(choosing $c_{u1}\le 1/(32\cdotp 3)$\big). Substituting (\ref{eq:13}) and (\ref{eq:14}) in (\ref{eq:9}), we obtain,
	\benr
	\inf_{\substack{\tau\in\cG(u_T,v_T);\\\tau\ge\tau^0}}\cU(\tau,\h\theta_1,\h\theta_2)\ge \frac{v_T\xi^2}{2}-2c_a\phi\xi\Big(\frac{u_T}{T}\Big)^{\frac{1}{2}}-\si\xi\Big(\frac{u_T}{T}\Big)^{\frac{1}{2}}\nn\\
	\ge\frac{\xi^2}{2}\Big[v_T- 6c_a\frac{(\si\vee\phi)}{\xi}\Big(\frac{u_T}{T}\Big)^{\frac{1}{2}}\Big]\nn
	\eenr
	with probability at least $1-a-2\exp\{-\log(p\vee T)\}-\pi_T.$ The mirroring case of $\tau\le\tau^0$ can be proved using symmetrical arguments. This completes the proof of this lemma.
\end{proof}
%%%%%%%%%%%%%%%%%%%%%%%%%%%%%%%%%%%%%%%%%%%%%%%%%%%%%%%%%%%%%%%%%%%%%%%%%%%%%%%%%%%%%%%%%%%%%%%%%%%%%%%%%%%%%%%%%%%%%%%%%%%%%%%%%%%%%%%%%%%%%%%%%%%%%%%%%%%%%%%%%%%%%%%%%%%%

\bc$\rule{4in}{0.1mm}$\ec

%%%%%%%%%%%%%%%%%%%%%%%%%%%%%%%%%%%%%%%%%%%%%%%%%%%%%%%%%%%%%%%%%%%%%%%%%%%%%%%%%%%%%%%%%%%%%%%%%%%%%%%%%%%%%%%%%%%%%%%%%%%%%%%%%%%%%%%%%%%%%%%%%%%%%%%%%%%%%%%%%%%%%%%%%%%%
\begin{proof}[Proof of Theorem \ref{thm:cpoptimal}]
	The proof of this result follows a recursive argument similar to that of Theorem \ref{thm:nearoptimalcp}, the distinction being that these recursions are made on Lemma \ref {lem:lower.b.optimal} instead of Lemma \ref{lem:lower.b.near.optimal}. We begin by considering any $v_T>0,$ now applying Lemma \ref{lem:lower.b.optimal} on the set $\cG(1,v_T),$ for any $0<a<1,$ and choosing $c_a\ge\surd(1/a),$ we have,
	\benrr
	\inf_{\tau\in\cG(1,v_T)}\cU(\tau,\h\theta_1,\h\theta_2)\ge \frac{\xi^2}{2}\Big[v_T-6c_a\frac{(\si\vee\phi)}{\xi}\Big(\frac{1}{T}\Big)^{\frac{1}{2}}\Big].
	\eenrr
	with probability at least $1-a-2\exp\{-\log (p\vee T)\}-\pi_T.$	Upon choosing any,
	\benr
	v_T>v_T^*=6c_a\frac{(\si\vee\phi)}{\xi}\Big(\frac{1}{T}\Big)^{\frac{1}{2}},\nn
	\eenr
	we obtain that $\inf_{\tau\in\cG(1,v_T)}\cU(\tau,\h\theta_1,\h\theta_2)>0,$ thus implying that $\tilde\tau\notin\cG(1,v_T^*),$ i.e., $|\tilde\tau-\tau^0|\le  Tv_T^*,$ with probability at least $1-a-2\exp\{-\log (p\vee T)\}-\pi_T.$ Now reset $u_T=v_T^*$ and reapply Lemma  \ref{lem:lower.b.optimal} for any $v_T>0$ to obtain,
	\benrr
	\inf_{\tau\in\cG(1,v_T)}\cU(\tau,\h\theta_1,\h\theta_2)\ge \frac{\xi^2}{2}\Big[v_T-\Big\{6c_a\frac{(\si\vee\phi)}{\xi}\Big\}^{1+\frac{1}{2}}\Big(\frac{1}{T}\Big)^{\frac{1}{2}+\frac{1}{4}}\Big].
	\eenrr
	with probability at least $1-a-2\exp\{-\log (p\vee T)\}-\pi_T.$ Again choosing any,
	\benr
	v_T>v_T^*= \Big\{6c_a\frac{(\si\vee\phi)}{\xi}\Big\}^{1+\frac{1}{2}}\Big(\frac{1}{T}\Big)^{\frac{1}{2}+\frac{1}{4}},\nn
	\eenr
	we obtain that $\inf_{\tau\in\cG(u_T,v_T)}\cU(\tau,\h\theta_1,\h\theta_2)>0,$ consequently yielding $\h\tau\notin\cG(u_T,v_T^*),$ i.e.,
	\benr
	|\tilde\tau-\tau^0|\le T\Big\{6c_a\frac{(\si\vee\phi)}{\xi}\Big\}^{b_2}\Big(\frac{1}{T}\Big)^{c_2},
	\eenr
	with probability at least $1-a-2\exp\{-\log (p\vee T)\}-\pi_T.$ Where,
	\benr
	b_2=1+\frac{1}{2}=\sum_{j=0}^2\frac{1}{2^j},\quad{\rm and}\quad c_2=\frac{1}{2}+\frac{1}{4}=\sum_{j=1}^2\frac{1}{2^j}.\nn
	\eenr
	Continuing these recursions by resetting $u_T$ to the bound of the previous recursion, and applying Lemma \ref{lem:lower.b.optimal}, we obtain for the $m^{th}$ recursion,
	\benr
	|\tilde\tau-\tau^0|\le T\Big\{6c_a\frac{(\si\vee\phi)}{\xi}\Big\}^{b_m}\Big(\frac{1}{T}\Big)^{c_m},
	\eenr
	with probability at least $1-a-2\exp\{-\log (p\vee T)\}-\pi_T.$ Repeating these recursions an infinite number of times and noting that $b_\iny=\sum_{j=0}^{\iny}(1/2^j)=2,$ and $c_{\iny}=\sum_{j=1}^{\iny}(1/2^j)=1$ we obtain,
	\benr\label{eq:49}
	|\tilde\tau-\tau^0|\le 36c_a^2(\si\vee\phi)^2\xi^{-2}\nn
	\eenr
	with probability at least $1-a-2\exp\{-\log (p\vee T)\}-\pi_T.$ The statement of Theorem \ref{thm:cpoptimal} follows directly from the bound (\ref{eq:49}) upon recalling $\phi^2\le c_u\si^2$ from the discussion after Condition B. As seen earlier, despite the recursions in the above argument, the probability of the bound after every recursion is maintained to be at least $1-a-2\exp\{-\log (p\vee T)\}-\pi_T,$ due to the same reasoning as in Theorem \ref{thm:nearoptimalcp}.This completes the proof of Part (i) of this theorem.
\end{proof}
%%%%%%%%%%%%%%%%%%%%%%%%%%%%%%%%%%%%%%%%%%%%%%%%%%%%%%%%%%%%%%%%%%%%%%%%%%%%%%%%%%%%%%%%%%%%%%%%%%%%%%%%%%%%%%%%%%%%%%%%%%%%%%%%%%%%%%%%%%%%%%%%%%%%%%%%%%%%%%%%%%%%%%%%%%%%

\bc$\rule{4in}{0.1mm}$\ec

%%%%%%%%%%%%%%%%%%%%%%%%%%%%%%%%%%%%%%%%%%%%%%%%%%%%%%%%%%%%%%%%%%%%%%%%%%%%%%%%%%%%%%%%%%%%%%%%%%%%%%%%%%%%%%%%%%%%%%%%%%%%%%%%%%%%%%%%%%%%%%%%%%%%%%%%%%%%%%%%%%%%%%%%%%%%
\subsection{Proofs of Section \ref{sec:inference}}
%%%%%%%%%%%%%%%%%%%%%%%%%%%%%%%%%%%%%%%%%%%%%%%%%%%%%%%%%%%%%%%%%%%%%%%%%%%%%%%%%%%%%%%%%%%%%%%%%%%%%%%%%%%%%%%%%%%%%%%%%%%%%%%%%%%%%%%%%%%%%%%%%%%%%%%%%%%%%%%%%%%%%%%%%%%%

%%%%%%%%%%%%%%%%%%%%%%%%%%%%%%%%%%%%%%%%%%%%%%%%%%%%%%%%%%%%%%%%%%%%%%%%%%%%%%%%%%%%%%%%%%%%%%%%%%%%%%%%%%%%%%%%%%%%%%%%%%%%%%%%%%%%%%%%%%%%%%%%%%%%%%%%%%%%%%%%%%%%%%%%%%%%
As the reader may have observed, a change of notation has been carried out for the results of this section. These results are presented in more conventional {\it argmax} notation instead of the {\it argmin} notation of the problem setup in Section \ref{sec:intro}. This is purely a notational change and all results can equivalently be stated in the {\it argmin} language. Accordingly we define the following versions. Let $\cU(\tau,\theta_1,\theta_2)$ be as in (\ref{def:cU}) and consider,
\benr\label{def:cC}
\cC(\tau,\theta_1,\theta_2)&=&-T\cU(\tau,\theta_1,\theta_2)\nn\\
&=&\begin{cases}-\sum_{t=(\tau^0+1)}^{\tau}\big\{\|x_t-\theta_1\|_2^2-\|x_t-\theta_2\|_2^2\big\}, & \tau=(\tau^0+1),...,T\\
	0,											                             & \tau=\tau^0\\
	\sum_{t=(\tau+1)}^{\tau^0}\big\{\|x_t-\theta_1\|_2^2-\|x_t-\theta_2\|_2^2\big\}.		 &	\tau=0,...,(\tau^0-1)
\end{cases}
\eenr
The multiplication of $\cU$ with the sampling period $T$ is only meant for notational convenience later on. Then, we can re-express the change point estimator $\tilde\tau(\theta_1,\theta_2)$ defined in (\ref{est:optimalcp}) as,
\benr
\tilde\tau(\theta_1,\theta_2)=\argmax_{0\le \tau\le T}\cC(\tau,\theta_1,\theta_2)\nn
\eenr
The proofs of Theorem \ref{thm:wc.vanishing} and Theorem \ref{thm:wc.non.vanishing} below are applications of the Argmax Theorem (reproduced as Theorem \ref{thm:argmax} in Appendix \ref{app:auxiliary}). The arguments here are largely an exercise in verification of requirements of this theorem.
%%%%%%%%%%%%%%%%%%%%%%%%%%%%%%%%%%%%%%%%%%%%%%%%%%%%%%%%%%%%%%%%%%%%%%%%%%%%%%%%%%%%%%%%%%%%%%%%%%%%%%%%%%%%%%%%%%%%%%%%%%%%%%%%%%%%%%%%%%%%%%%%%%%%%%%%%%%%%%%%%%%%%%%%%%%%

\bc$\rule{4in}{0.1mm}$\ec

%%%%%%%%%%%%%%%%%%%%%%%%%%%%%%%%%%%%%%%%%%%%%%%%%%%%%%%%%%%%%%%%%%%%%%%%%%%%%%%%%%%%%%%%%%%%%%%%%%%%%%%%%%%%%%%%%%%%%%%%%%%%%%%%%%%%%%%%%%%%%%%%%%%%%%%%%%%%%%%%%%%%%%%%%%%%
\begin{proof}[Proof of Theorem \ref{thm:wc.vanishing}] Let the underlying indexing metric space be $\R,$ and consider the two cases of known and unknown mean parameters.
	
	\vspace{1.5mm}	
	\noi{\bf Case I \big($\theta_1^0$ and $\theta_2^0$ known\big):} Following is list of requirement of the Argmax theorem that require verification for this case (see, page 288 of \cite{vaart1996weak}).
	\benr
	&(i)& {\rm The\, sequence}\,\, \xi^{2}(\tilde\tau^*-\tau^0)\,\, {\rm is\, uniformly\, tight}.\nn\\
	&(ii)&	\big\{2\si_{\iny}W(\z)-|\z|\}\,\, {\rm satisfies\, suitable\, regularity\, conditions}\footnotemark.\nn\\
	&(iii)& {\rm For\, any}\,\, \z\in [-c_u,c_u]\,\, {\rm we\, have,}\,\, \cC(\tau^0+\z\xi^{-2},\theta_1^0,\theta_2^0)\Rightarrow \big\{2\si_{\iny}W(\z)-|\z|\big\}.\nn
	\eenr
	\footnotetext{Almost all sample paths $\z\to \big\{2\si_{\iny}W(\z)-|\z|\}$ are upper semicontinuous and posses a unique maximum at a (random) point $\argmax_{\z\in\R}\big\{2\si_{\iny}W(\z)-|\z|\},$ which as a random map in the indexing metric space is tight.}		
	Note that by setting $\h\theta_1=\theta_1^0$ and $\h\theta_2=\theta_2^0,$ Condition C.2 is trivially satisfied. Now using Theorem \ref{thm:cpoptimal} we have that $\xi^{2}(\tilde\tau^*-\tau^0)=O_p(1).$ This directly yields requirement (i). The second requirement follows from well known properties of Brownian motion's. The only remaining requirement is (iii), which is provided in the following. For any fixed $\z\in\R,$ first consider $\lfloor \z\xi^{-2}\rfloor$ and note that under the assumed regime of $\xi\to 0$ we have,
	\benr
	\z \leftarrow (\z-\xi^{2})\le \xi^{2}\lfloor \z\xi^{-2}\rfloor\le (\z+\xi^2)\to \z\nn
	\eenr
	Hence, w.l.o.g. we may assume $\z\xi^{-2}$ is integer valued. Now for any $\z\in(0,c_u],$ consider
	\benr
	\cC(\tau^0+\z\xi^{-2},\theta_1^0,\theta_2^0)&=&-\sum_{t=(\tau^0+1)}^{\tau^0+\z\xi^{-2}}\big\{\|x_t-\theta_1^0\|_2^2-\|x_t-\theta_2^0\|_2^2\big\}\nn\\
	&=&2\sum_{t=(\tau^0+1)}^{\tau^0+\z\xi^{-2}}\vep_t^T\eta^0-\z\Rightarrow 2\si_{\iny}W_1(\z)-\z,\nn
	\eenr
	where the weak convergence follows from the functional central limit theorem. Repeating the same argument with $\z\in[-c_u,0),$ yields  	$\cC(\tau^0+\z\xi^{-2},\theta_1^0,\theta_2^0)\Rightarrow 2\si_{\iny}W_2(\z)-|\z|.$ This completes the proof of requirement (iii) for the Argmax theorem and consequently an application of its results yields $\xi^{2}(\tilde\tau^*-\tau^0)\Rightarrow \argmax_{\z\in\R}\big\{2\si_{\iny}W(-\z)-|\z|\},$ which completes the proof of this case.
	
	\vspace{1.5mm}		
	\noi{\bf Case II \big($\theta_1^0$ and $\theta_2^0$ unknown\big):} In this case the applicability of the argmax theorem requires verification of the following conditions.
	\benr
	&(i)& {\rm The\, sequence}\,\, \xi^{2}(\tilde\tau-\tau^0)\,\, {\rm is\, uniformly\, tight}.\nn\\
	&(ii)&	\big\{2\si_{\iny}W(\z)-|\z|\}\,\, {\rm satisfies\, suitable\, regularity\, conditions}.\nn\\
	&(iii)& {\rm For\, any}\,\, \z\in [-c_u,c_u]\,\, {\rm we\, have,}\,\, \cC(\tau^0+\z\xi^{-2},\h\theta_1,\h\theta_2)\Rightarrow \big\{2\si_{\iny}W(\z)-|\z|\big\}.\nn
	\eenr
	Part (i) again follows from the result of Theorem \ref{thm:cpoptimal} under the assumed Condition C.2 on the nuisance estimates $\h\theta_1$ and $\h\theta_2.$  Part (ii) is identical to the corresponding requirement of Case I. Finally to prove part (iii) note that from Lemma \ref{lem:Capprox} we have that,
	\benr\label{eq:35}
	\sup_{\tau\in\cG(c_uT^{-1}\xi^{-2},0)}|\cC(\tau,\h\theta_1,\h\theta_2)-\cC(\tau,\theta_1^0,\theta_2^0)|=o_p(1).
	\eenr
	The approximation (\ref{eq:35}) and Part (iii) of Case I together imply Part (iii) for this case. This completes the verification of all requirements for this case. The statement of the theorem now follows by an application of the Argmax theorem.
\end{proof}

%%%%%%%%%%%%%%%%%%%%%%%%%%%%%%%%%%%%%%%%%%%%%%%%%%%%%%%%%%%%%%%%%%%%%%%%%%%%%%%%%%%%%%%%%%%%%%%%%%%%%%%%%%%%%%%%%%%%%%%%%%%%%%%%%%%%%%%%%%%%%%%%%%%%%%%%%%%%%%%%%%%%%%%%%%%%
\bc$\rule{4in}{0.1mm}$\ec
%%%%%%%%%%%%%%%%%%%%%%%%%%%%%%%%%%%%%%%%%%%%%%%%%%%%%%%%%%%%%%%%%%%%%%%%%%%%%%%%%%%%%%%%%%%%%%%%%%%%%%%%%%%%%%%%%%%%%%%%%%%%%%%%%%%%%%%%%%%%%%%%%%%%%%%%%%%%%%%%%%%%%%%%%%%%

\begin{proof}[Proof of Theorem \ref{thm:wc.non.vanishing}] Let the underlying indexing metric space be the set of integers $\Z$ and consider the following two cases.
	
	\vspace{1.5mm}	
	\noi{\bf Case I \big($\theta_1^0$ and $\theta_2^0$ known\big):} The requirements to be verified here are as follows.
	\benr
	&(i)& {\rm The\, sequence}\,\, (\tilde\tau^*-\tau^0)\,\, {\rm is\, uniformly\, tight}.\nn\\
	&(ii)&	\cC(\z)\,\, {\rm satisfies\, suitable\, regularity\, conditions}.\nn\\
	&(iii)& {\rm For\, any}\,\, \z\in \{-c_u,-c_u+1,...,-1,0,1,...c_u\}\,\, {\rm we\, have,}\,\, \cC(\tau^0+\z,\theta_1^0,\theta_2^0)\Rightarrow \cC_{\iny}(\z).\nn
	\eenr
	As in the proof of Theorem \ref{thm:wc.vanishing}, requirement (i) follows directly from the result of Theorem \ref{thm:cpoptimal}. Lemma \ref{lem:regularity.argmax} below provides the regularity requirements of Part (ii). The requirement (iii) is verified in the following. For any $\z\in\{1,2,...,c_u\},$ consider,
	\benr
	\cC(\tau^0+\z,\theta_1^0,\theta_2^0)&=&-\sum_{t=(\tau^0+1)}^{\tau^0+\z}\big\{\|x_t-\theta_1\|_2^2-\|x_t-\theta_2\|_2^2\big\}\nn\\
	&=&\sum_{t=(\tau^0+1)}^{\tau^0+\z}\Big(2\vep_t^T\eta^0-\xi^2\Big)=^d \sum_{t=1}^{\z}\cL(-\xi^2,4\eta^{0T}\Si\eta^0)\nn\\
	&\Rightarrow& \sum_{t=1}^{\z}\cL(-\xi_{\iny}^2,4\xi_{\iny}^2\si^2_{\iny}),\nn
	\eenr
	The third equality follows from Condition A$'$ and the convergence in distribution follows from Condition E(ii). Repeating the same argument with $\z\in\{-c_u,-c_u+1,...,-1\},$ yields  	$\cC(\tau^0+\z,\theta_1^0,\theta_2^0)\Rightarrow \sum_{t=1}^{-\z}\cL(-\xi_{\iny}^2,4\xi_{\iny}^2\si^2_{\iny}).$ An application the Argmax theorem now yields $(\tilde\tau^*-\tau^0)\Rightarrow \argmax_{\z\in\Z}\cC_{\iny}(\z),$ which completes the proof of this case.
	
	\vspace{1.5mm}		
	\noi{\bf Case II \big($\theta_1^0$ and $\theta_2^0$ unknown\big):} In this case, the applicability of the argmax theorem requires verification of the following.
	\benr
	&(i)& {\rm The\, sequence}\,\, (\tilde\tau-\tau^0)\,\, {\rm is\, uniformly\, tight}.\nn\\
	&(ii)&	\cC(\z)\,\, {\rm satisfies\, suitable\, regularity\, conditions}.\nn\\
	&(iii)& {\rm For\, any}\,\, \z\in \{-c_u,-c_u+1,...,-1,0,1,...c_u\}\,\, {\rm we\, have,}\,\, \cC(\tau^0+\z,\h\theta_1,\h\theta_2)\Rightarrow \cC_{\iny}(\z).\nn	\eenr
	Part (i) follows from Theorem \ref{thm:cpoptimal} under the assumed Condition C.2 on the nuisance estimates $\h\theta_1$ and $\h\theta_2.$  Part (ii) is identical to the corresponding requirement of Case I. Finally to prove part (iii) note that from Lemma \ref{lem:Capprox} we have that,
	\benr\label{eq:38}
	\sup_{\tau\in\cG(c_uT^{-1}\xi^{-2},0)}|\cC(\tau,\h\theta_1,\h\theta_2)-\cC(\tau,\theta_1^0,\theta_2^0)|=o_p(1).
	\eenr
	The approximation (\ref{eq:38}) and Part (iii) of Case I together imply Part (iii) for this case. This completes the verification of all requirements for this case. The statement of the theorem now follows by an application of the Argmax theorem.
\end{proof}
%%%%%%%%%%%%%%%%%%%%%%%%%%%%%%%%%%%%%%%%%%%%%%%%%%%%%%%%%%%%%%%%%%%%%%%%%%%%%%%%%%%%%%%%%%%%%%%%%%%%%%%%%%%%%%%%%%%%%%%%%%%%%%%%%%%%%%%%%%%%%%%%%%%%%%%%%%%%%%%%%%%%%%%%%%%%

\bc$\rule{4in}{0.1mm}$\ec

%%%%%%%%%%%%%%%%%%%%%%%%%%%%%%%%%%%%%%%%%%%%%%%%%%%%%%%%%%%%%%%%%%%%%%%%%%%%%%%%%%%%%%%%%%%%%%%%%%%%%%%%%%%%%%%%%%%%%%%%%%%%%%%%%%%%%%%%%%%%%%%%%%%%%%%%%%%%%%%%%%%%%%%%%%%%
\begin{lem}[Regularity conditions of $\argmax\cC(\z)$]\label{lem:regularity.argmax} Let $\cC(\z)$ be as defined in (\ref{def:cCz}) and suppose Condition A$'$ holds. Then the map $\z\to \cC(\z)$ is continuous with respect to the domain space $\Z.$ Additionally suppose Condition B and Condition E(ii) holds, then $\argmax_{\z\in\Z}\cC(\z)$ possesses an almost sure unique maximum at $\omega_{\iny,}$ which as a random map in $\Z$ is tight.
\end{lem}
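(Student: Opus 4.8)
The first claim, continuity of $\z\mapsto\cC(\z)$, is immediate and I would dispose of it in one line: the index set $\Z$, viewed as a subspace of $\R$, carries the discrete topology, so every real-valued map on it is continuous. The substance is in the remaining two assertions, and my plan is to extract both from the negative drift of the walk together with the continuity of $\cL$ furnished by Condition A$'$.

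For almost sure attainment of a maximum I would apply the strong law of large numbers to each of the one-sided walks $\sum_{t=1}^{k}z_{t}$ and $\sum_{t=1}^{k}z_{t}^{*}$: by Condition E(ii) the increments are i.i.d.\ with finite mean $-\xi_{\iny}^{2}<0$ (and finite variance $4\xi_{\iny}^{2}\si_{\iny}^{2}$), so $\cC(\z)/|\z|\to-\xi_{\iny}^{2}$ a.s.\ as $\z\to\pm\iny$, whence $\cC(\z)\to-\iny$ a.s.\ as $|\z|\to\iny$. Since $\cC(0)=0>-\iny$, the maximizers form, almost surely, a nonempty finite subset of $\Z$, and I would let $\omega_{\iny}$ be its smallest element. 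Uniqueness would then follow from the fact that, for any fixed pair $\z_{1}\neq\z_{2}$ in $\Z$, the difference $\cC(\z_{1})-\cC(\z_{2})$ contains at least one independent, continuously distributed summand (an increment $z_{t}$ or $z_{t}^{*}$, with the $z^{*}$'s entering only when $\z_{1}$ and $\z_{2}$ straddle the origin), hence is itself atomless, so $pr\big(\cC(\z_{1})=\cC(\z_{2})\big)=0$; likewise $pr(\cC(\z)=0)=0$ for $\z\neq0$. Taking the union over the countable set of pairs in $\Z^{2}$ shows that $\z\mapsto\cC(\z)$ is a.s.\ injective, and combined with the previous observation this makes $\omega_{\iny}$ the a.s.\ unique maximizer.

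For tightness of $\omega_{\iny}$ as a $\Z$-valued random element the plan is to bound both tails of the walk. For $M\ge1$, on the event $\{|\omega_{\iny}|>M\}$ the global maximum of $\cC$ — which is at least $\cC(0)=0$ — is attained outside $\{-M,\dots,M\}$, so $\{|\omega_{\iny}|>M\}\subseteq\{\sup_{\z>M}\cC(\z)\ge0\}\cup\{\sup_{\z<-M}\cC(\z)\ge0\}$. Since $\sup_{\z>M}\cC(\z)$ is nonincreasing in $M$ and, by the first step, decreases a.s.\ to $\limsup_{\z\to\iny}\cC(\z)=-\iny$ (and symmetrically on the left), both probabilities tend to $0$ as $M\to\iny$; hence $pr(|\omega_{\iny}|>M)\to0$, which is precisely tightness on the discrete space $\Z$. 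I expect no real obstacle here: the only slightly delicate point is the bookkeeping in the uniqueness step — checking that $\cC(\z_{1})-\cC(\z_{2})$ remains atomless when the arguments straddle $0$ and that the exceptional event is assembled as a countable union of null sets — but this is routine, and everything else is a direct consequence of the law of large numbers.
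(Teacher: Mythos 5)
Your proposal is correct and follows essentially the same route as the paper: the strong law of large numbers applied to the negative-drift walk forces $\cC_{\iny}(\z)\to-\iny$ so the maximizer set is a.s.\ finite and nonempty, continuity of the increment distribution $\cL$ gives a.s.\ uniqueness via atomlessness of the pairwise differences, and a.s.\ finiteness of $\omega_{\iny}$ yields tightness. Your write-up is somewhat more explicit than the paper's (which invokes the Hewitt--Savage $0$--$1$ law for finiteness of the maximum and states the uniqueness and tightness steps more tersely), but there is no substantive difference in the argument.
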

%%%%%%%%%%%%%%%%%%%%%%%%%%%%%%%%%%%%%%%%%%%%%%%%%%%%%%%%%%%%%%%%%%%%%%%%%%%%%%%%%%%%%%%%%%%%%%%%%%%%%%%%%%%%%%%%%%%%%%%%%%%%%%%%%%%%%%%%%%%%%%%%%%%%%%%%%%%%%%%%%%%%%%%%%%%%

%%%%%%%%%%%%%%%%%%%%%%%%%%%%%%%%%%%%%%%%%%%%%%%%%%%%%%%%%%%%%%%%%%%%%%%%%%%%%%%%%%%%%%%%%%%%%%%%%%%%%%%%%%%%%%%%%%%%%%%%%%%%%%%%%%%%%%%%%%%%%%%%%%%%%%%%%%%%%%%%%%%%%%%%%%%%
\begin{proof}[Proof of Lemma \ref{lem:regularity.argmax}] From Condition A$',$ each side of the random walk $\cC(\z)$ has increments supported on $\R,$ thus the first assertion on the continuity of the map $\z\to \cC(z)$ follows trivially since the domain space of this map is restricted to only the integers $\Z$ ($\ep-\delta$ definition of continuity). To prove the remaining assertions note that from Condition A$'$, Condition B and Condition E(ii), together with the definition (\ref{def:cCz}), we have that $\cC_{\iny}(\z)$ has i.i.d increments with a negative drift of $-\xi_{\iny}^2.$ Consequently, we have $\cC_{\iny}(\z)\to-\iny,$ as $\z\to \iny$ almost surely (strong law of large numbers). Using elementary properties of random walks, this implies that $\max_{\z}\cC_{\iny}(\z)<\iny,$ a.s. (follows from the Hewitt-Savage $0$-$1$ law, see, e.g. (1.1) and (1.2) on Page 172, 173 of \cite{durrett2010probability}). Additionally $\om_{\iny}\ge 0,$ from the construction of $\cC_{\iny}(\z).$ Thus, we have $0\le \om_{\iny}<\iny,$ a.s.  which directly implies that when $\om_{\iny}$ is well defined (unique) then it must be tight. To show that $\om_{\iny}$ is unique, note that since by assumption (Condition A$'$) the increments are continuously distributed and supported on $\R$, therefore $\max\cC_{\iny}(\z)$ is continuously distributed on $(0,\iny),$ with some additional probability mass at the singleton zero. Hence, the probability of $\max\cC_{\iny}(\z)$ attaining any two identical values is zero. Consequently $\om_{\iny}$ is unique a.s.
\end{proof}
%%%%%%%%%%%%%%%%%%%%%%%%%%%%%%%%%%%%%%%%%%%%%%%%%%%%%%%%%%%%%%%%%%%%%%%%%%%%%%%%%%%%%%%%%%%%%%%%%%%%%%%%%%%%%%%%%%%%%%%%%%%%%%%%%%%%%%%%%%%%%%%%%%%%%%%%%%%%%%%%%%%%%%%%%%%%

\bc$\rule{4in}{0.1mm}$\ec

%%%%%%%%%%%%%%%%%%%%%%%%%%%%%%%%%%%%%%%%%%%%%%%%%%%%%%%%%%%%%%%%%%%%%%%%%%%%%%%%%%%%%%%%%%%%%%%%%%%%%%%%%%%%%%%%%%%%%%%%%%%%%%%%%%%%%%%%%%%%%%%%%%%%%%%%%%%%%%%%%%%%%%%%%%%%
\begin{lem}\label{lem:Capprox} Let $\cC(\tau,\theta_1,\theta_2)$ be as defined in (\ref{def:cC}) and suppose Condition A, B and C.2 hold. Additionally assume sequence $r_T$ of Condition C.2 satisfies (\ref{eq:30}). Then, for any $c_u>0,$ we have,
	\benr
	\sup_{\tau\in\cG(c_uT^{-1}\xi^{-2},0)}\big|\cC(\tau,\h\theta_1,\h\theta_2)-\cC(\tau,\theta_1^0,\theta_2^0)\big|=o_p(1).\nn
	\eenr
\end{lem}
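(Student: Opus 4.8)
The plan is to expand the difference $\cC(\tau,\h\theta_1,\h\theta_2)-\cC(\tau,\theta_1^0,\theta_2^0)$ using the quadratic structure in (\ref{def:cC}), isolating the terms that depend on the nuisance errors $\delta_j:=\h\theta_j-\theta_j^0$, and then bounding each resulting term uniformly over the tiny index set $\cG(c_uT^{-1}\xi^{-2},0)$, which contains only $O(\xi^{-2})$ indices on each side of $\tau^0$. Writing $\|x_t-\theta_1\|_2^2-\|x_t-\theta_2\|_2^2 = -2x_t^T(\theta_1-\theta_2)+\|\theta_1\|_2^2-\|\theta_2\|_2^2$ and subtracting, for (say) $\tau\ge\tau^0$ one gets a sum over $t\in\{\tau^0+1,\dots,\tau\}$ of terms of the form $2x_t^T(\h\eta-\eta^0)$ plus a deterministic piece involving $\|\h\theta_j\|_2^2-\|\theta_j^0\|_2^2$. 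Substituting $x_t=\theta_2^0+\vep_t$ on this block and $\h\eta-\eta^0=\delta_1-\delta_2$, the stochastic part is $2\sum_{t}\vep_t^T(\delta_1-\delta_2)$ and the remaining part is $(\tau-\tau^0)$ times a quantity controlled by $\|\delta_j\|_2$ and $\|\eta^0\|_2=\xi$ via Cauchy--Schwarz.

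The key steps, in order, are: (i) algebraically decompose the difference as above into a ``linear-in-noise'' term $\tfrac{}{}2\big\|\sum_{t=\tau^0+1}^{\tau}\vep_t\big\|_{\iny}\|\h\eta-\eta^0\|_1$-type bound, and a ``deterministic drift'' term of magnitude at most $c_u|\tau-\tau^0|\,\xi\,(\|\delta_1\|_2\vee\|\delta_2\|_2)$; (ii) on $\cG(c_uT^{-1}\xi^{-2},0)$ we have $|\tau-\tau^0|\le c_u\xi^{-2}$, so the deterministic term is at most $c_u\xi^{-1}(\|\delta_1\|_2\vee\|\delta_2\|_2)=c_u\xi^{-1}r_T$; under assumption (\ref{eq:30}) this is $o(1)\{s\log(p\vee T)\}^{-1/2}\le o(1)$ in the subgaussian case (and similarly with $\log(p\vee T)$ in the denominator in the subexponential case), hence $o_p(1)$; (iii) for the stochastic term, invoke the maximal deviation bounds already available in the excerpt --- Lemma \ref{lem:nearoptimalcross} (subgaussian) and Lemma \ref{lem:nearoptimalcross.subE.special} (subexponential) --- to get $\big\|\sum_{t=\tau^0+1}^{\tau}\vep_t\big\|_{\iny}\le c_u\si\{|\tau-\tau^0|\log(p\vee T)\}^{1/2}$ (resp.\ with an extra $\log$ factor) uniformly over the relevant range, with probability $1-2\exp\{-\log(p\vee T)\}$; combined with $\|\h\eta-\eta^0\|_1\le 4\surd s\,r_T$ from Condition C.2(I) and $|\tau-\tau^0|\le c_u\xi^{-2}$, the stochastic term is bounded by $c_u\si\xi^{-1}\{s\log(p\vee T)\}^{1/2}r_T$, which is exactly $o(1)$ under (\ref{eq:30}) in the subgaussian case and under the subexponential version of (\ref{eq:30}) in the subexponential case; (iv) assemble, using that the event where the deviation bounds hold has probability $\to 1$, and handle the mirror case $\tau\le\tau^0$ by the symmetric argument.

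The main obstacle I anticipate is the bookkeeping of which maximal inequality to use and over what index range, so that the resulting bound genuinely matches the two cases of (\ref{eq:30}): the subexponential maximal bound (Lemma \ref{lem:nearoptimalcross.subE.special}) carries an extra $\log(p\vee T)$ factor valid for all block lengths (not requiring $Tv_T\ge\log(p\vee T)$), which is precisely why (\ref{eq:30}) demands $r_T=o(1)\xi/(s^{1/2}\log(p\vee T))$ rather than $o(1)\xi/\{s\log(p\vee T)\}^{1/2}$ in that regime. One must also be slightly careful that the ``deterministic drift'' term, which really is $(\tau-\tau^0)\big\{\|\h\eta\|_2^2+2(\h\theta_2-\theta_2^0)^T\h\eta - \xi^2\big\}$ after using $2x_t^T\eta^0 = 2\vep_t^T\eta^0+2\theta_2^{0T}\eta^0+\xi^2$-type identities, is genuinely $O(|\tau-\tau^0|\,\xi\,r_T)=o_p(1)$ and not $O(|\tau-\tau^0|\,\xi^2)$; this is where the cancellation built into the definition of $\cC$ (subtracting the value at $\tau^0$) is essential. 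Everything else is routine, and the probability statement is inherited verbatim from the cited lemmas.
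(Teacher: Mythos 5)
Your proposal is correct and follows essentially the same route as the paper's proof: the same decomposition into a linear-in-noise term $2\sum_{t}\vep_t^T(\h\eta-\eta^0)$ bounded by Lemma \ref{lem:nearoptimalcross} / Lemma \ref{lem:nearoptimalcross.subE.special} together with the $\ell_1$ bound $\|\h\eta-\eta^0\|_1\le c\surd{s}\,r_T$, plus a deterministic drift of order $|\tau-\tau^0|\,\xi\,r_T\le c_u\xi^{-1}r_T=o(1)$, with the two cases of (\ref{eq:30}) matched to the two maximal inequalities exactly as the paper does. Your explicit form of the drift, $(\tau-\tau^0)\{\|\h\eta\|_2^2+2(\h\theta_2-\theta_2^0)^T\h\eta-\xi^2\}$, is in fact slightly more careful than the paper's displayed $R_2=(\tau-\tau^0)(\|\h\eta\|_2^2-\|\eta^0\|_2^2)$, which omits the (harmless, same-order) cross term.
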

%%%%%%%%%%%%%%%%%%%%%%%%%%%%%%%%%%%%%%%%%%%%%%%%%%%%%%%%%%%%%%%%%%%%%%%%%%%%%%%%%%%%%%%%%%%%%%%%%%%%%%%%%%%%%%%%%%%%%%%%%%%%%%%%%%%%%%%%%%%%%%%%%%%%%%%%%%%%%%%%%%%%%%%%%%%%

%%%%%%%%%%%%%%%%%%%%%%%%%%%%%%%%%%%%%%%%%%%%%%%%%%%%%%%%%%%%%%%%%%%%%%%%%%%%%%%%%%%%%%%%%%%%%%%%%%%%%%%%%%%%%%%%%%%%%%%%%%%%%%%%%%%%%%%%%%%%%%%%%%%%%%%%%%%%%%%%%%%%%%%%%%%%
\begin{proof}[Proof of Lemma \ref{lem:Capprox}] By proceeding similar to (\ref{eq:11}), we have under Condition C.2 and (\ref{eq:30}) that,
	\benr\label{eq:33}
	\|\h\eta-\eta^0\|_1&\le& c_{u1}\surd{s}\|\h\eta-\eta^0\|_2\nn\\
	&=&\begin{cases} \frac{o(1)\xi}{\log^{1/2} (p\vee T)}, & {\rm for\,\, subgaussian\,\,case,}\\ \frac{o(1)\xi}{\log (p\vee T)}, & {\rm for\,\, subexpoential\,\,case}, \end{cases}
	\eenr
	with probability at least $1-\pi_T.$ Consider any $\tau\ge \tau^0$ and define the following,
	\benr
	R_1=2\sum_{t=\tau^0+1}^{\tau}\vep_t^T(\h\eta-\eta^0),\quad{\rm and}\quad R_2=(\tau-\tau^0)\big(\|\h\eta\|_2^2-\|\eta^0\|_2^2).\nn
	\eenr	
	Then we have the following algebraic expansion,	
	\benr\label{eq:34}
	\cC(\tau,\h\theta_1,\h\theta_2)-\cC(\tau,\theta_1^0,\theta_2^0)&=&
	T\big\{Q(\tau^0,\h\theta_1,\h\theta_2)\nn\\
	&&-Q(\tau,\h\theta_1,\h\theta_2)\big\}-T\big\{Q(\tau^0,\theta_1^0,\theta_2^0)-Q(\tau,\theta_1^0,\theta_2^0)\big\}\nn\\
	&=& 2\sum_{t=\tau^0+1}^{\tau}\vep_t^T(\h\eta-\eta^0)- (\tau-\tau^0)\big(\|\h\eta\|_2^2-\|\eta^0\|_2^2)\nn\\
	&=&R_1-R_2
	\eenr
	Next we provide uniform bounds on the terms $R_1$ and $R_2$ of (\ref{eq:34}). Consider,
	\benr\label{eq:36}
	\sup_{\substack{\tau\in \cG(c_u T^{-1}\xi^{-2},0);\\\tau\ge\tau^0}}|R_1|&\le& 2\sup_{\substack{\tau\in \cG(c_uT^{-1}\xi^{-2},0);\\\tau\ge\tau^0}}\big\|\sum_{t=\tau^0+1}^{\tau}\vep_t\big\|_{\iny}\|\h\eta-\eta^0\|_1\nn\\
	&\le& \begin{cases}\surd c_u\si\xi^{-1}\log^{1/2}(p\vee T)\|\h\eta-\eta^0\|_1 & {\rm for\,\, subG\,\,case} \\ \surd c_u\si\xi^{-1}\log(p\vee T)\|\h\eta-\eta^0\|_1 & {\rm for\,\, subE\,\,case} \end{cases}\nn\\
	&=&o(1),
	\eenr
	with probability at least $1-o(1).$ Here the second inequality follows from Lemma \ref{lem:nearoptimalcross} and Lemma \ref{lem:nearoptimalcross.subE.special}. The final equality follows from an application of (\ref{eq:33}). Next consider term $R_2$ of (\ref{eq:34}).
	\benr\label{eq:37}
	\sup_{\substack{\tau\in \cG(c_u T^{-1}\xi^{-2},0);\\\tau\ge\tau^0}}|R_2|&\le& c_u\xi^{-2}\big|\|\h\eta\|_2^2-\|\eta^0\|_2^2\big|= c_u\xi^{-2}\big|\|\h\eta-\eta^0\|_2^2+2(\h\eta-\eta^0)^T\eta^0\big|\nn\\
	&\le& c_u\xi^{-2}\|\h\eta-\eta^0\|_2^2+2c_u\xi^{-1}\|\h\eta-\eta^0\|_2=o_p(1),
	\eenr
	where the final equality follows from (\ref{eq:33}). Applying (\ref{eq:36}) and (\ref{eq:37}) in the expression (\ref{eq:34}) yields,
	\benr
	\sup_{\substack{\tau\in \cG(c_u T^{-1}\xi^{-2},0);\\\tau\ge\tau^0}}\big|\cC(\tau,\h\theta_1,\h\theta_2)-\cC(\tau,\theta_1^0,\theta_2^0)\big|&\le& \sup_{\substack{\tau\in \cG(c_u T^{-1}\xi^{-2},0);\\\tau\ge\tau^0}}|R_1|\nn\\
	&&+ \sup_{\substack{\tau\in \cG(c_u T^{-1}\xi^{-2},0);\\\tau\ge\tau^0}}|R_2|\nn\\
	&=&o_p(1)\nn
	\eenr
	Corresponding bound for the mirroring case of $\{\tau\le\tau^0\}$ can be obtained via symmetrical arguments.  This completes the proof of this lemma.
\end{proof}
%%%%%%%%%%%%%%%%%%%%%%%%%%%%%%%%%%%%%%%%%%%%%%%%%%%%%%%%%%%%%%%%%%%%%%%%%%%%%%%%%%%%%%%%%%%%%%%%%%%%%%%%%%%%%%%%%%%%%%%%%%%%%%%%%%%%%%%%%%%%%%%%%%%%%%%%%%%%%%%%%%%%%%%%%%%%

\bc$\rule{4in}{0.1mm}$\ec

%%%%%%%%%%%%%%%%%%%%%%%%%%%%%%%%%%%%%%%%%%%%%%%%%%%%%%%%%%%%%%%%%%%%%%%%%%%%%%%%%%%%%%%%%%%%%%%%%%%%%%%%%%%%%%%%%%%%%%%%%%%%%%%%%%%%%%%%%%%%%%%%%%%%%%%%%%%%%%%%%%%%%%%%%%%%
\subsection{Proofs of Section \ref{sec:algorithm}}\label{app:sec.alg}
%%%%%%%%%%%%%%%%%%%%%%%%%%%%%%%%%%%%%%%%%%%%%%%%%%%%%%%%%%%%%%%%%%%%%%%%%%%%%%%%%%%%%%%%%%%%%%%%%%%%%%%%%%%%%%%%%%%%%%%%%%%%%%%%%%%%%%%%%%%%%%%%%%%%%%%%%%%%%%%%%%%%%%%%%%%%

The proof of Theorem \ref{thm:alg1.optimal} first requires some preliminary work, in particular we first need to examine the behavior of the estimates $\tilde\theta_1(\tau),$ and $\tilde\theta_2(\tau),$ uniformly over a collection of values of $\tau.$ This is provided in the following theorem.

%%%%%%%%%%%%%%%%%%%%%%%%%%%%%%%%%%%%%%%%%%%%%%%%%%%%%%%%%%%%%%%%%%%%%%%%%%%%%%%%%%%%%%%%%%%%%%%%%%%%%%%%%%%%%%%%%%%%%%%%%%%%%%%%%%%%%%%%%%%%%%%%%%%%%%%%%%%%%%%%%%%%%%%%%%%
\begin{thm}\label{thm:unifmean} Suppose model (\ref{model:rvmcp}), let $0\le u_T\le 1$ be any non-negative sequence, $\psi=\|\eta^0\|_{\iny},$ and for any constants $c_u,c_{u1}>0,$ let,
	\benr\label{eq:la}
	\la:=\la_1=\la_2=8\max\Big[\si\Big\{\frac{2c_{u1}\log(p\vee T)}{c_uTl_T}\Big\}^{\frac{1}{2}},\,\,\frac{u_T\psi}{c_ul_T}\Big].
	\eenr
	Additionally let either one of the following two conditions hold.\\~
	(a) Condition A(I) (subgaussian), B, and D holds. \\~
	(b) Condition A(II) (subexponential), B, and D holds and $c_uTl_T\ge \log (p\vee T).$\\~
	Then, the estimates  $\tilde\theta_1(\tau)$ and $\tilde\theta_2(\tau)$ satisfy the following two results with probability at least $1-\pi_T.$\\~
	(i) For any $\tau\in \cG(u_T,0),$ with $\tau\wedge(T-\tau)\ge c_uTl_T,$ we have $\big\|\big(\tilde\theta_1(\tau)\big)_{S_1^c}\big\|_1\le 3\big\|\big(\tilde\theta_1(\tau)-\theta_1^0\big)_{S_1}\big\|_1,$ and $\big\|\big(\tilde\theta_2(\tau)\big)_{S_2^c}\big\|_1\le 3\big\|\big(\tilde\theta_2(\tau)-\theta_2^0\big)_{S_2}\big\|_1.$\\~
	(ii) The following bound is satisfied,
	\benr
	\sup_{\substack{\tau\in\cG(u_T,0)\\ \tau\wedge(T-\tau)\ge c_uTl_T}} \|\tilde\theta_1(\tau)-\theta_1^0\|_2\vee \|\tilde\theta_2(\tau)-\theta_2^0\|_2\le 6\surd{s}\la.\nn
	\eenr
	Here $\pi_T=2\exp\{-(c_{u1}-2)\log(p\vee T)\}$ under conditions (a) and $\pi_T=2\exp\big\{-\big(c_{u2}-2\big)\log (p\vee T)\big\},$ under conditions (b), where $c_{u2}=c_{u1}\wedge \surd(c_uc_{u1}/2).$	
\end{thm}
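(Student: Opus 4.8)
The plan is to reduce both assertions to a single high-probability event on which the piecewise sample means $\bar x_{(0:\tau]}$ and $\bar x_{(\tau:T]}$ are close to $\theta_1^0$ and $\theta_2^0$ respectively in sup-norm, uniformly over the admissible range of $\tau,$ after which parts (i) and (ii) follow from two elementary properties of the soft-thresholding map $k_\la$: that $|k_\la(a)-a|\le\la$ for every $a\in\R,$ and that $k_\la(a)=0$ whenever $|a|\le\la.$

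First I would perform a bias/noise decomposition of the sample means. Under model (\ref{model:rvmcp}), for $\tau\ge\tau^0$ one has $\bar x_{(0:\tau]}=\theta_1^0-\frac{\tau-\tau^0}{\tau}\eta^0+\bar\vep_{(0:\tau]}$ while $\bar x_{(\tau:T]}=\theta_2^0+\bar\vep_{(\tau:T]}$ carries no bias, and for $\tau\le\tau^0$ the roles are interchanged; here $\bar\vep_{(0:\tau]}=\tau^{-1}\sum_{t=1}^\tau\vep_t$ and $\bar\vep_{(\tau:T]}=(T-\tau)^{-1}\sum_{t=\tau+1}^T\vep_t.$ In every case, using $|\tau-\tau^0|\le Tu_T$ (membership in $\cG(u_T,0)$) together with $\tau\wedge(T-\tau)\ge c_uTl_T,$ the sup-norm of the bias term is at most $Tu_T\psi/(c_uTl_T)=u_T\psi/(c_ul_T)\le\la/8$ by the choice of $\la$ in (\ref{eq:la}).

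Next I would control the stochastic averages uniformly over $\tau.$ For each fixed $\tau$ with $\tau\wedge(T-\tau)\ge c_uTl_T$ and each coordinate $j,$ the subgaussian tail bound under Condition A(I) (resp. Bernstein's inequality under Condition A(II)) gives $\P(|\bar\vep_{(0:\tau]j}|>\la/8)\le 2\exp\{-c_{u1}\log(p\vee T)\},$ since $\la/8\ge\si\{2c_{u1}\log(p\vee T)/(c_uTl_T)\}^{1/2}$ and the average has at least $c_uTl_T$ terms; the additional hypothesis $c_uTl_T\ge\log(p\vee T)$ in case (b) keeps this deviation inside the sub-Gaussian regime of Bernstein's inequality, the price being the degraded constant $c_{u2}=c_{u1}\wedge\surd(c_uc_{u1}/2).$ A union bound over the at most $T$ admissible values of $\tau,$ over the two families of averages, and over the $p$ coordinates --- which replaces $\exp\{-c_{u1}\log(p\vee T)\}$ by $\exp\{-(c_{u1}-2)\log(p\vee T)\},$ the source of the ``$-2$'' in $\pi_T$ --- then yields an event $\cE$ with $\P(\cE)\ge 1-\pi_T$ on which $\|\bar\vep_{(0:\tau]}\|_\infty\vee\|\bar\vep_{(\tau:T]}\|_\infty\le\la/8$ simultaneously for all admissible $\tau.$ Combined with the bias bound above, on $\cE$ we have $\|\bar x_{(0:\tau]}-\theta_1^0\|_\infty\vee\|\bar x_{(\tau:T]}-\theta_2^0\|_\infty\le\la/4$ for every such $\tau.$

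Finally, on $\cE$ and for admissible $\tau$: if $j\in S_1^c$ then $\theta_{1j}^0=0,$ so $|\bar x_{(0:\tau]j}|\le\la/4<\la$ forces $(\tilde\theta_1(\tau))_j=0$; hence $(\tilde\theta_1(\tau))_{S_1^c}=0,$ the left side of (i) vanishes and (i) holds trivially, and symmetrically for $\tilde\theta_2(\tau).$ For $j\in S_1,$ $|(\tilde\theta_1(\tau)-\theta_1^0)_j|\le|k_\la(\bar x_{(0:\tau]j})-\bar x_{(0:\tau]j}|+|\bar x_{(0:\tau]j}-\theta_{1j}^0|\le\la+\la/4$; since $\tilde\theta_1(\tau)-\theta_1^0$ vanishes off $S_1$ and $|S_1|\le s,$ this gives $\|\tilde\theta_1(\tau)-\theta_1^0\|_2\le\surd{|S_1|}\cdot(5\la/4)\le 6\surd s\la,$ and the same bound holds for $\|\tilde\theta_2(\tau)-\theta_2^0\|_2$ with $S_2$ in place of $S_1$; taking the supremum over admissible $\tau$ (the event $\cE$ being uniform in $\tau$) yields (ii). The only genuinely delicate step is the probabilistic one --- securing uniformity over $\tau$ with the stated failure probability $\pi_T,$ and, in the subexponential case, tracking the two regimes of Bernstein's inequality so that the constant $c_{u2}$ comes out as claimed; the remainder is deterministic bookkeeping for the soft-thresholding operator.
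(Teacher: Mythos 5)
Your proof is correct, but it takes a genuinely different route from the paper's. The paper works through the $\ell_1$-penalized least-squares characterization (\ref{est:softL1construction}) of the soft-thresholding estimate: it writes down the basic inequality $\|\bar x_{(0:\tau]}-\tilde\theta_1(\tau)\|_2^2+\la_1\|\tilde\theta_1(\tau)\|_1\le\|\bar x_{(0:\tau]}-\theta_1^0\|_2^2+\la_1\|\theta_1^0\|_1,$ isolates the same bias term $\frac{\tau-\tau^0}{\tau}\eta^0$ and the same noise term $\frac{1}{\tau}\sum_{t=1}^\tau\vep_t$ (controlled uniformly by its Lemma \ref{lem:1.to.tau.bound}, which is exactly the tail-plus-union-bound computation you reproduce, including the two Bernstein regimes that produce $c_{u2}=c_{u1}\wedge\surd(c_uc_{u1}/2)$), and then runs the standard LASSO chain: cone condition $\Rightarrow$ $\|\cdot\|_1\le 4\surd s\|\cdot\|_2$ $\Rightarrow$ $\|\tilde\theta_1(\tau)-\theta_1^0\|_2\le 6\surd s\la.$ You instead exploit the closed form of $k_\la$ coordinate-wise: on the good event the off-support coordinates are killed exactly ($(\tilde\theta_1(\tau))_{S_1^c}=0$, which makes (i) hold with left side zero rather than merely satisfying the cone inequality) and each on-support coordinate errs by at most $\la+\la/4,$ giving $\surd s\cdot 5\la/4\le 6\surd s\la.$ The paper explicitly flags your route as a valid alternative at the end of its proof ("This result can alternatively be proved using the properties of the soft-thresholding operator $k_\la(\cdot)$..."). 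What each buys: your argument is more elementary, yields slightly sharper constants and a strictly stronger form of (i); the paper's argument, by never using the explicit form of the minimizer, transfers verbatim to other $\ell_1$-type regularizers (e.g.\ the group-$\ell_1$ variant mentioned in Section \ref{sec:algorithm}) where no closed-form thresholding formula is available. The one place to be slightly careful in your write-up is the union bound: you union over the two families of averages as well as over $\tau$ and $j,$ which costs an extra factor of $2$ that is harmlessly absorbed into the leading constant of $\pi_T,$ consistent with the paper's own bookkeeping.
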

%%%%%%%%%%%%%%%%%%%%%%%%%%%%%%%%%%%%%%%%%%%%%%%%%%%%%%%%%%%%%%%%%%%%%%%%%%%%%%%%%%%%%%%%%%%%%%%%%%%%%%%%%%%%%%%%%%%%%%%%%%%%%%%%%%%%%%%%%%%%%%%%%%%%%%%%%%%%%%%%%%%%%%%%%%%

%%%%%%%%%%%%%%%%%%%%%%%%%%%%%%%%%%%%%%%%%%%%%%%%%%%%%%%%%%%%%%%%%%%%%%%%%%%%%%%%%%%%%%%%%%%%%%%%%%%%%%%%%%%%%%%%%%%%%%%%%%%%%%%%%%%%%%%%%%%%%%%%%%%%%%%%%%%%%%%%%%%%%%%%%%%
\begin{proof}[Proof of Theorem \ref{thm:unifmean}]
	Consider any $\tau\in\cG(u_T,0),$ satisfying $\tau\ge c_uTl_T,$ and without loss of generality assume $\tau\ge\tau^0.$ The mirroring case of $\tau\le\tau^0$ can be proved using symmetrical arguments. An algebraic rearrangement of the elementary inequality $\big\|\bar x_{(0:\tau]}-\tilde\theta_1(\tau)\big\|^2_2+\la_1\|\tilde\theta_1(\tau)\|_1\le \big\|\bar x_{(0:\tau]}-\theta_1^0\big\|^2_2+\la_1\|\theta_1^0\|_1$ yields,
	\benr\label{eq:20}
	\big\|\tilde\theta_1(\tau)-\theta_1^0\big\|_2^2+\la_1\big\|\tilde\theta_1(\tau)\big\|_1&\le& \la_1\big\|\theta^0_1\big\|_1+ 2\sum_{t=1}^{\tau}\tilde\vep_t^T(\tilde\theta_1(\tau)-\theta_1^0),\nn\\
	&=& \la_1\big\|\theta^0_1\big\|_1+2\sum_{t=1}^{\tau}\vep_t^T(\tilde\theta_1(\tau)-\theta_1^0)\nn\\
	&&-2\frac{(\tau-\tau^0)}{\tau}(\theta_1^0-\theta_2^0)^T(\tilde\theta_1(\tau)-\theta_1^0)\nn\\
	&\le &\la_1\big\|\theta^0_1\big\|_1+2\Big\|\sum_{t=1}^{\tau}\vep_t\Big\|_{\iny}\big\|\tilde\theta_1(\tau)-\theta_1^0\big\|_1\nn\\
	&&-\frac{2u_T}{c_ul_T}\psi\big\|\tilde\theta_1(\tau)-\theta_1^0\big\|_1
	\eenr
	where $\tilde\vep_t=\vep_t-\theta^0_1,$ for $t=1,...,\tau^0,$ and $\tilde\vep_t=\vep_t-(\theta_1^0-\theta^0_2),$ $t=\tau^0+1,...,\tau.$ Now using the bound of Lemma \ref{lem:1.to.tau.bound} we have that,
	\benr
	\frac{2}{\tau}\Big\|\sum_{t=1}^{\tau}\vep_t\Big\|_{\iny}\le 2\surd(2c_{u1}/c_{u})\si\Big\{\frac{\log(p\vee T)}{Tl_T}\Big\}^{\frac{1}{2}}\nn
	\eenr	
	with probability at least  $1-2\exp\{-(c_{u1}-2)\log(p\vee T)\},$ or $1-\exp\big\{-(c_{u2}-2)\log (p\vee T)\big\},$ under the subgaussian or subexponential setting, respectively. Consequently, upon choosing,
	\benr
	\la=\max\Big\{4\surd(2c_{u1}/c_{u})\si\Big\{\frac{\log(p\vee T)}{Tl_T}\Big\}^{\frac{1}{2}},\,\,\frac{4u_T\psi}{c_ul_T}\Big\},\nn
	\eenr
	and substituting these bounds in (\ref{eq:20}) we obtain,
	\benr\label{eq:21}
	\big\|\tilde\theta_1(\tau)-\theta_1^0\big\|_2^2+\la_1\big\|\tilde\theta_1(\tau)\big\|_1\le\la_1\big\|\theta^0_1\big\|_1+\la\big\|\tilde\theta_1(\tau)-\theta_1^0\big\|_1,
	\eenr
	with probability at least  $1-2\exp\{-(c_{u1}-2)\log(p\vee T)\},$ or $1-\exp\big\{-(c_{u2}-2)\log (p\vee T)\big\},$ under the subgaussian or subexponential setting, respectively. Now choosing $\la_1= 2\la,$ leads to $\|\big(\tilde\theta_1(\tau)\big)_{S_1^c}\|_1\le 3\|\big(\tilde\theta_1(\tau)-\theta_1^0\big)_{S_1}\|_1,$ which proves part (i) of this theorem. From inequality (\ref{eq:21}) we also have that,
	\benr
	\|\tilde\theta_1(\tau)-\theta_1^0\|_2^2\le \frac{3}{2}\la_1\|\tilde\theta_1(\tau)-\theta_1^0\|_1\le 6\la_1\surd{s}\|\tilde\theta_1(\tau)-\theta_1^0\|_2
	\eenr
	This directly implies that  $\|\tilde\theta_1(\tau)-\theta_1^0\|_2\le 6\la_1\surd{s},$ where we have used $\|\tilde\theta_1(\tau)-\theta_1^0\|_1\le 4\sqrt{s}\|\tilde\theta_1(\tau)-\theta_1^0\|_2,$ which follows in turn from $\|\big(\tilde\theta_1(\tau)\big)_{S_1^c}\|_1\le 3\|\big(\tilde\theta_1(\tau)-\theta_1^0\big)_{S_1}\|_1.$ To finish the proof of this part recall that the only stochastic bound used here is the uniform bound over $\cG(u_T,0)$ of Lemma \ref{lem:1.to.tau.bound}, consequently the final bound also holds uniformly over the same collection. Similar results for $\tilde\theta_2(\tau)$ follow analogously. This result can alternatively be proved using the properties of the soft-thresholding operator $k_{\la}(\cdotp),$ by building uniform versions of arguments such as those in \cite{rothman2009generalized}, or \cite{kaul2017structural}.
\end{proof}
%%%%%%%%%%%%%%%%%%%%%%%%%%%%%%%%%%%%%%%%%%%%%%%%%%%%%%%%%%%%%%%%%%%%%%%%%%%%%%%%%%%%%%%%%%%%%%%%%%%%%%%%%%%%%%%%%%%%%%%%%%%%%%%%%%%%%%%%%%%%%%%%%%%%%%%%%%%%%%%%%%%%%%%%%%%

Following is another preliminary result required to prove Theorem \ref{thm:alg1.optimal}. This results uses Theorem \ref{thm:unifmean} to provide the rate of convergence of $\check\theta_1=\tilde\theta_1(\check\tau),$ and $\check\theta_2=\tilde\theta_2(\check\tau),$ of Step 1 of Algorithm 1.

%%%%%%%%%%%%%%%%%%%%%%%%%%%%%%%%%%%%%%%%%%%%%%%%%%%%%%%%%%%%%%%%%%%%%%%%%%%%%%%%%%%%%%%%%%%%%%%%%%%%%%%%%%%%%%%%%%%%%%%%%%%%%%%%%%%%%%%%%%%%%%%%%%%%%%%%%%%%%%%%%%%%%%%%%%%
\begin{lem}\label{lem:step1mean} Suppose model (\ref{model:rvmcp}) and assume the following,
	\benr
	\frac{c_{u}\si}{\xi}\Big\{\frac{s\log (p\vee T)}{Tl_T}\Big\}^{\frac{1}{2}}\le c_{u1}\nn
	\eenr	
	for an appropriately chosen small enough constant $c_{u1}>0.$ Additionally, let either one of the following two conditions hold.\\~
	(a) Condition A(I) (subgaussian), B, and D holds and $T\ge 2.$ \\~
	(b) Condition A(II) (subexponential), B, and D holds, $T\ge 2$ and $c_uTl_T\ge \log (p\vee T).$\\~
	Then, the mean estimates $\check\theta_1=\tilde\theta_1(\check\tau),$ and  $\check\theta_2=\tilde\theta_2(\check\tau),$ satisfy the following, with probability $1-o(1).$\\~
	(i)  $\big\|\big(\check\theta_1\big)_{S_1^c}\big\|_1\le 3\big\|\big(\check\theta_1-\theta_1^0\big)_{S_1}\big\|_1,$ and $\big\|\big(\check\theta_2\big)_{S_2^c}\big\|_1\le 3\big\|\big(\check\theta_2-\theta_2^0\big)_{S_2}\big\|_1.$\\~
	(ii) The following bound is satisfied,
	\benr
	\|\check\theta_1-\theta_1^0\|_2\vee \|\check\theta_2-\theta_2^0\|_2\le c_{u1}\xi.\nn
	\eenr
	Consequently, the mean estimates $\check\theta_1$ and $\check\theta_2$ satisfy Condition C.1.
\end{lem}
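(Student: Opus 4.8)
The plan is to deduce this lemma as a direct corollary of Theorem \ref{thm:unifmean}, applied with the specific initializing cut-off $\check\tau$ and a carefully chosen $u_T$. First I would record what is known about $\check\tau$ from Condition F: part (i) gives $\check\tau\wedge(T-\check\tau)\ge c_u Tl_T$, and part (ii) gives $|\check\tau-\tau^0|\le c_{u1}Tl_T/(\surd(2s)\psi/\xi)$. The second bound tells us precisely which collection $\cG(u_T,0)$ the point $\check\tau$ belongs to: setting
\benr
u_T = \frac{c_{u1}\,l_T}{\surd(2s)\,\psi/\xi}\nn
\eenr
we have $|\check\tau-\tau^0|\le Tu_T$, i.e. $\check\tau\in\cG(u_T,0)$, and simultaneously $\check\tau\wedge(T-\check\tau)\ge c_u Tl_T$. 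This is exactly the index set over which Theorem \ref{thm:unifmean} provides its uniform conclusions, so both parts (i) and (ii) of that theorem apply to $\tilde\theta_1(\check\tau)=\check\theta_1$ and $\tilde\theta_2(\check\tau)=\check\theta_2$, provided the regularizers are chosen as in (\ref{eq:la}) (which is the choice (\ref{eq:la.step1.choice}) referenced in Step 1 of Algorithm 1) and provided the relevant sub-Gaussian or sub-exponential hypotheses hold — these are exactly hypotheses (a) and (b) of the present lemma, matching those of Theorem \ref{thm:unifmean}, with the extra requirement $c_uTl_T\ge\log(p\vee T)$ in case (b) carried over verbatim.

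Part (i) of the lemma is then immediate: it is precisely part (i) of Theorem \ref{thm:unifmean} specialized to $\tau=\check\tau$. For part (ii), Theorem \ref{thm:unifmean}(ii) yields
\benr
\|\check\theta_1-\theta_1^0\|_2\vee\|\check\theta_2-\theta_2^0\|_2\le 6\surd{s}\,\la,\nn
\eenr
with
\benr
\la = 8\max\Big[\si\Big\{\tfrac{2c_{u1}'\log(p\vee T)}{c_uTl_T}\Big\}^{1/2},\ \tfrac{u_T\psi}{c_ul_T}\Big].\nn
\eenr
So the remaining work is to bound $6\surd s\,\la$ by $c_{u1}\xi$, i.e. to control each of the two terms in the $\max$ after multiplication by $6\surd s$. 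The first term, $48\surd s\,\si\{2c_{u1}'\log(p\vee T)/(c_uTl_T)\}^{1/2}$, is a constant multiple of $\si\{s\log(p\vee T)/(Tl_T)\}^{1/2}$, which by the hypothesis $c_u\si\xi^{-1}\{s\log(p\vee T)/(Tl_T)\}^{1/2}\le c_{u1}$ of the lemma is at most a constant multiple of $\xi$, and this constant can be absorbed into a suitably small $c_{u1}$ by shrinking the $c_{u1}$ in the hypothesis. The second term, after substituting the chosen $u_T$, is $6\surd s\cdot 8 u_T\psi/(c_ul_T) = 48\surd s\,\psi/(c_u)\cdot c_{u1}l_T/(\surd(2s)\psi/\xi)/l_T = (48/(c_u\surd2))\,c_{u1}\xi$, a constant multiple of $c_{u1}\xi$ — here the $\psi$, $\surd s$ and $l_T$ factors cancel by construction of $u_T$. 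Thus both terms are $\le c_{u1}\xi$ after a harmless redefinition of the small constant, giving part (ii). Finally, parts (i) and (ii) together are exactly the two requirements of Condition C.1 (with $\pi_T$ the $o(1)$ probability bound supplied by Theorem \ref{thm:unifmean}), which gives the concluding sentence.

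The only genuinely delicate point — and the step I would be most careful about — is the bookkeeping of constants: Theorem \ref{thm:unifmean} introduces its own free constants $c_u, c_{u1}$ in the choice of $\la$, which must be reconciled with the constants $c_u, c_{u1}$ appearing in both Condition F and in the hypothesis of the present lemma, so that the final small constant $c_{u1}$ in part (ii) is genuinely attainable (it is, because all the relevant inequalities are one-directional and the constant is required only to be \emph{small enough}). A secondary point is verifying that the choice of $u_T$ above indeed has $u_T\le 1$ (needed to invoke Theorem \ref{thm:unifmean}, whose statement requires $0\le u_T\le 1$); this follows because $l_T\to0$ and the remaining factor $c_{u1}/(\surd(2s)\psi/\xi)$ is bounded, since $\surd(2s)\psi\ge\xi$ always (as $\|\eta^0\|_2\le\surd{(2s)}\|\eta^0\|_\infty$ when $\eta^0$ has at most $2s$ nonzero entries), so $u_T\le c_{u1}l_T\le 1$ for large $T$. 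No step here is conceptually hard; the content is entirely in correctly instantiating the uniform mean-estimation bound of Theorem \ref{thm:unifmean} at the single point $\check\tau$ permitted by Condition F.
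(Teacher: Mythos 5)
Your proposal is correct and follows essentially the same route as the paper's proof: instantiate Theorem \ref{thm:unifmean} at $\tau=\check\tau$ with $u_T=c_{u1}l_T\xi/(\surd{(2s)}\psi)$ supplied by Condition F, and observe that the second term in the resulting $\max$ collapses to a constant multiple of $c_{u1}\xi$ by cancellation while the first is controlled by the rate hypothesis (\ref{eq:23}). Your additional check that $u_T\le 1$ (via $\xi\le\surd{(2s)}\psi$) is a detail the paper leaves implicit, but it does not change the argument.
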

%%%%%%%%%%%%%%%%%%%%%%%%%%%%%%%%%%%%%%%%%%%%%%%%%%%%%%%%%%%%%%%%%%%%%%%%%%%%%%%%%%%%%%%%%%%%%%%%%%%%%%%%%%%%%%%%%%%%%%%%%%%%%%%%%%%%%%%%%%%%%%%%%%%%%%%%%%%%%%%%%%%%%%%%%%%

%%%%%%%%%%%%%%%%%%%%%%%%%%%%%%%%%%%%%%%%%%%%%%%%%%%%%%%%%%%%%%%%%%%%%%%%%%%%%%%%%%%%%%%%%%%%%%%%%%%%%%%%%%%%%%%%%%%%%%%%%%%%%%%%%%%%%%%%%%%%%%%%%%%%%%%%%%%%%%%%%%%%%%%%%%%
\begin{proof}[Proof of Lemma \ref{lem:step1mean}] From Condition F we have that $\check\tau$ satisfies,
	\benr\label{eq:28}
	(i)\,\,\check\tau\wedge(T-\check\tau)\ge c_{u}Tl_T,\quad{\rm and}\quad (ii)\,\,|\check\tau-\tau^0|\le Tu_T,
	\eenr	
	where $u_T=c_{u1}l_T\xi\big/(\surd(2s)\psi),$ i.e., $\check\tau\in\cG(u_T,0).$ Now upon choosing,
	\benr\label{eq:la.step1.choice}
	\la\,\,{\rm as\, prescribed\,\, in\,\, (\ref{eq:la})\,\, with}\,\,u_T=c_{u1}l_T\xi\big/(\surd(2s)\psi),
	\eenr
	and applying Theorem \ref{thm:unifmean} we obtain the following two results that hold with probability $1-o(1),$ for both subgaussian and subexponential cases. First,  $\big\|\big(\check\theta_1\big)_{S_1^c}\big\|_1\le 3\big\|\big(\check\theta_1-\theta_1^0\big)_{S_1}\big\|_1,$ and $\big\|\big(\check\theta_2\big)_{S_2^c}\big\|_1\le 3\big\|\big(\check\theta_2-\theta_2^0\big)_{S_2}\big\|_1.$ Second,
	\benr\label{eq:22}
	\|\check\theta_1-\theta_1^0\|_2\vee \|\check\theta_2-\theta_2^0\|_2&\le& \max\Big[c_u\si\Big\{\frac{s\log(p\vee T)}{Tl_T}\Big\}^{\frac{1}{2}},\,\,c_u\frac{u_T\surd{(2s)}\psi}{l_T}\Big],\nn\\
	&=& \xi\Big[c_u\frac{\si}{\xi}\Big\{\frac{s\log(p\vee T)}{Tl_T}\Big\}^{\frac{1}{2}},\,\,c_u\frac{u_T\surd{(2s)}\psi}{l_T\xi}\Big]\nn\\
	&=& \xi\Big[R_1,R_2\Big]
	\eenr	
	Here the first equality is simply an algebraic manipulation. Now for a suitable chosen $c_{u1}>0,$ we have from assumption (\ref{eq:23}) that,
	\benr
	\frac{c_u\si}{\xi}\Big\{\frac{s\log (p\vee T)}{Tl_T}\Big\}^{\frac{1}{2}}\le c_{u1},\nn
	\eenr
	which provides a bound for term $R_1$ on the RHS of (\ref{eq:22}). Next we bound term $R_2$ of the same expression. Substituting the choice of $u_T$ from (\ref{eq:28}) in term $R_2$, together with the earlier bound for $R_1,$ we obtain,
	\benr\|\check\theta_1-\theta_1^0\|_2\vee \|\check\theta_2-\theta_2^0\|_2\le
	\xi\Big[R_1,R_2\Big]\le c_{u1}\xi,\nn
	\eenr
	with probability $1-o(1).$ Thereby all requirement of Condition C.1 are met and this completes the proof of the lemma.
\end{proof}
%%%%%%%%%%%%%%%%%%%%%%%%%%%%%%%%%%%%%%%%%%%%%%%%%%%%%%%%%%%%%%%%%%%%%%%%%%%%%%%%%%%%%%%%%%%%%%%%%%%%%%%%%%%%%%%%%%%%%%%%%%%%%%%%%%%%%%%%%%%%%%%%%%%%%%%%%%%%%%%%%%%%%%%%%%%

\bc$\rule{4in}{0.1mm}$\ec

%%%%%%%%%%%%%%%%%%%%%%%%%%%%%%%%%%%%%%%%%%%%%%%%%%%%%%%%%%%%%%%%%%%%%%%%%%%%%%%%%%%%%%%%%%%%%%%%%%%%%%%%%%%%%%%%%%%%%%%%%%%%%%%%%%%%%%%%%%%%%%%%%%%%%%%%%%%%%%%%%%%%%%%%%%%
\begin{proof}[Proof of Theorem \ref{thm:al1.near.optimal}]
	The proof of this theorem is a fairly direct consequence of Lemma \ref{lem:step1mean}, Theorem \ref{thm:nearoptimalcp} and Theorem \ref{thm:subE.nearoptimal.special}. First note that Lemma \ref{lem:step1mean} establishes that   $\check\theta_1=\tilde\theta_1(\check\tau),$ and $\check\theta_2=\tilde\theta_2(\check\tau)$ of Step 1 of Algorithm 1 will satisfy Condition C.1. The availability of Condition C.1 on these mean estimates, now allows us to directly obtain the rate of convergence of $\h\tau$ of Step 1 of Algorithm 1 using Part (i) of Theorem \ref{thm:nearoptimalcp} for the subgaussian case and Theorem \ref{thm:subE.nearoptimal.special} for the subexponential case. Specifically, under the assumed conditions of the corresponding results, we have,
	\benr
	|\h\tau-\tau^0|\le \begin{cases}c_{u}\si^2\xi^{-2}s\log(p\vee T) & {\rm for\,\, subgaussian\,\, case }\\ c_{u}\si^2\xi^{-2}s\log^2(p\vee T) & {\rm for\,\, subexponential\,\, case}\end{cases}\nn
	\eenr	
	with probability at least $1-o(1).$ This completes the proof of this theorem.
\end{proof}
%%%%%%%%%%%%%%%%%%%%%%%%%%%%%%%%%%%%%%%%%%%%%%%%%%%%%%%%%%%%%%%%%%%%%%%%%%%%%%%%%%%%%%%%%%%%%%%%%%%%%%%%%%%%%%%%%%%%%%%%%%%%%%%%%%%%%%%%%%%%%%%%%%%%%%%%%%%%%%%%%%%%%%%%%%%

\bc$\rule{4in}{0.1mm}$\ec

%%%%%%%%%%%%%%%%%%%%%%%%%%%%%%%%%%%%%%%%%%%%%%%%%%%%%%%%%%%%%%%%%%%%%%%%%%%%%%%%%%%%%%%%%%%%%%%%%%%%%%%%%%%%%%%%%%%%%%%%%%%%%%%%%%%%%%%%%%%%%%%%%%%%%%%%%%%%%%%%%%%%%%%%%%%
\begin{proof}[Proof of Theorem \ref{thm:alg1.optimal}] The starting point for this proof is the near optimality of $\h\tau$ of Step 1 of Algorithm 1, which is provided by Theorem \ref{thm:al1.near.optimal} or alternatively assumed in Step 1 of Algorithm 2. From (\ref{eq:24}) of Theorem \ref{thm:al1.near.optimal}, we have that $\h\tau\in\cG(u_T,0),$ with probability $1-o(1),$ where,
	\benr\label{eq:ut}
	Tu_T=\begin{cases}c_{u}\si^2\xi^{-2}s\log(p\vee T) & {\rm under\,\, subgaussian\,\,case}\\ c_{u}\si^2\xi^{-2}s\log^2(p\vee T) &  {\rm under\,\, subexponential\,\,case.}\end{cases}
	\eenr
	Moreover, by assumption we have $c_uTl_T\ge s\log (p\vee T),$ and $c_uTl_T\ge s\log^2 (p\vee T),$ in the subgaussian and subexponential case respectively. Thus, with the same probability at above we also have $\h\tau\wedge(T-\h\tau)\ge c_uTl_T.$ Now upon choosing,
	\benr\label{eq:la.step2.choice}
	\la\,\,{\rm as\, prescribed\,\, in\,\, (\ref{eq:la})\,\, with}\,\,u_T\,\,{\rm as\,\,in\,\,(\ref{eq:ut})},
	\eenr
	we obtain from Theorem \ref{thm:unifmean} that  $\h\theta_1=\tilde\theta_1(\h\tau),$ and $\h\theta_2=\tilde\theta_2(\h\tau)$ of Step 2 of Algorithm 1 satisfies,
	\benr\label{eq:25}
	\big\|\h\theta_1-\theta_1^0\big\|_2\vee\big\|\h\theta_2-\theta_2^0\big\|_2&\le& c_u\surd{s}\max\Big[\si\Big\{\frac{\log(p\vee T)}{Tl_T}\Big\}^{\frac{1}{2}},\,\,\frac{u_T\psi}{l_T}\Big]\nn\\
	&=&\frac{\xi}{\surd{(s\log (p\vee T))}} \max\Big[c_u\si\Big\{\frac{s\log(p\vee T)}{\xi\surd{\big(Tl_T\big)}}\Big\},\,\,c_u\surd\{s\log (p\vee T)\}\frac{\surd{s}\psi}{l_T\xi}u_T\Big]\nn\\
	&=&\frac{\xi}{\surd{(s\log (p\vee T))}}\max\big[R_1,R_2\big]
	\eenr
	with probability at $1-o(1).$ Here the first equality is simply an algebraic manipulation. From assumption (\ref{eq:16}) and (\ref{eq:17}), we have that $R_1\le c_{u1},$ where $c_{u1}>0,$ is an appropriately chosen small enough constant. Next consider term $R_2$ of (\ref{eq:25}) under the subgaussian and subexponential cases separately.\\~
	
	\vspace{-3mm}
	{\bf Case I (subgaussian setting):} Substituting $u_T$ from (\ref{eq:ut}) in term $R_2$ we obtain,
	\benr
	c_u\surd\{s\log (p\vee T)\}\frac{\surd{s}\psi}{l_T\xi}u_T&=&c_u\surd\{s\log (p\vee T)\}\frac{\surd{s}\psi}{\xi}\Big\{\frac{\si^2}{\xi^2}\frac{s\log (p\vee T)}{Tl_T}\Big\}\nn\\
	&\le& c_u\Big\{\frac{\si}{\xi}\frac{s\log (p\vee T)}{\surd(Tl_T)}\Big\}^2
	\le c_uc_{u1}^2\le c_{u1}.\nn
	\eenr
	Here the first inequality follows from the assumption $(\psi\big/\xi)\le \surd\{\log(p\vee T)\}.$ The second inequality follows from assumption (\ref{eq:16}). Substituting the bounds for terms $R_1$ and $R_2$ back in (\ref{eq:25}) yields,
	\benr\label{eq:27}
	\big\|\h\theta_1-\theta_1^0\big\|_2\vee\big\|\h\theta_2-\theta_2^0\big\|_2
	\le \frac{c_{u1}\xi}{\surd{(s\log (p\vee T))}},
	\eenr 		
	for a suitably chosen small enough $c_{u1}>0,$ with probability $1-o(1).$ This provides the bound required for the validity of Condition C.2. The first requirement of Condition C.2 is directly satisfied by using Theorem \ref{thm:unifmean}. Thus, the estimates $\h\theta_1$ and $\h\theta_2$ of Step 2 of Algorithm 1 satisfy all requirement of  Condition C.2 and now the statement of this theorem follows from the result of Theorem \ref{thm:cpoptimal}. \\~
	
	\vspace{-3mm}
	{\bf Case 2: subexponential setting}, Substituting $u_T$ from (\ref{eq:ut}) in term $R_2$ we obtain,
	\benr
	c_u\surd\{s\log (p\vee T)\}\frac{\surd{s}\psi}{l_T\xi}u_T&=&c_u\surd\{s\log (p\vee T)\}\frac{\surd{s}\psi}{\xi}\Big\{\frac{\si^2}{\xi^2}\frac{s\log^2 (p\vee T)}{Tl_T}\Big\}\nn\\
	&\le& c_u\Big\{\frac{\si}{\xi}\frac{s\log^{3/2} (p\vee T)}{\surd(Tl_T)}\Big\}^2
	\le c_uc_{u1}^2\le c_{u1}.\nn
	\eenr
	Here the first inequality follows from the assumption $(\psi\big/\xi)\le \surd\{\log(p\vee T)\}.$ The second inequality follows from assumption (\ref{eq:17}). Substituting the bounds for terms $R_1$ and $R_2$ back in (\ref{eq:25}) yields the same bound as (\ref{eq:27}). Thus, the estimates $\h\theta_1$ and $\h\theta_2$ of Step 2 of Algorithm 1 satisfy all requirement of  Condition C.2 and now the statement of this theorem follows from the result of Theorem \ref{thm:cpoptimal}. This completes the proof of the theorem.
\end{proof}
%%%%%%%%%%%%%%%%%%%%%%%%%%%%%%%%%%%%%%%%%%%%%%%%%%%%%%%%%%%%%%%%%%%%%%%%%%%%%%%%%%%%%%%%%%%%%%%%%%%%%%%%%%%%%%%%%%%%%%%%%%%%%%%%%%%%%%%%%%%%%%%%%%%%%%%%%%%%%%%%%%%%%%%%%%%

\bc$\rule{4in}{0.1mm}$\ec

\subsection{Deviation bounds used in the proofs of Section \ref{sec:mainresults}}\label{app:sec.main}

%%%%%%%%%%%%%%%%%%%%%%%%%%%%%%%%%%%%%%%%%%%%%%%%%%%%%%%%%%%%%%%%%%%%%%%%%%%%%%%%%%%%%%%%%%%%%%%%%%%%%%%%%%%%%%%%%%%%%%%%%%%%%%%%%%%%%%%%%%%%%%%%%%%%%%%%%%%%%%%%%%%%%%%%%%%%
\begin{lem}\label{lem:nearoptimalcross} Suppose Condition A(I) (subgaussian setting) and B holds and let $0\le v_T\le u_T\le 1,$ be any non-negative sequences. Then for any $c_u\ge 1,$ we have,
	\benr\label{eq:6}
	\sup_{\substack{\tau\in\cG(u_T,v_T);\\\tau\ge\tau^0}}\Big\|\sum_{t=\tau^0+1}^{\tau}\vep_{t}\Big\|_{\iny}\le  \surd(2c_u)\si\{Tu_T\log(p\vee T)\}^{\frac{1}{2}}
	\eenr
	with probability at least $1-2\exp\{-(c_u-2)\log(p\vee T)\}.$ Alternatively, suppose Condition A(II) (subexponential setting) and B hold. Additionally assume that $T\ge\log (p\vee T),$ and $\log (p\vee T)\le Tv_T\le Tu_T.$ Then, for any constant $c_u\ge 1,$ the same bound (\ref{eq:6}) holds with probability at least $1-\exp\big\{-\big(\surd(c_u/2)-2\big)\log (p\vee T)\big\}.$
\end{lem}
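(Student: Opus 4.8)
The plan is to control the maximal partial sum $\big\|\sum_{t=\tau^0+1}^{\tau}\vep_t\big\|_{\iny}$ by combining a union bound over the $p$ coordinates with a maximal inequality over the range $\tau^0 < \tau \le \tau^0 + Tu_T$. Fix a coordinate $j \in \{1,\dots,p\}$ and consider the one-dimensional partial-sum process $S_k^{(j)} = \sum_{t=\tau^0+1}^{\tau^0+k}\vep_{tj}$ for $1 \le k \le \lfloor Tu_T\rfloor$. Under Condition A(I), each $\vep_{tj}$ is a mean-zero subgaussian random variable with variance proxy bounded by a universal multiple of $\si^2$ (recall from the discussion after Condition B that the marginal subgaussian norm is $O(\si)$), and the $\vep_{tj}$ are i.i.d.\ over $t$. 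So $S_k^{(j)}$ is a mean-zero martingale whose increments are subgaussian with proxy $O(\si^2)$. I would invoke the Kolmogorov-type maximal inequality for partial sums quoted as Theorem \ref{thm:kolmogorov} (equivalently, a maximal Hoeffding/Azuma bound): this gives
\benn
\P\Big(\max_{1\le k\le \lfloor Tu_T\rfloor} |S_k^{(j)}| \ge x\Big) \le 2\exp\Big\{-\frac{x^2}{2c_u' \si^2 Tu_T}\Big\}
\eenn
for a universal constant. Choosing $x = \surd(2c_u)\,\si\,\{Tu_T\log(p\vee T)\}^{1/2}$ makes the exponent equal to $-c_u\log(p\vee T)$ up to the universal constant absorbed into $c_u$, so each coordinate fails with probability at most $2(p\vee T)^{-(c_u-2)}$ after accounting for the factor-$2$ and the constant (which is where the ``$-2$'' in the exponent comes from). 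A union bound over the $p \le p\vee T$ coordinates then yields the stated probability $1 - 2\exp\{-(c_u-2)\log(p\vee T)\}$, and since $\cG(u_T,v_T) \subseteq \{\tau : \tau \le \tau^0 + Tu_T\}$ the supremum over $\cG$ is dominated by the maximum over $k \le \lfloor Tu_T\rfloor$.

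For the subexponential case (Condition A(II)), the same scheme applies but with Bernstein's inequality replacing the subgaussian maximal bound: the partial sum of $k$ i.i.d.\ mean-zero subexponential variables with proxy $\si^2$ satisfies a two-regime tail of the form $\exp\{-c(x^2/(k\si^2) \wedge x/\si)\}$. To land on the \emph{same} bound \eqref{eq:6} with deviation level $x = \surd(2c_u)\si\{Tu_T\log(p\vee T)\}^{1/2}$, I need to be in the subgaussian (quadratic) regime of Bernstein, i.e.\ $x^2/(Tu_T\si^2) \lesssim x/\si$, which rearranges to $Tu_T \gtrsim \log(p\vee T)$ — precisely the hypothesis $\log(p\vee T) \le Tv_T \le Tu_T$ imposed here (together with $T \ge \log(p\vee T)$ to handle the maximal-inequality bookkeeping over the range). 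Pushing the maximal version of Bernstein through a peeling/union argument over the $\le \log(p\vee T)$-many dyadic scales of $k$ (or directly via a maximal Bernstein inequality) costs a factor polynomial in $\log(p\vee T)$, which I absorb into the exponent; tracking the constants carefully produces the weaker rate $1 - \exp\{-(\surd(c_u/2)-2)\log(p\vee T)\}$ — the $\surd{(c_u/2)}$ reflecting the square-root loss incurred when the two-sided Bernstein constant is matched against the target $x$.

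The main obstacle is the bookkeeping in the subexponential case: one must verify that the deviation level $x$ genuinely sits in the quadratic regime of Bernstein's inequality \emph{uniformly} over all $k$ in the peeling range, and then carry the resulting constants through the union bound without degrading the exponent below $\surd(c_u/2)-2$. The subgaussian part is essentially a direct application of a maximal inequality plus a union bound and should be routine once Theorem \ref{thm:kolmogorov} is cited.
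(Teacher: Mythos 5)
Your overall strategy is sound and would prove the lemma, but it is genuinely different from the paper's, and one of your tool citations is wrong as stated. The paper does not use any maximal inequality here: for each \emph{fixed} pair $(j,\tau)$ it notes that $\sum_{t=\tau^0+1}^{\tau}\vep_{tj}\sim{\rm subG}(\si\surd(\tau-\tau^0))$ (resp.\ applies Bernstein's inequality, Lemma \ref{lem:bernstein}), gets a pointwise tail of order $\exp\{-c_u\log(p\vee T)\}$ (resp.\ $\exp\{-\surd(c_u/2)\log(p\vee T)\}$), and then takes a union bound over \emph{both} $j=1,\dots,p$ and $\tau=1,\dots,T$; the ``$-2$'' in the exponent is exactly the cost of this union over at most $(p\vee T)^2$ pairs, not a constant-absorption as you suggest. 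Your route — a maximal inequality in $k$ for each coordinate followed by a union only over the $p$ coordinates — would in fact give the slightly better exponent $c_u-1$, so the stated bound still follows; and your identification of why $Tv_T\ge\log(p\vee T)$ is needed (to keep Bernstein in its quadratic regime at the relevant deviation level) matches the role this hypothesis plays in the paper, where it guarantees $(\tau-\tau^0)d/(2\si)\ge\surd(c_u/2)\log(p\vee T)$.

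The concrete flaw: Theorem \ref{thm:kolmogorov} as reproduced in the paper is the classical Kolmogorov inequality, $pr(\max_k|S_k|>\vep)\le \vep^{-2}\sum_k D_k^2$ — a Chebyshev-type, polynomially decaying bound. It cannot deliver the exponential tail $2\exp\{-x^2/(2\si^2Tu_T)\}$ you write down, and a polynomial tail would not survive the union bound over $p$ coordinates at the rate $(p\vee T)^{-(c_u-2)}$. You need a genuinely exponential maximal inequality (Doob's submartingale inequality applied to $e^{\la S_k}$, i.e.\ a maximal Hoeffding/Azuma or maximal Bernstein bound), which you do mention parenthetically; with that substitution your argument goes through, and the dyadic peeling in the subexponential case is then unnecessary since the Doob argument already yields the two-regime exponent uniformly over $k\le Tu_T$. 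In the paper, Theorem \ref{thm:kolmogorov} is reserved for Lemma \ref{lem:optimalcross}, where only a constant failure probability $a$ is needed and a variance bound suffices.
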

%%%%%%%%%%%%%%%%%%%%%%%%%%%%%%%%%%%%%%%%%%%%%%%%%%%%%%%%%%%%%%%%%%%%%%%%%%%%%%%%%%%%%%%%%%%%%%%%%%%%%%%%%%%%%%%%%%%%%%%%%%%%%%%%%%%%%%%%%%%%%%%%%%%%%%%%%%%%%%%%%%%%%%%%%%%%

%%%%%%%%%%%%%%%%%%%%%%%%%%%%%%%%%%%%%%%%%%%%%%%%%%%%%%%%%%%%%%%%%%%%%%%%%%%%%%%%%%%%%%%%%%%%%%%%%%%%%%%%%%%%%%%%%%%%%%%%%%%%%%%%%%%%%%%%%%%%%%%%%%%%%%%%%%%%%%%%%%%%%%%%%%%%
\begin{proof}[Proof of Lemma \ref{lem:nearoptimalcross}] First note that without loss of generality we can assume $v_T\ge (1/T).$ This follows since the only additional element in the set $\cG(u_T,0)$ in comparison to $\cG(u_T,(1/T))$ is $\tau^0,$ and at this value, the sum of interest is over indices in an empty set and is thus trivially zero.
	
	We begin with subgaussian case. Consider any $j\in\{1,2,...,p\}$ and any $\tau>\tau^0,$ and note that  $\sum_{t=\tau^0+1}^{\tau}\vep_{tj}\sim {\rm subG}(\la),$ with $\la=\si\surd(\tau-\tau^0).$ This follows since $\vep_{tj}$ are independent over $t=1,...,T$ (see, Part (iii) of Lemma \ref{lem:lcsubG}). Now using Lemma \ref{lem:tailb}, for any $d>0,$ we have,
	\benr
	pr\Big(\big|\sum_{t=\tau^0+1}^{\tau}\vep_{tj}\big|>d\Big)\le 2\exp\Big(-\frac{d^2}{2(\tau-\tau^0)\si^2}\Big).\nn
	\eenr
	Choosing $d=\si\{2c_u(\tau-\tau^0)\log (p\vee T)\}^{1/2},$ yields,
	\benr
	\big|\sum_{t=\tau^0+1}^{\tau}\vep_{tj}\big|&\le& \si\{2c_u(\tau-\tau^0)\log (p\vee T)\}^{1/2}\nn\\
	&\le& \si \surd(2c_u)\{Tu_T\log(p\vee T)\}^{1/2},\nn
	\eenr
	with probability at least $1-2\exp\{-c_u\log(p\vee T)\}.$ Now applying a union bound over $j=1,...,p,$ and $\tau=1,...,T$ yields the statement for subgaussian part of this lemma.
	
	Next, we consider the subexponential case. Apply the Bernstein's inequality (Theorem \ref{lem:bernstein}) for any $d>0$ to obtain,
	\benr\label{eq:2a}
	pr\Big(\big|\sum_{t=\tau^0+1}^{\tau}\vep_{tj}\big|>d(\tau-\tau^0)\Big)\le 2\exp\Big\{-\frac{(\tau-\tau^0)}{2}\Big(\frac{d^2}{\si^2}\wedge\frac{d}{\si}\Big)\Big\}.
	\eenr
	Choose $d=\si\{2c_u\log (p\vee T)\big/(\tau-\tau^0)\}^{1/2},$ and due to the assumption $Tv_T\ge \log (p\vee T),$ we have,
	\benr
	(\tau-\tau^0)\frac{d^2}{2\si^2}&=&c_u\log(p\vee T),\quad {\rm and},\nn\\
	(\tau-\tau^0)\frac{d}{2\si}&\ge&\surd(c_u/2)(Tv_T)^{1/2}\{\log(p\vee T)\}^{1/2}\ge \surd(c_u/2)\log(p\vee T).\nn
	\eenr
	Thus, substituting this choice of $d$ in (\ref{eq:2a}) and recalling that by choice $c_u>1$, we obtain,
	\benr
	\big|\sum_{t=\tau^0+1}^{\tau}\vep_{tj}\big|\le \surd(2c_u)\si\{Tu_T\log(p\vee T)\}^{1/2},\nn
	\eenr
	with probability at least $1-\exp\big\{-\surd(c_u/2)\log(p\vee T)\big\}.$ The statement of the subexponential part of this result now follows by applying a union bound over $j=1,...,p,$ and $\tau=1,...,T.$
\end{proof}
%%%%%%%%%%%%%%%%%%%%%%%%%%%%%%%%%%%%%%%%%%%%%%%%%%%%%%%%%%%%%%%%%%%%%%%%%%%%%%%%%%%%%%%%%%%%%%%%%%%%%%%%%%%%%%%%%%%%%%%%%%%%%%%%%%%%%%%%%%%%%%%%%%%%%%%%%%%%%%%%%%%%%%%%%%%%

\bc$\rule{4in}{0.1mm}$\ec

%%%%%%%%%%%%%%%%%%%%%%%%%%%%%%%%%%%%%%%%%%%%%%%%%%%%%%%%%%%%%%%%%%%%%%%%%%%%%%%%%%%%%%%%%%%%%%%%%%%%%%%%%%%%%%%%%%%%%%%%%%%%%%%%%%%%%%%%%%%%%%%%%%%%%%%%%%%%%%%%%%%%%%%%%%%%
\begin{lem}\label{lem:nearoptimalcross.subE.special} Suppose Condition A(II) (subexponential setting) and B holds and let $0\le v_T\le u_T\le 1,$ be any non-negative sequences. Then for any $T\ge 2,$ and any $c_u\ge 1,$ we have,
	\benr
	\sup_{\substack{\tau\in\cG(u_T,v_T);\\\tau\ge\tau^0}}\Big\|\sum_{t=\tau^0+1}^{\tau}\vep_{t}\Big\|_{\iny}\le 2c_u\si\log (p\vee T)\surd \big(Tu_T\big)\nn
	\eenr
	with probability at least $1-2\exp\{-(c_u-2)\log(p\vee T)\}.$ 	
\end{lem}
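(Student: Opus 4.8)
For the proof of Lemma \ref{lem:nearoptimalcross.subE.special} the plan is to mirror the subexponential part of the proof of Lemma \ref{lem:nearoptimalcross}, but with a sharper accounting of Bernstein's inequality that does not presuppose $(\tau-\tau^0)$ to be of order $\log(p\vee T)$; this refinement is exactly what forces the extra $\log(p\vee T)$ factor in the stated bound while removing the restriction $T\ge\log(p\vee T)$. First I would reduce to the case $Tu_T\ge 1$ without loss of generality, since otherwise $\{\tau\in\cG(u_T,v_T):\tau\ge\tau^0\}$ contains at most $\tau^0$ itself, at which the partial sum runs over an empty index set and the bound is trivial. Then fix a coordinate $j\in\{1,\dots,p\}$ and an index $\tau\in\{\tau^0+1,\dots,T\}$, write $n=\tau-\tau^0$ and $S_{nj}=\sum_{t=\tau^0+1}^{\tau}\vep_{tj}$, and note that since the $\vep_{tj}$ are i.i.d.\ subexponential with variance proxy $\si^2$ (Condition A(II)), Bernstein's inequality (Theorem \ref{lem:bernstein}) gives $pr(|S_{nj}|>nd)\le 2\exp\{-(n/2)((d^2/\si^2)\wedge(d/\si))\}$ for every $d>0$.

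The key step is to choose the deviation level adaptively in $n$ as $d_n=\max\{\si(2c_u\log(p\vee T)/n)^{1/2},\,2c_u\si\log(p\vee T)/n\}$. Whichever of the two arguments of the $\min$ in the Bernstein exponent is active, the matching term of $d_n$ drives $(n/2)((d_n^2/\si^2)\wedge(d_n/\si))\ge c_u\log(p\vee T)$: if $n\ge 2c_u\log(p\vee T)$ the square-root term dominates and lies below $\si$, so the quadratic branch applies and gives exactly $c_u\log(p\vee T)$; if $n<2c_u\log(p\vee T)$ the linear term dominates and exceeds $\si$, so the linear branch applies and gives at least $c_u\log(p\vee T)$. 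Hence $pr(|S_{nj}|>nd_n)\le 2\exp\{-c_u\log(p\vee T)\}$. Multiplying through, $nd_n=\max\{\si(2c_u\,n\log(p\vee T))^{1/2},\,2c_u\si\log(p\vee T)\}\le\max\{\si(2c_u\,Tu_T\log(p\vee T))^{1/2},\,2c_u\si\log(p\vee T)\}$ by $n\le Tu_T$; and since $2c_u\log(p\vee T)\ge 1$ and $Tu_T\ge 1$, each of these two terms is at most $2c_u\si\log(p\vee T)\surd(Tu_T)$ (using $\sqrt x\le x$ for $x\ge 1$). So the event $\{|S_{nj}|>2c_u\si\log(p\vee T)\surd(Tu_T)\}$ is contained in $\{|S_{nj}|>nd_n\}$ and has probability at most $2\exp\{-c_u\log(p\vee T)\}$.

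The proof then concludes with a union bound over the at most $pT\le(p\vee T)^2$ pairs $(j,\tau)$ with $j\le p$ and $\tau^0<\tau\le T$, which bounds the probability that any partial sum violates the claimed bound by $2(p\vee T)^2\exp\{-c_u\log(p\vee T)\}=2\exp\{-(c_u-2)\log(p\vee T)\}$; on the complement, maximizing over $j$ promotes the coordinatewise bound to the $\ell_\infty$ bound, and the supremum over $\tau\in\cG(u_T,v_T)$ with $\tau\ge\tau^0$ is controlled because that index set is a subset of $\{\tau^0,\tau^0+1,\dots,T\}$, with $\tau^0$ contributing the value $0$. The one point that requires care is the small-$n$ regime: when $\tau-\tau^0$ is not of order $\log(p\vee T)$, the heavy (linear) branch of Bernstein governs the deviation and contributes a $\log(p\vee T)$ rather than a $\log^{1/2}(p\vee T)$ factor — precisely why this bound is weaker than that of Lemma \ref{lem:nearoptimalcross} yet remains valid for every $T\ge 2$. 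Beyond tracking constants, I anticipate no further obstacle.
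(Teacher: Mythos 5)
Your proposal is correct and follows essentially the same route as the paper: a coordinatewise application of Bernstein's inequality with a deviation level adapted to $n=\tau-\tau^0$ so that the exponent is at least $c_u\log(p\vee T)$, followed by a union bound over the at most $pT\le(p\vee T)^2$ pairs $(j,\tau)$. The only cosmetic difference is your choice of threshold as a maximum of a square-root and a linear term (requiring the extra observation that $nd_n\le 2c_u\si\log(p\vee T)\surd(Tu_T)$ when $Tu_T\ge 1$), whereas the paper simply takes $d=2c_u\si\{\log^2(p\vee T)/(\tau-\tau^0)\}^{1/2}$ and checks both branches of the Bernstein exponent directly using $(\tau-\tau^0)\ge 1$.
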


\begin{proof}[Proof of Lemma \ref{lem:nearoptimalcross.subE.special}]
	Without loss of generality assume $v_T\ge (1/T)$ (see, first paragraph in proof of Lemma \ref{lem:nearoptimalcross}). Consider any $j\in\{1,2,...,p\}$ and any $\tau>\tau^0,$ and apply the Bernstein's inequality (Theorem \ref{lem:bernstein}) for any $d>0$ to obtain,
	\benr\label{eq:2b}
	pr\Big(\big|\sum_{t=\tau^0+1}^{\tau}\vep_{tj}\big|>d(\tau-\tau^0)\Big)\le 2\exp\Big\{-\frac{(\tau-\tau^0)}{2}\Big(\frac{d^2}{\si^2}\wedge\frac{d}{\si}\Big)\Big\}.
	\eenr
	Choose $d=2c_u\si\{\log^2 (p\vee T)/(\tau-\tau^0)\}^{1/2},$ and note that,
	\benr\label{eq:1}
	(\tau-\tau^0)\frac{d^2}{2\si^2}&=&2c_u^2\log^2(p\vee T),\quad {\rm and},\nn\\
	(\tau-\tau^0)\frac{d}{2\si}&\ge& c_u\log(p\vee T),
	\eenr
	where we have used $(\tau-\tau^0)\ge Tv_T\ge 1,$ to obtain the first inequality. Thus, substituting this choice of $d$ in (\ref{eq:2b}) and recalling that by choice $c_u\ge 1$, we obtain,
	\benr
	\big|\sum_{t=\tau^0+1}^{\tau}\vep_{tj}\big|\le 2c_u\si(\tau-\tau^0)^{1/2}\{\log^2(p\vee T)\}^{1/2}\le 2c_u\si\{Tu_T\log^2 (p\vee T)\}^{1/2},\nn
	\eenr
	with probability at least $1-2\exp\{-c_u\log (p\vee T)\}.$ The statement of this lemma follows by applying a union bound over $j=1,...,p,$ and $\tau=1,...,T.$
\end{proof}
%%%%%%%%%%%%%%%%%%%%%%%%%%%%%%%%%%%%%%%%%%%%%%%%%%%%%%%%%%%%%%%%%%%%%%%%%%%%%%%%%%%%%%%%%%%%%%%%%%%%%%%%%%%%%%%%%%%%%%%%%%%%%%%%%%%%%%%%%%%%%%%%%%%%%%%%%%%%%%%%%%%%%%%%%%%%

\bc$\rule{4in}{0.1mm}$\ec

%%%%%%%%%%%%%%%%%%%%%%%%%%%%%%%%%%%%%%%%%%%%%%%%%%%%%%%%%%%%%%%%%%%%%%%%%%%%%%%%%%%%%%%%%%%%%%%%%%%%%%%%%%%%%%%%%%%%%%%%%%%%%%%%%%%%%%%%%%%%%%%%%%%%%%%%%%%%%%%%%%%%%%%%%%%%
\begin{lem}\label{lem:optimalcross} Suppose Condition A and B hold and let $u_T,v_T$ be any non-negative sequences satisfying $0\le v_T\le u_T\le 1.$ Then for any $0<a<1,$ choosing $c_a\ge \surd(1/a),$ we have,
	\benr
	\sup_{\substack{\tau\in\cG(u_T,v_T);\\\tau\ge\tau^0}}\Big|\sum_{t=\tau^0+1}^{\tau}\vep_{t}^T\eta^0\Big|\le c_{a}\phi\|\eta^0\|_2\surd (T u_T),\nn
	\eenr
	with probability at least $1-a.$	
\end{lem}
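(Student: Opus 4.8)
The plan is to observe that, after projecting onto the fixed direction $\eta^0$, the quantity inside the supremum is an ordinary one-dimensional random walk in $\tau$, so that a single application of Kolmogorov's maximal inequality suffices, with no union bound over coordinates. First I would discard the inner boundary: since $\cG(u_T,v_T)\cap\{\tau\ge\tau^0\}\subseteq\cG(u_T,0)\cap\{\tau\ge\tau^0\}=\{\tau^0,\tau^0+1,\dots,\tau^0+N\}$ where $N=\lfloor Tu_T\rfloor$, it is enough to control $\max_{0\le m\le N}|S_m|$ with $S_m:=\sum_{t=\tau^0+1}^{\tau^0+m}\vep_t^T\eta^0$ and $S_0:=0$. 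Under Condition A the $\vep_t$ are i.i.d., hence the scalars $\vep_t^T\eta^0$ are i.i.d.\ with mean zero and variance $\eta^{0T}\Si\eta^0$, and by the maximum-eigenvalue bound in Condition B this variance is at most $\phi^2\|\eta^0\|_2^2$. Thus $(S_m)$ is a partial-sum process of independent mean-zero summands with ${\rm var}(S_N)=N\,\eta^{0T}\Si\eta^0\le Tu_T\phi^2\|\eta^0\|_2^2$.

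Next I would invoke Kolmogorov's inequality (Theorem \ref{thm:kolmogorov}): for every $\lambda>0$,
\[
pr\Big(\max_{0\le m\le N}|S_m|\ge\lambda\Big)\le\frac{{\rm var}(S_N)}{\lambda^2}\le\frac{Tu_T\phi^2\|\eta^0\|_2^2}{\lambda^2}.
\]
Taking $\lambda=c_a\phi\|\eta^0\|_2\surd(Tu_T)$ makes the last bound equal to $1/c_a^2$, which is at most $a$ by the hypothesis $c_a\ge\surd(1/a)$. This yields precisely the asserted inequality with probability at least $1-a$. The degenerate cases $Tu_T<1$ (so $N=0$) and $\tau=\tau^0$ are trivial, since the relevant sum is then over an empty index set and equals zero.

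I do not anticipate a genuine obstacle here: the entire gain comes from the fact that projecting $\vep_t$ onto the single fixed direction $\eta^0$ reduces everything to a scalar random walk, so the maximization over $\tau$ is absorbed for free by the variance of the endpoint in Kolmogorov's inequality, and only second moments are used. This is also why, in contrast to Lemma \ref{lem:nearoptimalcross}, the bound is identical in the subgaussian and subexponential cases --- there is no sup-norm over $p$ coordinates and hence no tail-dependent price to pay. The only mild care needed is the routine bookkeeping $\lfloor Tu_T\rfloor\le Tu_T$ in the variance estimate, and noting that the reversed walk handles the mirror-image statement should one ever need $\tau\le\tau^0$ as well.
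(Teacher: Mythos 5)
Your proposal is correct and follows exactly the paper's own argument: project onto $\eta^0$ to get a mean-zero scalar random walk with per-step variance $\eta^{0T}\Si\eta^0\le\phi^2\|\eta^0\|_2^2$ (Condition B), apply Kolmogorov's inequality over the at most $Tu_T$ indices, and choose the threshold $c_a\phi\|\eta^0\|_2\surd(Tu_T)$ with $c_a\ge\surd(1/a)$. Your extra bookkeeping on the floor function and the empty-sum case is harmless and only makes the write-up slightly more careful than the paper's.
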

%%%%%%%%%%%%%%%%%%%%%%%%%%%%%%%%%%%%%%%%%%%%%%%%%%%%%%%%%%%%%%%%%%%%%%%%%%%%%%%%%%%%%%%%%%%%%%%%%%%%%%%%%%%%%%%%%%%%%%%%%%%%%%%%%%%%%%%%%%%%%%%%%%%%%%%%%%%%%%%%%%%%%%%%%%%%

%%%%%%%%%%%%%%%%%%%%%%%%%%%%%%%%%%%%%%%%%%%%%%%%%%%%%%%%%%%%%%%%%%%%%%%%%%%%%%%%%%%%%%%%%%%%%%%%%%%%%%%%%%%%%%%%%%%%%%%%%%%%%%%%%%%%%%%%%%%%%%%%%%%%%%%%%%%%%%%%%%%%%%%%%%%%
\begin{proof}[Proof of Lemma \ref{lem:optimalcross}]
	This result is a direct application of the Kolmogorov's inequality (Theorem \ref{thm:kolmogorov}). First consider,
	\benr
	{\rm var}(\vep_t^T\eta^0)=\eta^{0T}\Si\eta^0\le \phi^2\|\eta^0\|_2^2\nn
	\eenr
	where the inequality follows from Condition B. Note that there are at most $Tu_T$ values of $\tau$ in the set $\cG(u_T,v_T),$ and now apply the Kolmogorov's inequality (Theorem \ref{thm:kolmogorov}) for any $d>0$ to obtain,
	\benr
	pr\Big(\sup_{\substack{\tau\in\cG(u_T,v_T);\\\tau\ge\tau^0}}\Big|\sum_{t=\tau^0+1}^{\tau}\vep_{t}^T\eta^0\Big|>d\Big)\le \frac{Tu_T}{d^2}\phi\|\eta^0\|_2^2.\nn
	\eenr
	Choosing $d=c_{a}\phi\|\eta^0\|_2\surd (Tu_T),$ with $c_{a}\ge \surd (1/a)$ yields the statement of the lemma.
\end{proof}
%%%%%%%%%%%%%%%%%%%%%%%%%%%%%%%%%%%%%%%%%%%%%%%%%%%%%%%%%%%%%%%%%%%%%%%%%%%%%%%%%%%%%%%%%%%%%%%%%%%%%%%%%%%%%%%%%%%%%%%%%%%%%%%%%%%%%%%%%%%%%%%%%%%%%%%%%%%%%%%%%%%%%%%%%%%%

\bc$\rule{4in}{0.1mm}$\ec

\subsection{Deviation bounds used in the proofs of Section \ref{sec:algorithm}}

%%%%%%%%%%%%%%%%%%%%%%%%%%%%%%%%%%%%%%%%%%%%%%%%%%%%%%%%%%%%%%%%%%%%%%%%%%%%%%%%%%%%%%%%%%%%%%%%%%%%%%%%%%%%%%%%%%%%%%%%%%%%%%%%%%%%%%%%%%%%%%%%%%%%%%%%%%%%%%%%%%%%%%%%%%%%
\begin{lem}\label{lem:1.to.tau.bound} Assume Condition A(I) (subgaussian) and B holds. Then, for any $c_u,$ $c_{u1}> 0,$ we have the following bound.
	\benr\label{eq:18}
	\sup_{\substack{\tau\in\{1,.....,T\};\\\tau\ge c_uTl_T}}\frac{1}{\tau}\Big\|\sum_{t=1}^{\tau}\vep_{t}\Big\|_{\iny}\le  \si\Big\{\frac{2c_{u1}\log(p\vee T)}{c_uTl_T}\Big\}^{\frac{1}{2}}
	\eenr
	with probability at least $1-2\exp\{-(c_{u1}-2)\log(p\vee T)\}.$ Alternatively, suppose Condition A(II) (subexponential) and B hold. Additionally assume that $c_uTl_T\ge\log (p\vee T).$ Then, the same bound (\ref{eq:18}) holds, with probability at least $1-\exp\big\{-(c_{u2}-2)\log (p\vee T)\big\},$ where $c_{u2}=c_{u1}\wedge \surd(c_uc_{u1}/2).$
\end{lem}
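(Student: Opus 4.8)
The plan is to reuse the template of the proof of Lemma~\ref{lem:nearoptimalcross}: fix one coordinate $j$ and one time point $\tau$, apply a scalar tail inequality to the partial sum $\sum_{t=1}^{\tau}\vep_{tj}$, and then take a union bound over $j\in\{1,\dots,p\}$ and over the admissible values of $\tau$. The only structural difference from Lemma~\ref{lem:nearoptimalcross} is that the partial sum here begins at $t=1$, so its summands have two different means (for $t\le\tau^0$ and for $t>\tau^0$); but only the centered variables $\vep_{tj}$ enter the sum, so this plays no role, and unlike there the sum is never empty since $\tau$ ranges over $\{1,\dots,T\}$. The genuinely delicate point is to pick the tail threshold so that its $\tau$-dependence cancels against the normalization $1/\tau$, uniformly over all $\tau\ge c_uTl_T$.

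First I would treat the subgaussian case. By Condition~A each $\vep_{tj}$ is a mean-zero ${\rm subG}(\si^2)$ scalar (take $\delta=e_j$ in Definition~\ref{def:submult}), and by independence over $t$ the additivity of the variance proxy (Part~(iii) of Lemma~\ref{lem:lcsubG}, as used in the proof of Lemma~\ref{lem:nearoptimalcross}) gives $\sum_{t=1}^{\tau}\vep_{tj}\sim{\rm subG}(\tau\si^2)$. Applying the subgaussian tail bound (Lemma~\ref{lem:tailb}) with the $\tau$-proportional threshold $d_\tau=\tau\si\{2c_{u1}\log(p\vee T)/(c_uTl_T)\}^{1/2}$ produces the exponent $d_\tau^{2}/(2\tau\si^{2})=\tau c_{u1}\log(p\vee T)/(c_uTl_T)$, which is at least $c_{u1}\log(p\vee T)$ precisely because $\tau\ge c_uTl_T$. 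Hence each such event has probability at most $2\exp\{-c_{u1}\log(p\vee T)\}$, and a union bound over $j=1,\dots,p$ and the at most $T$ admissible values of $\tau$, using $pT\le(p\vee T)^{2}$, delivers the bound (\ref{eq:18}) with probability at least $1-2\exp\{-(c_{u1}-2)\log(p\vee T)\}$.

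For the subexponential case I would instead invoke Bernstein's inequality (Theorem~\ref{lem:bernstein}) for $\tfrac{1}{\tau}\big|\sum_{t=1}^{\tau}\vep_{tj}\big|$ with the $\tau$-free per-sample threshold $d=\si\{2c_{u1}\log(p\vee T)/(c_uTl_T)\}^{1/2}$, which gives a bound of the form $2\exp\{-\tfrac{\tau}{2}(d^{2}/\si^{2}\wedge d/\si)\}$. In the quadratic regime, $\tfrac{\tau}{2}\,d^{2}/\si^{2}=\tau c_{u1}\log(p\vee T)/(c_uTl_T)\ge c_{u1}\log(p\vee T)$ exactly as above. In the linear regime, bounding $\tau$ below by $c_uTl_T$ and then using the extra hypothesis $c_uTl_T\ge\log(p\vee T)$ again produces an exponent that is a constant multiple of $\log(p\vee T)$; keeping track of the constants one obtains $\tfrac{\tau}{2}\,d/\si\ge\surd(c_uc_{u1}/2)\,\log(p\vee T)$. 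Taking the minimum of the two exponents gives per-event probability at most $2\exp\{-c_{u2}\log(p\vee T)\}$ with $c_{u2}=c_{u1}\wedge\surd(c_uc_{u1}/2)$, and the same union bound (the factor $2$ being absorbed via $pT\le(p\vee T)^{2}$) completes the argument.

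I do not anticipate a serious obstacle: the statement is a maximal inequality obtained by a union bound and is in essence the $\tau^0=0$ analogue of Lemma~\ref{lem:nearoptimalcross}. The only point needing care is the linear-regime estimate in the subexponential case --- verifying that Bernstein's inequality still yields an exponent of order $\log(p\vee T)$ there --- and this is exactly what the hypothesis $c_uTl_T\ge\log(p\vee T)$ is present to guarantee; it is also the sole reason the subexponential conclusion differs from the subgaussian one.
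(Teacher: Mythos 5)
Your proposal is correct and follows essentially the same route as the paper's proof: a coordinatewise subgaussian tail bound (resp.\ Bernstein's inequality) with a threshold whose $\tau$-dependence is absorbed using $\tau\ge c_uTl_T$ (and, in the subexponential linear regime, $c_uTl_T\ge\log(p\vee T)$), followed by a union bound over $j$ and $\tau$ with $pT\le(p\vee T)^2$. The only cosmetic difference is that you parametrize the threshold by its $\tau$-free per-sample value while the paper uses $d=\si\{2c_{u1}\tau\log(p\vee T)\}^{1/2}$ and then bounds $\tau^{-1/2}$ afterward; the constants and probability statements come out identically.
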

%%%%%%%%%%%%%%%%%%%%%%%%%%%%%%%%%%%%%%%%%%%%%%%%%%%%%%%%%%%%%%%%%%%%%%%%%%%%%%%%%%%%%%%%%%%%%%%%%%%%%%%%%%%%%%%%%%%%%%%%%%%%%%%%%%%%%%%%%%%%%%%%%%%%%%%%%%%%%%%%%%%%%%%%%%%%

%%%%%%%%%%%%%%%%%%%%%%%%%%%%%%%%%%%%%%%%%%%%%%%%%%%%%%%%%%%%%%%%%%%%%%%%%%%%%%%%%%%%%%%%%%%%%%%%%%%%%%%%%%%%%%%%%%%%%%%%%%%%%%%%%%%%%%%%%%%%%%%%%%%%%%%%%%%%%%%%%%%%%%%%%%%%
\begin{proof}[Proof of Lemma \ref{lem:1.to.tau.bound}]
	First consider the subgaussian case. For any $\tau\in\{1,...,T\},$ and any $j\in\{1,...,p\}$ we have $\sum_{1}^{\tau}\vep_{tj}\sim{\rm subG}(\surd{\tau}\si).$ Consequently, for any $d>0,$ we have,
	\benr
	pr\Big(\Big|\sum_{t=1}^{\tau}\vep_{tj}\Big|>d\Big)\le 2\exp\Big(-\frac{d^2}{2\tau\si^2}\Big)\nn
	\eenr	
	Choose $d=\si\{2c_{u1}\tau\log(p\vee T)\}^{1/2},$ yields,
	\benr
	\frac{1}{\tau}\Big|\sum_{t=1}^{\tau}\vep_{tj}\Big|\le \si\{\frac{2c_{u1}\log(p\vee T)}{\tau}\}^{1/2}\le\si\{\frac{2c_{u1}\log(p\vee T)}{c_uTl_T}\}^{1/2},\nn
	\eenr 	
	with probability at least $1-2\exp\{-c_{u1}\log(p\vee T)\}.$ Here the final inequality follows since by assumption $\tau\ge c_{u}Tl_T.$ Applying a union bound over all possible values of $\tau$ and $j$ yields the statement of the lemma for this case.
	
	Next, consider the subexponential case. For any $\tau\in\{1,...,T\},$ and any $j\in\{1,...,p\},$ applying the Bernstein's inequality (Lemma \ref{lem:bernstein}) for any $d>0,$ we obtain,
	\benr\label{eq:19}
	pr\Big(\Big|\sum_{t=1}^{\tau}\vep_{tj}\Big|>d\tau\Big)\le 2\exp\Big\{-\frac{\tau}{2}\Big(\frac{d^2}{\si^2}\wedge\frac{d}{\si}\Big)\Big\}.
	\eenr	
	Choose $d=\si\{2c_{u1}\log (p\vee T)\big/\tau\}^{1/2},$ and due to the assumption $\tau\ge c_uTl_T\ge \log (p\vee T),$ we have,
	\benr
	\tau\frac{d^2}{2\si^2}&=&c_{u1}\log(p\vee T),\quad {\rm and},\nn\\
	\tau\frac{d}{2\si}&\ge&\surd(c_{u1}/2)(c_uTl_T)^{1/2}\{\log(p\vee T)\}^{1/2}\ge \surd(c_uc_{u1}/2)\log(p\vee T).\nn
	\eenr	
	Now substituting this choice of $d$ in (\ref{eq:19}),  we obtain,	
	\benr
	\frac{1}{\tau}\Big|\sum_{t=1}^{\tau}\vep_{tj}\Big|\le \si\{2c_{u1}\log (p\vee T)\big/\tau\}^{1/2}\le\si\Big\{\frac{2c_{u1}\log(p\vee T)}{c_uTl_T}\Big\}^{1/2}\nn
	\eenr
	with probability at least $1-2\exp\{-c_{u2}\log (p\vee T)\},$ where $c_{u2}=c_{u1}\wedge\surd(c_uc_{u1}/2).$ The statement of the lemma now follows by applying a union bound over all values of $\tau$ and $j.$
\end{proof}
%%%%%%%%%%%%%%%%%%%%%%%%%%%%%%%%%%%%%%%%%%%%%%%%%%%%%%%%%%%%%%%%%%%%%%%%%%%%%%%%%%%%%%%%%%%%%%%%%%%%%%%%%%%%%%%%%%%%%%%%%%%%%%%%%%%%%%%%%%%%%%%%%%%%%%%%%%%%%%%%%%%%%%%%%%%%
\bc$\rule{4in}{0.1mm}$\ec

\section{Definitions and auxiliary results}\label{app:auxiliary}

In the following Definition's \ref{def:subg}-\ref{def:submult} and Lemma's \ref{lem:tailb}-\ref{lem:bernstein} we provide basic properties of subgaussian and subexponential distributions. These are largely reproduced from \cite{vershynin2019high} and \cite{rigollet201518}. Theorem \ref{thm:kolmogorov} and \ref{thm:argmax} below reproduce the Kolmogorov's inequality and the argmax theorem.

%%%%%%%%%%%%%%%%%%%%%%%%%%%%%%%%%%%%%%%%%%%%%%%%%%%%%%%%%%%%%%%%%%%%%%%%%%%%%%%%%%%%%%%%%%%%%%%%%%%%%%%%%%%%%%%%%%%%%%%%%%%%%%%%%%%%%%%%%%%%%%%%%%%%%%%%%%%%%%%%%%%%%%%%%%%%
\begin{Def}\label{def:subg}[Subgaussian r.v.] A random variable $X\in\R$ is said to be sub-gaussian with parameter $\si>0$ \big(denoted by $X\sim{\rm subG(\si)}$\big) if $E(X)=0$ and its moment generating function
	\benr
	E(\e^{tX})\le \e^{t^2\si^2/2},\qquad \forall\,\, t\in\R\nn
	\eenr
\end{Def}
%%%%%%%%%%%%%%%%%%%%%%%%%%%%%%%%%%%%%%%%%%%%%%%%%%%%%%%%%%%%%%%%%%%%%%%%%%%%%%%%%%%%%%%%%%%%%%%%%%%%%%%%%%%%%%%%%%%%%%%%%%%%%%%%%%%%%%%%%%%%%%%%%%%%%%%%%%%%%%%%%%%%%%%%%%%%

%%%%%%%%%%%%%%%%%%%%%%%%%%%%%%%%%%%%%%%%%%%%%%%%%%%%%%%%%%%%%%%%%%%%%%%%%%%%%%%%%%%%%%%%%%%%%%%%%%%%%%%%%%%%%%%%%%%%%%%%%%%%%%%%%%%%%%%%%%%%%%%%%%%%%%%%%%%%%%%%%%%%%%%%%%%%
\begin{Def}\label{def:sube}[Subexponential r.v.] A random variable $X\in\R$ is said to be sub-exponential with parameter $\si>0$ \big(denoted by $X\sim{\rm subE(\si)}$\big) if $E(X)=0$ and its moment generating function
	\benr
	E(\e^{tX})\le \e^{t^2\si^2/2},\qquad \forall\,\, |t|\le \frac{1}{\si}\nn
	\eenr
\end{Def}
%%%%%%%%%%%%%%%%%%%%%%%%%%%%%%%%%%%%%%%%%%%%%%%%%%%%%%%%%%%%%%%%%%%%%%%%%%%%%%%%%%%%%%%%%%%%%%%%%%%%%%%%%%%%%%%%%%%%%%%%%%%%%%%%%%%%%%%%%%%%%%%%%%%%%%%%%%%%%%%%%%%%%%%%%%%%

%%%%%%%%%%%%%%%%%%%%%%%%%%%%%%%%%%%%%%%%%%%%%%%%%%%%%%%%%%%%%%%%%%%%%%%%%%%%%%%%%%%%%%%%%%%%%%%%%%%%%%%%%%%%%%%%%%%%%%%%%%%%%%%%%%%%%%%%%%%%%%%%%%%%%%%%%%%%%%%%%%%%%%%%%%%%
\begin{Def}\label{def:submult} A random vector $X\in\R^p$ shall said to be subgaussian or subexponential with parameter $\si,$ if the inner products $\langle X, v\rangle\sim {\rm subG}(\si)$ or $\langle X, v\rangle\sim {\rm subE}(\si),$ respectively, for any $v\in\R^p$ with $\|v\|_2 = 1.$
\end{Def}
%%%%%%%%%%%%%%%%%%%%%%%%%%%%%%%%%%%%%%%%%%%%%%%%%%%%%%%%%%%%%%%%%%%%%%%%%%%%%%%%%%%%%%%%%%%%%%%%%%%%%%%%%%%%%%%%%%%%%%%%%%%%%%%%%%%%%%%%%%%%%%%%%%%%%%%%%%%%%%%%%%%%%%%%%%%%

%%%%%%%%%%%%%%%%%%%%%%%%%%%%%%%%%%%%%%%%%%%%%%%%%%%%%%%%%%%%%%%%%%%%%%%%%%%%%%%%%%%%%%%%%%%%%%%%%%%%%%%%%%%%%%%%%%%%%%%%%%%%%%%%%%%%%%%%%%%%%%%%%%%%%%%%%%%%%%%%%%%%%%%%%%%%
\begin{Def}\label{def:utight} A sequence of random variables $X_n$ is said to be uniformly tight if for every $\ep>0,$ there is a compact set $K$ such that $pr(X_n\in K)>1-\ep.$
\end{Def}
%%%%%%%%%%%%%%%%%%%%%%%%%%%%%%%%%%%%%%%%%%%%%%%%%%%%%%%%%%%%%%%%%%%%%%%%%%%%%%%%%%%%%%%%%%%%%%%%%%%%%%%%%%%%%%%%%%%%%%%%%%%%%%%%%%%%%%%%%%%%%%%%%%%%%%%%%%%%%%%%%%%%%%%%%%%%

%%%%%%%%%%%%%%%%%%%%%%%%%%%%%%%%%%%%%%%%%%%%%%%%%%%%%%%%%%%%%%%%%%%%%%%%%%%%%%%%%%%%%%%%%%%%%%%%%%%%%%%%%%%%%%%%%%%%%%%%%%%%%%%%%%%%%%%%%%%%%%%%%%%%%%%%%%%%%%%%%%%%%%%%%%%%
\begin{lem}\label{lem:tailb}[Tail bounds] (i) If $X\sim {\rm subG}(\si),$ then,
	\benr
	pr(|X|\ge \la)\le 2\exp(-\la^2/2\si^2).\nn
	\eenr
	(ii) If $X\sim {\rm subE}(\si),$ then
	\benr
	pr(|X|\ge \la)\le 2\exp\Big\{-\frac{1}{2}\Big(\frac{\la^2}{\si^2}\wedge\frac{\la}{\si}\Big) \Big\}.\nn
	\eenr
\end{lem}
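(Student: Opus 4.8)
The plan is the standard Chernoff (exponential Markov) argument, carried out tail-by-tail and class-by-class, then combined with a union bound over the two tails.

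For part (i), I would fix $\la>0$ and $t>0$ and write $pr(X\ge\la)=pr(\e^{tX}\ge \e^{t\la})\le \e^{-t\la}\,E\e^{tX}\le \exp(-t\la+t^2\si^2/2)$, invoking Definition \ref{def:subg} for the last inequality. Minimizing the exponent over $t>0$, the optimum is $t=\la/\si^2$ with exponent value $-\la^2/(2\si^2)$, giving $pr(X\ge\la)\le\exp(-\la^2/2\si^2)$. Since $E\e^{-tX}=E\e^{(-t)X}\le\e^{t^2\si^2/2}$, the random variable $-X$ is also ${\rm subG}(\si)$, so the identical bound holds for $pr(X\le-\la)$; adding the two yields the claim.

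For part (ii) the argument is the same, with the single difference that Definition \ref{def:sube} only bounds the mgf for $|t|\le 1/\si$, so the Chernoff minimization must be restricted to $0<t\le 1/\si$. I would split on whether the unconstrained minimizer $\la/\si^2$ is feasible: if $\la\le\si$ it is, and one recovers $pr(X\ge\la)\le\exp(-\la^2/2\si^2)$; if $\la>\si$ the exponent $-t\la+t^2\si^2/2$ is still decreasing on $(0,1/\si]$, so its minimum over the feasible range is attained at $t=1/\si$, with value $-\la/\si+\tfrac12\le-\la/2\si$ (the last inequality being equivalent to $\la\ge\si$), giving $pr(X\ge\la)\le\exp(-\la/2\si)$. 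Since $\la^2/\si^2\le\la/\si$ exactly when $\la\le\si$, these two cases combine into $pr(X\ge\la)\le\exp\{-\tfrac12(\la^2/\si^2\wedge\la/\si)\}$, and applying this to $-X$ together with a union bound finishes the proof.

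There is no genuine obstacle here — this is a textbook computation. The only point needing a little care is the constrained optimization in part (ii): verifying that $\la/\si^2\le 1/\si$ iff $\la\le\si$, and that in the complementary regime the boundary value $-\la/\si+\tfrac12$ is dominated by the target exponent $-\la/2\si$. Everything else is routine.
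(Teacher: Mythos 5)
Your proposal is correct and follows essentially the same route as the paper's own proof: a Chernoff bound with unconstrained minimization at $t=\la/\si^2$ for the subgaussian case, and for the subexponential case the same optimization restricted to $t\in(0,1/\si]$, split according to whether $\la\le\si$, with the boundary value $-\la/\si+\tfrac12$ dominated by $-\la/2\si$ when $\la\ge\si$. The only cosmetic difference is that you make the reduction to the lower tail explicit via $-X$ being in the same class, which the paper handles by simply saying the argument is repeated.
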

%%%%%%%%%%%%%%%%%%%%%%%%%%%%%%%%%%%%%%%%%%%%%%%%%%%%%%%%%%%%%%%%%%%%%%%%%%%%%%%%%%%%%%%%%%%%%%%%%%%%%%%%%%%%%%%%%%%%%%%%%%%%%%%%%%%%%%%%%%%%%%%%%%%%%%%%%%%%%%%%%%%%%%%%%%%%

%%%%%%%%%%%%%%%%%%%%%%%%%%%%%%%%%%%%%%%%%%%%%%%%%%%%%%%%%%%%%%%%%%%%%%%%%%%%%%%%%%%%%%%%%%%%%%%%%%%%%%%%%%%%%%%%%%%%%%%%%%%%%%%%%%%%%%%%%%%%%%%%%%%%%%%%%%%%%%%%%%%%%%%%%%%%
\begin{proof}[Proof of Lemma \ref{lem:tailb} ]This proof is a simple application of the Markov inequality. For any $t>0,$
	\benr
	pr(X\ge \la)=pr(tX\ge t\la)\le \frac{E\e^{tX}}{\e^{t\la}}=\e^{-t\la+t^2\si^2/2}.\nn
	\eenr
	Minimizing over $t>0,$ yields the choice $t^*=\la/\si^2,$ and substituting in the above bound ,we obtain,
	\benr
	pr(X\ge \la)\le \inf_{t>0}\e^{-t\la+t^2\si^2/2}=e^{-\la^2/2\si^2}.\nn
	\eenr
	Repeating the same for $P(X\le -\la)$ yields part (i) of the lemma. To prove Part (ii),   repeat the above argument with $t\in(0,1/\si],$ to obtain,
	\benr
	pr(X\ge \la)=pr(tX\ge t\la)\le\e^{-t\la+t^2\si^2/2}.
	\eenr
	As in the subgaussian case, to obtain the tightest bound one needs to find $t^*$ that minimizes $-t\la+t^2\si^2/2,$ with the additional constraint for this subexponential case that $t\in(0,1/\si].$ We know that the unconstrained minimum occurs at $t^*=\la/\si^2>0.$ Now consider two cases:
	\begin{enumerate}[itemindent=0mm]
		\item If $t^*<(0,1/\si] \Leftrightarrow\la\le \si$ then the unconstrained minimum is same as the constrained minimum, and substituting this value yields the same tail behavior as the subgaussian case.
		\item If $t^*>(1/\si)\Leftrightarrow\la>\si,$ then note that $-t\la+t^2\si^2/2$ is decreasing in $t,$ in the interval $(0,(1/\si)],$ thus the minimum occurs at the boundary $t=1/\si.$ Substituting in the tail bound we obtain for this case,
		\benr
		pr(X\ge \la)\le\e^{-t\la+t^2\si^2/2}= \exp\{-(\la/\si)+(1/2)\}\le\exp{(-\la/2\si)},\nn
		\eenr
		where the final inequality follows since $\la>\si.$
	\end{enumerate}
	Part (ii) of the lemma is obtained by combining the results of the above two cases.
\end{proof}
%%%%%%%%%%%%%%%%%%%%%%%%%%%%%%%%%%%%%%%%%%%%%%%%%%%%%%%%%%%%%%%%%%%%%%%%%%%%%%%%%%%%%%%%%%%%%%%%%%%%%%%%%%%%%%%%%%%%%%%%%%%%%%%%%%%%%%%%%%%%%%%%%%%%%%%%%%%%%%%%%%%%%%%%%%%%

%%%%%%%%%%%%%%%%%%%%%%%%%%%%%%%%%%%%%%%%%%%%%%%%%%%%%%%%%%%%%%%%%%%%%%%%%%%%%%%%%%%%%%%%%%%%%%%%%%%%%%%%%%%%%%%%%%%%%%%%%%%%%%%%%%%%%%%%%%%%%%%%%%%%%%%%%%%%%%%%%%%%%%%%%%%
\begin{lem} \label{lem:lcsubG} Assume that $X\sim{\rm subG}(\si),$ and that $\al\in\R,$ then, (i) $\alpha X\sim{\rm subG}(|\alpha|\si).$ If $X_1\sim{\rm subG}(\si_1)$ and $X_2\sim{\rm subG}(\si_2),$  then, (ii) $X_1+X_2\sim {\rm subG}(\si_1+\si_2).$ If $X_1\sim{\rm subG}(\si)$ and $X_2\sim{\rm subG}(\si)$ are independent, then, (iii) $X_1+X_2\sim{\rm subG}(\si\surd 2).$
\end{lem}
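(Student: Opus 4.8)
The plan is to verify each of the three claims directly from the moment generating function definition of subgaussianity (Definition \ref{def:subg}). In all three cases the mean-zero requirement $E(\cdot)=0$ is immediate from linearity of expectation, so the only work is the bound on the MGF, and parts (i) and (iii) are one-line manipulations while part (ii) needs a small idea.

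For part (i), I would write $E(\e^{t\al X})=E(\e^{(t\al)X})$ and apply the subgaussian bound for $X$ with the argument $t\al$ in place of $t$; this is legitimate since $t\al$ ranges over all of $\R$ as $t$ does, and it gives $E(\e^{t\al X})\le \e^{(t\al)^2\si^2/2}=\e^{t^2(|\al|\si)^2/2}$ for all $t\in\R$, which is exactly $\al X\sim{\rm subG}(|\al|\si)$. For part (iii), independence factorizes the MGF, so $E(\e^{t(X_1+X_2)})=E(\e^{tX_1})E(\e^{tX_2})\le \e^{t^2\si^2/2}\,\e^{t^2\si^2/2}=\e^{t^2(\si\surd 2)^2/2}$, which is the stated conclusion.

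Part (ii) is the only step needing a genuine argument, because $X_1$ and $X_2$ are not assumed independent. Here I would apply H\"older's inequality with conjugate exponents $p,q>1$ satisfying $1/p+1/q=1$, obtaining $E(\e^{t(X_1+X_2)})\le \big(E\e^{ptX_1}\big)^{1/p}\big(E\e^{qtX_2}\big)^{1/q}\le \e^{t^2(p\si_1^2+q\si_2^2)/2}$, using the subgaussian bounds for $X_1$ and $X_2$ at the scaled arguments $pt$ and $qt$. The key and only nontrivial move is to choose the split $p=(\si_1+\si_2)/\si_1$, $q=(\si_1+\si_2)/\si_2$, which satisfies $1/p+1/q=1$ and collapses $p\si_1^2+q\si_2^2$ to the perfect square $(\si_1+\si_2)^2$; the bound then reads $\e^{t^2(\si_1+\si_2)^2/2}$, giving $X_1+X_2\sim{\rm subG}(\si_1+\si_2)$. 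I expect recognizing this particular choice of H\"older exponents (equivalently, minimizing $p\si_1^2+q\si_2^2$ over the constraint set by Lagrange multipliers or AM--GM) to be the main, albeit mild, obstacle; everything else is routine.
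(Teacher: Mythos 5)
Your proposal is correct and follows essentially the same route as the paper: part (i) by rescaling the MGF argument, part (iii) by factorizing the MGF under independence, and part (ii) by H\"older's inequality with exactly the same conjugate exponents (the paper's $p^*=(\si_2/\si_1)+1$ equals your $(\si_1+\si_2)/\si_1$), which collapses $p\si_1^2+q\si_2^2$ to $(\si_1+\si_2)^2$. No gaps.
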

%%%%%%%%%%%%%%%%%%%%%%%%%%%%%%%%%%%%%%%%%%%%%%%%%%%%%%%%%%%%%%%%%%%%%%%%%%%%%%%%%%%%%%%%%%%%%%%%%%%%%%%%%%%%%%%%%%%%%%%%%%%%%%%%%%%%%%%%%%%%%%%%%%%%%%%%%%%%%%%%%%%%%%%%%%%

%%%%%%%%%%%%%%%%%%%%%%%%%%%%%%%%%%%%%%%%%%%%%%%%%%%%%%%%%%%%%%%%%%%%%%%%%%%%%%%%%%%%%%%%%%%%%%%%%%%%%%%%%%%%%%%%%%%%%%%%%%%%%%%%%%%%%%%%%%%%%%%%%%%%%%%%%%%%%%%%%%%%%%%%%%%
\begin{proof}[of Lemma \ref{lem:lcsubG}] The first part follows directly from the inequality $E(\e^{t\al X})\le \exp(t^2\al^2\si^2/2).$ To prove Part (ii) use the H\"older's inequality to obtain,
	\benr
	E(\e^{t(X_1+X_2)})&=&E(\e^{tX_1}\e^{tX_2})\le \{E(\e^{tX_1p})\}^{\frac{1}{p}}\{E(\e^{tX_2q})\}^{\frac{1}{q}}\nn\\
	&\le& \e^{\frac{t^2}{2}\si_1^2p^2}\e^{\frac{t^2}{2}\si_2^2q^2}=\e^{\frac{t^2}{2}(p\si_1^2+q\si_2^2)}\nn
	\eenr
	where $p,q\in[1,\iny],$ with $1/p+1/q=1.$ Choose $p^{*}=(\si_2/\si_1)+1,$ $q^*=(\si_1/\si_2)+1$ to obtain $E(\e^{t(X_1+X_2)})\le \exp\big\{\frac{t^2}{2}(\si_1+\si_2)^2\big\}.$ For Part (iii) note that,
	\benr
	E(\e^{t(X_1+X_2)})&=&E(\e^{tX_1}\e^{tX_2})=E(\e^{tX_1})E(\e^{tX_2})\nn\\
	&\le& \e^{\frac{t^2\si^2}{2}}\e^{\frac{t^2\si^2}{2}}=\e^{\frac{t^2(\si\surd 2)^2}{2}}\nn
	\eenr
	This completes the proof of this lemma.	
\end{proof}
%%%%%%%%%%%%%%%%%%%%%%%%%%%%%%%%%%%%%%%%%%%%%%%%%%%%%%%%%%%%%%%%%%%%%%%%%%%%%%%%%%%%%%%%%%%%%%%%%%%%%%%%%%%%%%%%%%%%%%%%%%%%%%%%%%%%%%%%%%%%%%%%%%%%%%%%%%%%%%%%%%%%%%%%%%%

%%%%%%%%%%%%%%%%%%%%%%%%%%%%%%%%%%%%%%%%%%%%%%%%%%%%%%%%%%%%%%%%%%%%%%%%%%%%%%%%%%%%%%%%%%%%%%%%%%%%%%%%%%%%%%%%%%%%%%%%%%%%%%%%%%%%%%%%%%%%%%%%%%%%%%%%%%%%%%%%%%%%%%%%%%
\begin{lem}\label{lem:lcsubE} Assume that $X\sim{\rm subE(\la)},$ and that $\al\in\R,$ then, (i) $\alpha X\sim{\rm subE}(|\alpha|\la).$ If $X_1\sim{\rm subE(\la_1)}$ and $X_2\sim{\rm subE(\la_2)},$ then, (ii) $X_1+X_2\sim{\rm subE(\la_1+\la_2)}.$ If $X_1\sim{\rm subE}(\la)$ and $X_2\sim{\rm subE}(\la)$ are independent, then, (iii) $X_1+X_2\sim{\rm subE}(\la\surd 2).$
\end{lem}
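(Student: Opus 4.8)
The final statement to prove is Lemma~\ref{lem:lcsubE}, the subexponential analogue of Lemma~\ref{lem:lcsubG}, asserting closure of the subexponential class under scalar multiplication and addition.

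\medskip

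The plan is to mirror the proof of Lemma~\ref{lem:lcsubG} as closely as possible, working directly from the moment generating function characterization in Definition~\ref{def:sube}, with the one new wrinkle being that the MGF bound is only valid on the restricted interval $|t|\le 1/\si$ rather than all of $\R$. First, for Part (i), if $X\sim{\rm subE}(\la)$ and $\al\in\R$ (assume $\al\ne 0$, the case $\al=0$ being trivial), then $E(\e^{t\al X})\le \e^{t^2\al^2\la^2/2}$ whenever $|t\al|\le 1/\la$, i.e. whenever $|t|\le 1/(|\al|\la)$; this is exactly the defining bound for ${\rm subE}(|\al|\la)$, and $E(\al X)=\al E(X)=0$. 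Second, for Part (ii), I would apply H\"older's inequality as in the subgaussian case: for $p,q\in[1,\iny)$ with $1/p+1/q=1$,
\benr
E\big(\e^{t(X_1+X_2)}\big)=E\big(\e^{tX_1}\e^{tX_2}\big)\le \big\{E(\e^{ptX_1})\big\}^{1/p}\big\{E(\e^{qtX_2})\big\}^{1/q}\le \e^{\frac{t^2}{2}(p\la_1^2+q\la_2^2)},\nn
\eenr
valid provided $|pt|\le 1/\la_1$ and $|qt|\le 1/\la_2$. Choosing $p^*=(\la_2/\la_1)+1$ and $q^*=(\la_1/\la_2)+1$ (so that $1/p^*+1/q^*=1$) gives $p^*\la_1^2+q^*\la_2^2=(\la_1+\la_2)^2$, hence $E(\e^{t(X_1+X_2)})\le \e^{t^2(\la_1+\la_2)^2/2}$; the key point is to check that the two constraints are simultaneously met exactly on $|t|\le 1/(\la_1+\la_2)$, since $|p^*t|\le 1/\la_1 \iff |t|\le \la_1/(p^*\la_1^2)=1/(\la_1+\la_2)$ and symmetrically for $q^*$. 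Third, for Part (iii), independence gives $E(\e^{t(X_1+X_2)})=E(\e^{tX_1})E(\e^{tX_2})\le \e^{t^2\la^2/2}\e^{t^2\la^2/2}=\e^{t^2(\la\surd2)^2/2}$, valid for $|t|\le 1/\la$, which contains the required interval $|t|\le 1/(\la\surd2)$; combined with $E(X_1+X_2)=0$ this yields $X_1+X_2\sim{\rm subE}(\la\surd2)$.

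\medskip

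The only genuinely delicate point --- and the one place this proof differs in substance rather than in wording from Lemma~\ref{lem:lcsubG} --- is the bookkeeping on the domain of validity of the MGF inequalities: each application of Definition~\ref{def:sube} is conditional on $t$ lying in a distribution-specific interval, and in Part~(ii) one must confirm that the H\"older exponents $p^*,q^*$ can be chosen so that all constraints coincide with the target interval $|t|\le 1/(\la_1+\la_2)$. Since the subgaussian template handles the algebra of the exponents and the H\"older step verbatim, the proof will be short; I would present it in the same two-case-free style as the proof of Lemma~\ref{lem:lcsubG}, simply appending the interval check at each step.
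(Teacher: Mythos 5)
Your proof is correct and is exactly the argument the paper intends: the paper omits the proof of Lemma \ref{lem:lcsubE} with the remark that it is analogous to Lemma \ref{lem:lcsubG}, and your write-up carries out that analogy, correctly adding the one nontrivial ingredient — verifying that the H\"older exponents $p^*=(\la_1+\la_2)/\la_1$, $q^*=(\la_1+\la_2)/\la_2$ make all MGF constraints coincide with $|t|\le 1/(\la_1+\la_2)$, and that the interval $|t|\le 1/\la$ in part (iii) contains $|t|\le 1/(\la\surd 2)$. No gaps.
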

%%%%%%%%%%%%%%%%%%%%%%%%%%%%%%%%%%%%%%%%%%%%%%%%%%%%%%%%%%%%%%%%%%%%%%%%%%%%%%%%%%%%%%%%%%%%%%%%%%%%%%%%%%%%%%%%%%%%%%%%%%%%%%%%%%%%%%%%%%%%%%%%%%%%%%%%%%%%%%%%%%%%%%%%%%

The proof of Lemma \ref{lem:lcsubE} is analogous to that of Lemma \ref{lem:lcsubG} and is thus omitted.
The next result is the Bernstein's inequality, reproduced from Lemma 1.13 of \cite{rigollet201518}. This result is a direct consequence of Lemma \ref{lem:tailb} and Lemma \ref{lem:lcsubE}.

%%%%%%%%%%%%%%%%%%%%%%%%%%%%%%%%%%%%%%%%%%%%%%%%%%%%%%%%%%%%%%%%%%%%%%%%%%%%%%%%%%%%%%%%%%%%%%%%%%%%%%%%%%%%%%%%%%%%%%%%%%%%%%%%%%%%%%%%%%%%%%%%%%%%%%%%%%%%%%%%%%%%%%%%%%
\begin{lem}[Bernstein's inequality]\label{lem:bernstein} Let $X_1,X_2,...,X_T$ be independent random variables such that $X_t\sim {\rm subE}(\la).$ Then for any $d>0$ we have,
	\benr
	pr(|\bar X|>d)\le 2\exp\Big\{-\frac{T}{2}\Big(\frac{d^2}{\la^2}\wedge \frac{d}{\la}\Big)\Big\}\nn
	\eenr
\end{lem}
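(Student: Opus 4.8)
The plan is to prove the claim by the Cram\'er--Chernoff (exponential tilting) method applied directly to the sum $S=\sum_{t=1}^{T}X_t$, rather than by first invoking a black-box ``$S\sim{\rm subE}(\la\surd T)$'' statement obtained from Lemma \ref{lem:lcsubE}. The reason for this care is that iterating Lemma \ref{lem:lcsubE} (equivalently, multiplying moment generating functions) yields the bound $E(\e^{tS})\le \e^{Tt^2\la^2/2}$, but crucially this holds on the \emph{full} range $|t|\le 1/\la$ coming from Definition \ref{def:sube}, not merely on the smaller range $|t|\le 1/(\la\surd T)$ that a generic ${\rm subE}(\la\surd T)$ membership would guarantee. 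Exploiting the larger range in the optimization step is exactly what produces the linear term $\tfrac{T}{2}\cdot\tfrac{d}{\la}$ in the exponent of the claimed bound, rather than a weaker $\tfrac{\surd T}{2}\cdot\tfrac{d}{\la}$.

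First I would record, for $0<t\le 1/\la$, the moment generating function bound $E(\e^{tS})=\prod_{t'=1}^{T}E(\e^{tX_{t'}})\le \e^{Tt^2\la^2/2}$, which follows from independence of the $X_{t'}$ together with the defining inequality $E(\e^{tX_{t'}})\le \e^{t^2\la^2/2}$ of Definition \ref{def:sube}. Then, by Markov's inequality applied to $\e^{tS}$,
\[
pr(\bar X\ge d)=pr(S\ge Td)\le \e^{-tTd}E(\e^{tS})\le \e^{-tTd+Tt^2\la^2/2},\qquad 0<t\le 1/\la.
\]
The remaining work is to minimize $f(t)=-tTd+\tfrac{T}{2}t^2\la^2$ over $(0,1/\la]$. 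The unconstrained minimizer is $t^{*}=d/\la^2$, so the analysis splits into two cases exactly as in the proof of Lemma \ref{lem:tailb}(ii): if $d\le \la$ then $t^{*}\in(0,1/\la]$ and substitution gives $f(t^{*})=-Td^2/(2\la^2)$; if $d>\la$ then $f'(t)=T(t\la^2-d)<0$ on $(0,1/\la]$, so the minimum is at the endpoint $t=1/\la$, giving $f(1/\la)=-Td/\la+T/2\le -Td/(2\la)$, where the last inequality uses $d>\la$. Combining the two cases yields $pr(\bar X\ge d)\le \exp\{-\tfrac{T}{2}(\tfrac{d^2}{\la^2}\wedge\tfrac{d}{\la})\}$.

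Finally, applying the same argument to $-X_1,\dots,-X_T$, which are again independent ${\rm subE}(\la)$ by Lemma \ref{lem:lcsubE}(i) with $\al=-1$, gives the matching bound for $pr(\bar X\le -d)$, and a union bound over the two one-sided events produces the factor $2$ in the statement. I do not anticipate a genuine obstacle here; the only point requiring attention is the one flagged above, namely keeping track of the range of validity $|t|\le 1/\la$ of the moment generating function bound, so that the endpoint case $d>\la$ delivers the linear exponent $\tfrac{T}{2}\cdot\tfrac{d}{\la}$ rather than a weaker rate.
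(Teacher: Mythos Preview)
Your proof is correct, and your care about the range of validity $|t|\le 1/\la$ is exactly the point that makes the argument work. The paper itself does not give a detailed proof; it only states that the result ``is a direct consequence of Lemma \ref{lem:tailb} and Lemma \ref{lem:lcsubE}'' and cites Rigollet. As you correctly observe, a purely black-box composition of those two lemmas---first using Lemma \ref{lem:lcsubE}(iii) iteratively to conclude $S\sim{\rm subE}(\la\surd T)$, then plugging into Lemma \ref{lem:tailb}(ii)---would yield only $\exp\{-\tfrac12(\tfrac{Td^2}{\la^2}\wedge\tfrac{\surd T\,d}{\la})\}$, which is strictly weaker than the stated bound in the linear regime. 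Your direct Chernoff argument on the product MGF, retaining the full range $|t|\le 1/\la$, is what delivers the correct linear exponent $\tfrac{T}{2}\cdot\tfrac{d}{\la}$. So your route is not genuinely different in spirit from what the paper intends (it is the same Chernoff-plus-case-split as in the proof of Lemma \ref{lem:tailb}(ii)), but it is more precise than the paper's one-line sketch and avoids a trap that the sketch leaves open.
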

%%%%%%%%%%%%%%%%%%%%%%%%%%%%%%%%%%%%%%%%%%%%%%%%%%%%%%%%%%%%%%%%%%%%%%%%%%%%%%%%%%%%%%%%%%%%%%%%%%%%%%%%%%%%%%%%%%%%%%%%%%%%%%%%%%%%%%%%%%%%%%%%%%%%%%%%%%%%%%%%%%%%%%%%%%

%%%%%%%%%%%%%%%%%%%%%%%%%%%%%%%%%%%%%%%%%%%%%%%%%%%%%%%%%%%%%%%%%%%%%%%%%%%%%%%%%%%%%%%%%%%%%%%%%%%%%%%%%%%%%%%%%%%%%%%%%%%%%%%%%%%%%%%%%%%%%%%%%%%%%%%%%%%%%%%%%%%%%%%%%%
The next result is the Kolmogorov's inequality reproduced from \cite{hajek1955generalization}
\begin{thm}[Kolmogorov's inequality]\label{thm:kolmogorov} If $\xi_1,\xi_2,...$ is a sequence of mutually independent random variables with mean values $E(\xi_k)=0$ and finite variance ${\rm var}(\xi_k)=D_k^2$ $(k=1,2,...),$ we have, for any $\vep>0,$
	\benr
	pr\Big(\max_{1\le k\le m}\big|\xi_1+\xi_2+...+\xi_k\big|>\vep\Big)\le \frac{1}{\vep^2}\sum_{k=1}^mD_k^2\nn
	\eenr	
\end{thm}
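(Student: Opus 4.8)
The plan is to run the classical first-passage argument: bound the second moment of the final partial sum from below by restricting attention to the maximal event, and exploit the orthogonality coming from independence of the forward increments. Write $S_k=\xi_1+\cdots+\xi_k$ for $1\le k\le m,$ so that $E(S_k)=0$ and, by mutual independence, $E(S_m^2)=\sum_{k=1}^m D_k^2.$ Let $A=\big\{\max_{1\le k\le m}|S_k|>\vep\big\}$ be the event to be bounded, and decompose it according to the first index at which the partial sum exceeds $\vep$: put $A_k=\big\{|S_1|\le\vep,\ldots,|S_{k-1}|\le\vep,\ |S_k|>\vep\big\}$ for $1\le k\le m.$ These events are pairwise disjoint, their union is $A,$ and the decisive structural fact is that each $A_k$ is determined by $\xi_1,\ldots,\xi_k$ alone.

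The core step I would carry out is the following chain. On $A_k,$ write $S_m=S_k+(S_m-S_k)$ and expand $S_m^2=S_k^2+2S_k(S_m-S_k)+(S_m-S_k)^2\ge S_k^2+2S_k(S_m-S_k);$ multiplying by ${\bf 1}_{A_k}$ and taking expectations, the cross term $E\big(S_k{\bf 1}_{A_k}(S_m-S_k)\big)$ factors as $E\big(S_k{\bf 1}_{A_k}\big)\,E(S_m-S_k)$ because $S_k{\bf 1}_{A_k}$ is a function of $\xi_1,\ldots,\xi_k$ while $S_m-S_k=\xi_{k+1}+\cdots+\xi_m$ is a function of $\xi_{k+1},\ldots,\xi_m,$ and it therefore vanishes since $E(S_m-S_k)=0.$ Consequently $E\big(S_m^2{\bf 1}_{A_k}\big)\ge E\big(S_k^2{\bf 1}_{A_k}\big)\ge\vep^2\,pr(A_k),$ the last bound using $|S_k|>\vep$ on $A_k.$ Summing over the disjoint pieces,
\benrr
\sum_{k=1}^m D_k^2 & = & E\big(S_m^2\big)\,\ge\,E\big(S_m^2{\bf 1}_A\big)=\sum_{k=1}^m E\big(S_m^2{\bf 1}_{A_k}\big)\\
& \ge & \vep^2\sum_{k=1}^m pr(A_k)=\vep^2\,pr(A),
\eenrr
and dividing through by $\vep^2$ gives the asserted bound $pr(A)\le\vep^{-2}\sum_{k=1}^m D_k^2.$

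The computation is otherwise routine; the one point that genuinely requires care — and the step I would flag as the crux — is the decision to split the maximal event by the \emph{first} time $\vep$ is exceeded, rather than controlling each $pr(|S_k|>\vep)$ by Chebyshev and union bounding (which would inflate the bound by a factor of order $m$). It is exactly this choice that makes $A_k$ depend on $\xi_1,\ldots,\xi_k$ only, so that the post-$A_k$ increment $S_m-S_k$ is independent of $S_k{\bf 1}_{A_k}$ and the cross term disappears. If one instead follows the weighted formulation of \cite{hajek1955generalization}, the same first-passage decomposition goes through verbatim after inserting the Hájek–Rényi weights, but for the unweighted statement recorded here the argument above suffices.
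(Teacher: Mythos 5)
Your proof is correct: the first-passage decomposition of the maximal event, the vanishing of the cross term $E\big(S_k{\bf 1}_{A_k}(S_m-S_k)\big)$ by independence of $(\xi_1,\ldots,\xi_k)$ from $(\xi_{k+1},\ldots,\xi_m)$, and the resulting chain $\sum_{k=1}^m D_k^2=E(S_m^2)\ge \vep^2\,pr(A)$ constitute the standard and complete argument for Kolmogorov's maximal inequality. The paper itself offers no proof to compare against — it merely reproduces the statement from \cite{hajek1955generalization} as a known auxiliary result — so your write-up supplies a valid self-contained derivation where the paper relies on a citation.
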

%%%%%%%%%%%%%%%%%%%%%%%%%%%%%%%%%%%%%%%%%%%%%%%%%%%%%%%%%%%%%%%%%%%%%%%%%%%%%%%%%%%%%%%%%%%%%%%%%%%%%%%%%%%%%%%%%%%%%%%%%%%%%%%%%%%%%%%%%%%%%%%%%%%%%%%%%%%%%%%%%%%%%%%%%%

Following is the well known `Argmax' theorem reproduced from Theorem 3.2.2 of \cite{vaart1996weak} and an elementary definition of uniform tightness of a sequence of random variables reproduced from Page 166, Chapter 2 of \cite{durrett2010probability}.
%%%%%%%%%%%%%%%%%%%%%%%%%%%%%%%%%%%%%%%%%%%%%%%%%%%%%%%%%%%%%%%%%%%%%%%%%%%%%%%%%%%%%%%%%%%%%%%%%%%%%%%%%%%%%%%%%%%%%%%%%%%%%%%%%%%%%%%%%%%%%%%%%%%%%%%%%%%%%%%%%%%%%%%%%%%
\begin{thm}[Argmax Theorem]\label{thm:argmax} Let $\cM_n,\cM$ be stochastic processes indexed by a metric space $H$ such that $\cM_n\Rightarrow\cM$ in $\ell^{\iny}(K)$ for every compact set $K\subseteq H$. Suppose that almost all sample paths $h\to \cM(h)$ are upper semicontinuous and posses a unique maximum at a (random) point $\h h,$ which as a random map in $H$ is tight. If the sequence $\h h_n$ is uniformly tight and satisfies $\cM_n(\h h_n)\ge \sup_h \cM_n(h)-o_p(1),$ then $\h h_n\Rightarrow \h h$ in $H.$
\end{thm}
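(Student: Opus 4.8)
To prove the Argmax Theorem (Theorem \ref{thm:argmax}), the plan is to reduce the problem from the non-compact index set $H$ to compact subsets by means of the two tightness hypotheses, and then to control the location of the maximizer on each compact set via a purely deterministic continuity property of the argmax functional, combined with the extended continuous mapping theorem and a final $\epsilon\downarrow 0$ limit that absorbs the low-probability events created by the reduction. (A measurability caveat: in general the $\cM_n$ need not be Borel measurable and one should phrase everything via outer probabilities; in the applications of this paper $H$ is $\R$ or $\Z$ and all maps are measurable, so I suppress this point.)

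First I would isolate the deterministic core: if $K$ is a compact metric space, $z_n\to z$ in $\ell^{\iny}(K),$ the path $z$ is upper semicontinuous on $K$ with a \emph{unique} maximizer $h_0,$ and $h_n\in K$ satisfies $z_n(h_n)\ge\sup_{h\in K}z_n(h)-\epsilon_n$ with $\epsilon_n\downarrow 0,$ then $h_n\to h_0.$ The argument is short: uniform convergence gives $z_n(h_n)\ge z_n(h_0)-\epsilon_n\to z(h_0)$ and also $z_n(h_n)\le z(h_n)+\|z_n-z\|_K,$ so $\limsup_n z(h_n)\ge z(h_0);$ by compactness extract a subsequence $h_{n_k}\to h^*,$ use upper semicontinuity to get $z(h^*)\ge\limsup_k z(h_{n_k})\ge z(h_0),$ and conclude $h^*=h_0$ by uniqueness, so the full sequence converges. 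Equivalently, a measurable selection map $g_K\colon\ell^{\iny}(K)\to K$ returning a maximizer over $K$ is continuous at every $z$ that is upper semicontinuous with a unique argmax over $K.$

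Next, fix $\epsilon>0.$ By uniform tightness of $\{\h h_n\}$ pick a compact $K_0$ with $\inf_n pr(\h h_n\in K_0)>1-\epsilon;$ by tightness of $\h h$ pick a compact $K_1$ with $pr(\h h\in K_1)>1-\epsilon;$ let $K$ be a compact set containing $K_0\cup K_1$ in its interior. On $\{\h h\in K_1\}$ the unique global maximizer of $\cM$ lies in $K,$ so $g_K(\cM)=\h h$ and $\cM|_K$ is upper semicontinuous with a unique argmax over $K;$ on $\{\h h_n\in K_0\}$ the given near-maximizer $\h h_n$ lies in $K$ and hence is an $o_p(1)$-near-maximizer of $\cM_n$ over $K.$ Since $\cM_n\Rightarrow\cM$ in $\ell^{\iny}(K),$ couple them by a Skorokhod representation (transferring the near-maximizers $\h h_n$ along) so that $\cM_n\to\cM$ a.s. in $\ell^{\iny}(K);$ the deterministic fact above then gives $g_K(\cM_n)\to g_K(\cM)$ a.s. on $\{\h h\in K_1\},$ while the $K_0$-event gives $d(\h h_n,g_K(\cM_n))\to 0.$ Consequently, for every bounded Lipschitz $f\colon H\to\R,$ $\limsup_n\big|Ef(\h h_n)-Ef(\h h)\big|\le C\|f\|_{\iny}\epsilon;$ letting $\epsilon\downarrow 0$ and invoking the portmanteau theorem yields $\h h_n\Rightarrow\h h$ in $H.$

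I expect the main obstacle to be exactly the gap between the non-compact index set over which $\h h_n$ and $\h h$ maximize and the compact sets $K$ on which weak convergence and continuity of the argmax functional are available: the argmax map is continuous only at paths with a \emph{unique} maximizer, so the ordinary continuous mapping theorem does not apply, and one must verify — using the tightness of $\h h$ and of $\{\h h_n\}$ together with the almost-sure uniqueness of the maximizer of $\cM$ — that, up to an $\epsilon$-probability set, $\cM$ restricted to $K$ has a unique maximizer equal to $\h h,$ and then carry out the bookkeeping needed to pass between maximization over $H$ and over $K$ and to let $\epsilon\downarrow 0$ at the end.
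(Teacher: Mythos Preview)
The paper does not provide its own proof of this statement: Theorem~\ref{thm:argmax} is explicitly labeled as ``reproduced from Theorem 3.2.2 of \cite{vaart1996weak}'' and is stated in Appendix~\ref{app:auxiliary} purely as an auxiliary tool invoked in the proofs of Theorems~\ref{thm:wc.vanishing} and~\ref{thm:wc.non.vanishing}. There is therefore nothing in the paper to compare your proposal against.

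For what it is worth, your sketch is the standard argument and is essentially correct in outline: the deterministic continuity-of-argmax lemma on compacts, the reduction to compacts via the two tightness hypotheses, and the final $\epsilon\downarrow 0$ limit are exactly the ingredients of the van der Vaart--Wellner proof. One point that deserves more care than you give it is the phrase ``transferring the near-maximizers $\h h_n$ along'' the Skorokhod coupling: the almost-sure representation replaces $\cM_n$ by a new version on a new probability space, and $\h h_n$ is defined on the original space, so one must either argue that the near-maximizer property can be re-created from the copies (which is where the measurable selection $g_K$ is actually used) or work directly with outer probabilities and the extended continuous mapping theorem rather than Skorokhod. This is the technical heart of the general version and is handled carefully in \cite{vaart1996weak}; your applications here ($H=\R$ or $H=\Z$, separable processes) avoid most of these subtleties, as you note.
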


\section{Discussion on sparsity assumption}\label{app:equivalence}

The purpose of this section is to show that the sparsity assumption (\ref{def:setS}) on the mean vectors $\theta_1^0$ and $\theta_2^0$ holds without loss of generality with respect to assuming sparsity of jump vector $\eta^0=\theta_1^0-\theta_2^0,$ and in context of the estimator $\tilde\tau$ and the estimation and inference results presented in this manuscript.

Recall the model (\ref{model:rvmcp}) under consideration,
\benr
x_t=\begin{cases}\theta_1^0+\vep_t & t=1,...,\tau^0\\
	\theta_2^0+\vep_t & t=\tau^0+1,...,T.\end{cases}\nn
\eenr
where $\vep_t$ are i.i.d as per Condition A. Define,
\benr
&&\bar x=\frac{1}{T}\sum_{t=1}^T x_t= \frac{1}{T}\big[\tau^0\theta_1^0+(T-\tau^0)\theta_2^0\big]+\frac{1}{T}\sum_{t=1}^T\vep_t,\quad{\rm and}\nn\\
&&\theta_1^*=\frac{(T-\tau^0)\eta^0}{T},\quad \theta_2^*=-\frac{\tau^0\eta^0}{T},\quad\vep_t^*=\vep_t-\bar\vep,\,\,\,{\rm and}\,\,\,\bar\vep=\frac{1}{T}\sum_{t=1}^T\vep_t,\nn
\eenr
Then performing a mean centering operation $x_t^*=x_t-\bar x,$ $t=1,...,T,$ yields the transformed model,
\benr\label{mod:center.transform}
x_t^*=\begin{cases}\theta_1^{*}+\vep_t^* & t=1,...,\tau^0\\
	\theta_2^{*}+\vep_t^* & t=\tau^0+1,...,T.\end{cases}
\eenr
Note that assuming $\eta^0=\theta_1^0-\theta_2^0$ is s-sparse directly implies that the mean vectors $\theta_1^*$ and $\theta_2^*$ of model (\ref{mod:center.transform}) are now individually s-sparse, also note that $\eta^{0*}=\theta_1^*-\theta_2^*=\eta^0.$ Thus making assumption (\ref{def:setS}) feasible.

The consequence of this centering operation is an alteration to the unobserved noise term as $\vep_t^*=(\vep_t-\bar\vep),$ $t=1,...,T.$ Although this induces a dependence amongst $\vep_t^*,$ however its representation allows separability of this structure and in turn allows all results of the manuscript to remain valid. The only consequence of this alteration being in the universal constants of the localization bounds of Section \ref{sec:mainresults} and Section \ref{sec:algorithm}. There will be no consequence in context of limiting distributions of Section \ref{sec:inference}. This is illustrated in the following discussion.

\begin{lem}\label{lem:elementary} Suppose $\vep_t,$ $t=1,...,T$ satisfy Condition A and B, then
	\benr
	\|\bar\vep\|_{\iny}\le
	\begin{cases}
		c_u\si\surd\{\log (p\vee T)/T\}, & {\rm under\,\, subG},\\
		c_u\si\surd\{\log (p\vee T)/T\}, & {\rm under\,\, subE\,\, when}\,\, T\ge \log (p\vee T)\\
		c_u\si\log (p\vee T)/\surd T, & {\rm under\,\, subE\,\, when}\,\, T\ge 1\\ 	
	\end{cases}	
	\eenr
	with probability $1-o(1).$ Moreover, for any non-random $\delta\in\R^p,$ $\|\delta\|_2=1,$ we have, $\surd{T}\delta^T\bar\vep=O_p(1).$ More precisely, for any $0<a<1$ there exists $c_a'>0$ such that $pr\big(\big|\surd{T}\delta^T\bar\vep\big|>c_a')\le a.$
\end{lem}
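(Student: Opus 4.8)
## Proof Plan for Lemma \ref{lem:elementary}

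The plan is to treat the two claims separately, handling the sup-norm bound on $\bar\vep$ first and then the $O_p(1)$ statement for projections. Both are essentially one-line applications of the tail bounds already assembled in the appendix (Lemma \ref{lem:tailb}, Lemma \ref{lem:lcsubG}, Lemma \ref{lem:lcsubE}, and the Bernstein inequality, Lemma \ref{lem:bernstein}), so I would keep the computations minimal and just indicate which estimate feeds into which conclusion.

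For the first claim, fix a coordinate $j\in\{1,\dots,p\}$ and look at $\bar\vep_j = \frac{1}{T}\sum_{t=1}^T \vep_{tj}$. Under Condition A, each $\vep_{tj}$ is a coordinate projection of a subgaussian (resp.\ subexponential) vector, hence $\vep_{tj}\sim{\rm subG}(\si)$ (resp.\ ${\rm subE}(\si)$) by Definition \ref{def:submult}. In the subgaussian case, independence over $t$ and Lemma \ref{lem:lcsubG}(iii) iterated give $\sum_{t=1}^T\vep_{tj}\sim{\rm subG}(\surd T\,\si)$, so Lemma \ref{lem:tailb}(i) with threshold $\la = \si\surd\{2c_u T\log(p\vee T)\}$ yields $|\bar\vep_j|\le c_u\si\surd\{\log(p\vee T)/T\}$ with probability at least $1-2\exp\{-c_u\log(p\vee T)\}$; a union bound over $j=1,\dots,p$ gives the stated bound with probability $1-o(1)$. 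In the subexponential case I would instead apply the Bernstein inequality (Lemma \ref{lem:bernstein}) to $\bar\vep_j$: for $d = c_u\si\surd\{\log(p\vee T)/T\}$ the quadratic branch $d^2/\si^2$ dominates precisely when $T\ge\log(p\vee T)$, recovering the subgaussian-type rate; when only $T\ge 1$ is assumed, choosing $d = c_u\si\log(p\vee T)/\surd T$ makes the linear branch $d/\si$ active and produces the slower rate $c_u\si\log(p\vee T)/\surd T$. Again a union bound over $j$ closes the argument. This is exactly the same bookkeeping as in Lemma \ref{lem:1.to.tau.bound} with $\tau = T$ (equivalently $l_T$-role played by a constant), so I would cite that lemma's proof structure rather than repeat it.

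For the second claim, fix non-random $\delta$ with $\|\delta\|_2 = 1$ and consider $\surd T\,\delta^T\bar\vep = \frac{1}{\surd T}\sum_{t=1}^T \delta^T\vep_t$. By Definition \ref{def:submult}, $\delta^T\vep_t\sim{\rm subG}(\si)$ or ${\rm subE}(\si)$; these are i.i.d.\ over $t$. In the subgaussian case, Lemma \ref{lem:lcsubG} gives $\frac{1}{\surd T}\sum_{t=1}^T\delta^T\vep_t\sim{\rm subG}(\si)$ (the $\surd T$ normalization exactly cancels the $\surd T$ inflation of variance proxy), so by Lemma \ref{lem:tailb}(i), $pr(|\surd T\,\delta^T\bar\vep|>c_a')\le 2\exp\{-(c_a')^2/2\si^2\}$, and choosing $c_a'$ large enough (depending on $a$ and $\si$) makes this $\le a$. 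In the subexponential case one uses Bernstein (Lemma \ref{lem:bernstein}) applied to the mean $\frac{1}{T}\sum \delta^T\vep_t$ with $d = c_a'/\surd T$: for $c_a'$ a fixed constant and $T$ large, $d\le\si$ so the quadratic branch controls the bound, giving $pr(|\surd T\,\delta^T\bar\vep|>c_a')\le 2\exp\{-(c_a')^2/2\si^2\}$, and again pick $c_a'$ to make this at most $a$. (Alternatively, since Condition B gives ${\rm var}(\delta^T\vep_t)=\delta^T\Si\delta\le\phi^2$, Chebyshev alone suffices for the $O_p(1)$ conclusion with $c_a' = \phi/\surd a$, though this gives a weaker tail than the exponential route.)

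I do not anticipate a genuine obstacle here — the lemma is a packaging result whose role is to quantify the cost of the mean-centering transformation in Appendix \ref{app:equivalence}. The only point requiring a little care is the case split in the subexponential sup-norm bound: one must check that the condition $T\ge\log(p\vee T)$ is exactly what makes the quadratic Bernstein branch dominate at the target rate, and that without it the linear branch forces the extra $\surd{\log(p\vee T)}$ factor. This is the same dichotomy already isolated in Lemma \ref{lem:nearoptimalcross} versus Lemma \ref{lem:nearoptimalcross.subE.special}, so the argument is entirely parallel. I would therefore present the proof compactly, pointing to those earlier deviation lemmas for the detailed constants.
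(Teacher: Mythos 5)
Your proposal is correct and matches the paper's (very brief) argument: component-wise subgaussian/subexponential tail bounds (Bernstein in the heavy-tailed case, with the branch split governed by whether $T\ge\log(p\vee T)$) followed by a union bound over coordinates for the sup-norm claim, and a one-dimensional tail or moment bound for the projection claim. The only cosmetic difference is that the paper dispatches the second claim via the Markov/Chebyshev inequality with ${\rm var}(\surd{T}\delta^T\bar\vep)\le\phi^2$, which is exactly the alternative you note parenthetically, so nothing further is needed.
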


This is a straightforward result. The bound on the sup-norm follows directly by applying subgaussian or subexponential tail bounds component-wise, followed with a union bound over components. The second result follows directly from the Markov inequality upon noting that $\surd{T}\delta^T\bar\vep$ is also subgaussian or subexponential when $\vep_t$ are subgaussian or subexponential, respectively (Definition \ref{def:submult}), with ${\rm var}(\surd{T}\delta^T\bar\vep)\le \phi^2\le c_u\si^2.$

Now recall the construction of the set $\cG(u_T,v_T)$ from (\ref{def:setG}) and note that the results of Theorem \ref{thm:nearoptimalcp}, Theorem \ref{thm:subE.nearoptimal.special} and Theorem \ref{thm:cpoptimal} of Section \ref{sec:mainresults} rely on the following uniform bounds (provided in Appendix \ref{app:sec.main}),
\benr
&&\sup_{\substack{\tau\in\cG(u_T,v_T);\\\tau\ge\tau^0}}\Big\|\sum_{t=\tau^0+1}^\tau\vep_t\Big\|_{\iny}\le\begin{cases}
	c_u\si\{Tu_T\log(p\vee T)\}^{\frac{1}{2}}, & {\rm under\,\,subG}\\
	c_u\si\{Tu_T\log(p\vee T)\}^{\frac{1}{2}}, & {\rm under\,\,subE}\,\big(Tv_T\ge \log(p\vee T)\big)\\
	c_u\si\log(p\vee T)\{Tu_T\}^{\frac{1}{2}}, & {\rm under\,\,subE}\,\big(v_T\ge 0
	\big)\end{cases}\hspace{1cm}\label{eq:46}\\[6pt]
&&\sup_{\substack{\tau\in\cG(u_T,v_T);\\\tau\ge\tau^0}}\Big|\sum_{t=\tau^0+1}^\tau\vep_t^T\eta^0\Big|\le c_a\phi \|\eta^0\|_2\surd(Tu_T)\label{eq:47},
\eenr
where (\ref{eq:46}) hold with probability $1-o(1)$ and (\ref{eq:47}) holds with probability $1-a,$ when $c_a\ge \surd(1/a).$ (see, Lemma \ref{lem:nearoptimalcross}, Lemma \ref{lem:nearoptimalcross.subE.special} and Lemma \ref{lem:optimalcross}).

The only additional requirement to replicate the proofs of Theorem \ref{thm:nearoptimalcp}, \ref{thm:subE.nearoptimal.special}, Theorem \ref{thm:cpoptimal} under the model (\ref{mod:center.transform}) are the bounds (\ref{eq:46}) and (\ref{eq:47}) with $\vep_t$ replaced with $\vep_t^*.$ This can be done using Lemma \ref{lem:elementary} as follows. Consider the subgaussian case and note that,
\benr
\sup_{\substack{\tau\in\cG(u_T,v_T);\\\tau\ge\tau^0}}\Big\|\sum_{t=\tau^0+1}^\tau\vep_t^*\Big\|_{\iny}&\le& \sup_{\substack{\tau\in\cG(u_T,v_T);\\\tau\ge\tau^0}}\Big\|\sum_{t=\tau^0+1}^\tau\vep_t\Big\|_{\iny} + Tu_T\|\bar\vep\|_{\iny}\nn\\
&\le& c_u\si\{Tu_T\log(p\vee T)\}^{\frac{1}{2}}+ c_u\si u_T\surd\{T\log(p\vee T)\}\nn\\
&\le& \si\{Tu_T\log(p\vee T)\}^{\frac{1}{2}}\big[c_u+\surd{u_T}\big]\nn\\
&\le& (c_u+1)\si\{Tu_T\log(p\vee T)\}^{\frac{1}{2}}\nn
\eenr
with probability $1-o(1).$ Here the second inequality follows from (\ref{eq:46}) and Lemma \ref{lem:elementary}. The final inequality follows since $u_T\le 1.$ Thus, the only impact of using the transformed model (\ref{mod:center.transform}) is on the associated universal constant. A similar argument also yields,
\benr
\sup_{\substack{\tau\in\cG(u_T,v_T);\\\tau\ge\tau^0}}\Big|\sum_{t=\tau^0+1}^\tau\vep_t^{*T}\eta^0\Big|&\le& \sup_{\substack{\tau\in\cG(u_T,v_T);\\\tau\ge\tau^0}}\Big|\sum_{t=\tau^0+1}^\tau\vep_t^{T}\eta^0\Big|+Tu_T\big|\bar\vep^T\eta^0\big|\nn\\
&\le& c_a\phi \|\eta^0\|_2\surd(Tu_T)+ u_Tc_a'\|\eta^0\|_2\si\surd{T}\nn\\
&\le& \si(c_a+c_a')\|\eta^0\|_2\surd{(Tu_T)}\nn
\eenr
with probability at least $1-2a.$ The subexponential cases can be handled similarly using the corresponding cases in (\ref{eq:46}) and Lemma \ref{lem:elementary}. These bounds allow reproducing identical arguments to obtain the bounds provided in Theorem \ref{thm:nearoptimalcp}, \ref{thm:subE.nearoptimal.special} and \ref{thm:cpoptimal} under the transformed model (\ref{mod:center.transform}) upto universal constants.

In context of results of Section \ref{sec:inference}, the only consequence of centering is that the stochastic term that stabilizes to form the limiting distribution will under the transformed model (\ref{mod:center.transform}) comprise of an additional $o_p(1)$ residual term. Clearly this will not affect the weak limit. For e.g. in the vanishing case of $\xi\to 0,$ the limiting distribution is governed by the term $\sum_{t=\tau^0}^{\tau^0+r\xi^{-2}}\vep_t^T\eta^0,$ where $0<r\le c_u.$ When stated w.r.t model (\ref{mod:center.transform}) this term changes to the following,
\benr
\sum_{t=\tau^0}^{\tau^0+r\xi^{-2}}\vep_t^{*T}\eta^0&=&\sum_{t=\tau^0}^{\tau^0+r\xi^{-2}}\vep_t^{T}\eta^0 +r\xi^{-1}\bar\vep^T\delta=\sum_{t=\tau^0}^{\tau^0+r\xi^{-2}}\vep_t^{T}\eta^0+r\xi^{-1}O_p\big(1/\surd{T}\big)\nn\\
&=&\sum_{t=\tau^0}^{\tau^0+r\xi^{-2}}\vep_t^{T}\eta^0+o_p(1).\nn
\eenr
Here the second equality follows from Lemma \ref{lem:elementary}. Consequently, centering does not alter the weak limit. A similar argument holds for the non-vanishing case.

Finally, the only additional result that enables the results of Section \ref{sec:algorithm} is Theorem \ref{thm:unifmean}. The proof of this theorem requires control on the stochastic term in Lemma \ref{lem:1.to.tau.bound}. Proceeding analogously as earlier, it may be observed that the same bound holds upto universal constants when $\vep_t$ is replaced with $\vep_t^*.$

In conclusion, the discussion of this section implies that assuming sparsity (\ref{def:setS}) on the mean vectors $\theta_1^0$ and $\theta_2^0$ holds without loss of generality with respect to assuming sparsity of the jump vector $\eta^0,$ in context of the change point estimator $\tilde\tau$ and the results presented in this article.

\section{Additional numerical results and further details}\label{app:numerical}

This section provides remaining results of Simulation A and Simulation B discussed in Section \ref{sec:numerical}, an additional Simulation C is also provided here that numerically examines the uniform validity of the estimation and inference results developed in the main manuscript where uniformity under consideration is that of the parametric space of the mean parameters $\theta_1^0$ and $\theta_2^0.$ Finally we also provide in this section the pertinent details regarding estimation of the jump size and asymptotic variance that was utilized to obtain confidence intervals computed in Section \ref{sec:numerical}.

\vspace{5mm}
\noi{\bf Simulation C:} (numerical evaluation of uniformity of results over mean parametric space) The design of this simulation is as follows. The unobserved noise is generated as in the Gaussian setting of Simulation A. We set model parameters $T=425,$ $p=250$ and $\tau^0=\lfloor0.4\cdotp T\rfloor.$ We set the mean parameters as $c\theta_1^0$ and $c\theta_2^0,$ where $\theta_1$ and $\theta_2^0$ are as described in Section \ref{sec:numerical} and the $c$ is a constant chosen from an equally separated grid of twenty five values, $c\in\{1,...,0.25\}.$ We evaluate bias, RMSE (over 100 replications) and coverage, average margin of error (over 500 replications).

\begin{figure}[]
	\centering
	\resizebox{\textwidth}{!}{
		\begin{minipage}[b]{0.5\textwidth}
			\includegraphics[width=\textwidth]{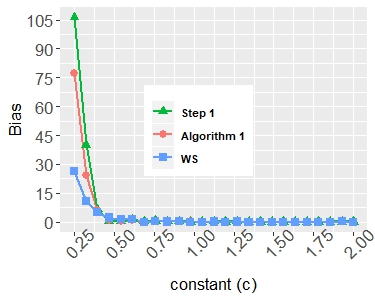}
		\end{minipage}
		\begin{minipage}[b]{0.5\textwidth}
			\includegraphics[width=\textwidth]{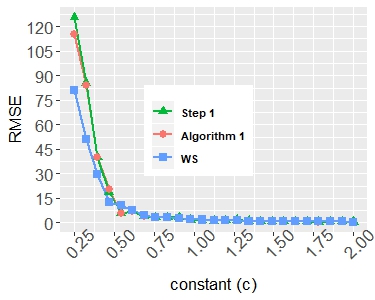}
	\end{minipage}}
	% note that files may not be rotated
	\caption{\footnotesize{Estimation results of Simulation C (100 monte carlo replications). x-axis: constant $c\in\{2,....,0.25\}.$ {\it Left panel:} y-axis: Bias ($|E(\h\tau-\tau^0)|$). {\it Right panel:} RMSE ($E^{1/2}(\h\tau-\tau^0)^2$).}}
	\label{fig:simCest}
\end{figure}

\begin{figure}[]
	\centering
	\resizebox{\textwidth}{!}{
		\begin{minipage}[b]{0.5\textwidth}
			\includegraphics[width=\textwidth]{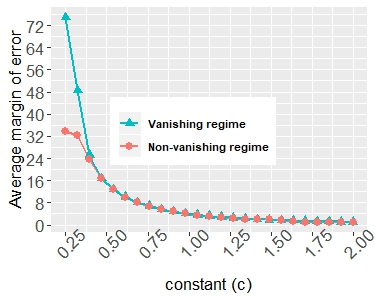}
		\end{minipage}
		\begin{minipage}[b]{0.5\textwidth}
			\includegraphics[width=\textwidth]{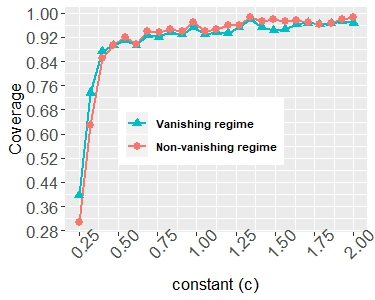}
	\end{minipage}}
	% note that files may not be rotated
	\caption{\footnotesize{Inference results of Simulation C (500 monte carlo replications). x-axis: constant $c\in\{2,....,0.25\}.$ {\it Left panel:} y-axis: Average over replications of margin of error of confidence intervals. {\it Right panel:} coverage over replications of confidence intervals.}}
	\label{fig:simCinf}
\end{figure}

The results of Simulation C presented in Figure \ref{fig:simCest} and Figure \ref{fig:simCinf} are as per expectation. The proposed inference methodology yields a coverage controlled at near the nominal level, uniformly in the sub-interval $c\in(0.5,2).$ As the constant $c$ diminishes the variance of estimators increases (as observed via rmse), this is in turn correctly captured in the asymptotic variance (as observed by the margin of error) which is then evidenced by the proper control on coverage in this sub-interval $c\in (0.5,2).$ The methodology appears to break down in the sub-interval $(0.25,0.5).$ This is not due to a lack of uniformity of the results over the mean parametric space but instead due to the jump size getting smaller and going beyond the region of detectability, i.e, at these smaller values of $c,$ the jump size is $\xi=c\|\theta_1^0-\theta_2^0\|_2$ and may no longer be able to preserve the relation (\ref{eq:16}), thus leading to a breakdown of the theoretical results supporting the methodology and thereby leading to the observations of Figure \ref{fig:simCest} and Figure \ref{fig:simCinf}.

\vspace{5mm}
\noi{\bf Computation of $\h\si^2$ and $\h\xi^2$}: Here we discuss the computation of $\h\si^2_{\iny}$ and $\h\xi$ utilized for the computation of confidence intervals for $\tau^0$ using the result of Theorem \ref{thm:wc.vanishing} and Theorem \ref{thm:wc.non.vanishing}.

In order to alleviate finite sample regularization biases from the mean parameter estimates we utilize refitted mean estimates computed as  $\breve\theta_1=\big[\bar x_{(1:\breve\tau]}\big]_{\h S1}$ and $\breve\mu_2=\big[\bar x_{(\breve\tau:T]}\big]_{\h S2},$ where $\breve\tau$ is the change point estimate of Algorithm 1 and $\h S1=\{j\,\,\h\theta_{1j}\ne 0\},$ $\h S2=\{j\,\,\h\theta_{2j}\ne 0\},$ where $\h\theta_1$ and $\h\theta_2$ are the Step 2 mean estimates of Algorithm 1. It is well known in the literature that refitted mean estimates preserve the rate of convergence of the regularized version while reducing finite sample biases, see, e.g. \cite{belloni2011square} and \cite{belloni2017pivotal}. The jump vector and jump size are then estimated as $\h\eta=\breve\theta_1-\breve\theta_2,$ and  $\h\xi=\|\h\eta\|_2,$ respectively.

Next, recall the asymptotic variance $\si_{\iny}^2$ from Condition E where it is defined as $\lim_{T\to\iny} \eta^{0T}\Si\eta^0\big/\xi^2.$ The direct way to estimate this quantity is to first estimate the high dimensional covariance $\Si$ and plug in previously estimated quantities $\h\eta$ and $\h\xi$ into the finite sample representation of $\si^2_{\iny},$ i.e., $\eta^{0T}\Si\eta^0\big/\xi^2.$ However, this direct approach shall also be expected to further induce regularization biases that will inevitably seep in due to the estimation of the high dimensional $\Si.$ We note that an explicit estimation of $\Si$ is itself not necessary for the proposed inference methodology, and instead only the jump size and asymptotic variance are necessary parameters. In view of this observation we use the following approach in order to avoid the above eventuality.

Consider a one-dimensional projection $z_t=\xi^{-1}\eta^{0T}x_t,$ $t=1,...,T$ of the model (\ref{model:rvmcp}) and note that it yields a transformed model of the form,

\benr\label{mod:projectedseries}
z_t=\xi^{-1}\eta^{0T}x_t=\begin{cases}\mu^0_1+\psi_t, &t=1,...,\tau^0\\
	\mu^0_2+\psi_t,& t=\tau^0+1,...,T,\end{cases}
\eenr
where $\mu_1^0=\xi^{-1}\eta^{0T}\theta_1^0\in\R,$ $\mu_2^0=\xi^{-1}\eta^{0T}\theta_2^0\in\R$ and more importantly $\psi_t=\xi^{-1}\eta^{0T}\vep_t,$ $t=1,...,T.$ Consequently the variance of the transformed unobserved noise term $\psi_t$ is $\eta^{0T}\Si\eta^0/\xi^{2},$ which is exactly a finite sample representation of the asymptotic variance $\si^{2}_{\iny}$ of interest. In view of this observation we estimate this quantity as the sample variance of the residual of the transformed model \ref{mod:projectedseries} implemented by utilizing the previously estimated jump size and jump vector, i.e., let,
\benr
z_t=\h\xi^{-1}\h\eta^{T}x_t,\quad \h\mu_1=\h\xi^{-1}\h\eta^{T}\breve\theta_1\quad{\rm and}\quad\h\mu_2=\h\xi^{-1}\h\eta^{T}\breve\theta_2.\nn
\eenr
Then we estimate $\si^2_{\iny}$ as the sample variance,
\benr
\h\si^2_{\iny}= \frac{1}{T}\Bigg\{\sum_{t=1}^{\breve\tau}(\h z_t-\h\mu_1)^2+\sum_{t=\breve\tau+1}^{T}(\h z_t-\h\mu_2)^2\Bigg\}.\nn
\eenr

\vspace{5mm}
\noi{\bf Additional results of Simulation A and Simulation B:}
%%%%%%%%%%%%%%%%%%%%GAUSSIAN ESTIMATION TABLES%%%%%%%%%%%%%%%%%%%%%%%%%%%%%%%%%%%%%%%%%%%%%%

\begin{table}[H]
	\caption{\footnotesize{Simulation A(i): estimation performance of Step 1 ($\h\tau$), AL1 $(\breve\tau)$ and WS methods under Gaussian setting with $\tau^0=\lfloor0.4\cdotp T\rfloor.$ Bias ($|E(\h\tau-\tau^0)|$), and RMSE ($E^{1/2}(\h\tau-\tau^0)^2$) and time (in seconds), approximated with $100$ monte carlo replications.}}
	\resizebox{1\textwidth}{!}{
		\begin{tabular}{cclllllllll}
			\hline
			\multicolumn{2}{c}{$\tau^0=\lfloor0.4\cdotp T\rfloor$} & \multicolumn{3}{c}{Step 1}                                                                                & \multicolumn{3}{c}{AL1}                                                                                   & \multicolumn{3}{c}{WS}                                                                                    \\ \hline
			$T$                        & $p$                       & \multicolumn{1}{c}{\textbf{bias}} & \multicolumn{1}{c}{\textbf{RMSE}} & \multicolumn{1}{c}{\textbf{time}} & \multicolumn{1}{c}{\textbf{bias}} & \multicolumn{1}{c}{\textbf{RMSE}} & \multicolumn{1}{c}{\textbf{time}} & \multicolumn{1}{c}{\textbf{bias}} & \multicolumn{1}{c}{\textbf{RMSE}} & \multicolumn{1}{c}{\textbf{time}} \\ \hline
			200                        & 50                        & 0.350                             & 2.696                             & 0.070                             & 0.290                             & 2.751                             & 0.114                             & 0.180                             & 2.665                             & 0.118                             \\
			200                        & 250                       & 0.520                             & 2.117                             & 0.155                             & 0.370                             & 2.189                             & 0.274                             & 0.100                             & 3.124                             & 1.513                             \\
			200                        & 500                       & 0.000                             & 2.005                             & 0.207                             & 0.090                             & 1.792                             & 0.388                             & 0.500                             & 3.481                             & 7.307                             \\
			200                        & 750                       & 0.230                             & 2.447                             & 0.279                             & 0.190                             & 2.472                             & 0.513                             & 0.110                             & 2.632                             & 23.222                            \\ \hline
			275                        & 50                        & 0.240                             & 1.637                             & 0.092                             & 0.120                             & 1.619                             & 0.168                             & 0.040                             & 2.069                             & 0.141                             \\
			275                        & 250                       & 0.200                             & 2.074                             & 0.215                             & 0.140                             & 1.990                             & 0.387                             & 0.490                             & 3.205                             & 1.835                             \\
			275                        & 500                       & 0.150                             & 1.803                             & 0.340                             & 0.220                             & 1.828                             & 0.620                             & 0.080                             & 2.638                             & 7.929                             \\
			275                        & 750                       & 0.340                             & 2.035                             & 0.400                             & 0.240                             & 1.980                             & 0.752                             & 0.290                             & 2.544                             & 23.124                            \\ \hline
			350                        & 50                        & 0.060                             & 1.667                             & 0.101                             & 0.230                             & 1.578                             & 0.191                             & 0.040                             & 1.649                             & 0.158                             \\
			350                        & 250                       & 0.370                             & 2.142                             & 0.295                             & 0.170                             & 1.797                             & 0.535                             & 0.050                             & 2.105                             & 2.069                             \\
			350                        & 500                       & 0.330                             & 2.830                             & 0.400                             & 0.040                             & 2.702                             & 0.720                             & 0.300                             & 2.821                             & 8.652                             \\
			350                        & 750                       & 0.110                             & 2.007                             & 0.532                             & 0.010                             & 1.952                             & 0.975                             & 0.180                             & 1.811                             & 24.239                            \\ \hline
			425                        & 50                        & 0.340                             & 2.482                             & 0.122                             & 0.120                             & 2.375                             & 0.231                             & 0.110                             & 3.002                             & 0.168                             \\
			425                        & 250                       & 0.140                             & 2.005                             & 0.353                             & 0.120                             & 2.245                             & 0.650                             & 0.170                             & 2.057                             & 2.330                             \\
			425                        & 500                       & 0.320                             & 2.371                             & 0.510                             & 0.020                             & 2.307                             & 0.988                             & 0.050                             & 2.476                             & 9.491                             \\
			425                        & 750                       & 0.160                             & 1.944                             & 0.649                             & 0.100                             & 1.766                             & 1.251                             & 0.350                             & 2.536                             & 26.271                            \\ \hline
	\end{tabular}}
	\label{tab:est.A(i).t04}
\end{table}

\begin{table}[H]
	\caption{\footnotesize{Simulation A(i): estimation performance of Step 1 ($\h\tau$), AL1 $(\breve\tau)$ and WS methods under Gaussian setting with $\tau^0=\lfloor0.6\cdotp T\rfloor.$ Bias ($|E(\h\tau-\tau^0)|$), and RMSE ($E^{1/2}(\h\tau-\tau^0)^2$) and time (in seconds), approximated with $100$ monte carlo replications.}}
	\resizebox{1\textwidth}{!}{	\begin{tabular}{cclllllllll}
			\hline
			\multicolumn{2}{c}{$\tau^0=\lfloor0.6\cdotp T\rfloor$} & \multicolumn{3}{c}{Step 1}                                                                                & \multicolumn{3}{c}{AL1}                                                                                   & \multicolumn{3}{c}{WS}                                                                                    \\ \hline
			$T$                        & $p$                       & \multicolumn{1}{c}{\textbf{bias}} & \multicolumn{1}{c}{\textbf{RMSE}} & \multicolumn{1}{c}{\textbf{time}} & \multicolumn{1}{c}{\textbf{bias}} & \multicolumn{1}{c}{\textbf{RMSE}} & \multicolumn{1}{c}{\textbf{time}} & \multicolumn{1}{c}{\textbf{bias}} & \multicolumn{1}{c}{\textbf{RMSE}} & \multicolumn{1}{c}{\textbf{time}} \\ \hline
			200                        & 50                        & 0.160                             & 1.568                             & 0.066                             & 0.070                             & 1.539                             & 0.115                             & 0.190                             & 2.095                             & 0.123                             \\
			200                        & 250                       & 0.060                             & 1.562                             & 0.162                             & 0.090                             & 1.603                             & 0.281                             & 0.080                             & 2.112                             & 1.602                             \\
			200                        & 500                       & 0.620                             & 3.444                             & 0.218                             & 0.510                             & 3.288                             & 0.399                             & 0.610                             & 3.557                             & 7.431                             \\
			200                        & 750                       & 0.040                             & 2.049                             & 0.269                             & 0.060                             & 1.772                             & 0.492                             & 0.040                             & 2.775                             & 23.284                            \\ \hline
			275                        & 50                        & 0.460                             & 2.272                             & 0.088                             & 0.190                             & 2.313                             & 0.162                             & 0.170                             & 3.220                             & 0.139                             \\
			275                        & 250                       & 0.230                             & 1.775                             & 0.220                             & 0.080                             & 1.691                             & 0.399                             & 0.210                             & 1.634                             & 1.868                             \\
			275                        & 500                       & 0.480                             & 2.040                             & 0.349                             & 0.350                             & 1.931                             & 0.653                             & 0.550                             & 3.041                             & 8.245                             \\
			275                        & 750                       & 0.100                             & 2.015                             & 0.404                             & 0.110                             & 2.335                             & 0.743                             & 0.210                             & 3.110                             & 23.378                            \\ \hline
			350                        & 50                        & 0.360                             & 1.794                             & 0.101                             & 0.320                             & 1.783                             & 0.191                             & 0.120                             & 2.093                             & 0.161                             \\
			350                        & 250                       & 0.280                             & 2.241                             & 0.290                             & 0.130                             & 2.133                             & 0.547                             & 0.010                             & 2.402                             & 2.167                             \\
			350                        & 500                       & 0.210                             & 1.609                             & 0.383                             & 0.040                             & 1.517                             & 0.708                             & 0.120                             & 2.088                             & 8.710                             \\
			350                        & 750                       & 0.470                             & 2.812                             & 0.510                             & 0.300                             & 2.608                             & 0.957                             & 0.570                             & 3.404                             & 24.317                            \\ \hline
			425                        & 50                        & 0.350                             & 2.086                             & 0.129                             & 0.080                             & 1.811                             & 0.235                             & 0.250                             & 2.304                             & 0.183                             \\
			425                        & 250                       & 0.190                             & 1.622                             & 0.370                             & 0.050                             & 1.916                             & 0.706                             & 0.010                             & 2.335                             & 2.330                             \\
			425                        & 500                       & 0.200                             & 2.074                             & 0.473                             & 0.370                             & 2.110                             & 0.909                             & 0.600                             & 2.429                             & 9.331                             \\
			425                        & 750                       & 0.320                             & 1.661                             & 0.711                             & 0.200                             & 1.720                             & 1.362                             & 0.280                             & 2.000                             & 26.281                            \\ \hline
	\end{tabular}}
	\label{tab:est.A(i).t06}
\end{table}

\begin{table}[H]
	\caption{\footnotesize{Simulation A(i): estimation performance of Step 1 ($\h\tau$), AL1 $(\breve\tau)$ and WS methods under Gaussian setting with $\tau^0=\lfloor0.8\cdotp T\rfloor.$ Bias ($|E(\h\tau-\tau^0)|$), and RMSE ($E^{1/2}(\h\tau-\tau^0)^2$) and time (in seconds), approximated with $100$ monte carlo replications.}}
	\resizebox{1\textwidth}{!}{	\begin{tabular}{cclllllllll}
			\hline
			\multicolumn{2}{c}{$\tau^0=\lfloor0.8\cdotp T\rfloor$} & \multicolumn{3}{c}{Step 1}                                                                                & \multicolumn{3}{c}{AL1}                                                                                   & \multicolumn{3}{c}{WS}                                                                                    \\ \hline
			$T$                        & $p$                       & \multicolumn{1}{c}{\textbf{bias}} & \multicolumn{1}{c}{\textbf{RMSE}} & \multicolumn{1}{c}{\textbf{time}} & \multicolumn{1}{c}{\textbf{bias}} & \multicolumn{1}{c}{\textbf{RMSE}} & \multicolumn{1}{c}{\textbf{time}} & \multicolumn{1}{c}{\textbf{bias}} & \multicolumn{1}{c}{\textbf{RMSE}} & \multicolumn{1}{c}{\textbf{time}} \\ \hline
			200                        & 50                        & 2.880                             & 7.347                             & 0.065                             & 0.110                             & 1.658                             & 0.113                             & 0.320                             & 2.195                             & 0.119                             \\
			200                        & 250                       & 1.250                             & 6.298                             & 0.155                             & 0.380                             & 2.272                             & 0.267                             & 1.170                             & 6.268                             & 1.537                             \\
			200                        & 500                       & 1.110                             & 7.077                             & 0.195                             & 0.400                             & 1.908                             & 0.358                             & 1.090                             & 3.936                             & 7.182                             \\
			200                        & 750                       & 1.220                             & 4.572                             & 0.266                             & 0.460                             & 2.608                             & 0.488                             & 3.380                             & 8.470                             & 23.587                            \\ \hline
			275                        & 50                        & 3.520                             & 7.521                             & 0.088                             & 0.690                             & 2.161                             & 0.167                             & 0.110                             & 2.456                             & 0.143                             \\
			275                        & 250                       & 1.960                             & 4.357                             & 0.242                             & 0.870                             & 3.500                             & 0.424                             & 0.930                             & 4.073                             & 1.927                             \\
			275                        & 500                       & 2.100                             & 7.624                             & 0.318                             & 0.040                             & 1.822                             & 0.588                             & 1.220                             & 4.212                             & 7.834                             \\
			275                        & 750                       & 1.190                             & 3.486                             & 0.387                             & 0.150                             & 2.309                             & 0.728                             & 1.470                             & 4.487                             & 23.118                            \\ \hline
			350                        & 50                        & 3.230                             & 6.587                             & 0.097                             & 0.460                             & 3.036                             & 0.185                             & 0.080                             & 2.821                             & 0.164                             \\
			350                        & 250                       & 2.440                             & 5.669                             & 0.249                             & 0.470                             & 2.659                             & 0.452                             & 0.340                             & 2.691                             & 2.012                             \\
			350                        & 500                       & 1.180                             & 3.672                             & 0.410                             & 0.200                             & 1.679                             & 0.763                             & 1.160                             & 4.402                             & 8.761                             \\
			350                        & 750                       & 1.530                             & 4.487                             & 0.489                             & 0.510                             & 2.216                             & 0.890                             & 1.860                             & 5.185                             & 24.224                            \\ \hline
			425                        & 50                        & 3.580                             & 7.254                             & 0.122                             & 0.310                             & 1.752                             & 0.235                             & 0.290                             & 2.504                             & 0.179                             \\
			425                        & 250                       & 1.940                             & 4.459                             & 0.304                             & 0.420                             & 2.400                             & 0.581                             & 0.200                             & 3.000                             & 2.272                             \\
			425                        & 500                       & 1.770                             & 3.814                             & 0.500                             & 0.410                             & 1.688                             & 0.948                             & 0.550                             & 2.452                             & 9.484                             \\
			425                        & 750                       & 2.800                             & 8.656                             & 0.716                             & 0.370                             & 2.071                             & 1.406                             & 1.190                             & 5.381                             & 26.215                            \\ \hline
	\end{tabular}}
	\label{tab:est.A(i).t08}
\end{table}

%%%%%%%%%%%%%%%%%%%%%LAPLACE ESTIMATION TABLES%%%%%%%%%%%%%%%%%%%%%%%%%%%%%%%%%%%%%%%%%%%%%%%%

\begin{table}[H]
	\caption{\footnotesize{Simulation B(i): estimation performance of Step 1 ($\h\tau$), AL1 $(\breve\tau)$ and WS methods under Laplace setting with $\tau^0=\lfloor 0.4\cdotp T\rfloor.$ Bias ($|E(\h\tau-\tau^0)|$), and RMSE ($E^{1/2}(\h\tau-\tau^0)^2$) and time (in seconds), approximated with $100$ monte carlo replications.}}
	\resizebox{1\textwidth}{!}{	\begin{tabular}{cclllllllll}
			\hline
			\multicolumn{2}{c}{$\tau^0=\lfloor0.4\cdotp T\rfloor$} & \multicolumn{3}{c}{Step 1}                                                                                & \multicolumn{3}{c}{AL1}                                                                                   & \multicolumn{3}{c}{WS}                                                                                    \\ \hline
			$T$                        & $p$                       & \multicolumn{1}{c}{\textbf{bias}} & \multicolumn{1}{c}{\textbf{RMSE}} & \multicolumn{1}{c}{\textbf{time}} & \multicolumn{1}{c}{\textbf{bias}} & \multicolumn{1}{c}{\textbf{RMSE}} & \multicolumn{1}{c}{\textbf{time}} & \multicolumn{1}{c}{\textbf{bias}} & \multicolumn{1}{c}{\textbf{RMSE}} & \multicolumn{1}{c}{\textbf{time}} \\ \hline
			200                        & 50                        & 0.320                             & 2.341                             & 0.061                             & 0.030                             & 1.658                             & 0.104                             & 0.010                             & 1.836                             & 0.115                             \\
			200                        & 250                       & 0.250                             & 2.027                             & 0.137                             & 0.240                             & 2.035                             & 0.231                             & 0.310                             & 3.002                             & 1.424                             \\
			200                        & 500                       & 0.150                             & 2.317                             & 0.226                             & 0.190                             & 2.095                             & 0.415                             & 0.010                             & 2.472                             & 7.184                             \\
			200                        & 750                       & 0.010                             & 2.119                             & 0.258                             & 0.070                             & 1.389                             & 0.483                             & 0.170                             & 2.390                             & 21.718                            \\ \hline
			275                        & 50                        & 0.400                             & 1.697                             & 0.100                             & 0.140                             & 1.435                             & 0.168                             & 0.030                             & 2.062                             & 0.134                             \\
			275                        & 250                       & 0.290                             & 1.967                             & 0.228                             & 0.180                             & 1.860                             & 0.400                             & 0.080                             & 3.156                             & 1.907                             \\
			275                        & 500                       & 0.940                             & 2.267                             & 0.317                             & 0.710                             & 1.916                             & 0.586                             & 0.480                             & 2.437                             & 7.603                             \\
			275                        & 750                       & 0.050                             & 2.722                             & 0.407                             & 0.020                             & 2.542                             & 0.744                             & 0.480                             & 2.093                             & 22.957                            \\ \hline
			350                        & 50                        & 0.410                             & 2.189                             & 0.098                             & 0.310                             & 2.166                             & 0.173                             & 0.060                             & 2.581                             & 0.133                             \\
			350                        & 250                       & 0.140                             & 1.449                             & 0.293                             & 0.050                             & 1.404                             & 0.547                             & 0.150                             & 1.825                             & 2.184                             \\
			350                        & 500                       & 0.020                             & 1.649                             & 0.381                             & 0.280                             & 1.822                             & 0.693                             & 0.320                             & 2.358                             & 8.262                             \\
			350                        & 750                       & 0.090                             & 2.095                             & 0.486                             & 0.070                             & 1.841                             & 0.897                             & 0.050                             & 1.889                             & 24.225                            \\ \hline
			425                        & 50                        & 0.270                             & 1.895                             & 0.116                             & 0.150                             & 1.841                             & 0.218                             & 0.230                             & 1.879                             & 0.164                             \\
			425                        & 250                       & 0.360                             & 1.794                             & 0.339                             & 0.340                             & 1.811                             & 0.649                             & 0.280                             & 2.145                             & 2.291                             \\
			425                        & 500                       & 0.030                             & 1.597                             & 0.524                             & 0.070                             & 1.758                             & 1.000                             & 0.050                             & 2.012                             & 9.266                             \\
			425                        & 750                       & 0.310                             & 1.396                             & 0.635                             & 0.050                             & 1.432                             & 1.213                             & 0.070                             & 2.128                             & 25.804                            \\ \hline
	\end{tabular}}
	\label{tab:est.B(i).t04}
\end{table}

\begin{table}[H]
	\caption{\footnotesize{Simulation B(i): estimation performance of Step 1 ($\h\tau$), AL1 $(\breve\tau)$ and WS methods under Laplace setting with $\tau^0=\lfloor 0.6\cdotp T\rfloor.$ Bias ($|E(\h\tau-\tau^0)|$), and RMSE ($E^{1/2}(\h\tau-\tau^0)^2$) and time (in seconds), approximated with $100$ monte carlo replications.}}
	\resizebox{1\textwidth}{!}{	\begin{tabular}{cclllllllll}
			\hline
			\multicolumn{2}{c}{$\tau^0=\lfloor0.6\cdotp T\rfloor$} & \multicolumn{3}{c}{Step 1}                                                                                & \multicolumn{3}{c}{AL1}                                                                                   & \multicolumn{3}{c}{WS}                                                                                    \\ \hline
			$T$                        & $p$                       & \multicolumn{1}{c}{\textbf{bias}} & \multicolumn{1}{c}{\textbf{RMSE}} & \multicolumn{1}{c}{\textbf{time}} & \multicolumn{1}{c}{\textbf{bias}} & \multicolumn{1}{c}{\textbf{RMSE}} & \multicolumn{1}{c}{\textbf{time}} & \multicolumn{1}{c}{\textbf{bias}} & \multicolumn{1}{c}{\textbf{RMSE}} & \multicolumn{1}{c}{\textbf{time}} \\ \hline
			200                        & 50                        & 0.230                             & 1.584                             & 0.061                             & 0.090                             & 1.729                             & 0.110                             & 0.120                             & 2.362                             & 0.110                             \\
			200                        & 250                       & 0.030                             & 1.792                             & 0.139                             & 0.030                             & 1.741                             & 0.234                             & 0.300                             & 1.954                             & 1.442                             \\
			200                        & 500                       & 0.480                             & 2.387                             & 0.242                             & 0.170                             & 2.265                             & 0.415                             & 0.090                             & 2.156                             & 7.175                             \\
			200                        & 750                       & 0.300                             & 2.706                             & 0.245                             & 0.340                             & 2.437                             & 0.451                             & 0.260                             & 3.030                             & 21.480                            \\ \hline
			275                        & 50                        & 0.440                             & 1.783                             & 0.088                             & 0.180                             & 1.755                             & 0.168                             & 0.170                             & 1.597                             & 0.126                             \\
			275                        & 250                       & 0.350                             & 1.863                             & 0.222                             & 0.140                             & 1.523                             & 0.391                             & 0.330                             & 2.594                             & 1.843                             \\
			275                        & 500                       & 0.240                             & 2.191                             & 0.321                             & 0.250                             & 2.193                             & 0.602                             & 0.010                             & 3.012                             & 7.660                             \\
			275                        & 750                       & 0.190                             & 1.221                             & 0.406                             & 0.040                             & 1.530                             & 0.753                             & 0.040                             & 2.214                             & 23.008                            \\ \hline
			350                        & 50                        & 0.610                             & 1.905                             & 0.097                             & 0.250                             & 2.147                             & 0.163                             & 0.170                             & 2.617                             & 0.137                             \\
			350                        & 250                       & 0.090                             & 1.977                             & 0.266                             & 0.030                             & 1.884                             & 0.506                             & 0.110                             & 2.693                             & 1.947                             \\
			350                        & 500                       & 0.210                             & 1.863                             & 0.387                             & 0.150                             & 1.735                             & 0.695                             & 0.090                             & 2.202                             & 8.285                             \\
			350                        & 750                       & 0.390                             & 2.278                             & 0.474                             & 0.040                             & 1.934                             & 0.883                             & 0.360                             & 2.731                             & 24.209                            \\ \hline
			425                        & 50                        & 0.280                             & 2.263                             & 0.116                             & 0.010                             & 2.052                             & 0.219                             & 0.200                             & 2.319                             & 0.158                             \\
			425                        & 250                       & 0.210                             & 1.936                             & 0.343                             & 0.050                             & 1.879                             & 0.651                             & 0.240                             & 1.975                             & 2.316                             \\
			425                        & 500                       & 0.260                             & 2.069                             & 0.506                             & 0.110                             & 2.100                             & 0.969                             & 0.360                             & 2.619                             & 9.183                             \\
			425                        & 750                       & 0.200                             & 2.030                             & 0.657                             & 0.020                             & 1.703                             & 1.253                             & 0.080                             & 2.289                             & 25.939                            \\ \hline
	\end{tabular}}
	\label{tab:est.B(i).t06}
\end{table}

\begin{table}[H]
	\caption{\footnotesize{Simulation B(i): estimation performance of Step 1 ($\h\tau$), AL1 $(\breve\tau)$ and WS methods under Laplace setting with $\tau^0=\lfloor 0.8\cdotp T\rfloor.$ Bias ($|E(\h\tau-\tau^0)|$), and RMSE ($E^{1/2}(\h\tau-\tau^0)^2$) and time (in seconds), approximated with $100$ monte carlo replications.}}
	\resizebox{1\textwidth}{!}{	\begin{tabular}{cclllllllll}
			\hline
			\multicolumn{2}{c}{$\tau^0=\lfloor0.8\cdotp T\rfloor$} & \multicolumn{3}{c}{Step 1}                                                                                & \multicolumn{3}{c}{AL1}                                                                                   & \multicolumn{3}{c}{WS}                                                                                    \\ \hline
			$T$                        & $p$                       & \multicolumn{1}{c}{\textbf{bias}} & \multicolumn{1}{c}{\textbf{RMSE}} & \multicolumn{1}{c}{\textbf{time}} & \multicolumn{1}{c}{\textbf{bias}} & \multicolumn{1}{c}{\textbf{RMSE}} & \multicolumn{1}{c}{\textbf{time}} & \multicolumn{1}{c}{\textbf{bias}} & \multicolumn{1}{c}{\textbf{RMSE}} & \multicolumn{1}{c}{\textbf{time}} \\ \hline
			200                        & 50                        & 3.320                             & 6.684                             & 0.063                             & 0.660                             & 2.881                             & 0.110                             & 0.770                             & 3.045                             & 0.115                             \\
			200                        & 250                       & 2.430                             & 7.997                             & 0.143                             & 1.320                             & 7.122                             & 0.246                             & 1.790                             & 7.115                             & 1.483                             \\
			200                        & 500                       & 0.850                             & 5.469                             & 0.213                             & 0.070                             & 1.916                             & 0.372                             & 1.250                             & 5.944                             & 7.111                             \\
			200                        & 750                       & 1.010                             & 7.072                             & 0.241                             & 0.990                             & 5.733                             & 0.444                             & 3.050                             & 7.584                             & 21.474                            \\ \hline
			275                        & 50                        & 4.010                             & 10.014                            & 0.088                             & 0.890                             & 5.015                             & 0.164                             & 0.250                             & 4.253                             & 0.127                             \\
			275                        & 250                       & 1.650                             & 4.892                             & 0.236                             & 0.380                             & 2.263                             & 0.421                             & 0.600                             & 2.789                             & 1.925                             \\
			275                        & 500                       & 1.720                             & 5.982                             & 0.294                             & 0.240                             & 2.173                             & 0.535                             & 1.370                             & 4.034                             & 7.520                             \\
			275                        & 750                       & 0.280                             & 3.552                             & 0.420                             & 0.210                             & 2.022                             & 0.769                             & 1.860                             & 5.521                             & 22.996                            \\ \hline
			350                        & 50                        & 3.120                             & 7.580                             & 0.102                             & 0.730                             & 3.887                             & 0.168                             & 0.050                             & 3.055                             & 0.135                             \\
			350                        & 250                       & 1.430                             & 3.874                             & 0.262                             & 0.210                             & 2.356                             & 0.474                             & 0.650                             & 3.604                             & 2.046                             \\
			350                        & 500                       & 1.460                             & 4.176                             & 0.389                             & 0.240                             & 1.833                             & 0.711                             & 0.040                             & 2.425                             & 8.273                             \\
			350                        & 750                       & 1.990                             & 6.856                             & 0.486                             & 0.190                             & 2.571                             & 0.924                             & 1.930                             & 6.564                             & 24.302                            \\ \hline
			425                        & 50                        & 2.650                             & 6.489                             & 0.117                             & 0.380                             & 1.661                             & 0.223                             & 0.080                             & 1.908                             & 0.163                             \\
			425                        & 250                       & 1.780                             & 4.459                             & 0.307                             & 0.150                             & 1.962                             & 0.592                             & 0.450                             & 4.009                             & 2.198                             \\
			425                        & 500                       & 1.770                             & 5.811                             & 0.480                             & 0.040                             & 1.685                             & 0.910                             & 0.300                             & 3.165                             & 9.161                             \\
			425                        & 750                       & 2.010                             & 4.288                             & 0.632                             & 0.710                             & 2.629                             & 1.209                             & 1.620                             & 4.539                             & 25.863                            \\ \hline
	\end{tabular}}
	\label{tab:est.B(i).t08}
\end{table}

%%%%%%%%%%%%%%%%%GAUSSIAN INFERENCE TABLES%%%%%%%%%%%%%%%%%%%%%%%%%%%%%%%%%%%%%%%%%%%%%%%%%%%%

\begin{table}[H]
	\caption{\footnotesize{Simulation A(ii): inference using AL1 $(\breve\tau)$ with $\tau^0=\lfloor 0.4\cdotp T\rfloor,$ at significance level $\al=0.05.$ Here, {\bf V:} confidence intervals constructed using Theorem \ref{thm:wc.vanishing} under vanishing regime, {\bf NV:} confidence intervals constructed using Theorem \ref{thm:wc.non.vanishing} under non-vanishing regime (Gaussian parametric assumption). Computation based on 500 monte carlo replications.}}
	\resizebox{1\textwidth}{!}{	\begin{tabular}{cllllll}
			\hline
			\multirow{2}{*}{} & \multicolumn{6}{c}{\textbf{Coverage (average margin of error)}}                                                                                  \\ \cline{2-7}
			& \multicolumn{1}{c}{V} & \multicolumn{1}{c}{NV} & \multicolumn{1}{c}{V} & \multicolumn{1}{c}{NV} & \multicolumn{1}{c}{V} & \multicolumn{1}{c}{NV} \\ \hline
			$p$               & \multicolumn{2}{c}{$n=275$}                    & \multicolumn{2}{c}{$n=350$}                    & \multicolumn{2}{c}{$n=425$}                    \\ \hline
			50                & 0.956 (4.020)         & 0.966 (3.929)          & 0.916 (4.111)         & 0.926 (4.021)          & 0.956 (4.136)         & 0.968 (4.058)          \\
			250               & 0.938 (3.936)         & 0.946 (3.859)          & 0.946 (3.984)         & 0.958 (3.901)          & 0.944 (4.025)         & 0.956 (3.949)          \\
			500               & 0.934 (3.886)         & 0.942 (3.783)          & 0.934 (3.925)         & 0.940 (3.847)          & 0.950 (3.982)         & 0.958 (3.912)          \\
			750               & 0.944 (3.848)         & 0.954 (3.753)          & 0.936 (3.922)         & 0.954 (3.843)          & 0.946 (3.961)         & 0.954 (3.905)          \\ \hline
	\end{tabular}}
	\label{tab:inf.A(ii).t04}
\end{table}

% Please add the following required packages to your document preamble:
% \usepackage{multirow}
\begin{table}[H]
	\caption{\footnotesize{Simulation A(ii): inference using AL1 $(\breve\tau)$ with $\tau^0=\lfloor 0.6\cdotp T\rfloor,$ at significance level $\al=0.05.$ Here, {\bf V:} confidence intervals constructed using Theorem \ref{thm:wc.vanishing} under vanishing regime, {\bf NV:} confidence intervals constructed using Theorem \ref{thm:wc.non.vanishing} under non-vanishing regime (Gaussian parametric assumption). Computation based on 500 monte carlo replications.}}
	\resizebox{1\textwidth}{!}{	\begin{tabular}{cllllll}
			\hline
			\multirow{2}{*}{} & \multicolumn{6}{c}{\textbf{Coverage (average margin of error)}}                                                                                  \\ \cline{2-7}
			& \multicolumn{1}{c}{V} & \multicolumn{1}{c}{NV} & \multicolumn{1}{c}{V} & \multicolumn{1}{c}{NV} & \multicolumn{1}{c}{V} & \multicolumn{1}{c}{NV} \\ \hline
			$p$               & \multicolumn{2}{c}{$n=275$}                    & \multicolumn{2}{c}{$n=350$}                    & \multicolumn{2}{c}{$n=425$}                    \\ \hline
			50                & 0.956 (4.033)         & 0.970 (3.965)          & 0.932 (4.054)         & 0.952 (4.030)          & 0.938 (4.101)         & 0.952 (4.044)          \\
			250               & 0.918 (3.916)         & 0.936 (3.852)          & 0.946 (3.988)         & 0.958 (3.917)          & 0.938 (4.008)         & 0.942 (3.907)          \\
			500               & 0.948 (3.881)         & 0.950 (3.804)          & 0.944 (3.901)         & 0.962 (3.813)          & 0.946 (4.003)         & 0.950 (3.945)          \\
			750               & 0.930 (3.823)         & 0.944 (3.745)          & 0.940 (3.981)         & 0.948 (3.891)          & 0.958 (3.999)         & 0.964 (3.903)          \\ \hline
	\end{tabular}}
	\label{tab:inf.A(ii).t06}
\end{table}

\begin{table}[H]
	\caption{\footnotesize{Simulation A(ii): inference using AL1 $(\breve\tau)$ with $\tau^0=\lfloor 0.8\cdotp T\rfloor,$ at significance level $\al=0.05.$ Here, {\bf V:} confidence intervals constructed using Theorem \ref{thm:wc.vanishing} under vanishing regime, {\bf NV:} confidence intervals constructed using Theorem \ref{thm:wc.non.vanishing} under non-vanishing regime (Gaussian parametric assumption). Computation based on 500 monte carlo replications.}}
	\resizebox{1\textwidth}{!}{	\begin{tabular}{cllllll}
			\hline
			\multirow{2}{*}{} & \multicolumn{6}{c}{\textbf{Coverage (average margin of error)}}                                                                                  \\ \cline{2-7}
			& \multicolumn{1}{c}{V} & \multicolumn{1}{c}{NV} & \multicolumn{1}{c}{V} & \multicolumn{1}{c}{NV} & \multicolumn{1}{c}{V} & \multicolumn{1}{c}{NV} \\ \hline
			$p$               & \multicolumn{2}{c}{$n=275$}                    & \multicolumn{2}{c}{$n=350$}                    & \multicolumn{2}{c}{$n=425$}                    \\ \hline
			50                & 0.942 (3.831)         & 0.954 (3.758)          & 0.932 (3.861)         & 0.944 (3.789)          & 0.934 (3.951)         & 0.944 (3.884)          \\
			250               & 0.896 (3.417)         & 0.920 (3.349)          & 0.920 (3.602)         & 0.928 (3.537)          & 0.940 (3.716)         & 0.950 (3.644)          \\
			500               & 0.908 (3.307)         & 0.926 (3.235)          & 0.896 (3.476)         & 0.910 (3.392)          & 0.916 (3.612)         & 0.930 (3.491)          \\
			750               & 0.890 (3.252)         & 0.916 (3.158)          & 0.896 (3.408)         & 0.922 (3.324)          & 0.928 (3.542)         & 0.948 (3.461)          \\ \hline
	\end{tabular}}
	\label{tab:inf.A(ii).t08}
\end{table}

%%%%%%%%%%%%%%%%%LAPLACE INFERENCE TABLES%%%%%%%%%%%%%%%%%%%%%%%%%%%%%%%%%%%%%%%%%%%%%%%%%%%%

\begin{table}[H]
	\caption{\footnotesize{Simulation B(ii): inference using AL1 $(\breve\tau)$ with $\tau^0=\lfloor 0.4\cdotp T\rfloor,$ at significance level $\al=0.05.$ Here, {\bf V:} confidence intervals constructed using Theorem \ref{thm:wc.vanishing} under vanishing regime, {\bf NV:} confidence intervals constructed using Theorem \ref{thm:wc.non.vanishing} under non-vanishing regime (Laplace parametric assumption). Computation based on 500 monte carlo replications.}}
	\resizebox{1\textwidth}{!}{	\begin{tabular}{cllllll}
			\hline
			\multirow{2}{*}{} & \multicolumn{6}{c}{\textbf{Coverage (average margin of error)}}                                                                                  \\ \cline{2-7}
			& \multicolumn{1}{c}{V} & \multicolumn{1}{c}{NV} & \multicolumn{1}{c}{V} & \multicolumn{1}{c}{NV} & \multicolumn{1}{c}{V} & \multicolumn{1}{c}{NV} \\ \hline
			$p$               & \multicolumn{2}{c}{$n=275$}                    & \multicolumn{2}{c}{$n=350$}                    & \multicolumn{2}{c}{$n=425$}                    \\ \hline
			50                & 0.946 (4.029)         & 0.956 (4.015)          & 0.932 (4.099)         & 0.940 (4.083)          & 0.946 (4.113)         & 0.950 (4.067)          \\
			250               & 0.932 (3.926)         & 0.942 (3.905)          & 0.950 (3.984)         & 0.964 (3.933)          & 0.936 (4.035)         & 0.958 (3.990)          \\
			500               & 0.938 (3.878)         & 0.948 (3.841)          & 0.940 (3.892)         & 0.952 (3.849)          & 0.942 (4.055)         & 0.960 (4.023)          \\
			750               & 0.938 (3.843)         & 0.952 (3.769)          & 0.930 (3.931)         & 0.948 (3.871)          & 0.946 (3.947)         & 0.962 (3.919)          \\ \hline
	\end{tabular}}
	\label{tab:inf.B(ii).t04}
\end{table}

\begin{table}[H]
	\caption{\footnotesize{Simulation B(ii): inference using AL1 $(\breve\tau)$ with $\tau^0=\lfloor 0.6\cdotp T\rfloor,$ at significance level $\al=0.05.$ Here, {\bf V:} confidence intervals constructed using Theorem \ref{thm:wc.vanishing} under vanishing regime, {\bf NV:} confidence intervals constructed using Theorem \ref{thm:wc.non.vanishing} under non-vanishing regime (Laplace parametric assumption). Computation based on 500 monte carlo replications.}}
	\resizebox{1\textwidth}{!}{	\begin{tabular}{cllllll}
			\hline
			\multirow{2}{*}{} & \multicolumn{6}{c}{\textbf{Coverage (average margin of error)}}                                                                                  \\ \cline{2-7}
			& \multicolumn{1}{c}{V} & \multicolumn{1}{c}{NV} & \multicolumn{1}{c}{V} & \multicolumn{1}{c}{NV} & \multicolumn{1}{c}{V} & \multicolumn{1}{c}{NV} \\ \hline
			$p$               & \multicolumn{2}{c}{$n=275$}                    & \multicolumn{2}{c}{$n=350$}                    & \multicolumn{2}{c}{$n=425$}                    \\ \hline
			50                & 0.950 (4.043)         & 0.958 (4.003)          & 0.952 (4.100)         & 0.964 (4.075)          & 0.930 (4.103)         & 0.944 (4.051)          \\
			250               & 0.940 (3.945)         & 0.952 (3.897)          & 0.912 (3.989)         & 0.944 (3.965)          & 0.948 (4.029)         & 0.960 (4.030)          \\
			500               & 0.940 (3.875)         & 0.948 (3.822)          & 0.940 (3.969)         & 0.950 (3.924)          & 0.944 (4.006)         & 0.958 (3.941)          \\
			750               & 0.928 (3.827)         & 0.950 (3.769)          & 0.950 (3.942)         & 0.964 (3.895)          & 0.930 (4.002)         & 0.948 (3.949)          \\ \hline
	\end{tabular}}
	\label{tab:inf.B(ii).t06}
\end{table}

\begin{table}[H]
	\caption{\footnotesize{Simulation B(ii): inference using AL1 $(\breve\tau)$ with $\tau^0=\lfloor 0.8\cdotp T\rfloor,$ at significance level $\al=0.05.$ Here, {\bf V:} confidence intervals constructed using Theorem \ref{thm:wc.vanishing} under vanishing regime, {\bf NV:} confidence intervals constructed using Theorem \ref{thm:wc.non.vanishing} under non-vanishing regime (Laplace parametric assumption). Computation based on 500 monte carlo replications.}}
	\resizebox{1\textwidth}{!}{	\begin{tabular}{cllllll}
			\hline
			\multirow{2}{*}{} & \multicolumn{6}{c}{\textbf{Coverage (average margin of error)}}                                                                                  \\ \cline{2-7}
			& \multicolumn{1}{c}{V} & \multicolumn{1}{c}{NV} & \multicolumn{1}{c}{V} & \multicolumn{1}{c}{NV} & \multicolumn{1}{c}{V} & \multicolumn{1}{c}{NV} \\ \hline
			$p$               & \multicolumn{2}{c}{$n=275$}                    & \multicolumn{2}{c}{$n=350$}                    & \multicolumn{2}{c}{$n=425$}                    \\ \hline
			50                & 0.918 (3.841)         & 0.934 (3.789)          & 0.934 (3.883)         & 0.950 (3.833)          & 0.924 (3.966)         & 0.944 (3.907)          \\
			250               & 0.930 (3.410)         & 0.940 (3.423)          & 0.944 (3.597)         & 0.952 (3.563)          & 0.932 (3.624)         & 0.938 (3.565)          \\
			500               & 0.904 (3.404)         & 0.916 (3.368)          & 0.936 (3.480)         & 0.954 (3.431)          & 0.916 (3.600)         & 0.942 (3.561)          \\
			750               & 0.898 (3.234)         & 0.924 (3.198)          & 0.900 (3.410)         & 0.916 (3.391)          & 0.924 (3.533)         & 0.942 (3.487)          \\ \hline
	\end{tabular}}
	\label{tab:inf.B(ii).t08}
\end{table}

\bibliographystyle{plainnat}
\bibliography{meanchange}

\end{document}